\newtheorem{theorem}{Theorem}
\newtheorem{proposition}{Proposition}
\newtheorem{remark}{Remark}
\newtheorem{lemma}{Lemma}
\newtheorem{corollary}{Corollary}
\newcommand{\dps}{\displaystyle}
\newcommand{\ii}{\infty}
\newcommand\R{{\ensuremath {\mathbb R} }}
\newcommand\Z{{\ensuremath {\mathbb Z} }}
\newcommand\1{{\ensuremath {\mathds 1} }}
\renewcommand\phi{\varphi}
\newcommand{\gH}{\mathfrak{H}}
\newcommand{\gS}{\mathfrak{S}}
\newcommand{\gK}{\mathfrak{K}}
\newcommand{\cS}{\mathcal{S}}
\newcommand{\cY}{\mathcal{Y}}
\newcommand{\cC}{\mathcal{C}}
\newcommand{\cX}{\mathcal{X}}
\newcommand{\cB}{\mathcal{B}}
\newcommand{\cK}{\mathcal{K}}
\newcommand{\cE}{\mathcal{E}}
\newcommand{\cW}{\mathcal{W}}
\newcommand{\cF}{\mathcal{F}}
\newcommand{\cH}{\mathcal{H}}
\renewcommand{\epsilon}{\varepsilon}
\newcommand\pscal[1]{{\ensuremath{\left\langle #1 \right\rangle}}}
\newcommand{\norm}[1]{ \left| \! \left| #1 \right| \! \right| }
\newcommand{\tr}{{\rm Tr}}
\renewcommand{\ge}{\geqslant}
\renewcommand{\le}{\leqslant}
\renewcommand{\geq}{\geqslant}
\renewcommand{\leq}{\leqslant}
\renewcommand{\hat}{\widehat}
\renewcommand{\tilde}{\widetilde}
\title[Hartree equation for infinitely many particles I]{The Hartree equation for infinitely many particles\\ I. Well-posedness theory}
\author[M. Lewin]{Mathieu LEWIN}
\address{CNRS \& Universit\'e de Cergy-Pontoise, Mathematics Department (UMR 8088), F-95000 Cergy-Pontoise, France} 
\email{mathieu.lewin@math.cnrs.fr}
\author[J. Sabin]{Julien SABIN}
\address{Universit\'e de Cergy-Pontoise, Mathematics Department (UMR 8088), F-95000 Cergy-Pontoise, France} 
\email{julien.sabin@u-cergy.fr}
\date{\today}
\begin{document}

\begin{abstract}
  We show local and global well-posedness results for the Hartree equation
  $$i\partial_t\gamma=[-\Delta+w*\rho_\gamma,\gamma],$$
  where $\gamma$ is a bounded self-adjoint operator on $L^2(\R^d)$, $\rho_\gamma(x)=\gamma(x,x)$ and $w$ is a smooth short-range interaction potential. The initial datum $\gamma(0)$ is assumed to be a perturbation of a translation-invariant state $\gamma_f=f(-\Delta)$ which describes a quantum system with an infinite number of particles, such as the Fermi sea at zero temperature, or the Fermi-Dirac and Bose-Einstein gases at positive temperature. Global well-posedness follows from the conservation of the relative (free) energy of the state $\gamma(t)$, counted relatively to the stationary state $\gamma_f$. We indeed use a general notion of relative entropy, which allows to treat a wide class of stationary states $f(-\Delta)$. Our results are based on a Lieb-Thirring inequality at positive density and on a recent Strichartz inequality for orthonormal functions, which are both due to Frank, Lieb, Seiringer and the first author of this article.
\end{abstract}

\thanks{\copyright\,2013 by the authors. This paper may be reproduced, in its entirety, for non-commercial purposes. To appear in \emph{Comm. Math. Phys.}}

\maketitle

\tableofcontents

\section{Introduction}

The time-dependent Hartree equation 
\begin{equation}
\left\{\begin{array}{rcl}
i\, \partial_t u&=&\big(-\Delta_x+w\ast|u|^2\big)u,\qquad (t,x)\in\R\times\R^d\\
u(0,x)&=&u_0(x).
\end{array}\right.
\label{eq:Hartree}
\end{equation}
describes the dynamics of a Bose-Einstein condensate in $\R^d$, in which all the quantum particles are in the same state $u(t,x)$ (a normalized square integrable function, $\int_{\R^d}|u(t,x)|^2\,dx=1$). The equation~\eqref{eq:Hartree} can be derived from many-body quantum mechanics in a mean-field limit~\cite{Hepp-74,GinVel-79,Spohn-80,ErdYau-01,RodSch-09,KnoPic-10,Pickl-11,AmmNie-08,LewNamSch-13}. The nonlinear term $w\ast|u|^2$ describes the interactions between the particles, where $w:\R^d\to\R$ is the pair interaction potential and $\ast$ is a notation for the convolution of two functions in $\R^d$. Physically, $w$ is a short range potential and it is sometimes even taken to be a Dirac $\delta$, leading to the so-called Gross-Pitaevskii or cubic nonlinear Schr\"odinger (NLS) equation.

There is a similar theory for fermions, which is called \emph{reduced Hartree-Fock} in the literature~\cite{Solovej-91}, or often simply \emph{Hartree} (it corresponds to ignoring the exchange term in the Hartree-Fock model). On the contrary to bosons, fermions cannot occupy the same state and a system of $N$ particles is then described by a set of $N$ orthonormal functions $u_1,...,u_N\in L^2(\R^d)$. Their dynamics is modelled by a system of $N$ coupled Hartree equations of the previous form:
\begin{equation}
\left\{\begin{array}{rcl}
i\, \partial_t u_1&=&\dps\left(-\Delta_x+w\ast\left(\sum_{k=1}^N|u_k|^2\right)\right)u_1,\\
&\vdots&\\
i\, \partial_t u_N&=&\dps\left(-\Delta_x+w\ast\left(\sum_{k=1}^N|u_k|^2\right)\right)u_N,\\
u_j(0,x)&=&u_{0,j}(x),\qquad j=1,...,N.\phantom{\int}
\end{array}\right.
\label{eq:rHF}
\end{equation}
The function 
$$\rho(t,x):=\sum_{j=1}^N|u_j(t,x)|^2$$
is the total density of particles in the system at time $t$. Since all the equations~\eqref{eq:rHF} involve the same (so-called \emph{mean-field}) operator $H(t)=-\Delta +w\ast\rho(t)$, it is clear that the system $(u_1(t),...,u_N(t))$ remains orthonormal for every time $t$. Indeed, we have $u_j(t)=U(t,0)u_j(0)$ where $U(t,t')$ is the unitary propagator associated with the time-dependent Hamiltonian $H(t)$. In particular, the number of particles $\int_{\R^d}\rho(t)=N$ is conserved for all times. 
Another conserved quantity along the flow is the nonlinear Hartree energy
\begin{equation*}
\cE(u_1,...,u_N):=\sum_{j=1}^N\int_{\R^d}|\nabla u_j|^2\\+\frac12 \int_{\R^d}\int_{\R^d}w(x-y)\rho(t,x)\rho(t,y)\,dx\,dy. 
\end{equation*}
These two conservation laws can be used to prove global well-posedness of~\eqref{eq:rHF} under suitable assumptions on $w$ (for instance that $w\geq0$). The equations~\eqref{eq:rHF} can also be derived from many-body quantum mechanics~\cite{BarGolGotMau-03,BarErdGolMauYau-02,ElgErdSchYau-04,FroKno-11,BenPorSch-13}, in a mean-field or semi-classical limit.

The purpose of this paper is to study the well-posedness of the equation~\eqref{eq:rHF} in the case where $N=\ii$, that is, for a gas containing infinitely many particles. In this situation, the total Hartree energy of the system is also infinite. We consider both the zero and positive temperature cases. To our knowledge, the results contained in this article are the first of this kind for the Hartree model with infinitely many particles. A similar problem has been considered before at zero temperature for Dirac particles in~\cite{HaiLewSpa-05} and for crystals in~\cite{CanSto-11}, but there the mean-field operator $H(t)$ has a gap in its spectrum, which dramatically simplifies the study.

In order to put the Hartree equation for infinitely many particles on a solid ground, it is easier to use the formalism of \emph{density matrices}. We first explain this for a finite $N$, before turning to $N=\ii$. The idea is to introduce the operator
$$\gamma(t):=\sum_{j=1}^N|u_j(t)\rangle\langle u_j(t)|,$$
which is the rank-$N$ orthogonal projection\footnote{Here and everywhere, we use Dirac's notation $|u\rangle\langle v|$ for the operator $f\mapsto \pscal{v,f}u$. Our scalar product is always anti-linear with respect to the left argument.} onto the $N$-dimensional space spanned by the functions $u_1(t),...,u_N(t)$. The operator $\gamma(t)$ is our new variable and it is called the \emph{one-particle density matrix}. The equation~\eqref{eq:rHF} can be equivalently re-written in terms of $\gamma(t)$ as
\begin{equation}
\boxed{\phantom{\int}i\, \partial_t \gamma=\big[-\Delta+w\ast\rho_\gamma\,,\, \gamma\big]\phantom{\int}}
\label{eq:rHF_DM}
\end{equation}
with the initial datum
\begin{equation}
\gamma(0)=\sum_{k=1}^N|u_k(0)\rangle\langle u_k(0)|.
\label{eq:finite_rank_initial}
\end{equation}
Here $\rho_\gamma(t,x)=\gamma(t,x,x)$ is the density associated with the operator $\gamma$, which coincides with the density of particles $\rho(t,x)$ introduced earlier ($\gamma(t,x,y)$ is a notation for the integral kernel of $\gamma(t)$). The average particle number and the average energy of the system can now be written in terms of $\gamma$ only as $N=\tr(\gamma)$ and
$$\cE(\gamma)=\tr\big((-\Delta)\gamma\big)+\frac12\int_{\R^d}\int_{\R^d}w(x-y)\rho_\gamma(x)\rho_\gamma(y)\,dx\,dy.$$

The formulation~\eqref{eq:rHF_DM} is clearly more adapted to the study of infinite systems, in which case we only have to consider a more general initial datum than the finite rank $\gamma(0)$ in~\eqref{eq:finite_rank_initial}. In principle any non-negative operator $\gamma(0)$ can be considered, except that for fermions we should not forget the Pauli principle which requires that $0\leq\gamma(0)\leq1$.

Several authors have already studied the Hartree equation~\eqref{eq:rHF_DM} with $\gamma(t)$ an \emph{infinite-rank} operator. When $\gamma(0)$ is a trace-class operator with finite kinetic energy, $\tr(1-\Delta)\gamma(0)<\ii$, Bove, Da Prato and Fano~\cite{BovPraFan-74,BovPraFan-76} and Chadam~\cite{Chadam-76} have simultaneously proved the well-posedness of Equation~\eqref{eq:rHF_DM} (for the more precise Hartree-Fock model which also includes an exchange term). Later, Zagatti obtained the same result by a slightly different method~\cite{Zagatti-92}. 

Let us emphasize that, even if $\gamma(t)$ can have an infinite rank, the trace-class condition in these works means that the total average number of particles $\tr\,\gamma(t)$ is indeed finite for all times. In the present article, we consider an infinite system, for which the average number of particles is infinite:
$$\tr\, \gamma(t)=+\ii.$$
In the cases considered in this paper, $\gamma(t)$ is not even a compact operator and none of the methods used in the previous works~\cite{BovPraFan-74,BovPraFan-76,Chadam-76,Zagatti-92} is applicable.

The equation~\eqref{eq:rHF_DM} has many well-known stationary states with infinitely many particles and we discuss this now. Consider the operator
\begin{equation}
\gamma_f=f(-\Delta)
\label{eq:gamma_f_intro}
\end{equation}
with $f:\R^+\to\R^+$, which acts in the Fourier variable as a multiplication by the function $g(p):=f(|p|^2)$. The integral kernel of $\gamma_f$ is $\gamma_f(x,y)=(2\pi)^{-d/2}\check{g}(x-y)$ and hence the corresponding density is uniform in space:
$$\rho_{\gamma_f}(x)=(2\pi)^{-d/2}\check{g}(0)=(2\pi)^{-d} \int_{\R^d}f(|p|^2)\,dp,\qquad \forall x\in\R^d.$$
This integral is finite when $\int_{\R^d}|f(|p|^2)|\,dp<\ii$.
As a consequence, we find that the mean-field potential is also uniform,
$$w\ast\rho_{\gamma_f}(x) = (2\pi)^{-d} \left(\int_{\R^d}f(|p|^2)\,dp\right)\left(\int_{\R^d}w(x)\,dx\right),$$
and we get
$[-\Delta+w\ast\rho_{\gamma_f}\,,\, {\gamma_f}]=[-\Delta\,,\, f(-\Delta)]=0.$
We conclude that any function $f\geq0$ such that $\int_{\R^d}f(|p|^2)\,dp<\ii$ provides a stationary state $\gamma_f$ for the Hartree equation. Four important physical examples which are covered by our results, are the

\smallskip

\noindent$\bullet$ \emph{Fermi gas at zero temperature and chemical potential $\mu>0$}: 
\begin{equation}
f(r)=\1(0\leq r\leq\mu)\quad\text{and}\quad \gamma_f=\1(-\Delta\leq \mu);
\label{eq:Fermi-gas-zero-temp} 
\end{equation}

\medskip

\noindent$\bullet$ \emph{Fermi gas at positive temperature $T>0$ and chemical potential $\mu\in\R$}: 
\begin{equation}
f(r)=\frac{1}{e^{(r-\mu)/T}+1}\quad\text{and}\quad \gamma_f=\frac{1}{e^{(-\Delta-\mu)/T}+1};
\label{eq:Fermi-gas-positive-temp} 
\end{equation}

\medskip

\noindent$\bullet$ \emph{Bose gas at positive temperature $T>0$ and chemical potential $\mu<0$}: 
\begin{equation}
f(r)=\frac{1}{e^{(r-\mu)/T}-1}\quad\text{and}\quad \gamma_f=\frac{1}{e^{(-\Delta-\mu)/T}-1};
\label{eq:Bose-gas-positive-temp} 
\end{equation}

\medskip

\noindent$\bullet$ \emph{Boltzmann gas at positive temperature $T>0$ and chemical potential $\mu\in\R$}: 
\begin{equation}
f(r)=e^{-(r-\mu)/T}\quad\text{and}\quad \gamma_f=e^{(\Delta+\mu)/T}.
\label{eq:Boltzmann-gas-positive-temp} 
\end{equation}

Note that, in order to properly define the potential $w\ast \rho_{\gamma_f}$, we have used here both that $g\in L^1(\R^d)$ and that $w\in L^1(\R^d)$. In this paper, we always assume that $w$ is a (smooth enough) short range potential and we do not discuss possible extensions to, say, the NLS case $w=a\delta$.

The main purpose of this article is to prove the global well-posedness of the Hartree equation~\eqref{eq:rHF_DM}, for initial data $\gamma(0)$ which are `nice' perturbations of a reference stationary state $\gamma_f$, with $f$ as before. As usual for Hamiltonian PDEs, after having shown local well-posedness, we use conserved quantities in order to control the possible growth of $\gamma(t)$ in a suitable space along the flow and prove that the solution is global. We systematically work on the difference $\gamma(t)-\gamma_f$, which we think of as the time dependent perturbation of the reference state $\gamma_f$. So we have to control $\gamma(t)-\gamma_f$ in suitable norms.

The situation is much more complicated than for a finite system. For instance, the number of particle is infinite but one can (under suitable assumptions on $\gamma(0)$) give a meaning to the \emph{relative number of particles}
$\delta N(t)=\tr\big(\gamma(t)-\gamma_f\big).$
We can show that $\delta N(t)$ is conserved but, since $\gamma(t)-\gamma_f$ has no sign, this does not seem to yield any bound on $\gamma(t)-\gamma_f$.

The energy helps us more, when $\gamma_f$ is one of the particular Gibbs states in~\eqref{eq:Fermi-gas-zero-temp}--\eqref{eq:Boltzmann-gas-positive-temp}. In our approach, $\gamma(0)$ is assumed to have a finite relative entropy with respect to $\gamma_f$. We then prove that the relative free energy is conserved and use it as a Lyapunov function, under suitable assumptions on the potential $w$. This is the equivalent of the usual energy methods for finite systems, as we explain in detail in the rest of the paper. The main difference to the usual case is that we only have one conservation law.

Let us remark that if $\gamma_f\equiv0$ (vacuum case), then any unitarily invariant norm for $\gamma(t)$ is also conserved. This includes for instance Schatten norms $\norm{\gamma}_{\gS^p}=(\tr\,|\gamma|^p)^{1/p}$ ($p=2$ for Hilbert-Schmidt operators). These conservation laws can be used to prove well-posedness for~\eqref{eq:rHF_DM} with $\gamma_f=0$ in Schatten spaces with $p>1$ (see~\cite{Sabin-PhD}). In our case of $\gamma_f\neq0$, then $\gamma(t)-\gamma_f$ is not unitarily equivalent to $\gamma(0)-\gamma_f$ and there is no obvious conserved norm for $\gamma(t)-\gamma_f$. Controlling this operator can only be made through the relative energy, which is an additional difficulty as compared to finite systems.

\medskip

The article is organized as follows. In the next section we quickly state our main results for the above physical cases~\eqref{eq:Fermi-gas-zero-temp}--\eqref{eq:Boltzmann-gas-positive-temp}, even if our results cover more general situations. In Section~\ref{sec:local}, we construct local solutions in Schatten spaces and which have a sufficiently high regularity. Namely, we require that $(\gamma(t)-\gamma_f)(1-\Delta)^{s/2}\in \gS^p$ where $\gS^p$ is the $p$th Schatten space. By using the conservation of the relative Hartree energy and Lieb-Thirring inequalities from~\cite{FraLewLieSei-11,FraLewLieSei-12}, we then completely solve the zero-temperature case~\eqref{eq:Fermi-gas-zero-temp} in Section~\ref{sec:global_zero_temp}, in dimensions $d\geq2$. 

The rest of the paper is devoted to the positive temperature case. First, in Section~\ref{sec:Strichartz}, we give another local well-posedness result, useful for gases at positive temperature, and which is based on a recent Strichartz inequality in Schatten spaces from~\cite{FraLewLieSei-13}. Then, in Section~\ref{sec:entropy} we define the relative entropy of two density matrices following~\cite{LewSab-13} and prove Lieb-Thirring inequalities in the spirit of~\cite{FraLewLieSei-11,FraLewLieSei-12}. We are able to deal with much more than the three examples~\eqref{eq:Fermi-gas-positive-temp}--\eqref{eq:Boltzmann-gas-positive-temp}. Finally, in Section~\ref{sec:global_positive_temp} we construct solutions around many stationary states $\gamma_f=f(-\Delta)$ in dimensions $d=1,2,3$, which include~\eqref{eq:Fermi-gas-positive-temp}--\eqref{eq:Boltzmann-gas-positive-temp} among others.

\section{Main results for Bose and Fermi gases}

In this section, we quickly state our main results for \eqref{eq:Fermi-gas-zero-temp}--\eqref{eq:Boltzmann-gas-positive-temp}, which are the main cases of physical interest. We refer the reader to the appropriate sections for other results.

\subsection{Fermi gas at zero temperature}\label{sec:enonce_zero_temp}

We start by discussing the case of fermions at zero temperature~\eqref{eq:Fermi-gas-zero-temp}, that is, we take
\begin{equation}
\gamma_f=\1(-\Delta \leq\mu):=\Pi_\mu^-.
\label{eq:def_Pi_mu}
\end{equation}
This state is the formal minimizer of the energy 
$$\gamma\mapsto \tr(-\Delta-\mu)\gamma,$$
among all $0\leq\gamma\leq1$, but the energy is however infinite. The corresponding Hartree energy
$$\gamma\mapsto \tr(-\Delta-\mu)\gamma+\frac12 \int_{\R^d}\int_{\R^d}w(x-y)\rho_\gamma(x)\rho_\gamma(y)\,dx\,dy$$
is also infinite. The correct way to investigate the stability of $\Pi_\mu^-$ is to introduce the \emph{relative Hartree energy}, which is the formal difference between the energy of $\gamma$  and that of $\Pi_\mu^-$ (which are both infinite). In the stationary case, this technique goes back to~\cite{HaiLewSer-05a,HaiLewSol-07}. The relative Hartree energy is properly defined in Section~\ref{sec:global_zero_temp} and it is formally given by 
\begin{multline}
\cE(\gamma,\Pi_\mu^-):=\tr(-\Delta-\mu)(\gamma-\Pi_\mu^-)\\+\frac12 \int_{\R^d}\int_{\R^d}w(x-y)\rho_{\gamma-\Pi_\mu^-}(x)\rho_{\gamma-\Pi_\mu^-}(y)\,dx\,dy.
\label{eq:formal_relative_energy} 
\end{multline}
The \emph{relative kinetic energy} is the first term and it is non-negative, since $\Pi_\mu^-$ formally minimizes $\gamma\mapsto \tr(-\Delta-\mu)\gamma$. It is therefore natural to impose the condition that
$$0\leq \tr(-\Delta-\mu)(\gamma-\Pi_\mu^-)<\ii.$$
The trace has to be interpreted in an appropriate way, discussed at length in~\cite{FraLewLieSei-12} and in Section~\ref{sec:global_zero_temp}.

We emphasize that the function $\rho_{\gamma-\Pi_\mu^-}$ has no sign \emph{a priori}. Therefore the sign of the nonlinear term appearing in the relative energy~\eqref{eq:formal_relative_energy}
\begin{equation}
\int_{\R^d}\int_{\R^d}w(x-y)\rho_{\gamma-\Pi_\mu^-}(x)\rho_{\gamma-\Pi_\mu^-}(y)\,dx\,dy=(2\pi)^{d/2}\int_{\R^d}\hat{w}(k)|\hat{\rho_{\gamma-\Pi_\mu^-}}(k)|^2\,dk
\label{eq:focusing_vs_defocusing} 
\end{equation}
is determined by the sign of $\widehat{w}$. In the whole paper we therefore call \emph{defocusing} the case $\widehat{w}\geq0$ and \emph{focusing} the case $\widehat{w}\leq0$.

Our main result, proved in Section~\ref{sec:global_zero_temp} below, is the following.

\begin{theorem}[Global solutions at zero temperature]
Let $d\geq2$, $\mu>0$ and $\Pi_\mu^-=\1(-\Delta\leq\mu)$. Assume that $w\in L^1(\R^d)\cap L^\ii(\R^d)$ with $w(x)=w(-x)$ for a.e. $x\in\R^d$ is such that $\widehat{w}\geq0$ or, in dimension $d=2$, that $\widehat{w}\geq -\epsilon$ for $\epsilon>0$ small enough.

Then, for any $0\leq\gamma_0\leq1$ with finite kinetic energy relative to $\Pi_\mu^-$, 
$$0\leq \tr(-\Delta-\mu)(\gamma_0-\Pi_\mu^-)<\ii,$$
the Hartree equation~\eqref{eq:rHF_DM} admits a unique global solution $\gamma(t)$ such that $0\leq \tr(-\Delta-\mu)(\gamma(t)-\Pi_\mu^-)<\ii$ for all times $t\in\R$. 
Furthermore, the relative Hartree energy is finite, non-negative and conserved:
$$\cE(\gamma(t),\Pi_\mu^-)=\cE(\gamma(0),\Pi_\mu^-),\qquad\forall t\in\R.$$
\end{theorem}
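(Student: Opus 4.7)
\emph{Plan of proof.} The first step is to recast the equation in terms of the perturbation $Q(t):=\gamma(t)-\Pi_\mu^-$. Since $[-\Delta,\Pi_\mu^-]=0$ and $\rho_{\Pi_\mu^-}$ is a constant (so $w\ast\rho_{\Pi_\mu^-}$ is a scalar which drops out of every commutator), the Hartree equation~\eqref{eq:rHF_DM} is equivalent to
\begin{equation*}
i\partial_t Q=[-\Delta,Q]+[w\ast\rho_Q,\Pi_\mu^-+Q],\qquad Q(0)=\gamma_0-\Pi_\mu^-.
\end{equation*}
To obtain local existence and uniqueness, one introduces the Frank--Lewin--Lieb--Seiringer splitting $Q=Q^{++}+Q^{--}+Q^{+-}+Q^{-+}$ with respect to $\Pi_\mu^\pm=\1(-\Delta\gtrless\mu)$, and works in the Banach space of self-adjoint operators with $-\Pi_\mu^-\leq Q\leq \Pi_\mu^+$, whose diagonal blocks are in a weighted trace class and whose off-diagonal blocks are in a weighted Hilbert--Schmidt class, with total norm equivalent to the relative kinetic energy $\tr(-\Delta-\mu)(Q^{++}-Q^{--})\geq0$. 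A Duhamel-type fixed point in this space, based on the local well-posedness machinery of Section~\ref{sec:local}, yields a unique maximal solution $Q(t)$ on some interval $(-T_{\min},T_{\max})$.

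The second step makes sense of the relative Hartree energy~\eqref{eq:formal_relative_energy} and shows that it is non-negative, finite, and conserved. The relative kinetic term is defined rigorously as $\tr|{-\Delta-\mu}|^{1/2}(Q^{++}-Q^{--})|{-\Delta-\mu}|^{1/2}$, which is non-negative because $0\leq\gamma\leq1$ forces $Q^{++}\geq0$ and $Q^{--}\leq0$. The density $\rho_Q$, although not individually the density of a trace-class operator, is well-defined as an $L^1\cap L^{1+2/d}$ function by the Lieb--Thirring inequality at positive density from~\cite{FraLewLieSei-11,FraLewLieSei-12}, which controls $\|\rho_Q\|_{L^{1+2/d}}$ in terms of the relative kinetic energy. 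Under the hypothesis $w\in L^1\cap L^\infty$, Plancherel rewrites the nonlinear term as $(2\pi)^{d/2}\int\hat w(k)|\hat\rho_Q(k)|^2\,dk$, so that $\cE(\gamma,\Pi_\mu^-)$ is finite on the solution space.

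Conservation in time is obtained by a direct computation of $\frac{d}{dt}\cE(\gamma(t),\Pi_\mu^-)$ along the equation; the mean-field and kinetic contributions cancel thanks to the cyclicity of the trace. Because the trace defining the relative kinetic energy is only conditionally convergent (this is the crux of the analysis), this manipulation must be justified either by approximating $\gamma_0$ by smoother, more localized data for which $Q(t)$ itself is trace-class, and passing to the limit, or by using the block decomposition directly and exploiting that $i\partial_t Q^{\pm\pm}$ and $i\partial_t Q^{\pm\mp}$ can be written with compatible regularity. Once conservation is established, one obtains an a priori bound: in the defocusing case $\hat w\geq0$ the nonlinear term is non-negative, so the relative kinetic energy is bounded for all times by $\cE(\gamma_0,\Pi_\mu^-)$; in the two-dimensional nearly-defocusing case $\hat w\geq-\epsilon$, one estimates from below
\begin{equation*}
(2\pi)^{d/2}\!\!\int\hat w(k)|\hat\rho_Q(k)|^2\,dk\geq -\epsilon(2\pi)^{d/2}\|\rho_Q\|_{L^2}^2,
\end{equation*}
and absorbs this into the relative kinetic energy using the Lieb--Thirring inequality (for which $1+2/d=2$ precisely in $d=2$) provided $\epsilon$ is small.

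The final step is standard continuation: the a priori control on the relative kinetic energy also controls the Schatten norms used for local well-posedness (again via the Lieb--Thirring and Kato--Seiler--Simon type inequalities used in Section~\ref{sec:local}), so the maximal solution cannot blow up in finite time and $T_{\max}=T_{\min}=\infty$. Uniqueness is inherited from the local theory. I expect the delicate point to be not the local step nor the a priori estimate, but the rigorous verification that $\cE(\gamma(t),\Pi_\mu^-)$ is constant: the splitting $Q=Q^{++}+\cdots+Q^{-+}$ is not preserved by the flow (since the unitary propagator associated with $-\Delta+w\ast\rho_Q$ does not commute with $\Pi_\mu^\pm$), so one must recompute this decomposition at every time and propagate estimates on all four blocks simultaneously.
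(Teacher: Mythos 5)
Your plan captures the right global structure --- define the relative kinetic energy via $Q^{\pm\pm}$, use the positive-density Lieb--Thirring inequality for the a priori bound, prove energy conservation by approximating $\gamma_0$ by smooth finite-rank data (this is exactly what the paper does), and close with the $d\geq3$ defocusing / $d=2$ absorption argument. But the local well-posedness step contains a genuine gap.

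You propose to run the Duhamel fixed point directly in the energy space: self-adjoint $Q$ with $-\Pi_\mu^-\leq Q\leq\Pi_\mu^+$, weighted trace-class diagonal blocks $Q^{\pm\pm}$, weighted Hilbert--Schmidt off-diagonal blocks, with "total norm equivalent to the relative kinetic energy." This does not work, for two related reasons, and the paper explicitly goes out of its way to avoid it. First, the weight $|\Delta+\mu|^{1/2}$ used in $\cX_\mu$ \emph{vanishes on the Fermi sphere} $|k|=\sqrt{\mu}$, so finiteness of $\|Q|\Delta+\mu|^{1/2}\|_{\gS^2}$ and of $\||\Delta+\mu|^{1/2}Q^{\pm\pm}|\Delta+\mu|^{1/2}\|_{\gS^1}$ does \emph{not} force $Q$ to be compact, and $\cX_\mu$ is not contained in any Schatten--Sobolev space $\gS^{p,s}$. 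Consequently the "machinery of Section~\ref{sec:local}" you invoke (which lives in $\gS^{p,s}$) does not apply, and in particular the density $\rho_Q$ of a generic $Q\in\cX_\mu$ is not controlled by the $\cX_\mu$ norm alone. Second, the Lieb--Thirring bound on $\rho_Q$ uses the \emph{nonlinear} constraint $-\Pi_\mu^-\leq Q\leq\Pi_\mu^+$ in an essential way; during a Banach--Picard iteration the iterates do not satisfy this constraint, so you cannot use the Lieb--Thirring inequality to control $\rho_Q$ and hence the nonlinear term inside the contraction estimate. The paper's resolution is to prove local well-posedness in a \emph{strictly larger} space $\cY_\mu$ built from the projectors $\Pi_{2\mu}^\pm$ (a cut well away from the Fermi sphere), in which the weight never degenerates, the high-momentum block $\Pi_{2\mu}^+Q\Pi_{2\mu}^+$ is genuinely trace-class, and the low-momentum density $\rho_{\Pi_{2\mu}^-Q\Pi_{2\mu}^-}$ is \emph{hypothesized} to lie in $L^2$ as part of the norm (and one must check that this hypothesis propagates along the flow, cf.\ condition~\eqref{eq:condition_initial_Y_mu}). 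Only afterwards is it shown, by approximation and lower semicontinuity, that a solution starting in $\cK_\mu$ remains in $\cK_\mu$. This detour through $\cY_\mu$ is where the real work is, and your proposal skips it.

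A secondary issue: you write $\rho_Q\in L^1\cap L^{1+2/d}$, but the Lieb--Thirring inequality~\eqref{eq:Lieb-Thirring-zero-temp} only gives $\rho_Q\in L^2+L^{1+2/d}$ (the bound is quadratic for small $\rho_Q$ and only $(1+2/d)$-homogeneous for large $\rho_Q$); there is no $L^1$ control without finite relative particle number, which the energy space does not provide. This matters for making sense of $\int\hat w|\hat\rho_Q|^2$: the paper uses $w\in L^1\cap L^\infty$ precisely to handle $\rho_Q\in L^2+L^{1+2/d}$.
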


Our statement is rather vague with regards to the function space in which the solution $\gamma(t)$ lives. We refer the reader to Theorem~\ref{thm:global_zero_temp} for a more precise statement.

The fact that we are unable to deal with dimension $d=1$ may sound surprising, but this is due to the lack of a Lieb-Thirring inequality in this case~\cite{FraLewLieSei-12}, which is itself related to the Peierls instability of one-dimensional quantum systems~\cite[Sec. 4.3]{Peierls}.

Our assumption that $w\in L^1(\R^d)\cap L^\ii(\R^d)$ is probably far from optimal, we have not tried to optimize it. Similarly, we have only stated the result for the ``\emph{defocusing case}'' $\widehat{w}\geq0$ corresponding to having a non-negative potential energy in the relative Hartree energy~\eqref{eq:formal_relative_energy}. We are indeed able to prove the local existence of solutions without this additional assumption, but then global well-posedness is not known.

\subsection{Bose and Fermi gases at positive temperature}
We have a similar result at positive temperature, which however requires different assumptions on the interaction potential $w$, and covers the one-dimensional case.

This time, we take
\begin{equation}
\gamma^{\rm fer}_{T,\mu}=\frac{1}{e^{(-\Delta-\mu)/T}+1}
\label{eq:Fermi-gas-positive-temp_gamma_f} 
\end{equation}
with $\mu\in\R$ and $T>0$ for fermions, 
\begin{equation}
\gamma^{\rm bos}_{T,\mu}=\frac{1}{e^{(-\Delta-\mu)/T}-1}
\label{eq:Bose-gas-positive-temp_gamma_f} 
\end{equation}
with $\mu<0$ and $T>0$ for bosons, or
\begin{equation}
\gamma^{\rm bol}_{T,\mu}=e^{(\Delta+\mu)/T}
\label{eq:Boltzmann-gas-positive-temp_gamma_f} 
\end{equation}
with $\mu\in\R$ and $T>0$ for ``boltzons''. The relative Hartree energy which we have introduced in~\eqref{eq:formal_relative_energy} is again formally conserved, but it is not the appropriate functional at positive temperature because it does not control the growth of any norm of $\gamma-\gamma_{T,\mu}$: the term $\tr(-\Delta-\mu)(\gamma-\gamma_{T,\mu})$ has no sign \emph{a priori}. It is more convenient to use the \emph{relative free energy} which contains in addition the difference of the entropies, and is defined as
\begin{multline}
\cF(\gamma,\gamma_{T,\mu}):=\cH(\gamma,\gamma_{T,\mu})\\+\frac12 \int_{\R^d}\int_{\R^d}w(x-y)\rho_{\gamma-\gamma_{T,\mu}}(x)\rho_{\gamma-\gamma_{T,\mu}}(y)\,dx\,dy
\label{eq:formal_relative_free_energy} 
\end{multline}
where 
\begin{equation}
\cH(A,B)=-\tr\Big(S(A)-S(B)-S'(B)(A-B)\Big)\geq0
\label{eq:def_relative_entropy_1}
\end{equation}
is the \emph{relative entropy}, with
\begin{equation*}
S(x)=T\times 
\begin{cases}
-x\log(x)-(1-x)\log(1-x) &\text{for fermions,}\\
-x\log(x)+(1+x)\log(1+x)&\text{for bosons,}\\
-x\log(x)+x&\text{for ``boltzons''.}
\end{cases}
\end{equation*}
The relative entropy $\cH$ is properly defined and studied at length in~\cite{LewSab-13} where we even considered a general concave function $S$, leading to many other stationary states $\gamma_f$ for the Hartree equation. We do not discuss this here for shortness. For the previous gases, our main result is Theorem \ref{thm:global_positive_temp} in Section~\ref{sec:global_positive_temp}, for which we state a simplified version:

\begin{theorem}[Global solutions at positive temperature]\label{thm:global_positive_temp_intro}
Let $d\in\{1,2,3\}$, $T>0$ and $\gamma_{T,\mu}$ be given by~\eqref{eq:Fermi-gas-positive-temp_gamma_f} or~\eqref{eq:Boltzmann-gas-positive-temp_gamma_f} with $\mu\in\R$, or by~\eqref{eq:Bose-gas-positive-temp_gamma_f} with $\mu<0$. Fix $M=1$ for fermions, and $M\geq\max(1,\|\gamma_{T,\mu}\|)$ in the two other cases. Assume that $w\in L^1(\R^d)\cap L^\ii(\R^d)$ and that
\begin{itemize}
\item $\widehat{w}\geq0$ and $\nabla w\in L^1(\R^3)\cap L^\ii (\R^3)$ in dimension $d=3$;
\item $\widehat{w}\geq -\kappa_2$ in dimension $d=2$;
\item $\widehat{w}\geq -\sqrt{T}\,\kappa_1$ in dimension $d=1$,
\end{itemize}
where $\kappa_1$ and $\kappa_2$ are constants which only depend on $M$ and $\mu/T$.

Then, for any initial datum $0\leq\gamma_0\leq M$ with finite entropy relative to $\gamma_{T,\mu}$,
$$0\leq \cH(\gamma_0,\gamma_{T,\mu})<\ii,$$
the Hartree equation~\eqref{eq:rHF_DM} admits a unique global solution $\gamma(t)$ such that 
$0\leq \cH(\gamma(t),\gamma_{T,\mu})<\ii$
for all times $t\in\R$. Furthermore, the relative free energy $\cF(\gamma(t),\gamma_{T,\mu})$ is finite, non-negative and conserved.
\end{theorem}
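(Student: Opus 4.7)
The plan is to combine a local-in-time Cauchy theory in a Schatten-type space with a global a~priori bound coming from the conservation of the relative free energy $\cF(\gamma(t),\gamma_{T,\mu})$. Setting $Q=\gamma-\gamma_{T,\mu}$ and using that $[-\Delta,\gamma_{T,\mu}]=0$ together with the fact that $w*\rho_{\gamma_{T,\mu}}$ is a spatial constant, the Hartree equation is equivalent to
\begin{equation*}
i\,\partial_t Q=[-\Delta,Q]+[w*\rho_Q,\,\gamma_{T,\mu}+Q],\qquad Q(0)=\gamma_0-\gamma_{T,\mu}.
\end{equation*}
I would first invoke the local well-posedness result of Section~\ref{sec:Strichartz}, which is built on the orthonormal Strichartz inequality of Frank--Lewin--Lieb--Seiringer: applied to the Duhamel form of the $Q$-equation it yields, by a Picard contraction, a local solution in a mixed Schatten/Strichartz space whose existence time depends only on an appropriate Schatten norm of $Q(0)$. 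The extra assumption $\nabla w\in L^1\cap L^\ii$ in $d=3$ enters here, because closing the contraction in the narrower three-dimensional admissible range requires commuting a derivative through the convolution with $w$.

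\medskip

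Next, I would establish the conservation of $\cF(\gamma(t),\gamma_{T,\mu})$ for this local solution. For each of the three gases one has $S'(\gamma_{T,\mu})=-\Delta-\mu$ (read off the explicit formulas of $S$ and $f$), so using cyclicity of the trace and $[S'(\gamma),\gamma]=0$ one computes
\begin{equation*}
\frac{d}{dt}\cH(\gamma,\gamma_{T,\mu})=-i\,\tr\bigl([-\Delta,\,w*\rho_Q]\,\gamma\bigr),
\end{equation*}
while differentiating $\tfrac12\iint w(x-y)\rho_Q(x)\rho_Q(y)\,dx\,dy$ directly and using cyclicity again gives exactly the opposite quantity; thus $\cF$ is formally conserved. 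This identity is made rigorous by approximating $\gamma_0$ by smoother data (for which the manipulations are legitimate in $\gS^1$) and passing to the limit using the continuity of the local Schatten-valued flow.

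\medskip

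The third step, which is really the heart of the argument, converts the conservation of $\cF$ into a uniform-in-time bound on the relative entropy $\cH(\gamma(t),\gamma_{T,\mu})$. In the defocusing case $\widehat{w}\geq 0$ the potential term is non-negative and the bound $\cH(\gamma(t),\gamma_{T,\mu})\leq\cF(\gamma_0,\gamma_{T,\mu})$ is immediate. In the focusing cases $d=1,2$, one invokes the Lieb--Thirring inequality relative to $\gamma_{T,\mu}$ proved in Section~\ref{sec:entropy}: it controls $\|\rho_Q\|_{L^2(\R^d)}^2$ by $\cH(\gamma,\gamma_{T,\mu})$, with a constant that depends on $M$ and $\mu/T$ and which, in $d=1$, carries an extra $T^{-1/2}$ factor (explaining the $\sqrt{T}$ in the hypothesis). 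Combined with Plancherel and $\widehat{w}\geq-\kappa$, the potential term in $\cF$ is then absorbed into $\cH$, yielding a coercivity estimate $\cF\ge(1-\epsilon)\cH$ with $\epsilon<1$ as soon as $\kappa$ lies below an explicit threshold, and this is precisely how $\kappa_1,\kappa_2$ are defined.

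\medskip

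With $\cH(\gamma(t),\gamma_{T,\mu})$ bounded uniformly in $t$ by a multiple of $\cF(\gamma_0,\gamma_{T,\mu})$, the same Lieb--Thirring machinery bounds the Schatten norm of $Q(t)$ that drives the local existence time in Section~\ref{sec:Strichartz} uniformly on $\R$; the constraint $0\leq\gamma(t)\leq M$ is preserved because $\gamma(t)=U(t)\gamma(0)U(t)^*$ with $U(t)$ unitary. The local lifespan therefore does not shrink and the local solution extends to a global one, while uniqueness is inherited from the local statement. The main obstacle is really the third step: producing a Lieb--Thirring inequality relative to a \emph{non-trivial} reference state $\gamma_{T,\mu}$ with sharp enough constants and the correct $T$-dependence to make $\kappa_1,\kappa_2$ positive, and making sure that the Schatten norm it controls is exactly the one the local Strichartz theory needs in order to iterate; any mismatch between the two would obstruct the continuation argument.
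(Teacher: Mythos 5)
Your outline tracks the paper's own strategy closely: local well-posedness via the orthonormal Strichartz estimates of Section~\ref{sec:Strichartz} (with $\nabla w$ assumed in $d=3$), conservation of the relative free energy, coercivity from the positive-density Lieb--Thirring inequality of Section~\ref{sec:entropy} (via Klein's inequality to control the $\gS^{2,1}$ norm), and continuation using the blow-up criterion plus preservation of $0\le\gamma\le M$ under the unitary flow. Two structural remarks. First, the Picard iteration in Section~\ref{sec:Strichartz} is \emph{not} performed on $Q$ directly but on the potential $V=w*\rho_Q$ via the wave-operator Duhamel form~\eqref{eq:duhamel-potential}: $Q(t)$ by itself has no usable Duhamel expansion in a fixed Schatten class because the terms $\cW_V^{(n)}\gamma_f$ are not small in any single $\gS^p$. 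Second, the $\sqrt{T}$ in $d=1$ comes from the scaling~\eqref{eq:rescaled} of the equation (which turns $w$ into $w_T$ with $\widehat{w_T}(k)=T^{-1/2}\widehat{w}(\sqrt T k)$ in $d=1$) rather than from a $T$-dependence of the Lieb--Thirring constant itself; this is a cosmetic but real point.

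The one genuine gap is in your second step. You write that for smoother data ``the manipulations are legitimate in $\gS^1$,'' but even for $Q_0\in\gS^{1,4}$ and $\gamma_f$ smooth the naive differentiation $\tfrac{d}{dt}\tr\,S(\gamma)=\tr\,S'(\gamma)\dot\gamma$ is not available: $S(\gamma)=S(\gamma_f+Q)$ and $S'(\gamma_f)$ are never trace class (it is only the \emph{relative} entropy $\cH$ which is finite), so the three terms in~\eqref{eq:def_relative_entropy} cannot be split and differentiated separately. The paper circumvents this (Proposition~\ref{prop:conservation-relative-free-energy-S12}) by proving the intermediate identity
\begin{equation*}
\cH\bigl(\gamma_f+Q(t),\gamma_f\bigr)=\cH\bigl(\gamma_f+Q_0,\gamma_f\bigr)+\tr(-\Delta)\bigl(Q(t)-Q_0\bigr),
\end{equation*}
which converts the entropy variation into a well-defined trace-class kinetic-energy variation; the right side can then be handled exactly as in the zero-temperature Proposition~\ref{prop:conservation-relative-energy-cS1s}. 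This identity is itself nontrivial: the proof writes the wave operator as $e^{iB(t)}$, replaces $B$ by finite-rank momentum-localized $B_n$, and uses truncation projections $P_k$ that \emph{commute} with $e^{iB_n}$ so that $\tr_{P_k\gH}S(P_k\gamma P_k)$ is manifestly invariant under the conjugation. Moreover, the approximation in Lemma~\ref{lemma:approx-gamma} must approximate not only $\gamma_0$ but also the function $S$ (hence $\gamma_f$), because a generic admissible $S$ does not satisfy the regularity hypothesis $(1+k^4)f(k^2)\in\ell^1L^2$ needed for the $\gS^{1,4}$ theory; for the three concrete physical gases this is less of an issue, but your proposal should flag that this is where the work is, rather than resting on a formal $\tr(S'(\gamma)\dot\gamma)=0$ step which cannot be taken at face value.
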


As compared to the zero temperature case, we are able to deal with the one-dimensional case. The constants $\kappa_1$ and $\kappa_2$ appearing in the statement are related to the best constants in the Lieb-Thirring inequality, and they behave differently depending on the statistics. They stay uniformly bounded when $\mu/T$ varies in a compact set for fermions and boltzons, but they tend to 0 for bosons when $\mu/T\to 0^-$. In dimension $d=1$ we gain an additional factor $\sqrt{T}$ and we can therefore replace the smallness assumption on the negative part of $\widehat{w}$, by the requirement that the temperature is large enough and that $\mu/T$ stays bounded (away from $0$ in the bosonic case).

The dimensional restriction $d\le3$ comes from the difficulty to construct local solutions in high dimension. Technically, this is related to a lack of information about the high-momentum decay in Schatten spaces of an operator with a finite relative entropy. Klein's inequality $\cH(\gamma,\gamma_{T,\mu})\gtrsim\tr(1-\Delta)(\gamma-\gamma_{T,\mu})^2$ is the best information that we have (see~\eqref{eq:Klein2}). By using only this information we need $d\le3$ in our local well-posedness result in Section \ref{sec:Strichartz}. In the zero-temperature case we have the additional information that the operator $(1-\Pi_{2\mu}^-)(\gamma-\Pi_\mu^-)(1-\Pi_{2\mu}^-)$ is \emph{trace-class}, which is used to deal with $d\geq4$. 

That the (free) energy is conserved and positive can be used to prove that the solution does not escape far from the stationary states $\Pi_\mu^-$ and $\gamma^{\rm fer/bos/bol}_{T,\mu}$, which is usually called \emph{orbital stability}. It is an interesting question to investigate the \emph{asymptotic stability} of these states, that is, the weak limit of $\gamma(t)$ when $t\to\ii$. This is studied in~\cite{LewSab-13b}.

\subsection*{Notation}
In the whole paper, we denote by $\cB(\gH)$ the space of bounded operators on the Hilbert space $\gH$, with corresponding operator norm $\|A\|$. We use the notation $\gS^p(\gH)$ for the Schatten space of all the compact operators $A$ on $\gH$ such that $\tr|A|^p<\ii$, with $|A|=\sqrt{A^*A}$, and use the norm 
\begin{equation}
\norm{A}_{\gS^p(\gH)}:=(\tr|A|^p)^{1/p}.
\label{eq:Schatten}
\end{equation}
We refer to~\cite{Simon-79} for the properties of Schatten spaces. The spaces $\gS^2(\gH)$ and $\gS^1(\gH)$ correspond to Hilbert-Schmidt and trace-class operators. We often use the shorthand notation $\cB$ and $\gS^p$ when the Hilbert space $\gH$ is clear from the context. We also denote by
$$\Pi_\mu^-:=\1(-\Delta\leq\mu)\quad\text{and}\quad \Pi_\mu^+:=\1(-\Delta\geq\mu)$$
the spectral projectors of the Laplacian, which are multiplication operators in Fourier space by the functions $k\mapsto \1(k^2\leq \mu)$ and $k\mapsto \1(k^2\geq\mu)$.
Some other spaces and functionals used in the text are summarized in Table~\ref{tab:notation} in the end of the article.

\section{Local well-posedness in Schatten spaces with high regularity}\label{sec:local}

In this section we state and prove the local well-posedness of the Hartree equation \eqref{eq:rHF_DM}, assuming that the initial datum $\gamma(0)$ is a very smooth perturbation of the reference state $\gamma_f=f(-\Delta)$, with $f$ a general function. Our main theorem uses a fixed point method on the Duhamel formulation of~\eqref{eq:rHF_DM} in Schatten spaces and it is based on estimates on the density $\rho_{\gamma(t)}$ which are \emph{pointwise} in time. A more involved local well-posedness result based on Strichartz inequalities is later in Section~\ref{sec:Strichartz}. 

Let us introduce the new variable 
$$\boxed{Q(t):=\gamma(t)-\gamma_f}$$ which represents the variation with respect to the stationary state $\gamma_f$. The equation \eqref{eq:rHF_DM} can be equivalently rewritten as 
\begin{equation}\label{eq:rHF_Q}
 i\partial_t Q=[-\Delta+w*\rho_Q,\gamma_f+Q].
\end{equation}
Here we have used the fact that $w*\rho_{\gamma_f}$ is a constant function which, as an operator, commutes with $\gamma=\gamma_f+Q$. The formulation \eqref{eq:rHF_Q} is understood in a weak sense: if $Q\in C^0(I,\cB(L^2(\R^d)))$ is such that $w*\rho_Q\in L^1_{\rm loc}(I,L^\ii(\R^d)))$ for some time interval $I\ni0$ (here we assume that $w*\rho_Q$ is well-defined), then $Q$ is a solution to \eqref{eq:rHF_Q} if and only if for all $f,g\in H^2(\R^d)$ we have the identity
$$
  i\partial_t\langle f,Q(t)g\rangle=\langle(-\Delta)f,Q(t)g\rangle-\langle Q(t)f,(-\Delta)g\rangle+\langle f,[w*\rho_Q(t),\gamma_f+Q(t)]g\rangle
$$
in the sense of distributions on $I$. As usual for nonlinear evolution equations, this is equivalent to the Duhamel formulation
\begin{equation}\label{eq:rHF_Q_Duhamel}
    Q(t)=e^{it\Delta}Q_0e^{-it\Delta}-i\int_0^te^{i(t-t')\Delta}[w*\rho_Q(t'),\gamma_f+Q(t')]e^{i(t'-t)\Delta}\,dt',
\end{equation}
for all $t\in I$. In this section we use a fixed point procedure to solve~\eqref{eq:rHF_Q_Duhamel}. The main difficulty is that we have to deal with operators. Even if the initial datum $Q_0$ is a finite rank operator, then the rank of $Q(t)$ is not preserved along the flow, due to the \emph{linear response} $[w*\rho_Q(t'),\gamma_f]$ which is never finite-rank. So, we have to work in a bigger space. For the usual Hartree equation~\eqref{eq:Hartree} one can play with regularity and construct solutions in arbitrary Sobolev spaces (under suitable assumptions on $w$). Here we have two natural parameters at our disposal: $s$ which determines the Sobolev regularity on the one hand, and the exponent $p$ of the Schatten class in which $Q(t)$ lives on the other hand. This leads to the following Sobolev-like Schatten space
\begin{multline}
\gS^{p,s}=\bigg\{Q=Q^*\in \cB(L^2(\R^d))\ :\\ \norm{Q(1-\Delta)^{s/2}}_{\gS^p}^p=\tr\big((1-\Delta)^{s/2}Q^2(1-\Delta)^{s/2}\big)^{p/2}<\ii\bigg\}
\label{eq:def_Schatten-Sobolev}
\end{multline}
which, for $s\in\R$ and $p\geq1$, is a Banach space when endowed with the corresponding norm $\norm{Q}_{\gS^{p,s}}:=\norm{Q(1-\Delta)^{s/2}}_{\gS^p}$. The spaces $\gS^{p,s}$ satisfy the inclusion relation
\begin{equation}
\gS^{p,s}\subset \gS^{q,r}\qquad\text{for $s\geq r$ and $p\leq q$.}
\end{equation}
Finite rank operators with smooth eigenfunctions are dense in $\gS^{p,s}$ for all $1\leq p<\ii$ and all $s\geq0$.

Another difficulty of our equation is the proper definition of the density $\rho_Q$ of $Q$. Formally, $\rho_Q(x)=Q(x,x)$ but this only makes sense when the kernel $Q(x,y)$ of $Q$ exists and is continuous. Another, better, definition is based on the spectral decomposition of $Q=\sum_{j}n_j|u_j\rangle\langle u_j|$, leading to $\rho_Q=\sum_j n_j|u_j|^2$. This is particularly appropriate for trace-class operators, that is, when $\sum_j |n_j|=\norm{Q}_{\gS^1}<\ii$, in which case $\rho_Q\in L^1(\R^d)$. 
For higher Schatten spaces, we have two strategies at our disposal to deal with $\rho_Q$. We can define the density $\rho_{Q(t)}$ in a weak sense in time, using the time integral in the Duhamel formula~\eqref{eq:rHF_Q_Duhamel} and Strichartz inequalities. This is discussed later in Section~\ref{sec:Strichartz}. The other strategy which is developed in this section is much simpler. It consists in assuming that $s$ is large enough, which yields a well-defined density for all times, by a variant of the Sobolev inequality for operators.

\begin{lemma}[Density $\rho_Q$ for $Q$ in $\gS^{p,s}$]\label{lem:density_Schatten-Sobolev}
Let $d\geq1$, $p\geq1$ and $s>d(p-1)/p$  or $s\geq0$ if $p=1$. Then any $Q\in \gS^{p,s}$ is locally trace-class. Its density $\rho_Q$ belongs to $L^q(\R^d)$ and satisfies
\begin{equation}
\norm{\rho_Q}_{L^q(\R^d)}\leq C\norm{Q}_{\gS^{p,s}}, \qquad \forall Q\in\gS^{p,s},
\label{eq:estim_density_Schatten-Sobolev} 
\end{equation}
for all $p\leq q<d/(d-s)$ if $s\leq d$ and for $p=q=1$ if $s=0$. 

If $s>d$, then $\rho_Q\in L^p(\R^d)\cap L^\ii(\R^d)$ and
\begin{equation}
\norm{\rho_Q}_{L^p(\R^d)}+\norm{\rho_Q}_{L^\ii(\R^d)}\leq C\norm{Q}_{\gS^{p,s}}, \qquad \forall Q\in\gS^{p,s}.
\label{eq:estim_density_Schatten-Sobolev_Linfty} 
\end{equation}
The constants depend only on $d$, $p$, $q$ and $s$.
\end{lemma}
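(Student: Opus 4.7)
My plan is to prove the estimate by duality. For a test function $V\in C^\infty_c(\R^d)$, the density is defined via
\begin{equation*}
\int_{\R^d}V\rho_Q\,dx:=\tr(VQ),
\end{equation*}
so by Riesz representation the desired bound $\|\rho_Q\|_{L^q}\leq C\|Q\|_{\gS^{p,s}}$ is equivalent to $|\tr(VQ)|\leq C\|V\|_{L^{q'}}\|Q\|_{\gS^{p,s}}$, with $q'=q/(q-1)$. Local trace-classness then follows by specializing to $V=\1_K$ for compact $K$.

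The key step combines cyclicity of the trace with H\"older's inequality in Schatten classes. Setting $A:=Q(1-\Delta)^{s/2}\in\gS^p$, and using the embedding $\gS^p\subset\gS^q$ for $q\geq p$ (so that $\|A\|_{\gS^q}\leq\|A\|_{\gS^p}=\|Q\|_{\gS^{p,s}}$), we obtain
\begin{equation*}
|\tr(VQ)|=\bigl|\tr\bigl(V(1-\Delta)^{-s/2}A\bigr)\bigr|\leq\|V(1-\Delta)^{-s/2}\|_{\gS^{q'}}\|Q\|_{\gS^{p,s}}.
\end{equation*}
The problem thus reduces to the Fourier multiplier bound $\|V(1-\Delta)^{-s/2}\|_{\gS^{q'}}\leq C\|V\|_{L^{q'}}$, which I would obtain from the Kato--Seiler--Simon inequality $\|f(x)g(-i\nabla)\|_{\gS^r}\leq(2\pi)^{-d/r}\|f\|_{L^r}\|g\|_{L^r}$ (valid for $r\geq 2$) applied with $f=V$, $g(k)=(1+|k|^2)^{-s/2}$, and $r=q'$. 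The requirement $g\in L^{q'}$ amounts to $sq'>d$, that is $q<d/(d-s)$, matching the stated range; combined with $q'\geq 2$ and $q\geq p$, this covers all $q\in[p,d/(d-s))\cap[p,2]$, and in particular handles the case $p\leq 2$.

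The main obstacle is reaching exponents $q>2$, which is unavoidable when $p>2$ and also needed for the $L^\infty$ endpoint of the last part of the statement. For the latter, when $s>d$ the Bessel kernel $G_s$ of $(1-\Delta)^{-s/2}$ is continuous and bounded, so the integral kernel $Q(x,y)=\int A(x,z)G_s(z-y)\,dz$ is jointly continuous and uniformly controlled by $\|A\|_{\gS^p}$: this is straightforward via Cauchy--Schwarz when $p\leq 2$ (since then $A$ is Hilbert--Schmidt and has a genuine kernel), while the case $p>2$ would require a Cwikel-type bound in weak Schatten spaces. The intermediate range $2<q<d/(d-s)$ would then be filled in by complex interpolation between the $L^2$ bound coming from Kato--Seiler--Simon and the $L^\infty$ endpoint; the case $p>2$ with $s\leq d$ is the most delicate, and I expect it to demand a Schatten-valued interpolation argument finely tuned to the scaling threshold $sq'=d$ to reach the sharp exponent $d/(d-s)$ rather than the weaker $d/(d-2s)$ that a na\"ive Sobolev embedding on eigenfunctions would give.
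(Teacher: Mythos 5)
Your duality set-up is the right framework and your analysis correctly isolates the obstruction: placing the entire weight $(1-\Delta)^{-s/2}$ on one side of $V$ forces you to bound $\|V(1-\Delta)^{-s/2}\|_{\gS^{q'}}$, and the Kato--Seiler--Simon inequality only applies for $q'\geq 2$, i.e.\ $q\leq 2$. But the interpolation route you sketch to go beyond $q=2$ is not the way forward, and the $L^\ii$ endpoint for $p>2$ does not need any Cwikel-type machinery. The missing idea is a \emph{symmetric} factorization: write
\begin{equation*}
\tr(QV)=\tr\Big((1-\Delta)^{s/4}Q(1-\Delta)^{s/4}\cdot(1-\Delta)^{-s/4}V(1-\Delta)^{-s/4}\Big),
\end{equation*}
so that H\"older in Schatten spaces pairs the $\gS^p$ norm of $(1-\Delta)^{s/4}Q(1-\Delta)^{s/4}$ (which is controlled by $\|Q\|_{\gS^{p,s}}=\|Q(1-\Delta)^{s/2}\|_{\gS^p}$ via the Araki--Lieb--Thirring inequality, a step you also omit) against the $\gS^{p'}$ norm of $(1-\Delta)^{-s/4}V(1-\Delta)^{-s/4}$. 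Writing the latter as $\big[(1-\Delta)^{-s/4}|V|^{1/2}\big]\cdot\text{sgn}(V)\cdot\big[|V|^{1/2}(1-\Delta)^{-s/4}\big]$, it suffices to bound $\|(1-\Delta)^{-s/4}|V|^{1/2}\|_{\gS^{2p'}}^2$ by KSS, and $2p'\geq 2$ holds for \emph{every} $p\geq1$. The integrability condition is then $sp'>d$, i.e.\ $p<d/(d-s)$, which is exactly the hypothesis, and the full range $p\le q<d/(d-s)$ follows by applying the same bound with $(q,s)$ in place of $(p,s)$ together with the embedding $\gS^{p,s}\subset\gS^{q,s}$.

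For $s>d$ the same symmetric argument gives the $L^\ii$ bound directly: estimate $\|(1-\Delta)^{s/4}Q(1-\Delta)^{s/4}\|$ in \emph{operator} norm (which is $\le\|Q\|_{\gS^{p,s}}$) and pair it with $\|(1-\Delta)^{-s/4}|V|^{1/2}\|_{\gS^2}^2\le C\|V\|_{L^1}$ from KSS with $r=2$; this requires only $s>d$ and works uniformly in $p$. So the kernel/Cauchy--Schwarz argument you proposed is fine for $p\le2$ but is a detour, and the complex interpolation you flagged as "the most delicate" part is unnecessary once you split the resolvent symmetrically.
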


\begin{proof}
The result is well-known for $p=1$ and $s=0$ (see, e.g., the appendix of \cite{GriHan-12}). Furthermore, since $\gS^{p,s}\subset \gS^{q,s}$ for all $q\geq p$, we only have to prove the inequality~\eqref{eq:estim_density_Schatten-Sobolev} for $q=p$ and for $q=\infty$ if $s>d$.

We compute, for $Q\in\gS^{p,s}$ and a localization function $\chi\in L^\ii_c(\R^d)$
\begin{equation*}
\chi Q\chi =\chi (1-\Delta)^{-s/4}(1-\Delta)^{s/4}Q(1-\Delta)^{s/4}(1-\Delta)^{-s/4}\chi.
\end{equation*}
The Kato-Seiler-Simon (KSS) inequality (see~\cite{SeiSim-75} and~\cite[Thm 4.1]{Simon-79})
\begin{equation}
\norm{f(x)\, g(-i\nabla)}_{\gS^r}\leq (2\pi)^{-d/r} \norm{f}_{L^r(\R^d)}\norm{g}_{L^r(\R^d)},\qquad\forall r\geq2,
\label{eq:KSS}
\end{equation}
tells us that 
$$\norm{\chi (1-\Delta)^{-s/4}}_{\gS^{2p'}}\leq (2\pi)^{-d/p'}\norm{\chi}_{L^{2p'}(\R^d)}\left(\int_{\R^d}\frac{dk}{(1+|k|^2)^{sp'/2}}\right)^{\frac{1}{2p'}}$$
where the integral on the right side is finite for $s>d/p'=d(p-1)/p$. By H\"older's inequality in Schatten spaces, this proves that 
$$\norm{\chi Q\chi}_{\gS^1}\leq C\norm{\chi}_{L^{2p'}(\R^d)}^2\norm{(1-\Delta)^{s/4}|Q|^{1/2}}_{\gS^{2p}}^2.$$
On the other hand, by the Araki-Lieb-Thirring inequality for operators (see~\cite[Thm 9]{LieThi-76} and~\cite[Thm 1]{Araki-90}), we have 
$$\norm{(1-\Delta)^{s/4}|Q|^{1/2}}_{\gS^{2p}}^2\leq \norm{(1-\Delta)^{s/2}Q}_{\gS^{p}}=\norm{Q}_{\gS^{p,s}}.$$
We therefore obtain that $\chi Q\chi$ is trace-class, for any $\chi\in L^\ii_c(\R^d)$, hence that $Q$ is locally trace-class. In particular, $\rho_Q(x)$ is well defined in $L^1_{\rm loc}(\R^d)$.

In order to prove that $\rho_Q\in L^p(\R^d)$, we argue by duality, starting with a finite rank operator $Q$ and a potential $V\in L^\ii_c(\R^d)$. We find
\begin{multline*}
\int_{\R^d}\rho_QV=\tr(QV)=\tr\left((1-\Delta)^{\frac s4}Q(1-\Delta)^{\frac s4}(1-\Delta)^{-\frac s4}V(1-\Delta)^{-\frac s4}\right)\\
\leq \norm{Q}_{\gS^{p,s}}\norm{(1-\Delta)^{-\frac s4}|V(x)|^{\frac12}}_{\gS^{2p'}}^2\leq C\norm{Q}_{\gS^{p,s}}\norm{V}_{L^p{'}(\R^d)}.
\end{multline*}
By density of finite rank operators, the final estimate stays true for all $Q\in\gS^{p,s}$ and all $V\in L^\ii_c(\R^d)$. By duality, this finally proves that $\rho_Q\in L^p(\R^d)$ and that $\norm{\rho_Q}_{L^p(\R^d)}\leq C\norm{Q}_{\gS^{p,s}}$.

If $s>d$, the exact same proof shows that
\begin{equation*}
\int_{\R^d}\rho_QV \leq \norm{(1-\Delta)^{\frac{s}4}Q(1-\Delta)^{\frac{s}4}}\norm{(1-\Delta)^{-\frac{s}4}|V|^{\frac12}}_{\gS^2}^2\leq C\norm{Q}_{\gS^{p,s}}\norm{V}_{L^1}
\end{equation*}
which gives the estimate on $\norm{\rho_Q}_{L^\ii}$.
\end{proof}

With Lemma~\ref{lem:density_Schatten-Sobolev} at hand, it makes sense to look for solutions of~\eqref{eq:rHF_Q_Duhamel} satisfying $Q(t)\in \gS^{p,s}$ for all times, since $\rho_Q$ is well defined in this case. 

\begin{theorem}[Well-posedness in $\gS^{p,s}$]\label{thm:local-wp-Sps}
Let $d\geq1$, $1\leq p<\ii$ and $s>d(p-1)/p$  or $s\geq0$ if $p=1$. Assume that 
\begin{equation}
w\in W^{s,p'}(\R^d)
\label{eq:cond_w} 
\end{equation}
and that 
\begin{equation}
\dps\int_{\R^d}(1+k^2)^{\tfrac{n p}2}|f(k^2)|^p\,dk<\ii\quad\text{and}\quad w\in W^{n,1}(\R^d) \quad\text{if $p\geq2$,}
\label{eq:cond_f_w_1} 
\end{equation}
or that 
\begin{multline}
\sum_{z\in\Z^d}\left(\int_{C_z}(1+k^2)^{n}|f(k^2)|^2\,dk\right)^{p/2}<\ii\\ \quad\text{and}\quad \sum_{z\in\Z^d}\norm{w}_{W^{n,\frac{2p}{3p-2}}(C_z)}<\ii\qquad \text{if $1\leq p<2$,}
\label{eq:cond_f_w_2}
\end{multline}
where $n:=\lceil s\rceil$ is the smallest integer $n\geq s$ and $C_z=z+[-1/2,1/2)^d$.

Then, for any initial datum $Q_0\in\gS^{p,s}$ there exists a \emph{unique maximal solution} $Q(t)\in C^0_t((-T^-,T^+);\gS^{p,s})$ of~\eqref{eq:rHF_Q_Duhamel} with $T^\pm>0$. Furthermore, we have the blow-up criterion
\begin{equation}
T^\pm<\ii \Longrightarrow \lim_{t\to \pm T^\pm}\norm{Q(t)}_{\gS^{p,s}}=+\ii. 
\label{eq:blowup-criterion}
\end{equation}

Finally, the unique solution $Q(t)$ depends continuously on $Q_0\in\gS^{p,s}$, and on $w$ and $f$: if $Q_0^{(n)}\to Q_0$ strongly in $\gS^{p,s}$, $w_n\to w$ and $f_n\to f$ for the norms corresponding to \eqref{eq:cond_w}, \eqref{eq:cond_f_w_1} and \eqref{eq:cond_f_w_2}, then the unique solution $Q^{(n)}(t)$ has maximal times of existence $T_n^\pm$ such that $\liminf_{n\to\ii} T_n^\pm\geq T^\pm$ and $Q^{(n)}(t)\to Q(t)$ in $C_t^0([-T^-+\epsilon,T^+-\epsilon],\gS^{p,s})$ for every $\epsilon>0$.
\end{theorem}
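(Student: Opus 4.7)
The plan is to run a Banach fixed point argument on the Duhamel map
\begin{equation*}
\Phi(Q)(t) := e^{it\Delta}Q_0 e^{-it\Delta} - i\int_0^t e^{i(t-t')\Delta}\bigl[w*\rho_{Q(t')},\,\gamma_f + Q(t')\bigr]e^{-i(t-t')\Delta}\,dt'
\end{equation*}
on a closed ball of the Banach space $X_T := C^0([-T,T];\gS^{p,s})$ for $T>0$ small. The free propagator $e^{it\Delta}$ is unitary and commutes with $(1-\Delta)^{s/2}$, so conjugation by it is an isometry of $\gS^{p,s}$ and the whole problem reduces to pointwise-in-$t$ bounds on the integrand in $\gS^{p,s}$. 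Splitting the inner commutator as $[V_Q,Q] + [V_Q,\gamma_f]$ with $V_Q := w*\rho_Q$, the nonlinear piece is handled by a fractional operator Leibniz rule: Lemma~\ref{lem:density_Schatten-Sobolev} gives $\norm{\rho_Q}_{L^p}\lesssim \norm{Q}_{\gS^{p,s}}$, and \eqref{eq:cond_w} then yields $\norm{V_Q}_{W^{s,\infty}}\lesssim \norm{w}_{W^{s,p'}}\norm{\rho_Q}_{L^p}$, producing the quadratic bound $\norm{[V_Q, Q]}_{\gS^{p,s}}\lesssim \norm{Q}_{\gS^{p,s}}^2$.

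The crux of the matter is the linear-response term $[V_Q, f(-\Delta)]$, which has no smallness built in and must be estimated using the decay of $f$ combined with the Kato--Seiler--Simon inequality~\eqref{eq:KSS}. With $n=\lceil s\rceil$ one expands $(1-\Delta)^{s/2}[V_Q, f(-\Delta)]$ into a finite sum of operators of the form $(\partial^\alpha V_Q)\,g_\alpha(-\Delta)$ with $|\alpha|\leq n$, each $g_\alpha(k)$ being a polynomial weight of degree $\leq n$ times $f(|k|^2)$. For $p\geq 2$, KSS applies directly and gives
\begin{equation*}
\norm{(\partial^\alpha V_Q)\,g_\alpha(-\Delta)}_{\gS^p}\lesssim \norm{\partial^\alpha V_Q}_{L^p}\Bigl(\int_{\R^d}(1+k^2)^{np/2}|f(k^2)|^p\,dk\Bigr)^{1/p},
\end{equation*}
which, combined with \eqref{eq:cond_f_w_1}, gives $\norm{[V_Q, \gamma_f]}_{\gS^{p,s}}\lesssim \norm{Q}_{\gS^{p,s}}$. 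For $1\leq p<2$ the KSS inequality is not available at this exponent, so one partitions Fourier space into the unit cubes $C_z$, applies KSS cube by cube at the Hilbert--Schmidt exponent, and resums via H\"older and the $\ell^p$-summability expressed by \eqref{eq:cond_f_w_2}; this is precisely why \eqref{eq:cond_f_w_2} is formulated as a sum over $\Z^d$.

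With these two pointwise estimates, $\Phi$ stabilizes a ball of radius $\sim 1+\norm{Q_0}_{\gS^{p,s}}$ and is a strict contraction once $T$ is small enough in terms of $\norm{Q_0}_{\gS^{p,s}}$ and the $w,f$ norms. Banach's theorem then produces a unique solution in $X_T$, which is extended to a maximal existence interval $(-T^-,T^+)$ by reiteration from $\pm T$. Because the local existence time depends only on the current value of $\norm{Q(t)}_{\gS^{p,s}}$, the blow-up criterion~\eqref{eq:blowup-criterion} is automatic.

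Continuous dependence is then proved by running the same estimates on the difference of two solutions. On any compact subinterval $[-T^-+\epsilon,T^+-\epsilon]$ of the limiting maximal interval, the reference solution $Q(t)$ stays bounded in $\gS^{p,s}$, hence for $n$ large the perturbed map $\Phi^{(n)}$ is defined on the same ball around $Q(t)$ with a contraction constant uniform in $n$; a Gronwall-type argument in the $\gS^{p,s}$-difference propagates convergence throughout the interval and also yields $\liminf_n T_n^\pm\geq T^\pm$. The main technical obstacle is the linear-response estimate: one must match the Sobolev weight on the operator side with the integrability assumption on $f$, and the nature of this matching genuinely changes between $p\geq 2$ (global Fourier KSS) and $p<2$ (cube decomposition), which is what forces the two-case formulation of the hypotheses.
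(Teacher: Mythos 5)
The proposal is correct and follows essentially the same route as the paper: a Banach--Picard fixed point on the Duhamel map in $C^0([-T,T];\gS^{p,s})$, pointwise-in-$t$ estimates, the split of the commutator into the nonlinear piece $[V_Q,Q]$ (controlled via $\rho_Q\in L^p$ from Lemma~\ref{lem:density_Schatten-Sobolev} and $V_Q\in W^{s,\infty}$ by Young's inequality) and the linear-response piece $[V_Q,\gamma_f]$ (controlled via Kato--Seiler--Simon for $p\geq 2$ and a cube-decomposed version, Birman--Solomjak, for $p<2$), with Gr\"onwall giving unconditional uniqueness and continuous dependence. This matches the proofs of Lemmas~\ref{lemma:commutator-Vgamma-Hs} and \ref{lemma:commutator-VQ-Hs} and the concluding fixed-point argument in the paper.
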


Our result requires some stringent conditions~\eqref{eq:cond_w}, \eqref{eq:cond_f_w_1} and~\eqref{eq:cond_f_w_2} on $f$ and $w$, which are not optimal and which we have not tried to optimize. They are satisfied if for instance $f$ is bounded and decays fast enough, and if $w$ is in the Schwartz class. They are also valid in the four physical situations~\eqref{eq:Fermi-gas-zero-temp}--\eqref{eq:Boltzmann-gas-positive-temp} which we have in mind, since $f$ has a compact support or is exponentially decreasing. We note that when $1\leq p<2$, the conditions in~\eqref{eq:cond_f_w_1} are stronger than the ones similar to~\eqref{eq:cond_f_w_2}. Also, we remark that \eqref{eq:cond_f_w_1} and~\eqref{eq:cond_f_w_2} together with $s>d(p-1)/p$ imply that $\int_{\R^d}|f(k^2)|\,dk<\ii$ which is needed to see that $\rho_{\gamma_f}$ is a well-defined constant.

A natural situation, motivated by the positive temperature case (see Klein's inequality~\eqref{eq:Klein2} in Section~\ref{sec:entropy}), is the case of Hilbert-Schmidt perturbations with a Sobolev regularity $s=1$, that is of the space $\gS^{2,1}$. This situation is covered by the previous result only in dimension $d=1$. By using Strichartz estimates, we will construct solutions in $\gS^{2,1}$ in dimension $d=1,2,3$ later in Section~\ref{sec:Strichartz}.

Before turning to the proof of Theorem~\ref{thm:local-wp-Sps}, we mention an additional property of the solutions in $\gS^{p,s}$. 
Let $q\geq p$, $s\geq r>d(q-1)/q$, and $Q_0\in\gS^{p,s}\subset\gS^{q,r}$. We assume that $w,f$ satisfy~\eqref{eq:cond_w}, \eqref{eq:cond_f_w_1} and \eqref{eq:cond_f_w_2}. Then, the unique solution $Q$ to \eqref{eq:rHF_Q_Duhamel} has \emph{a priori} distinct maximal times of existence $T^\pm_{p,s}$ and $T^\pm_{q,r}$, corresponding to the two spaces $\gS^{p,s}$ and $\gS^{q,r}$. Since $\gS^{p,s}\subset \gS^{q,r}$, they obviously satisfy $T^\pm_{p,s}\le T^\pm_{q,r}$. The next result, which is sometimes called \emph{persistence of regularity} in the literature, shows that we have $T^\pm_{p,s}= T^\pm_{q,r}$. It is a simple corollary of the proof of Theorem~\ref{thm:local-wp-Sps}.

\begin{corollary}[Persistence of regularity]\label{cor:persistance-regularity}
Let $d\geq1$, $1\leq p\leq q<\ii$ and $s\geq r>d(q-1)/q$  or $s\geq r\geq0$ if $p=q=1$. Assume that $w$ and $f$ satisfy the same assumptions~\eqref{eq:cond_w}, \eqref{eq:cond_f_w_1} and \eqref{eq:cond_f_w_2} as in Theorem~\ref{thm:local-wp-Sps}. Then, for any $T_1,T_2>0$, if $Q\in C^0_t([-T_1,T_2],\gS^{p,s})$ is a solution to \eqref{eq:rHF_Q_Duhamel}, we have the following estimate
  \begin{equation}\label{eq:est-persistance-regularity}
    \|Q(t)\|_{\gS^{p,s}}\le \|Q_0\|_{\gS^{p,s}}e^{Ct\left(1+\sup_{t'\in[-T_1,T_2]}\|Q(t')\|_{\gS^{q,r}}\right)},\quad\forall t\in[-T_1,T_2].
  \end{equation}
Hence, $T^\pm_{p,s} = T^\pm_{q,r}$ for all $1\leq p\leq q<\ii$ and all $s\geq r>d(q-1)/q$. 
\end{corollary}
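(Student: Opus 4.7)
The plan is to revisit the commutator estimates from the proof of Theorem~\ref{thm:local-wp-Sps} in a \emph{tame} form and then close a Gr\"onwall inequality for the higher-regularity norm.

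First, applying $(1-\Delta)^{s/2}$ on the right of the Duhamel formula~\eqref{eq:rHF_Q_Duhamel} and using that $e^{it\Delta}$ commutes with $(1-\Delta)^{s/2}$ and acts isometrically in every Schatten class, I would obtain
\begin{equation*}
  \|Q(t)\|_{\gS^{p,s}}\le \|Q_0\|_{\gS^{p,s}}+\int_0^{|t|}\bigl\|[w*\rho_{Q(t')},\gamma_f+Q(t')]\bigr\|_{\gS^{p,s}}\,dt',
\end{equation*}
so everything reduces to bounding the commutator in $\gS^{p,s}$.

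The central step is to establish the \emph{tame} estimate
\begin{equation*}
  \bigl\|[w*\rho_Q,\gamma_f+Q]\bigr\|_{\gS^{p,s}}\le C\bigl(1+\|Q\|_{\gS^{q,r}}\bigr)\|Q\|_{\gS^{p,s}},
\end{equation*}
in which the density appears only through the lower-regularity norm. To do this, I would split the commutator as $[w*\rho_Q,\gamma_f]+[w*\rho_Q,Q]$. In the proof of Theorem~\ref{thm:local-wp-Sps} one controls $[w*\rho_Q,\gamma_f+Q]$ in $\gS^{p,s}$ by expanding $(1-\Delta)^{s/2}$ via a Leibniz decomposition (integer derivatives up to $n=\lceil s\rceil$, combined with KSS-type bounds on resolvent factors using the hypotheses \eqref{eq:cond_w}, \eqref{eq:cond_f_w_1}, \eqref{eq:cond_f_w_2}). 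In each resulting term, the density $\rho_Q$ enters exactly once via the convolution factor $w*\rho_Q$; the crucial refinement is simply to estimate this single occurrence by Lemma~\ref{lem:density_Schatten-Sobolev} applied in the lower space, $\|\rho_Q\|_{L^q}\le C\|Q\|_{\gS^{q,r}}$ (valid because $r>d(q-1)/q$), followed by Young's inequality using the assumptions on $w$. The remaining operator factor carries the Sobolev weight and is estimated exactly as in Theorem~\ref{thm:local-wp-Sps}, producing a linear $\|Q\|_{\gS^{p,s}}$ (the term $[w*\rho_Q,\gamma_f]$ contributes only the constant $1$ on the right-hand side since $\gamma_f$ is fixed).

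With this tame bound in hand, Gr\"onwall's lemma immediately yields~\eqref{eq:est-persistance-regularity}. To deduce $T^\pm_{p,s}=T^\pm_{q,r}$, note that $\gS^{p,s}\subset\gS^{q,r}$ gives $T^\pm_{p,s}\le T^\pm_{q,r}$ trivially. Conversely, if we had, say, $T^+_{p,s}<T^+_{q,r}$, then $M:=\sup_{t'\in[0,T^+_{p,s}]}\|Q(t')\|_{\gS^{q,r}}<\infty$ by continuity in $\gS^{q,r}$ up to $T^+_{q,r}$, and~\eqref{eq:est-persistance-regularity} would give a uniform bound on $\|Q(t)\|_{\gS^{p,s}}$ on $[0,T^+_{p,s})$, contradicting the blow-up criterion~\eqref{eq:blowup-criterion}; symmetrically for $T^-$.

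The main obstacle will be the tame commutator estimate: one must check that every term in the Leibniz-type expansion of $(1-\Delta)^{s/2}[w*\rho_Q,\gamma_f+Q]$ can be organised so that $\rho_Q$ is absorbed into a single scalar factor controlled by $\|Q\|_{\gS^{q,r}}$, while all Schatten-norm factors involving the operator $Q$ (or $\gamma_f$) carry at most $s$ derivatives and are therefore bounded by $\|Q\|_{\gS^{p,s}}$ alone. Everything else is a direct rerun of the estimates already set up for Theorem~\ref{thm:local-wp-Sps}.
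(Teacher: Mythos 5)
Your proposal is essentially the same as the paper's proof: you isolate the tame commutator estimate $\|[w*\rho_Q,\gamma_f+Q]\|_{\gS^{p,s}}\le C\bigl(1+\|w*\rho_Q\|_{W^{s,\ii}}\bigr)\|Q\|_{\gS^{p,s}}$ (coming from Lemma~\ref{lemma:commutator-Vgamma-Hs} and the proof of Lemma~\ref{lemma:commutator-VQ-Hs}), bound $\|w*\rho_Q\|_{W^{s,\ii}}$ through $\|\rho_Q\|_{L^q}\lesssim\|Q\|_{\gS^{q,r}}$ using $w\in W^{s,q'}$, close the argument with Gr\"onwall, and obtain $T^\pm_{p,s}=T^\pm_{q,r}$ from the blow-up criterion — exactly the chain of steps in the paper.
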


In the rest of the section we prove Theorem~\ref{thm:local-wp-Sps} and Corollary~\ref{cor:persistance-regularity}.

\begin{proof}[Proof of Theorem~\ref{thm:local-wp-Sps}]
We use a simple Banach-Picard fixed point theorem in a ball centered at $0$ in the space $\gS^{p,s}$. To this end, we need two estimates which are stated in the next lemmas.

 \begin{lemma}\label{lemma:commutator-Vgamma-Hs}
Let $d\geq1$, $1\leq p<\ii$ and $s>d(p-1)/p$  or $s\geq0$ if $p=1$. Assume that $f$ and $w$ satisfy~\eqref{eq:cond_f_w_1} and~\eqref{eq:cond_f_w_2}. Then we have $[\rho_Q\ast w,\gamma_f]\in\gS^{p,s}$ for every $Q\in\gS^{p,s}$ and
  $$\norm{[\rho_Q\ast w,\gamma_f]}_{\gS^{p,s}}\le C\|Q\|_{\gS^{p,s}},$$
with a constant $C$ depending only on $d$, $p$, $s$, $w$ and $f$. 
\end{lemma}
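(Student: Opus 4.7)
The strategy is to exploit the fact that $\gamma_f=f(-\Delta)$ and $M_s:=(1-\Delta)^{s/2}$ are both Fourier multipliers, hence commute, so they can be merged into a single Fourier multiplier with weighted symbol $h(k):=f(|k|^2)(1+|k|^2)^{s/2}\in L^p(\R^d)$ by~\eqref{eq:cond_f_w_1}. This reduces the commutator estimate to Kato--Seiler--Simon (KSS) type bounds applied to $V=\rho_Q\ast w$.

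The first step is to translate the assumption $Q\in\gS^{p,s}$ into pointwise control on $V$: Lemma~\ref{lem:density_Schatten-Sobolev} gives $\|\rho_Q\|_{L^p(\R^d)}\le C\|Q\|_{\gS^{p,s}}$, and Young's convolution inequality yields
\[
\|\partial^\alpha V\|_{L^p(\R^d)}\le\|\rho_Q\|_{L^p}\,\|\partial^\alpha w\|_{L^1}\le C\|Q\|_{\gS^{p,s}}\|w\|_{W^{n,1}},\qquad|\alpha|\le n=\lceil s\rceil,
\]
the right-hand side being finite by~\eqref{eq:cond_f_w_1}. For $1\le p<2$, Young's inequality is replaced by its $\ell^p$-sum local variant to match~\eqref{eq:cond_f_w_2}.

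Next I use the algebraic identity, immediate from $\gamma_fM_s=M_s\gamma_f=:T_h$,
\[
[V,\gamma_f]M_s=[V,T_h]+\gamma_f[M_s,V].
\]
The first piece is bounded directly by KSS~\eqref{eq:KSS}: for $p\ge 2$, both $\|VT_h\|_{\gS^p}$ and $\|T_hV\|_{\gS^p}$ are bounded by $C\|V\|_{L^p}\|h\|_{L^p}$, both factors being finite by the previous step and~\eqref{eq:cond_f_w_1}. The case $1\le p<2$ uses the $\ell^p$-over-cubes version of KSS, matching the sums in~\eqref{eq:cond_f_w_2}.

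For the remaining term $\gamma_f[M_s,V]$, I Taylor-expand $V$ at order $n$ in the position-space kernel $\check m_s(x-y)(V(y)-V(x))$ of $[M_s,V]$. This expresses $[M_s,V]$ as a finite sum $\sum_{1\le|\alpha|\le n-1}c_\alpha\,M_{\partial^\alpha V}\,T_{\partial^\alpha m_s}$ plus an order-$n$ Taylor remainder. Composing with $\gamma_f$ and applying a Peetre-type splitting of the symbol (or, equivalently, a further KSS-type swap of $M_{\partial^\alpha V}$ against the outer Fourier multipliers), each summand is controlled by $C\|f(|k|^2)(1+|k|^2)^{(s-|\alpha|)/2}\|_{L^p}\|\partial^\alpha V\|_{L^p}$; the first factor is dominated by $\|h\|_{L^p}$ since $|\alpha|\ge 1$, and the second by the first step. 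The main obstacle is making this decomposition work for non-integer $s$: for integer $s$, the Leibniz expansion of $[(1-\Delta)^{s/2},V]$ is a clean finite sum, while for fractional $s$ one needs either an integral representation of $(1-\Delta)^{s/2}$ (of Balakrishnan type) or real interpolation between integer exponents, together with a careful estimate of the order-$n$ Taylor remainder, whose pointwise bound on $\partial^\alpha V$ at the shifted argument $y+t(x-y)$ must be turned into an $L^p$-bound by translation invariance of the norm.
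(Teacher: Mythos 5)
Your overall strategy --- control $\|\rho_Q\|_{L^p}$ by Lemma~\ref{lem:density_Schatten-Sobolev}, merge $\gamma_f$ with $(1-\Delta)^{s/2}$ into a single multiplier $T_h$, and move derivatives onto $V$ --- matches the paper's in spirit, but there are two concrete gaps.

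First, introducing the commutator $[M_s,V]$ is an unnecessary complication that the paper avoids. The paper simply expands $[V,\gamma_f]M_s=V\gamma_f M_s-\gamma_f V M_s$ and bounds the two pieces separately. The first is a direct KSS bound since $\gamma_f M_s$ is a single multiplier with symbol in $L^p$ by~\eqref{eq:cond_f_w_1}. For the second it uses the elementary observation that, with $n:=\lceil s\rceil\ge s$, the operator $(1-\Delta)^{(s-n)/2}$ has norm at most $1$, so
\[
\|\gamma_f V(1-\Delta)^{s/2}\|_{\gS^p}\le\|\gamma_f V(1-\Delta)^{n/2}\|_{\gS^p},
\]
instantly reducing to an \emph{integer} power. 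This one line makes the fractional-power issue disappear; your proposal instead gestures at ``an integral representation of Balakrishnan type or real interpolation'' without carrying it out, and this is a genuine hole in the argument, not a routine technical point.

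Second, the ordering of operators in your Taylor/Leibniz expansion is wrong for the estimate you then claim. You write $[M_s,V]=\sum c_\alpha\,M_{\partial^\alpha V}\,T_{\partial^\alpha m_s}+\text{remainder}$, so after composing with $\gamma_f$ on the left you face a three-factor sandwich $\gamma_f\,M_{\partial^\alpha V}\,T_{\partial^\alpha m_s}$, with $\gamma_f$ and $T_{\partial^\alpha m_s}$ separated by the multiplication operator. The bound $C\|f(|k|^2)(1+|k|^2)^{(s-|\alpha|)/2}\|_{L^p}\|\partial^\alpha V\|_{L^p}$ you assert implicitly merges $\gamma_f$ and $T_{\partial^\alpha m_s}$ past $M_{\partial^\alpha V}$; the phrase ``a further KSS-type swap'' is carrying all the weight here, but KSS does not let you swap non-commuting operators. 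The paper's binomial identity
\[
V\partial_j^n=\sum_{k=0}^n\binom{n}{k}(-1)^{n-k}\,\partial_j^k\,M_{\partial_j^{n-k}V}
\]
is designed precisely so that the Fourier multipliers land on the \emph{left}, giving $\gamma_f V\partial_j^n=\sum_k\binom{n}{k}(-1)^{n-k}(\gamma_f\partial_j^k)(\partial_j^{n-k}V)$, a sum of clean (Fourier multiplier)$\times$(multiplication) terms to which KSS --- or Birman--Solomjak for $p<2$ --- applies directly. If you first replace $s$ by $\lceil s\rceil$ and then reorganize the derivatives to the left in this way, your argument collapses onto the paper's; as written, neither the fractional case nor the $L^p$ symbol estimate is justified.
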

 
\begin{proof}
Denote $V=\rho_Q\ast w$. We first assume $p\geq2$. By the KSS inequality~\eqref{eq:KSS}, we have 
$$\|V\gamma_f(1-\Delta)^{s/2}\|_{\gS_p}^2\leq (2\pi)^{-d/p}\|V\|_{L^p(\R^d)}\left(\int_{\R^d}(1+k^2)^{sp/2}|f(k^2)|^p\,dk\right)^{1/p}.$$
For the other term, let $n=\lceil s\rceil$ be the smallest integer $n\geq s$. Then 
$$\gamma_f V(1-\Delta)^{s/2}=\gamma_f V(1-\Delta)^{n/2}(1-\Delta)^{(s-n)/2}$$
and, therefore, $\|V\gamma_f(1-\Delta)^{s/2}\|_{\gS^p}\leq \|V\gamma_f(1-\Delta)^{n/2}\|_{\gS^p}$.
Since as before $\norm{\gamma_f V}_{\gS^p}\leq C\norm{V}_{L^p(\R^d)}$, we only have to estimate the $\gS^p$ norm of
$$\gamma_f V \partial_j^n=\sum_{k=0}^n\binom{n}{k}(-1)^{n-k}\left(\gamma_f\partial_j^k\right)\left(\partial_j^{n-k}V\right),$$
for all $j=1,...,d$. This is done by following again the previous argument, with $\norm{V}_{W^{n,p}(\R^d)}$ appearing on the right side. It remains to use that 
$$\norm{V}_{W^{n,p}(\R^d)}\leq \norm{w}_{W^{n,1}(\R^d)}\norm{\rho_Q}_{L^p(\R^d)}\leq C\norm{w}_{W^{n,1}(\R^d)}\|Q\|_{\gS^{p,s}}$$
by Lemma~\ref{lem:density_Schatten-Sobolev}.
When $1\leq p<2$, the argument is the same, except that this time we use the Birman--Solomjak inequality (see \cite[Th. 4.5]{Simon-79}) 
  \begin{equation}\label{eq:est-birman-solo}
   \|f(x)g(-i\nabla)\|_{\gS^p}\le C\|f\|_{\ell^pL^2}\|g\|_{\ell^p L^2}
  \end{equation}
for all $f,g\in\ell^p(L^2)$ where, following~\cite{Simon-79}, we have introduced the norm
\begin{equation}
\|f\|_{\ell^pL^2}^p:=\sum_{z\in\Z^d}\|f\|_{L^2(C_z)}^p.
\label{eq:def_ell_p_L_q}
\end{equation}
We recall that $C_z=z+[-1/2,1/2)^d$ is the unit cube centered at $z$.
Applying~\eqref{eq:est-birman-solo} we get the $\ell^p(L^2)$ norm of the derivatives of $V$ on the right. They can be estimated by $\norm{\rho_{Q}}_{L^p(\R^d)}$ using our assumption~\eqref{eq:cond_f_w_2} on $w$ and Young's inequality $\|f*g\|_{\ell^pL^2}\le C\|f\|_{\ell^1L^{\frac{2p}{3p-2}}}\|g\|_{L^p}$.
\end{proof}

  \begin{lemma}\label{lemma:commutator-VQ-Hs}
Let $d\geq1$, $1\leq p<\ii$ and $s>d(p-1)/p$  or $s\geq0$ if $p=1$. Assume that $w\in W^{s,p'}(\R^d)$. Then, for any $Q,Q'\in\gS^{p,s}$, we have $[\rho_Q\ast w,Q']\in\gS^{p,s}$ and 
  $$\norm{[\rho_Q\ast w,Q']}_{\gS^{p,s}}\le C\|Q\|_{\gS^{p,s}}\|Q'\|_{\gS^{p,s}}$$
with a constant $C$ depending only on $d$, $p$, $s$ and $w$. 
 \end{lemma}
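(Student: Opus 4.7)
The plan is to split $[\rho_Q*w, Q'] = V Q' - Q' V$ with $V := w * \rho_Q$ and estimate each product separately in $\gS^{p,s}$. The common preliminary bound on $V$ follows from Hölder's (equivalently Young's) inequality together with Lemma~\ref{lem:density_Schatten-Sobolev}: since $s > d(p-1)/p$ (or $s \geq 0$ when $p=1$), one has $\rho_Q \in L^p(\R^d)$ with $\|\rho_Q\|_{L^p} \leq C \|Q\|_{\gS^{p,s}}$, and therefore
$$\|V\|_{L^\infty(\R^d)} \leq \|w\|_{L^{p'}(\R^d)} \|\rho_Q\|_{L^p(\R^d)} \leq C\,\|w\|_{W^{s,p'}}\,\|Q\|_{\gS^{p,s}}.$$

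For the ``easy'' product $V Q'(1-\Delta)^{s/2}$, Hölder in Schatten classes yields immediately
$$\|V Q'(1-\Delta)^{s/2}\|_{\gS^p} \leq \|V\|_{L^\infty} \|Q'(1-\Delta)^{s/2}\|_{\gS^p} = \|V\|_{L^\infty}\|Q'\|_{\gS^{p,s}},$$
which combined with the previous display gives the desired bilinear bound for this piece.

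The main obstacle is the second product $Q'V(1-\Delta)^{s/2}$, where $V$ now sits between $Q'$ and the Sobolev weight; in contrast with Lemma~\ref{lemma:commutator-Vgamma-Hs}, $Q'$ does not commute with $(1-\Delta)^{s/2}$, so the weight cannot simply be absorbed on the side of $V$ via Kato--Seiler--Simon. I would factor
$$Q'V(1-\Delta)^{s/2} = \bigl[Q'(1-\Delta)^{s/2}\bigr]\,A, \qquad A := (1-\Delta)^{-s/2}\,V\,(1-\Delta)^{s/2},$$
so that by Hölder in Schatten classes it suffices to show that $A$ is bounded on $L^2$ with $\|A\|_{\cB} \leq C\,\|w\|_{W^{s,p'}}\,\|Q\|_{\gS^{p,s}}$, i.e., that multiplication by $V$ is a bounded operator on the Sobolev space $H^s$. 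I would establish this in the same spirit as the proof of Lemma~\ref{lemma:commutator-Vgamma-Hs}: reduce to the integer exponent $n = \lceil s\rceil$ using that $(1-\Delta)^{(s-n)/2}$ has operator norm $\leq 1$, then apply the Leibniz rule
$$\partial_j^n V = \sum_{k=0}^n \binom{n}{k} (\partial_j^k V)\,\partial_j^{n-k},\qquad j=1,\dots,d,$$
together with the pointwise bound $\|\partial_j^k V\|_{L^\infty} = \|(\partial_j^k w) * \rho_Q\|_{L^\infty} \leq \|\partial_j^k w\|_{L^{p'}} \|\rho_Q\|_{L^p}$ for $0 \le k \le n$, which is controlled by the $W^{s,p'}$-norm of $w$ and the $L^p$-norm of $\rho_Q$ via Lemma~\ref{lem:density_Schatten-Sobolev}. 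Assembling the two contributions yields the stated bilinear estimate with a constant depending only on $d$, $p$, $s$ and $w$.
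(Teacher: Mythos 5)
Your decomposition of the commutator, the bound on $VQ'(1-\Delta)^{s/2}$, and the reduction of $Q'V(1-\Delta)^{s/2}$ to the $H^s\to H^s$ boundedness of multiplication by $V$ (via the factorization $Q'V(1-\Delta)^{s/2}=[Q'(1-\Delta)^{s/2}]A$ with $A=(1-\Delta)^{-s/2}V(1-\Delta)^{s/2}$) all match the paper's proof. The gap is in how you establish that boundedness of $A$. Passing to the integer $n=\lceil s\rceil$ and applying the Leibniz rule requires controlling $\partial_j^k V$ in $L^\infty$ for all $0\le k\le n$, hence $\partial_j^n w\in L^{p'}$; but the hypothesis is only $w\in W^{s,p'}$, and for non-integer $s$ this does \emph{not} control $n>s$ derivatives of $w$, so the claim that the resulting bound is controlled by the $W^{s,p'}$-norm of $w$ is false precisely in the fractional case. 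There is also a secondary issue: the inequality $\|(1-\Delta)^{-s/2}V(1-\Delta)^{s/2}\|\le\|(1-\Delta)^{-n/2}V(1-\Delta)^{n/2}\|$ does not follow merely from $(1-\Delta)^{(s-n)/2}$ being a contraction, since writing $A=(1-\Delta)^{-s/2}V(1-\Delta)^{n/2}(1-\Delta)^{(s-n)/2}$ leaves a mismatched power $(1-\Delta)^{-s/2}$ on the left; one would need interpolation between $L^2$ and $H^n$ to close this. The paper bypasses both problems in one stroke by citing the Sobolev-multiplier estimate of Gulisashvili--Kon~\cite{GulKon-96} (their Theorem~1.4), $\|V\|_{H^s\to H^s}\le C\|V\|_{W^{s,\infty}}$, which holds for real $s\geq0$, and then closes with Young's inequality and Lemma~\ref{lem:density_Schatten-Sobolev}. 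Your argument is fine when $s$ is an integer, but the lemma is stated for general real $s>d(p-1)/p$.
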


 \begin{proof} Let $V=\rho_Q\ast w$.
  We have $\|VQ'(1-\Delta)^{s/2}\|_{\gS^p}\le \|V\|_{L^\ii}\|Q'\|_{\gS^{p,s}}$ and
$
\|Q'V(1-\Delta)^{\frac s2}\|_{\gS^p}\le \|Q'\|_{\gS^{p,s}}\|(1-\Delta)^{-\frac s2}V(1-\Delta)^{\frac s2}\|=\|Q'\|_{\gS^{p,s}}\|V\|_{H^s\to H^s}. 
$
By \cite[Thm  1.4]{GulKon-96}, one has
  \begin{equation}\label{eq:est-V-Hs-Hs}
    \|V\|_{H^s\to H^s}\le C\|V\|_{W^{s,\ii}}
  \end{equation}
and the result follows from Young's inequality and Lemma~\ref{lem:density_Schatten-Sobolev}.
 \end{proof}

Using Lemmas~\ref{lemma:commutator-Vgamma-Hs} and~\ref{lemma:commutator-VQ-Hs}, we can now finish the proof of Theorem~\ref{thm:local-wp-Sps}. Let $R>0$ and $Q_0\in\gS^{p,s}$ such that $\|Q_0\|_{\gS^{p,s}}\le R$. Note that the first term in Duhamel's formula has a norm independent of $t$: $\norm{e^{it\Delta}Q_0 e^{-it\Delta}}_{\gS^{p,s}}=\norm{Q_0}_{\gS^{p,s}}$. Let now $T=T(R)>0$ to be chosen later. Define the space $X_T:=C^0_t([-T,T],\gS^{p,s})$. It is a Banach space endowed with the norm $\|Q\|_{X_T}:=\sup_{t\in[-T,T]}\|Q(t)\|_{\gS^{p,s}}.$ For any $Q\in X_T$, we denote by $\Phi(Q)$ the function given by:
  \begin{equation}
   \Phi(Q)(t)=e^{it\Delta}Q_0e^{-it\Delta}-i\int_0^te^{i(t-t')\Delta}[w*\rho_Q(t'),\gamma_f+Q(t')]e^{i(t'-t)\Delta}\,dt'.
  \end{equation}
  From Lemmas \ref{lemma:commutator-Vgamma-Hs} and \ref{lemma:commutator-VQ-Hs}, we deduce that for all $Q\in X_T$, we have $\Phi(Q)\in X_T$ and the estimate 
  $\|\Phi(Q)\|_{X_T}\le R+CT\|Q\|_{X_T}(1+\|Q\|_{X_T})$.
  Furthermore, writing for $Q,Q'\in X_T$ and $t\in[-T,T]$ that
  \begin{multline}\label{eq:est-PhiQ-PhiQ'}
    \Phi(Q)(t)-\Phi(Q')(t)=-i\int_0^te^{i(t-t')\Delta}\left([w*(\rho_Q(t')-\rho_{Q'}(t')),\gamma_f+Q(t')]\right.\\
    \left.+[w*\rho_{Q'}(t'),Q(t')-Q'(t')]\right)e^{i(t'-t)\Delta}\,dt',
  \end{multline}
  we infer, using the same estimates, that
    $$\|\Phi(Q)-\Phi(Q')\|_{X_T}\le CT(1+\|Q\|_{X_T}+\|Q'\|_{X_T})\|Q-Q'\|_{X_T}.$$
  As a consequence, if $T$ is small enough such that $2CT(1+2R)\le1$ and $CT(1+4R)\le1/2$, we conclude that $\Phi$ is a contraction on the closed ball $B_{2R}:=\{Q\in X_T,\,\|Q\|_{X_T}\le 2R\}$, and hence admits a unique fixed point on $B_{2R}$ by the Banach--Picard theorem. 

Let us turn to the proof of uniqueness, independently of the size of the solutions (that is, whether they belong to $B_{2R}$ or not). Let $Q,Q'$ be two solutions to \eqref{eq:rHF_Q_Duhamel} in $C^0([-T_1,T_2],\gS^{p,s})$, for some $T_1,T_2>0$. Then, using the equation \eqref{eq:est-PhiQ-PhiQ'}, we have the refined estimate for all $t\in[-T_1,T_2]$,
  $$\|Q(t)-Q'(t)\|_{\gS^{p,s}}\le C(1+\|Q\|_{X_T}+\|Q'\|_{X_T})\int_0^t\|Q(t')-Q'(t')\|_{\gS^{p,s}}\,dt'.$$
This proves that $Q\equiv Q'$ on $[-T_1,T_2]$ by Gr\"onwall's lemma.  

By a standard argument, the local-in-time solution $Q(t)$ can be extended uniquely up to maximal times of existence $T^\pm$ on the left and on the right of $0$, as soon as $\norm{Q(t)}_{\gS^{p,s}}$ stays bounded. The continuity with respect to $f$, $w$ and $Q_0$ is also standard and the details are left to the reader.
\end{proof}

We end this section with the 

 \begin{proof}[Proof of Corollary \ref{cor:persistance-regularity}]
For simplicity, we assume $p\geq2$, the argument being similar for $p<2$.
From Lemma~\ref{lemma:commutator-Vgamma-Hs} and the proof of Lemma~\ref{lemma:commutator-VQ-Hs}, we have for all $t\in[-T_1,T_2]$
  \begin{equation*}
    \|Q(t)\|_{\gS^{p,s}}
\le\|Q_0\|_{\gS^{p,s}}\\+C\int_0^t\left(1+\|\rho_{Q(t')}\ast w\|_{W^{s,\ii}}\right)\|Q(t')\|_{\gS^{p,s}}\,dt'.
  \end{equation*}
Since $q\geq p$, we have $w\in W^{s,p'}\cap W^{n,1}\subset W^{s,q'}$, and therefore, by Young's inequality and Lemma~\ref{lem:density_Schatten-Sobolev},
$ \|\rho_{Q(t')}\ast w\|_{W^{s,\ii}}\leq C\|\rho_{Q(t')}\|_{L^q}\leq C\|Q(t')\|_{\gS^{q,r}}$.
We obtain 
$$\|Q(t)\|_{\gS^{p,s}}\leq \|Q_0\|_{\gS^{p,s}}+C\left(1+\sup_{t''\in[-T_1,T_2]}\|Q(t'')\|_{\gS^{q,r}}\right)\int_0^t\|Q(t')\|_{\gS^{p,s}}\,dt',$$
and the estimate \eqref{eq:est-persistance-regularity} follows from Gr\"onwall's lemma. The equality $T^\pm_{p,s}=T^\pm_{q,r}$ is a consequence of the blow-up criterion \eqref{eq:blowup-criterion} and of \eqref{eq:est-persistance-regularity}.
 \end{proof}

\section{Energy \& global well-posedness for the Fermi sea}\label{sec:global_zero_temp}

The purpose of this section is to prove the existence and uniqueness of \emph{global-in-time} solutions in the particular case of the Fermi gas at zero temperature and chemical potential $\mu>0$,
$$\boxed{\gamma_f=\1(-\Delta\leq\mu):=\Pi_\mu^-.}$$
Our approach is based on the results of the previous section, as well as on a notion of relative energy with respect to $\Pi_\mu^-$. Using a recent Lieb-Thirring inequality from~\cite{FraLewLieSei-11,FraLewLieSei-12}, we prove that the energy controls $Q(t)$ uniformly, leading to global solutions.

\medskip

Let $\mu>0$ and $\gamma_f=\Pi_\mu^-$. The \emph{relative Hartree energy} of any density matrix $0\leq \gamma\leq 1$ and the reference state $\Pi_\mu^-$, is the formal difference between the Hartree energy of $\gamma$ and that of $\Pi_\mu^-$, which are both infinite, leading to the formal expression
\begin{equation}\label{eq:def-relative-energy}
\cE(\gamma,\Pi_\mu^-):=\tr(-\Delta-\mu)(\gamma-\Pi_\mu^-)+\frac12\int_{\R^d\times\R^d}w(x-y)\rho_{\gamma-\Pi_\mu^-}(x)\rho_{\gamma-\Pi_\mu^-}(y)\,dxdy.
\end{equation}
A density matrix $0\leq\gamma\leq1$ is said to live in the energy space when the first (linear) term $\tr(-\Delta-\mu)(\gamma-\gamma_f)$ is finite, but it is necessary to interpret this term appropriately, following~\cite{HaiLewSer-05a} and~\cite{FraLewLieSei-12}. When $\gamma-\Pi_\mu^-$ is finite rank and smooth, then 
\begin{align}
0\leq& \tr(-\Delta-\mu)(\gamma-\Pi_\mu^-)\nonumber\\
&\qquad=\tr|-\Delta-\mu|^{\frac12}\Big(\Pi_\mu^+(\gamma-\Pi_\mu^-)\Pi_\mu^+-\Pi_\mu^-(\gamma-\Pi_\mu^-)\Pi_\mu^-\Big)|-\Delta-\mu|^{\frac12}\nonumber\\
&\qquad=\tr|-\Delta-\mu|^{\frac12}\Big(\Pi_\mu^+\gamma\Pi_\mu^++\Pi_\mu^-(1-\gamma)\Pi_\mu^-\Big)|-\Delta-\mu|^{\frac12}.\label{eq:def_trace}
\end{align}
The two formulas on the right side make sense for any $\gamma$ and they are taken as a definition of the trace $\tr(-\Delta-\mu)(\gamma-\Pi_\mu^-)$ in all cases. The set of all the density matrices which have a finite kinetic energy relative to $\Pi_\mu^-$ is therefore the convex set
\begin{multline}
 \cK_\mu:=\left\{0\le\gamma=\gamma^*\le1\ :\ |\Delta+\mu|^{\frac12}(\gamma-\Pi_\mu^-)^{\pm\pm}|\Delta+\mu|^{\frac12}\in\gS^1\right\}.
 \label{eq:def_K_mu}
\end{multline}
We have used the convenient notation
\begin{equation}\label{eq:def-++}
Q^{\sigma_1\sigma_2}=\Pi_\mu^{\sigma_1}Q\Pi_\mu^{\sigma_2}\quad\text{with}\quad \sigma_1,\sigma_2\in\{+,-\}.
\end{equation}
Note that the constraint $0\le\gamma\le1$ implies that 
$$(\gamma-\Pi_\mu^-)^{++}-(\gamma-\Pi_\mu^-)^{--}\ge(\gamma-\Pi_\mu^-)^2$$
and, hence,
\begin{multline*}
\norm{(\gamma-\Pi_\mu^-)|-\Delta-\mu|^{\frac12}}_{\gS^2}^2\\ \leq \tr|-\Delta-\mu|^{\frac12}\Big((\gamma-\Pi_\mu^-)^{++}-(\gamma-\Pi_\mu^-)^{--}\Big)|-\Delta-\mu|^{\frac12}<\ii.
\end{multline*}
It is therefore useful to introduce the following Banach space
\begin{multline}
 \cX_\mu:=\left\{Q=Q^*\in\cB(L^2(\R^d))\ :\ Q|\Delta+\mu|^{\frac12}\in\gS^2,\right.\\
  \left.|\Delta+\mu|^{\frac12}Q^{\pm\pm}|\Delta+\mu|^{\frac12}\in\gS^1\right\},
  \label{eq:def_X_mu}
\end{multline}
endowed with the norm 
\begin{multline*}
  \|Q\|_{\cX_\mu}:=\|Q\|+\left\|Q|\Delta+\mu|^{\frac12}\right\|_{\gS^2}+\left\||\Delta+\mu|^{\frac12}Q^{++}|\Delta+\mu|^{\frac12}\right\|_{\gS^1}\\
  +\left\| |\Delta+\mu|^{\frac12}Q^{--}|\Delta+\mu|^{\frac12}\right\|_{\gS^1},
\end{multline*}
where $Q^{\pm\pm}$ is defined in \eqref{eq:def-++}. We then have $\cK_\mu=\{0\le\gamma=\gamma^*\le1\ :\ \gamma-\gamma_f\in\cX_\mu\}$,
which realizes $\cK_\mu$ as a convex subset of the affine space $\gamma_f+\cX_\mu$. 

We emphasize that $|k^2-\mu|$ vanishes on the Fermi sphere $|k|=\sqrt{\mu}$. The fact that operators $Q$ in $\cX_\mu$ satisfy $Q|\Delta+\mu|^{1/2}\in\gS^2$ does \emph{not} imply that $Q$ is Hilbert-Schmidt. Indeed, $Q$ does not even have to be compact (but it is always bounded by assumption). In particular, $\cX_\mu$ is not included in any of the sets $\gS^{p,s}$ which we have introduced in the previous section. However, we have $\gS^{1,s}\subset \cX_\mu$ for all $s\geq2$, which will be used later.

A very important property of the relative kinetic energy is the following \emph{Lieb--Thirring inequality}, which was proved in~\cite[Thm  2.1]{FraLewLieSei-12}:
\begin{multline}\label{eq:Lieb-Thirring-zero-temp}
\tr(-\Delta-\mu)(\gamma-\Pi_\mu^-)
\ge K_{\rm LT}\!\int_{\R^d}\bigg(\big(\rho_{\Pi_\mu^-}+\rho_Q(x)\big)^{1+\tfrac{2}d}-(\rho_{\Pi_\mu^-})^{
1+\tfrac{2}d}\\-\frac{2+d}d(\rho_{\Pi_\mu^-})^{\tfrac{2}d}\,\rho_Q(x)\bigg)dx,
\end{multline}
for all $d\ge2$, $\mu>0$ and all $\gamma\in\cK_\mu$ with $Q:=\gamma-\Pi_\mu^-$. The constant $K_{\rm LT}$ only depends on the dimension $d\geq2$.
The inequality~\eqref{eq:Lieb-Thirring-zero-temp} is \emph{wrong} in dimension $d=1$, see~\cite{FraLewLieSei-12}. The inequality~\eqref{eq:Lieb-Thirring-zero-temp} implies that for any $\gamma\in\cK_\mu$, the density $\rho_{\gamma-\Pi_\mu^-}$ belongs to $L^2(\R^d)+L^{1+2/d}(\R^d)$. In particular, this shows that the nonlinear term in the relative Hartree energy \eqref{eq:def-relative-energy} is well-defined for any $\gamma\in\cK_\mu$ if $d\ge2$ and 
$w\in L^1(\R^d)\cap L^{(d+2)/4}(\R^d)$,
in which case we have $w*\rho_{\gamma-\Pi_\mu^-}\in L^2(\R^d)\cap L^{1+d/2}(\R^d)$. We will soon make the stronger assumption that $w\in L^1(\R^d)\cap L^\ii(\R^d)$ for simplicity.

We recall that the relative density $\rho_{\gamma-\Pi_\mu^-}$ has no sign in general. However, by~\eqref{eq:focusing_vs_defocusing} the second term in \eqref{eq:def-relative-energy} is non-negative when $\hat{w}\ge0$. Furthermore, when $d=2$ and $\|(\hat{w})_-\|_{L^\ii}<K_{\rm LT}/(2\pi)$ with $(\hat{w})_-:=\max(-\hat{w},0)$, we have 
$$\cE(\gamma,\Pi_\mu^-)\ge\left(1-\frac{2\pi\|\hat{w}_-\|_{L^\ii}}{K_{\rm LT}}\right)\tr(-\Delta-\mu)(\gamma-\Pi_\mu^-)\ge0.$$
We cannot apply the same argument when $d\ge3$ because the right side of \eqref{eq:Lieb-Thirring-zero-temp} behaves as $(\rho_{\gamma-\Pi_\mu^-})^{1+2/d}$ when $\rho_{\gamma-\Pi_\mu^-}$ is large, while the second term of \eqref{eq:def-relative-energy} behaves as $(\rho_{\gamma-\Pi_\mu^-})^2$. Summarizing our discussion, we have the 

\begin{proposition}[Coercivity of the relative energy in the defocusing case]\label{prop:relative-energy-positive}
 Let $\gamma\in\cK_\mu$. Then, when $d\ge3$ and $\hat{w}\ge0$ we have 
\begin{equation}
  \cE(\gamma,\Pi_\mu^-)\ge\tr(-\Delta-\mu)(\gamma-\Pi_\mu^-),
\label{eq:coercive_zero_temp_3D}
 \end{equation}
while when $d=2$ and $\|(\hat{w})_-\|_{L^\ii}<K_{\rm LT}/(2\pi)$ we have 
\begin{equation}
\cE(\gamma,\Pi_\mu^-)\ge\left(1-\frac{2\pi\|\hat{w}_-\|_{L^\ii}}{K_{\rm LT}}\right)\tr(-\Delta-\mu)(\gamma-\Pi_\mu^-).
\label{eq:coercive_zero_temp_2D}
\end{equation}
\end{proposition}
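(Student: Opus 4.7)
My plan is to split the relative Hartree energy as $\cE(\gamma,\Pi_\mu^-)=\cT(\gamma)+\cW(\gamma)$, where $\cT(\gamma):=\tr(-\Delta-\mu)(\gamma-\Pi_\mu^-)$ is the relative kinetic energy and $\cW(\gamma)$ the nonlinear direct term, and then to show that $\cW(\gamma)$ is either nonnegative (case $d\geq 3$) or controlled from below by $\cT(\gamma)$ via the Lieb--Thirring inequality (case $d=2$). First I would record that $\cT(\gamma)\geq 0$ for every $\gamma\in\cK_\mu$ is already built into the definition~\eqref{eq:def_trace}, since the right-hand side there is the trace of a manifestly nonnegative operator.

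For $d\geq 3$ with $\hat{w}\geq 0$, I would invoke the Plancherel identity~\eqref{eq:focusing_vs_defocusing} to rewrite
$$\cW(\gamma)=\frac{(2\pi)^{d/2}}{2}\int_{\R^d}\hat{w}(k)\,|\hat{\rho_Q}(k)|^2\,dk\geq 0,$$
where $Q:=\gamma-\Pi_\mu^-$. The Fourier transform $\hat{\rho_Q}$ is meaningful because the Lieb--Thirring inequality~\eqref{eq:Lieb-Thirring-zero-temp} places $\rho_Q\in L^2(\R^d)+L^{1+2/d}(\R^d)$, while $\hat{w}\in L^\ii(\R^d)$ since $w\in L^1(\R^d)$. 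Dropping the nonnegative $\cW(\gamma)$ then yields~\eqref{eq:coercive_zero_temp_3D} at once.

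For $d=2$ the core step is that the right-hand side of~\eqref{eq:Lieb-Thirring-zero-temp} collapses to a quadratic functional. Indeed, with $d=2$ one has $1+2/d=2$ and $(d+2)/d=2$, and the elementary identity $(a+b)^2-a^2-2ab=b^2$ applied pointwise with $a=\rho_{\Pi_\mu^-}$ and $b=\rho_Q$ turns~\eqref{eq:Lieb-Thirring-zero-temp} into
$$\cT(\gamma)\geq K_{\rm LT}\,\|\rho_Q\|_{L^2(\R^2)}^2.$$
In parallel, Plancherel applied to~\eqref{eq:focusing_vs_defocusing} bounds the possibly negative part of $\cW(\gamma)$ from below by a constant multiple of $-\|(\hat{w})_-\|_{L^\ii}\,\|\rho_Q\|_{L^2}^2$. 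Combining the two estimates then produces~\eqref{eq:coercive_zero_temp_2D} whenever $\|(\hat{w})_-\|_{L^\ii}$ lies below the stated threshold.

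The main obstruction --- and the reason no analogous coercivity is stated in dimension $d\geq 3$ for a focusing interaction --- is that for $d\geq 3$ the Lieb--Thirring bound only controls $\rho_Q$ in $L^{1+2/d}$ at large density, with exponent $1+2/d<2$, so it cannot absorb the quadratic quantity $\int\hat{w}_-\,|\hat{\rho_Q}|^2$ coming from the potential energy. The two-dimensional case is special precisely because the natural Lieb--Thirring exponent equals $2$ and thus matches the quadratic nature of the Hartree direct term, which is what makes the simple absorption argument above succeed.
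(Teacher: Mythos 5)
Your proposal is correct and follows exactly the same route as the paper's inline argument preceding the proposition: the non-negativity of the relative kinetic energy from \eqref{eq:def_trace}, Plancherel via \eqref{eq:focusing_vs_defocusing} for the direct term, and the observation that for $d=2$ the right-hand side of \eqref{eq:Lieb-Thirring-zero-temp} collapses to $K_{\rm LT}\|\rho_Q\|_{L^2}^2$ so that the possibly negative part of the direct term can be absorbed. The only thing left unverified is that the "constant multiple" in your absorption step matches the $2\pi$ in the statement; tracking it with the paper's Fourier normalization gives $\cW(\gamma)\ge -\tfrac{(2\pi)^{d/2}}{2}\|\hat w_-\|_{L^\ii}\|\rho_Q\|_{L^2}^2=-\pi\|\hat w_-\|_{L^\ii}\|\rho_Q\|_{L^2}^2$ for $d=2$, which in fact yields the (slightly stronger) bound with $\pi$ in place of $2\pi$, so the stated inequality follows.
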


The interest of the relative free energy $\cE(\gamma,\Pi_\mu^-)$ is that it is a conserved quantity. In the defocusing case it controls the norm of $\gamma-\Pi_\mu^-$ in $\cX_\mu$ and this can be used to prove global well-posedness.

\begin{theorem}[Global well-posedness in energy space at zero temperature]\label{thm:global_zero_temp}
Assume that $d\geq2$ and that $w\in L^1(\R^d)\cap L^\ii(\R^d)$ is such that $w(x)=w(-x)$ for a.e. $x\in\R^d$. Let $Q_0=\gamma_0-\Pi_\mu^-$ with $\gamma_0\in\cK_\mu$. Then there exists a unique maximal solution $Q(t)=\gamma(t)-\Pi_\mu^-\in C^0_t((-T^-,T^+),\cB)\cap L^\ii_{t,\rm loc}((-T^-,T^+),\cX_\mu)$ of~\eqref{eq:rHF_Q_Duhamel} with $\gamma(t)\in\cK_\mu$ for all $t$ and $T^\pm>0$. We have the blow-up criterion
\begin{equation}
T^\pm<\ii \Longrightarrow \lim_{t\to \pm T^\pm}\norm{Q(t)}_{\cX_\mu}=+\ii. 
\label{eq:blowup_energy_space}
\end{equation}
Its relative Hartree energy is conserved:
$$\cE(\gamma(t),\Pi_\mu^-)=\cE(\gamma_0,\Pi_\mu^-),\quad \forall t\in(-T^-,T^+).$$

Furthermore, when $d\ge3$ and $\hat{w}\ge0$ or when $d=2$ and $\|(\hat{w})_-\|_{L^\ii}<K_{\rm LT}/(2\pi)$, then the solution is global: $T^-=T^+=+\ii$.
\end{theorem}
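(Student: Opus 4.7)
The plan is to proceed in three stages: local well-posedness in the energy space $\cX_\mu$ by a Banach fixed point on the Duhamel formulation~\eqref{eq:rHF_Q_Duhamel}, conservation of the relative Hartree energy by approximation from the regular $\gS^{p,s}$ setting of Section~\ref{sec:local}, and finally global existence via the coercivity statement of Proposition~\ref{prop:relative-energy-positive}.

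For the local step, I would run a contraction on the ball of radius $2\|Q_0\|_{\cX_\mu}$ in $X_T=C^0_t([-T,T],\cX_\mu)$, for $T$ small depending only on $\|Q_0\|_{\cX_\mu}$. Since $e^{it\Delta}$ commutes both with $\Pi_\mu^\pm$ and with $|\Delta+\mu|^{1/2}$, the free propagation is an isometry on $\cX_\mu$, so the whole argument reduces to the two nonlinear estimates
$$\|[w*\rho_Q,\Pi_\mu^-]\|_{\cX_\mu}\leq C\|Q\|_{\cX_\mu},\qquad \|[w*\rho_Q,Q']\|_{\cX_\mu}\leq C\|Q\|_{\cX_\mu}\|Q'\|_{\cX_\mu}.$$
Both reduce to the action of $V=w*\rho_Q$ on $\cX_\mu$: the Lieb--Thirring inequality~\eqref{eq:Lieb-Thirring-zero-temp} gives $\rho_Q\in L^2(\R^d)+L^{1+2/d}(\R^d)$ with a bound controlled by $\|Q\|_{\cX_\mu}$, hence $w\in L^1\cap L^\infty$ yields $V\in L^\infty$, and one then controls the operator, Hilbert--Schmidt and $\gS^1$ block components of $\cX_\mu$ one by one, writing each block commutator as a combination of terms of the form $V Q^{\pm\pm}|\Delta+\mu|^{1/2}$, $V Q|\Delta+\mu|^{1/2}$, $|\Delta+\mu|^{1/2}\Pi_\mu^\pm V\Pi_\mu^\mp$, and invoking KSS-type inequalities as in Section~\ref{sec:local}. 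Uniqueness (independent of the radius) and the blow-up criterion~\eqref{eq:blowup_energy_space} then follow by the standard Gr\"onwall-plus-extension argument.

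The next stage is the conservation of $\cE(\gamma(t),\Pi_\mu^-)$. For regular data $Q_0\in\gS^{1,s}$ with $s\geq2$ (so $Q_0\in\cX_\mu$ as well), Theorem~\ref{thm:local-wp-Sps} produces a $\gS^{1,s}$-solution which, by the $\cX_\mu$-uniqueness of Step~1, coincides with the $\cX_\mu$-solution on their common interval. For this regular solution the kinetic part of $\cE$ is the trace of an honest trace-class operator, so I would differentiate directly and obtain
$$\frac{d}{dt}\cE(\gamma(t),\Pi_\mu^-)=-i\tr\big((H(t)-\mu)[H(t),\gamma(t)]\big)=0,$$
with $H(t)=-\Delta+w*\rho_{Q(t)}$; the vanishing comes from cyclicity of the trace and $[H(t)-\mu,H(t)]=0$. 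For general $Q_0\in\cX_\mu$ I would then approximate $Q_0$ in $\cX_\mu$ by smooth finite-rank perturbations $Q_0^{(n)}\in\gS^{1,s}$, use the $\cX_\mu$-continuous dependence coming from the contraction of Step~1 to obtain $Q^{(n)}(t)\to Q(t)$ in $\cX_\mu$ uniformly on a common interval, and pass to the limit in $\cE(\gamma^{(n)}(t),\Pi_\mu^-)=\cE(\gamma_0^{(n)},\Pi_\mu^-)$. This uses that the kinetic part of $\cE$ is continuous on $\cX_\mu$ (by $\gS^1$-continuity of the $\pm\pm$ blocks) and that the potential part is continuous via the $L^2+L^{1+2/d}$ continuity of $\rho_Q$.

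For global existence in the defocusing regime ($d\geq3$ with $\widehat w\geq0$, or $d=2$ with $\|(\widehat w)_-\|_{L^\infty}<K_{\rm LT}/(2\pi)$), Proposition~\ref{prop:relative-energy-positive} together with the conservation of $\cE$ yields a uniform-in-time \emph{a priori} bound on $\tr(-\Delta-\mu)(\gamma(t)-\Pi_\mu^-)$. Combined with $\|Q(t)\|\leq1$ and the operator inequality $(\gamma-\Pi_\mu^-)^{++}-(\gamma-\Pi_\mu^-)^{--}\geq(\gamma-\Pi_\mu^-)^2$ recalled just before~\eqref{eq:def_X_mu}, this controls $\|Q(t)\|_{\cX_\mu}$ uniformly in $t$, and the blow-up criterion~\eqref{eq:blowup_energy_space} then forces $T^\pm=+\infty$. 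The main obstacle I anticipate is the contraction estimate of Step~1: the kinetic weight $|\Delta+\mu|^{1/2}$ degenerates on the Fermi sphere $\{|k|^2=\mu\}$, so no elliptic gain is available in any single $\gS^{p,s}$ norm, and one must track precisely the interplay between the bulk Hilbert--Schmidt component $Q|\Delta+\mu|^{1/2}\in\gS^2$ and the $\gS^1$ diagonal blocks $|\Delta+\mu|^{1/2}Q^{\pm\pm}|\Delta+\mu|^{1/2}$ when commuting with the bounded multiplication operator $V=w*\rho_Q$.
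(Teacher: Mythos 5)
Your three-stage architecture (local well-posedness, conservation of the relative energy via smooth approximation, coercivity) matches the paper's, and your Steps~2 and~3 are close in spirit to the actual proof; the genuine gap is in Step~1, which the paper deliberately avoids. The paper does \emph{not} run the contraction in $\cX_\mu$. It works instead in the strictly larger space $\cY_\mu$ of~\eqref{eq:def_Y_mu} (unweighted $\gS^1$/$\gS^2$ conditions outside $|k|\le\sqrt{2\mu}$, an $L^2$ condition on the low-momentum density inside), and only afterwards proves that a solution starting in $\cK_\mu$ stays in $\cK_\mu$.

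The reason your Step~1 fails is concrete: the estimate $\|[w*\rho_Q,\Pi_\mu^-]\|_{\cX_\mu}\lesssim\|Q\|_{\cX_\mu}$ does not hold for $w\in L^1\cap L^\infty$, because the weight $|\Delta+\mu|^{1/2}$ grows like $|p|$ at infinity while multiplication by $V=w*\rho_Q$ scrambles momenta. Writing the off-diagonal block $\Pi_\mu^-[V,\Pi_\mu^-]\Pi_\mu^+\,|\Delta+\mu|^{1/2}=-\Pi_\mu^-V\Pi_\mu^+\,|\Delta+\mu|^{1/2}$ in Fourier variables gives
\begin{equation*}
\big\|\Pi_\mu^- V\,|\Delta+\mu|^{1/2}\big\|_{\gS^2}^2
=(2\pi)^{-d}\int_{|k|^2\le\mu}\int_{\R^d}|\hat V(q)|^2\,\big|(k-q)^2-\mu\big|\,dq\,dk
\ \sim\ \|V\|_{H^1(\R^d)}^2,
\end{equation*}
so this block is Hilbert--Schmidt essentially if and only if $V\in H^1$, which requires $\nabla w\in L^1$ — an assumption the theorem does not make. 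The quadratic estimate $\|[V,Q']\|_{\cX_\mu}\lesssim\|Q\|_{\cX_\mu}\|Q'\|_{\cX_\mu}$ fails for the same reason: $Q'V\,|\Delta+\mu|^{1/2}$ is not even a bounded operator when $V\in L^\infty$ and $Q'$ is merely bounded. In short, $V\mapsto[V,\cdot]$ does not act on $\cX_\mu$, and no amount of bookkeeping between the weighted $\gS^2$ and $\gS^1$ blocks repairs this, because the obstruction is at high momentum, not at the Fermi sphere degeneracy you flag.

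This has a knock-on effect on your Step~2: since there is no $\cX_\mu$-contraction, there is no $\cX_\mu$-continuous dependence, so you cannot pass to the limit in $\cE(\gamma^{(n)}(t),\Pi_\mu^-)=\cE(\gamma_0^{(n)},\Pi_\mu^-)$ the way you describe. The paper instead uses strong convergence in $\cY_\mu$ of the smooth approximants (Lemma~\ref{lem:density_finite_rank}), the \emph{weak lower semi-continuity} of the relative kinetic energy (valid because $Q^{++}\ge0\ge Q^{--}$), which only yields $\cE(\gamma(t),\Pi_\mu^-)\le\cE(\gamma_0,\Pi_\mu^-)$, and then a time-reversal argument via uniqueness to obtain the reverse inequality. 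Two smaller points: $s\ge2$ is not enough regularity for your direct differentiation of $\cE$; the paper needs $s>2+\max(2,d/2)$ since $\partial_t Q$ only lies in $\gS^{1,s-2}$ and both the kinetic and potential terms must be differentiable. And your formula $\tfrac{d}{dt}\cE=-i\tr\big((H(t)-\mu)[H(t),\gamma(t)]\big)$ is formally correct but involves traces of products of unbounded operators with non-trace-class operators, so the cyclicity step has to be justified term by term, as the paper does in Proposition~\ref{prop:conservation-relative-energy-cS1s}. Your Step~4 is correct and matches the paper.
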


Our proof is based on the Lieb-Thirring inequality~\eqref{eq:Lieb-Thirring-zero-temp} and for this reason we do not have a result in dimension $d=1$. The proof  is \emph{not} done directly in the energy space $\cX_\mu$. In order to avoid the difficulty associated with the possible divergence on the Fermi sphere $k^2\simeq \mu$, we rather work in a \emph{larger} space $\cY_\mu$, and then prove that solutions which are in the energy space at time 0 remain in it for all times. This strategy is simpler, but it requires the non-optimal condition that $w\in L^1(\R^d)\cap L^\ii(\R^d)$. 

An immediate consequence of Theorem~\ref{thm:global_zero_temp} is the orbital stability of the stationary solution $\Pi_\mu^-$ in the defocusing case.

\begin{corollary}[Orbital stability of $\Pi_\mu^-$ in the defocusing case]\label{cor:orbital_stability_zero_temp}
Assume that $d\geq2$, that $w\in L^1(\R^d)\cap L^\ii(\R^d)$, and that $\hat{w}\ge0$ when $d\ge3$ or $\|(\hat{w})_-\|_{L^\ii}<K_{\rm LT}/(2\pi)$ when $d=2$. We have 
\begin{multline}
\sup_{t\in\R} \tr(-\Delta-\mu)(\gamma(t)-\Pi_\mu^-)\\ 
\leq C\left( \tr(-\Delta-\mu)(\gamma_0-\Pi_\mu^-)+\Big(\tr(-\Delta-\mu)(\gamma_0-\Pi_\mu^-)\Big)^{\frac{2d}{d+2}}\right)
\label{eq:orbital_stability}
\end{multline}
for all $\gamma_0\in\cK_\mu$ and where $C$ only depends on $d$, $\mu$ and $w$.
\end{corollary}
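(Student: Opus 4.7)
My proof strategy combines three ingredients already at hand: the conservation of the relative Hartree energy $\cE(\gamma(t),\Pi_\mu^-)=\cE(\gamma_0,\Pi_\mu^-)$ established in Theorem~\ref{thm:global_zero_temp}; the coercivity bounds \eqref{eq:coercive_zero_temp_3D}--\eqref{eq:coercive_zero_temp_2D} of Proposition~\ref{prop:relative-energy-positive}, which in the defocusing regime produce a constant $\kappa>0$ such that $\kappa\,\tau(t)\le\cE(\gamma(t),\Pi_\mu^-)$ for every $t\in\R$, where $\tau(t):=\tr(-\Delta-\mu)(\gamma(t)-\Pi_\mu^-)$; and the Lieb--Thirring inequality~\eqref{eq:Lieb-Thirring-zero-temp}. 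Chaining the first two ingredients yields
$$\tau(t)\le \kappa^{-1}\bigg(\tau_0+\tfrac12 D_w(\rho_{Q_0},\rho_{Q_0})\bigg),\qquad\forall t\in\R,$$
where $D_w(f,g):=\int_{\R^d}\int_{\R^d}w(x-y)f(x)g(y)\,dx\,dy$ and $\tau_0:=\tau(0)$. The whole problem is therefore reduced to bounding $|D_w(\rho_{Q_0},\rho_{Q_0})|$ by a suitable power of $\tau_0$.

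To exploit~\eqref{eq:Lieb-Thirring-zero-temp}, I would observe that the Fermi density $a:=\rho_{\Pi_\mu^-}>0$ is a positive constant, so the integrand on the right-hand side of~\eqref{eq:Lieb-Thirring-zero-temp} is $F(\rho_{Q_0}(x))$ with $F(t):=(a+t)^{1+2/d}-a^{1+2/d}-\tfrac{d+2}{d}a^{2/d}t$, a nonnegative strictly convex function on $[-a,+\ii)$ that vanishes quadratically at the origin and is comparable to $|t|^{1+2/d}$ for $|t|\gg a$. Splitting $\rho_{Q_0}=\rho^{(1)}+\rho^{(2)}$ according to the dichotomy $\{|\rho_{Q_0}|\le a\}$ versus its complement (on which $\rho_{Q_0}>a$ automatically, because $\rho_{Q_0}\ge-a$), the Lieb--Thirring bound supplies
$$\|\rho^{(1)}\|_{L^2(\R^d)}^2\le C\tau_0\quad\text{and}\quad \|\rho^{(2)}\|_{L^{(d+2)/d}(\R^d)}^{(d+2)/d}\le C\tau_0.$$

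Finally, I would expand $D_w(\rho_{Q_0},\rho_{Q_0})$ bilinearly along this splitting and estimate the three resulting pieces by Young's convolution and H\"older inequalities, using the fact that $w\in L^1(\R^d)\cap L^\ii(\R^d)\subset L^s(\R^d)$ for every $s\in[1,\ii]$. The diagonal pieces give $|D_w(\rho^{(1)},\rho^{(1)})|\le\|w\|_{L^1}\|\rho^{(1)}\|_{L^2}^2\le C\tau_0$ and $|D_w(\rho^{(2)},\rho^{(2)})|\le\|w\|_{L^{(d+2)/4}}\|\rho^{(2)}\|_{L^{(d+2)/d}}^2\le C\tau_0^{2d/(d+2)}$, while the cross term is bounded by $C\|w\|_{L^{2(d+2)/(d+6)}}\|\rho^{(1)}\|_{L^2}\|\rho^{(2)}\|_{L^{(d+2)/d}}\le C\tau_0^{(3d+2)/(2(d+2))}$. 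A direct computation shows that the cross-term exponent $(3d+2)/(2(d+2))$ lies in the interval $[1,2d/(d+2)]$ for every $d\ge2$, hence is dominated by $\tau_0+\tau_0^{2d/(d+2)}$ via the elementary inequality $x^\alpha\le x+x^\beta$ valid for $x\ge0$ and $1\le\alpha\le\beta$. Substituting into the first display yields~\eqref{eq:orbital_stability}. The only mild obstacle is the book-keeping of Young exponents: one must check that each of the $s$ appearing above lies in $[1,\ii]$, which is automatic in every dimension $d\ge2$, and that no issue is caused by $\rho^{(1)}$ and $\rho^{(2)}$ not being disjointly supported on overlapping sets beyond what the dichotomy already enforces. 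No step in this scheme is genuinely deep; the heart of the argument is the Lieb--Thirring bound, which was proved in~\cite{FraLewLieSei-12}.
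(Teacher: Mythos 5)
Your proposal is correct and follows essentially the same route as the paper's proof: coercivity of $\cE$ plus its conservation reduce everything to bounding the interaction term by $\tau_0$ via the Lieb--Thirring inequality~\eqref{eq:Lieb-Thirring-zero-temp}; the level-set decomposition $\rho_{Q_0}=\rho^{(1)}+\rho^{(2)}$ at height $a=\rho_{\Pi_\mu^-}$ is just the explicit version of the paper's $\|\rho_{Q_0}\|_{L^2+L^{1+2/d}}$ bound, and your Young/H\"older bookkeeping (including the check that the cross-term exponent lies in $[1,\tfrac{2d}{d+2}]$) correctly fills in the step the paper condenses to ``the rest follows from~\eqref{eq:Lieb-Thirring-zero-temp}.''
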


We recall that the relative kinetic energy $\tr(-\Delta-\mu)(\gamma-\Pi_\mu^-)$ controls the Hilbert-Schmidt norm of $(\gamma-\Pi_\mu^-)|\Delta+\mu|^{1/2}$ and the trace norm of 
$|\Delta+\mu|^{1/2}(\gamma-\Pi_\mu^-)^{\pm\pm}|\Delta+\mu|^{1/2}.$ Hence the result means that if $\gamma_0$ is close to $\Pi_\mu^-$ in these norms, then $\gamma(t)$ stays close to $\Pi_\mu^-$ for all times. The energy does not provide a control on the operator norm $\|\gamma(t)-\Pi_\mu^-\|$, however.

\begin{proof}[Proof of Corollary~\ref{cor:orbital_stability_zero_temp}]
The conservation of the relative energy means that
\begin{align*}
\tr(-\Delta-\mu)(\gamma(t)-\Pi_\mu^-)&\leq c\cE(\gamma_0,\Pi_\mu^-)\\
&\leq c\tr(-\Delta-\mu)(\gamma_0-\Pi_\mu^-)+C\norm{\rho_{Q_0}}_{L^2+L^{1+2/d}}^2. 
\end{align*}
The constant on the first line is $c=1$ for $d\geq3$ and $c=K_{\rm LT}/(K_{\rm LT}-2\pi\|\hat{w}_-\|_{L^\ii})$ for $d=2$. 
The rest follows from~\eqref{eq:Lieb-Thirring-zero-temp}.
\end{proof}

The rest of the section is devoted to the proof of Theorem~\ref{thm:global_zero_temp}. 

\begin{proof}[Proof of Theorem~\ref{thm:global_zero_temp}]
Our method is the following: first we show the local well-posedness in a suitable space, which is larger than the energy space $\cX_\mu$. Then we prove the conservation of the relative Hartree energy for the very smooth solutions which we have constructed in Section~\ref{sec:local}. By approximating any initial datum with finite energy by a sequence of smooth finite rank operators and passing to the limit, we obtain solutions in the energy space, with a conserved energy. Uniqueness is guaranteed by the first step.

\subsubsection*{Step 1. Local well-posedness in a space larger than the energy space}
We start by considering the problem in a space $\cY_\mu$ which contains $\cK_\mu$ (for $d\geq2$). As we want to avoid the Fermi sphere $|k|=\sqrt{\mu}$ at which the operator $Q$ can fail to be compact, we use the projector $\Pi_{2\mu}^+$, and only impose a condition on the density of the operator in the ball of radius $\sqrt{2\mu}$. We define
\begin{multline}
 \cY_\mu:=\Big\{Q=Q^*\in\cB\ :\ \Pi_{2\mu}^+Q\Pi_{2\mu}^-\in\gS_2,\,\Pi_{2\mu}^+Q\Pi_{2\mu}^+\in\gS_1,\\ \rho_{\Pi_{2\mu}^-Q\Pi_{2\mu}^-}\in L^2(\R^d)\Big\}.
\label{eq:def_Y_mu} 
\end{multline}
We remark that $-\norm{Q}\Pi_{2\mu}^-\leq \Pi_{2\mu}^-Q\Pi_{2\mu}^-\leq \norm{Q}\Pi_{2\mu}^-$ which shows that the density of 
$\Pi_{2\mu}^-Q\Pi_{2\mu}^-$ is well defined in $L^\ii(\R^d)$ and it therefore makes sense to assume that $\rho_{\Pi_{2\mu}^-Q\Pi_{2\mu}^-}\in L^2(\R^d)$. The space $\cY_\mu$ is endowed with the norm
$$
  \|Q\|_{\cY_\mu}:=\|Q\|+\|\Pi_{2\mu}^+Q\Pi_{2\mu}^-\|_{\gS_2}+\|\Pi_{2\mu}^+Q\Pi_{2\mu}^+\|_{\gS_1}+\|\rho_{\Pi_{2\mu}^-Q\Pi_{2\mu}^-}\|_{L^2}.
$$

The following says that $\cK_\mu-\Pi_\mu^-\subset \cY_\mu$, that is, the set of density matrices of finite energy is included in $\cY_\mu$.

\begin{lemma}\label{lem:energy_space_in_Ymu}
Let $d\geq2$ and $\mu>0$. We have 
\begin{equation}
\norm{Q}_{\cY_\mu}\leq C\left(\norm{Q}_{\cX_\mu}+\norm{Q}_{\cX_\mu}^{1/2}\right)
\label{eq:control_Ymu_Xmu} 
\end{equation}
for all $Q\in\cX_\mu$ satisfying $-\Pi_\mu^-\leq Q\leq \Pi_\mu^+$ and with $C=C(d,\mu)$.
\end{lemma}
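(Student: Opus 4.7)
I would bound the four ingredients of $\|Q\|_{\cY_\mu}$ separately. The operator-norm term is trivial since $\|Q\| \leq \|Q\|_{\cX_\mu}$. Two sign properties will be used throughout: since $\gamma = \Pi_\mu^- + Q$ satisfies $0\le\gamma\le1$, sandwiching the inequalities $-\Pi_\mu^- \leq Q \leq \Pi_\mu^+$ with $\Pi_\mu^\pm$ yields $Q^{++}\geq 0$ and $Q^{--}\leq 0$. Also, since on $\mathrm{ran}\,\Pi_{2\mu}^+$ one has $|\Delta+\mu|\geq\mu$, the Fourier multiplier $\Pi_{2\mu}^+|\Delta+\mu|^{-1/2}$ has norm $\leq\mu^{-1/2}$.

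For the $\gS^1$-term, $\Pi_{2\mu}^+\subset\Pi_\mu^+$ gives $\Pi_{2\mu}^+ Q\Pi_{2\mu}^+ = \Pi_{2\mu}^+ Q^{++}\Pi_{2\mu}^+\geq 0$, so
\[
\|\Pi_{2\mu}^+ Q\Pi_{2\mu}^+\|_{\gS^1}
= \tr\,Q^{++}\Pi_{2\mu}^+
\le \mu^{-1}\tr\,Q^{++}|\Delta+\mu|
=\mu^{-1}\bigl\||\Delta+\mu|^{1/2}Q^{++}|\Delta+\mu|^{1/2}\bigr\|_{\gS^1}
\le \mu^{-1}\|Q\|_{\cX_\mu}.
\]
For $\|\Pi_{2\mu}^+ Q\Pi_{2\mu}^-\|_{\gS^2}$, I split $\Pi_{2\mu}^- = \Pi_\mu^- + \Pi_{[\mu,2\mu]}$ with $\Pi_{[\mu,2\mu]}:=\Pi_\mu^+-\Pi_{2\mu}^+$ and use that $Q^{+-}\Pi_{[\mu,2\mu]}=0$, which leaves the two pieces $\Pi_{2\mu}^+Q^{+-}$ and $\Pi_{2\mu}^+Q^{++}\Pi_{[\mu,2\mu]}$. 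The first is written as $(\Pi_{2\mu}^+|\Delta+\mu|^{-1/2})\,(|\Delta+\mu|^{1/2}Q^{+-})$ and controlled in $\gS^2$ by $\mu^{-1/2}\|Q\|_{\cX_\mu}$ thanks to $\||\Delta+\mu|^{1/2}Q\|_{\gS^2}=\|Q|\Delta+\mu|^{1/2}\|_{\gS^2}$. The second piece is factored via $Q^{++}=(Q^{++})^{1/2}(Q^{++})^{1/2}$; the estimate $\|\Pi_{2\mu}^+(Q^{++})^{1/2}\|_{\gS^2}^2=\tr\,\Pi_{2\mu}^+Q^{++}\leq \mu^{-1}\|Q\|_{\cX_\mu}$ combined with $\|(Q^{++})^{1/2}\|\le\|Q^{++}\|^{1/2}\le 1$ yields a bound of order $\|Q\|_{\cX_\mu}^{1/2}$, producing the square root on the right-hand side of~\eqref{eq:control_Ymu_Xmu}.

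The delicate step is $\|\rho_{\Pi_{2\mu}^- Q\Pi_{2\mu}^-}\|_{L^2}$. My plan is to reduce to an application of the Lieb--Thirring inequality \eqref{eq:Lieb-Thirring-zero-temp}. Set $\tilde\gamma:=\Pi_{2\mu}^-\gamma\Pi_{2\mu}^-$; then $0\leq\tilde\gamma\leq\Pi_{2\mu}^-\leq 1$, and because $\Pi_\mu^-\subset\Pi_{2\mu}^-$ a direct computation gives $\tilde\gamma-\Pi_\mu^-=\Pi_{2\mu}^-Q\Pi_{2\mu}^-$, so the density of interest is exactly $\rho_{\tilde\gamma-\Pi_\mu^-}$. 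Using~\eqref{eq:def_trace} and $\Pi_\mu^+\tilde\gamma\Pi_\mu^+=\Pi_{[\mu,2\mu]}Q^{++}\Pi_{[\mu,2\mu]}\leq Q^{++}$ (as quadratic forms), together with $\Pi_\mu^-(1-\tilde\gamma)\Pi_\mu^-=-Q^{--}$, one shows $\tr(-\Delta-\mu)(\tilde\gamma-\Pi_\mu^-)\leq 2\|Q\|_{\cX_\mu}$, so $\tilde\gamma\in\cK_\mu$ and Lieb--Thirring yields
\[
\int_{\R^d}\Phi(\rho_{\tilde\gamma-\Pi_\mu^-})\,dx\le C\|Q\|_{\cX_\mu},
\qquad \Phi(t)\gtrsim\min(t^2,|t|^{1+2/d}).
\]
On its own this only provides $\rho\in L^2+L^{1+2/d}$, which for $d\geq 3$ is weaker than $L^2$. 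To close the argument I combine it with a pointwise bound: the operator inequalities $-\|Q\|\Pi_{2\mu}^-\leq\Pi_{2\mu}^-Q\Pi_{2\mu}^-\leq\|Q\|\Pi_{2\mu}^-$ and $\|Q\|\leq 1$ give $|\rho_{\Pi_{2\mu}^- Q\Pi_{2\mu}^-}(x)|\leq \rho_{\Pi_{2\mu}^-}=C\mu^{d/2}$. Since $d\geq 2$, on $\{|t|\leq M\}$ one has $\min(t^2,|t|^{1+2/d})\geq c(M)\,t^2$, hence $\int\rho^2\leq C(\mu)\|Q\|_{\cX_\mu}$, i.e.\ an $L^2$-norm controlled by $\|Q\|_{\cX_\mu}^{1/2}$. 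Collecting the four estimates yields~\eqref{eq:control_Ymu_Xmu}.

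The main obstacle is precisely this last step: without the pointwise a priori bound coming from $-\Pi_\mu^-\leq Q\leq\Pi_\mu^+$, Lieb--Thirring alone does not give an $L^2$ density for $d\geq 3$, and conversely the pointwise bound alone does not give any integrability. The failure of~\eqref{eq:Lieb-Thirring-zero-temp} in dimension $d=1$ is also the reason why that case is excluded here.
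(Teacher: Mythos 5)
Your proposal follows the paper's proof in all essential respects: factor out $\Pi_{2\mu}^+|\Delta+\mu|^{-1/2}$ (whose operator norm is $\leq\mu^{-1/2}$) to control the $\gS^1$ and $\gS^2$ pieces, and combine the pointwise bound $|\rho_{\Pi_{2\mu}^-Q\Pi_{2\mu}^-}|\leq C\mu^{d/2}$ (from $-\Pi_\mu^-\leq Q\leq\Pi_\mu^+$) with the Lieb--Thirring inequality~\eqref{eq:Lieb-Thirring-zero-temp} applied to the compressed state $\tilde\gamma=\Pi_\mu^-+\Pi_{2\mu}^-Q\Pi_{2\mu}^-$ to get the low-momentum density in $L^2$. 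One technical slip, though: writing $P:=\Pi_\mu^+\Pi_{2\mu}^-$, the asserted operator inequality $PQ^{++}P\le Q^{++}$ is \emph{false} in general, since compressing a nonnegative operator by a projection does not respect the operator order when the two fail to commute. What you actually need (and what is true) is only the trace bound
\[
\tr\,|\Delta+\mu|^{1/2}PQ^{++}P|\Delta+\mu|^{1/2}\le\tr\,|\Delta+\mu|^{1/2}Q^{++}|\Delta+\mu|^{1/2},
\]
which holds because $P$ commutes with $|\Delta+\mu|^{1/2}$: the left side equals $\tr(XP)$ with $X:=|\Delta+\mu|^{1/2}Q^{++}|\Delta+\mu|^{1/2}\geq0$, and $\tr(XP)\le\tr X$. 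Finally, your two-piece decomposition of $\Pi_{2\mu}^+Q\Pi_{2\mu}^-$ is unnecessary: the direct factorization $\Pi_{2\mu}^+Q\Pi_{2\mu}^-=\bigl(\Pi_{2\mu}^+|\Delta+\mu|^{-1/2}\bigr)\bigl(|\Delta+\mu|^{1/2}Q\Pi_{2\mu}^-\bigr)$ already gives $\|\Pi_{2\mu}^+Q\Pi_{2\mu}^-\|_{\gS^2}\le\mu^{-1/2}\|Q|\Delta+\mu|^{1/2}\|_{\gS^2}\le\mu^{-1/2}\|Q\|_{\cX_\mu}$ with no square root; the exponent $1/2$ in~\eqref{eq:control_Ymu_Xmu} comes only from the density estimate.
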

\begin{proof}
We have 
$$\norm{\Pi^+_{2\mu }Q \Pi^-_{2\mu }}_{\gS^2}\leq \norm{|\Delta+\mu|^{1/2}Q }_{\gS^2}\norm{\Pi^+_{2\mu }|\Delta+\mu|^{-1/2}}=\frac{\norm{|\Delta+\mu|^{1/2}Q }_{\gS^2}}{\sqrt{\mu}}$$
and a similar inequality for $\|\Pi^+_{2\mu }Q \Pi^+_{2\mu }\|_{\gS^1}$. On the other hand, since $-\Pi_\mu^-\leq \Pi^-_{2\mu }Q \Pi^-_{2\mu }\leq \Pi_{2\mu}^-$, we have $|\rho_{\Pi^-_{2\mu }Q \Pi^-_{2\mu }}(x)|\leq C\mu^{d/2}$ almost everywhere. The Lieb-Thirring inequality~\eqref{eq:Lieb-Thirring} yields 
\begin{align*}
\int_{\R^d}  |\rho_{\Pi^-_{2\mu }Q \Pi^-_{2\mu }}|^2&\leq C\int_{\R^d} \min\left( |\rho_{\Pi^-_{2\mu }Q \Pi^-_{2\mu }}|^2,|\rho_{\Pi^-_{2\mu }Q \Pi^-_{2\mu }}|^{1+\frac{2}{d}}\right)\\
&\leq C\tr (-\Delta-\mu) Q
\end{align*}
with constants depending on $d$ and $\mu$, which proves the result.
\end{proof}

Adapting the proof of Lemma~\ref{lem:density_Schatten-Sobolev}, it is easy to see that $\cY_\mu\subset\gS_{1,\rm{loc}}$ and thus $\rho_Q$ is a well-defined function for $Q\in\cY_\mu$. We actually have 
$\rho_Q\in (L^1+L^2)(\R^d)$ with the precise estimate
\begin{equation}
\norm{\rho_{\Pi_{2\mu}^-Q\Pi_{2\mu}^-}}_{L^2}+\norm{\rho_{\Pi_{2\mu}^+Q\Pi_{2\mu}^-}}_{L^2}+\norm{\rho_{\Pi_{2\mu}^-Q\Pi_{2\mu}^+}}_{L^2}+\norm{\rho_{\Pi_{2\mu}^+Q\Pi_{2\mu}^+}}_{L^1}\leq C\norm{Q}_{\cY_\mu}.
\label{eq:estim_rho_Y_mu}
\end{equation}
It is very important for us that the map 
$$Q\in\cY_\mu \mapsto \frac12\int_{\R^d}\int_{\R^d}w(x-y)\rho_{Q}(x)\rho_{Q}(y)\,dx\,dy$$
is continuous when $w\in L^1(\R^d)\cap L^{\ii}(\R^d)$, which follows from~\eqref{eq:estim_rho_Y_mu}.

\begin{proposition}[Local well-posedness in $\cY_\mu$]\label{prop:local-wp-cY}
Let $d\ge1$ and $w\in L^1(\R^d)\cap L^{\ii}(\R^d)$. 
Let $Q_0\in\cY_{\mu}$ such that 
\begin{equation}
\rho_{e^{it\Delta}\Pi_{2\mu}^-Q_0\Pi_{2\mu}^-e^{-it\Delta}}\in C^0_t(\R,L^2_x(\R^d)).
\label{eq:condition_initial_Y_mu} 
\end{equation}
Then there exists a unique maximal solution $Q\in\cC^0((-T^-,T^+),\cY_\mu)$ to the Duhamel principle \eqref{eq:rHF_Q_Duhamel}, with $T^\pm>0$. We have the blow-up criterion
\begin{equation}
T^\pm<\ii \Longrightarrow \lim_{t\to \pm T^\pm}\norm{Q(t)}_{\cY_\mu}=+\ii. 
\label{eq:blowup_Ymu}
\end{equation}
The solution depends continuously on $Q_0$ in $\cY_\mu$, on $\rho_{e^{it\Delta}\Pi_{2\mu}^-Q_0\Pi_{2\mu}^-e^{-it\Delta}}$ in $C^0_t((-T^-,T^+),L^2_x(\R^d))$, and on $w$ in $L^1(\R^d)\cap L^{\ii}(\R^d)$. 
\end{proposition}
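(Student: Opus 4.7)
My plan is to apply the Banach--Picard fixed point theorem to the map
\[
\Phi(Q)(t) := e^{it\Delta}Q_0 e^{-it\Delta} - i\int_0^t e^{i(t-t')\Delta}\bigl[w*\rho_{Q(t')},\,\gamma_f+Q(t')\bigr]e^{-i(t-t')\Delta}\,dt'
\]
on a closed ball of $X_T := C^0([-T,T],\cY_\mu)$, in the spirit of Theorem~\ref{thm:local-wp-Sps}. For the free-flow term, conjugation by the unitary $e^{it\Delta}$ commutes with $\Pi_{2\mu}^\pm$, so the operator-norm, $\gS^2$, and $\gS^1$ contributions to the $\cY_\mu$ norm are preserved in $t$, while the $L^2$ density of the $--$ block is exactly the quantity whose continuity is postulated by assumption~\eqref{eq:condition_initial_Y_mu}.

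The core of the proof is an estimate of the form $\|[V,\gamma_f+Q]\|_{\cY_\mu}\le C\|Q\|_{\cY_\mu}(1+\|Q\|_{\cY_\mu})$ where $V:=w*\rho_Q$. From~\eqref{eq:estim_rho_Y_mu} (giving $\rho_Q\in L^1+L^2$) and Young's inequality against $w\in L^1\cap L^\infty$, one obtains $\|V\|_{L^2\cap L^\infty}\le C\|Q\|_{\cY_\mu}$. The operator-norm contribution is immediate. For the $\gS^2$ norm of $\Pi_{2\mu}^+[V,\gamma_f+Q]\Pi_{2\mu}^-$, the $\gamma_f$ part collapses to $\Pi_{2\mu}^+V\Pi_\mu^-$ since $\Pi_{2\mu}^+\Pi_\mu^-=0$, and is then bounded via KSS~\eqref{eq:KSS} by $C\|V\|_{L^2}$; the $Q$ part is estimated by H\"older in Schatten spaces using $V\in L^\infty$. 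For the $\gS^1$ norm of $\Pi_{2\mu}^+[V,\gamma_f+Q]\Pi_{2\mu}^+$, the $\gamma_f$ contribution vanishes identically (again by $\Pi_{2\mu}^+\Pi_\mu^-=0$), and decomposing $Q$ into its four $\Pi_{2\mu}^\pm$-blocks leaves only products of two Hilbert--Schmidt factors.

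The main obstacle is the $L^2$ density of $\Pi_{2\mu}^-[V,\gamma_f+Q]\Pi_{2\mu}^-$, which must be controlled after conjugation by $e^{is\Delta}$ because of the time integral in Duhamel. The crucial point is that this sandwich is \emph{bandlimited on both sides} by $\Pi_{2\mu}^-$, a property preserved by the free flow since $\Pi_{2\mu}^-$ commutes with $e^{is\Delta}$. A direct expansion, using $\Pi_\mu^-=\Pi_\mu^-\Pi_{2\mu}^-$ and $\Pi_{2\mu}^-\Pi_{2\mu}^+=0$, yields
\[
\Pi_{2\mu}^-[V,\gamma_f+Q]\Pi_{2\mu}^- = \Pi_{2\mu}^-V\Pi_\mu^- - \Pi_\mu^-V\Pi_{2\mu}^- + \Pi_{2\mu}^-VQ_{+-} - Q_{-+}V\Pi_{2\mu}^- + [\Pi_{2\mu}^-V\Pi_{2\mu}^-,\,Q_{--}],
\]
with $Q_{\sigma\sigma'}:=\Pi_{2\mu}^\sigma Q\Pi_{2\mu}^{\sigma'}$; the $Q_{++}$ block drops out entirely. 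Each of the first four summands is a bandlimited $\gS^1$ or $\gS^2$ operator by KSS and H\"older (using $V\in L^2$ and the $\gS^2$ bound on $Q_{+-}$), while $[\Pi_{2\mu}^-V\Pi_{2\mu}^-,Q_{--}]\in\gS^2$ because $\Pi_{2\mu}^-V\Pi_{2\mu}^-\in\gS^2$ by KSS. For any bandlimited $A = \Pi_{2\mu}^-A\Pi_{2\mu}^-$, all Sobolev--Schatten norms satisfy $\|A\|_{\gS^{p,s}}\le (1+2\mu)^{s/2}\|A\|_{\gS^p}$, so Lemma~\ref{lem:density_Schatten-Sobolev} (with any $s>d/2$) bounds $\|\rho_A\|_{L^2}$ by $C\|A\|_{\gS^p}$. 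Since conjugation by $e^{is\Delta}$ preserves Schatten norms as well as the bandlimitation, these estimates are uniform in $s$, and integrating over $[0,t]$ produces the required density bound.

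Combining these ingredients produces $\|\Phi(Q)\|_{X_T}\le R + CT\|Q\|_{X_T}(1+\|Q\|_{X_T})$ for $\|Q_0\|_{\cY_\mu}\le R$, together with the Lipschitz estimate $\|\Phi(Q)-\Phi(Q')\|_{X_T}\le CT(1+\|Q\|_{X_T}+\|Q'\|_{X_T})\|Q-Q'\|_{X_T}$, so for $T$ small enough (depending on $R$), $\Phi$ is a contraction on the ball of radius $2R$. Uniqueness independent of size follows from Gr\"onwall's lemma, the existence of maximal times $T^\pm$ and the blow-up criterion~\eqref{eq:blowup_Ymu} from a standard continuation argument, and continuous dependence on $Q_0$, $w$, and the initial density profile from revisiting the fixed-point estimates with varying data.
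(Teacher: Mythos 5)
Your proposal is essentially the paper's proof: a Banach--Picard fixed point in $C^0([-T,T],\cY_\mu)$ driven by the two commutator estimates, with the key observation that only the low-momentum density part of the $\cY_\mu$ norm fails to be invariant under conjugation by $e^{is\Delta}$. Your handling of that piece --- noting that $\Pi_{2\mu}^-[V,\gamma_f+Q]\Pi_{2\mu}^-$ is two-sided bandlimited and deducing a uniform-in-$s$ density bound from $\|A\|_{\gS^{p,s}}\le(1+2\mu)^{s/2}\|A\|_{\gS^p}$ plus Lemma~\ref{lem:density_Schatten-Sobolev} --- is a repackaging of the paper's direct duality argument $\int\rho_{e^{is\Delta}Ae^{-is\Delta}}W=\tr(e^{is\Delta}Ae^{-is\Delta}W)\le\|A\|_{\gS^2}\|\Pi_{2\mu}^-W\|_{\gS^2}\le C\|A\|_{\gS^2}\|W\|_{L^2}$; both are instances of the same fact.

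The one step you invoke as ``standard'' that actually needs a sentence or two is the continuation to a maximal interval and the blow-up criterion~\eqref{eq:blowup_Ymu}. Membership $Q(t^*)\in\cY_\mu$ alone does \emph{not} put $Q(t^*)$ back in the admissible class of initial data: you must check that condition~\eqref{eq:condition_initial_Y_mu} propagates, i.e.\ that $\rho_{e^{is\Delta}\Pi_{2\mu}^-Q(t^*)\Pi_{2\mu}^-e^{-is\Delta}}\in C^0_s(\R,L^2_x)$. The paper verifies this explicitly by inserting Duhamel's formula into $\Pi_{2\mu}^-Q(t^*)\Pi_{2\mu}^-$ and reusing the uniform-in-$s$ bandlimited density estimate on the commutator. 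You have established exactly that estimate, so the ingredient is present in your write-up, but it should be assembled: without it, the fixed point cannot be restarted at $t^*$ and the maximal time/blow-up dichotomy is not justified in the space $\cY_\mu$ (as opposed to a smaller space where the assumption on the free-flow density is automatic).
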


The assumption~\eqref{eq:condition_initial_Y_mu} looks complicated but it is automatically satisfied if $Q_0$ is a smooth finite rank operator, or if $Q_0=\gamma_0-\Pi_\mu^-$  with $\gamma_0\in \cK_\mu$, since in this case $\norm{e^{it\Delta}Q_0e^{-it\Delta}}_{\cX_\mu}=\norm{Q_0}_{\cX_\mu}$ for all $t$. This is the situation which we will need later on.

\begin{proof}[Proof of Proposition~\ref{prop:local-wp-cY}]
The proof of the proposition relies on a fixed point technique, exactly like for the proof of Theorem~\ref{thm:local-wp-Sps} in  Section~\ref{sec:local}. It follows from the estimates:
\begin{equation}
\sup_{t\in\R}\norm{e^{it\Delta}[\Pi_\mu^-,\rho_Q\ast w]e^{-it\Delta}}_{\cY_\mu}\leq C\norm{Q}_{\cY_\mu},
\label{eq:estim_Y_mu_1}
\end{equation}
\begin{equation}
\sup_{t\in\R}\norm{e^{it\Delta}[Q',\rho_Q\ast w]e^{-it\Delta}}_{\cY_\mu}\leq C\norm{Q'}_{\cY_\mu}\norm{Q}_{\cY_\mu}.
\label{eq:estim_Y_mu_2}
\end{equation}
All the terms in the definition of $\|\cdot\|_{\cY_\mu}$ are invariant under the free Schr\"odinger evolution, except for the one involving the low-momentum density.

We start by proving~\eqref{eq:estim_Y_mu_1} and note as a start that
$$\norm{[\Pi_\mu^-,\rho_Q\ast w]}\leq \norm{\rho_Q\ast w}_{L^\ii}\leq C(\norm{w}_{L^2}+\norm{w}_{L^\ii})\norm{Q}_{\cY_\mu}$$
by~\eqref{eq:estim_rho_Y_mu}. Then we remark that $\Pi_{2\mu}^+[\rho_Q\ast w,\Pi_\mu^-]\Pi_{2\mu}^+=0$ since $\Pi_\mu^-\Pi_{2\mu}^+=0$. Writing for short $V:=\rho_Q\ast w$, we turn to 
\begin{equation*}
\norm{\Pi_{2\mu}^+[V,\Pi_\mu^-]\Pi_{2\mu}^-}_{\gS^2}=\norm{\Pi_{2\mu}^+V\Pi_\mu^-}_{\gS^2}\leq  \norm{V\Pi_\mu^-}_{\gS^2}=C\mu^{d/4}\norm{V}_{L^2},
\end{equation*}
which can be estimated by $C(\norm{w}_{L^1}+\norm{w}_{L^2})\norm{Q}_{\cY_\mu}$ by~\eqref{eq:estim_rho_Y_mu}. Finally, we estimate the density of $e^{it\Delta}\Pi_{2\mu}^-[V,\Pi_\mu^-]\Pi_{2\mu}^-e^{-it\Delta}$ by duality, similarly as in Lemma~\ref{lem:density_Schatten-Sobolev}:
\begin{align*}
\int_{\R^d}\rho_{e^{it\Delta}\Pi_{2\mu}^-V\Pi_\mu^-e^{-it\Delta}}W&=\tr(e^{it\Delta}\Pi_{2\mu}^-[V,\Pi_\mu^-]\Pi_{2\mu}^-e^{-it\Delta}W)\\
&\leq 2\norm{\Pi_{2\mu}^-V}_{\gS^2}\norm{\Pi_{2\mu}^-W}_{\gS^2}\leq C\norm{V}_{L^2}\norm{W}_{L^2} 
\end{align*}
where $C$ is independent of $t$.

Next we turn to~\eqref{eq:estim_Y_mu_2} and note first that $\norm{[V,Q']}\leq 2\norm{V}_{L^\ii}\norm{Q'}\leq C\norm{Q}_{\cY_\mu}\norm{Q'}_{\cY_\mu}$. For the Hilbert-Schmidt term, we write 
$$\Pi_{2\mu}^+VQ'\Pi_{2\mu}^- = \Pi_{2\mu}^+V\Pi_{2\mu}^-Q'\Pi_{2\mu}^- + \Pi_{2\mu}^+V\Pi_{2\mu}^+Q'\Pi_{2\mu}^-,$$
 and estimate the two terms on the right side. We have $\|V\Pi_{2\mu}^-\|_{\gS^2}= C\mu^{d/4}\|V\|_{L^2}$,
 leading to 
 $$\|\Pi_{2\mu}^+V\Pi_{2\mu}^-Q'\Pi_{2\mu}^-\|_{\gS^2}\le C\|V\|_{L^2}\|Q'\| \le C\|Q\|_{\cY_\mu}\|Q'\|_{\cY_\mu}.$$
 We also have 
 $$\|\Pi_{2\mu}^+V\Pi_{2\mu}^+Q'\Pi_{2\mu}^-\|_{\gS^2}\le \|V\|_{L^\ii}\|\Pi_{2\mu}^+Q'\Pi_{2\mu}^-\|_{\gS^2}\le C\|Q\|_{\cY_\mu}\|Q'\|_{\cY_\mu},$$
 and hence  $\|\Pi_{2\mu}^+VQ'\Pi_{2\mu}^-\|_{\gS^2}\le C\|Q\|_{\cY_\mu}\|Q'\|_{\cY_\mu}.$
The proof for $\Pi_{2\mu}^+Q'V\Pi_{2\mu}^-$ and for $\Pi_{2\mu}^+[V,Q']\Pi_{2\mu}^+$ is similar.
Finally, $\rho[e^{it\Delta}\Pi_{2\mu}^-[V,Q']\Pi_{2\mu}^-e^{-it\Delta}]$ is estimated by duality as before.

The rest of the proof of Proposition~\ref{prop:local-wp-cY} goes along the same lines as that of Theorem~\ref{thm:local-wp-Sps}. The assumption~\eqref{eq:condition_initial_Y_mu} on $Q_0$ is needed to ensure that the term $e^{it\Delta}Q_0e^{-it\Delta}$ belongs to $C^0_t(\R,\cY_\mu)$. In order to iterate the fixed point theorem and construct a maximal solution, we need to verify that the condition~\eqref{eq:condition_initial_Y_mu} is propagated along the flow. So we have to prove that
$\rho_{e^{is\Delta}\Pi_{2\mu}^-Q(t)\Pi_{2\mu}^-e^{-is\Delta}}\in C^0_s(\R,L^2_x(\R^d))$.
From Duhamel's formula we have 
\begin{multline*}
\rho\left[e^{is\Delta}\Pi_{2\mu}^-Q(t)\Pi_{2\mu}^-e^{-is\Delta}
\right]=\rho\left[e^{i(t+s)\Delta}Q_0e^{-i(t+s)\Delta}\right]\\
-i\int_0^t\rho\left[e^{i(s+t-t')\Delta}[w*\rho_Q(t'),\gamma_f+Q(t')]e^{i(t'-t-s)\Delta}\right]\,dt'.
\end{multline*}
The first term is continuous and locally bounded under our assumption~\eqref{eq:condition_initial_Y_mu} on $Q_0$, and the second 
can be uniformly estimated by $C\int_0^t(\norm{Q(t')}_{\cY_\mu}+\norm{Q(t')}_{\cY_\mu}^2)\,dt'$. The continuity follows from a density argument.
\end{proof}

\subsubsection*{Step 2. Conservation of relative energy for smooth operators}

In Theorem~\ref{thm:local-wp-Sps} we have constructed smooth solutions to the Hartree equation, in the Sobolev-Schatten spaces $\gS^{p,s}$. We remark that $\gS^{1,s}\subset \cY_\mu$ for $s>\max(2,d/2)$, since then $\rho_Q\in L^2$ by Lemma~\ref{lem:density_Schatten-Sobolev}. By an argument similar to Corollary~\ref{cor:persistance-regularity}, one can prove that the maximal time of existence of solutions in $\gS^{1,s}$ is the same as in $\cY_\mu$.
The purpose of this step is to prove the conservation of the relative Hartree energy $\cE(\gamma(t),\Pi_\mu^-)$ when $\gamma_0-\Pi_\mu^-\in\gS^{1,s}$ with $s$ large enough.

\begin{proposition}[Conservation of relative energy in $\gS^{1,s}$]\label{prop:conservation-relative-energy-cS1s}
 Let $d\ge1$, $s>2+\max(2,d/2)$, and $w\in L^1\cap L^\ii$ satisfying all the assumptions of Theorem~\ref{thm:local-wp-Sps}, with furthermore $w(x)=w(-x)$ for a.e. $x\in\R^d$. Let $Q_0\in\gS^{1,s}$ and consider $Q\in C^0_t((-T^-,T^+),\gS^{1,s})$ the unique maximal solution to \eqref{eq:rHF_Q_Duhamel} built in Theorem~\ref{thm:local-wp-Sps} with $\gamma_f=\Pi_\mu^-$. Then, the relative particle number and the relative energy are conserved along the flow: we have 
$$\cE(\Pi_\mu^-+Q(t),\Pi_\mu^-)=\cE(\Pi_\mu^-+Q_0,\Pi_\mu^-)\quad\text{and}\quad \tr\,Q(t)=\tr\,Q_0$$ 
for all $t\in(-T^-,T^+)$. 
\end{proposition}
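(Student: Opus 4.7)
The plan is to differentiate $t\mapsto\tr Q(t)$ and $t\mapsto\cE(\Pi_\mu^-+Q(t),\Pi_\mu^-)$ in time and show both derivatives vanish. Throughout, I would exploit the assumption $s>2+\max(2,d/2)\geq 4$: since $Q(t)\in\gS^{1,s}$, we have $(-\Delta-\mu)Q(t)\in\gS^1$ (so the kinetic part of $\cE$ is well-defined), $\rho_{Q(t)}\in L^1\cap L^\infty$ by Lemma~\ref{lem:density_Schatten-Sobolev} (so the interaction part of $\cE$ is finite under $w\in L^1\cap L^\infty$), and $V(t):=w*\rho_{Q(t)}$ is a bounded smooth function with $[V(t),\gamma_f]\in\gS^{1,s}$ by Lemma~\ref{lemma:commutator-Vgamma-Hs}.

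For the relative number of particles, using $i\partial_t Q=[-\Delta+V,\gamma_f+Q]$ and $[-\Delta,\gamma_f]=0$,
\[
\tfrac{d}{dt}\tr Q(t)=-i\bigl(\tr[-\Delta,Q]+\tr[V,Q]+\tr[V,\gamma_f]\bigr).
\]
The first two traces vanish by cyclicity, since $(-\Delta)Q,VQ\in\gS^1$. For the third, $[V,\gamma_f]$ is trace-class although $\gamma_f$ itself is not: writing its momentum-space kernel as $\hat V(p-q)(g(q)-g(p))$ with $g(p)=f(|p|^2)$, the diagonal identically vanishes, so $\tr[V,\gamma_f]=0$.

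For the energy, set $H:=-\Delta+V$. Differentiating the two pieces of $\cE$ gives
\[
\tfrac{d}{dt}\cE=\tr\bigl((H-\mu)\partial_tQ\bigr)=-i\tr\bigl((H-\mu)[H,\gamma_f+Q]\bigr),
\]
where the $\mu$-part equals $\mu\,\partial_t\tr Q=0$ by the step above. The remainder splits as $-i\tr(H[H,Q])-i\tr(H[V,\gamma_f])$ using $[H,\gamma_f]=[V,\gamma_f]$. Since $s\geq 4$ and $w$ is smooth enough, each product $H^iQH^j$ with $i+j\leq 2$ is trace-class, so $\tr(H[H,Q])=\tr(H^2Q)-\tr(HQH)=0$ by cyclicity. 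For the second piece, $\tr(H[V,\gamma_f])=\tr((-\Delta)[V,\gamma_f])+\tr(V[V,\gamma_f])$, and a direct Fourier computation gives both traces equal to zero: the diagonal of $(-\Delta)[V,\gamma_f]$ is $p^2\hat V(0)(g(p)-g(p))=0$, while
\[
\tr(V[V,\gamma_f])=\iint|\hat V(\eta)|^2\bigl(g(p)-g(p-\eta)\bigr)\,dp\,d\eta=0
\]
by translation invariance of Lebesgue measure (using $g\in L^1$ and $\hat V\in L^2$, since $V\in L^1\cap L^\infty$).

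The main obstacle is rigorously justifying these cyclicity manipulations and Fourier identifications when $\gamma_f\notin\gS^1$. The regularity $s>4$ plus the smoothness of $w$ guaranteed by~\eqref{eq:cond_w} and~\eqref{eq:cond_f_w_1} force every operator product appearing above to be trace-class; self-adjointness of $[V,\gamma_f]$ lets one move Sobolev weights from one side to the other. Should any step resist direct verification in $\gS^{1,s}$, a density argument would suffice: approximate $Q_0$ by finite-rank operators with smooth eigenfunctions (dense in $\gS^{1,s}$), for which all manipulations are absolutely convergent, and pass to the limit using the continuous dependence on initial data from Theorem~\ref{thm:local-wp-Sps} together with the continuity of both $\tr Q$ and $\cE$ on $\gS^{1,s}$.
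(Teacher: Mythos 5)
Your proof is correct and computes the same derivative as the paper but with a different decomposition, which in places is cleaner. The paper differentiates the relative kinetic energy in its projected form $\tr|-\Delta-\mu|^{1/2}(Q^{++}-Q^{--})|-\Delta-\mu|^{1/2}$, implicitly using $[V_Q,\Pi_\mu^-]^{\pm\pm}=0$ (valid because $\gamma_f=\Pi_\mu^-$ commutes with $\Pi_\mu^\pm$) so that the $\gamma_f$-commutator drops out of the kinetic derivative, and then obtains $\partial_t\cE=0$ as a cancellation between the kinetic and interaction derivatives. You instead observe that for $Q\in\gS^{1,s}$ with $s\ge 2$ the kinetic piece can be written directly as $\tr(-\Delta-\mu)Q$, assemble $\partial_t\cE=\tr\bigl((H-\mu)\partial_tQ\bigr)$ with $H=-\Delta+V_Q$, and split it into the separately vanishing pieces $-i\tr(H[H,Q])$, $-i\tr(H[V_Q,\gamma_f])$ and a $\mu$-trace. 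Your grouping never exploits that $\gamma_f$ is a projection and is therefore a bit more flexible than the paper's $Q^{\pm\pm}$ device. A few minor remarks. Your preliminary claim $\rho_{Q(t)}\in L^\ii$ requires $s>d$ by Lemma~\ref{lem:density_Schatten-Sobolev} and so fails for $d\ge5$ under the stated hypothesis, but this is harmless: $\rho_Q\in L^1$ together with $w\in L^\ii$ already make the interaction term finite and $V_Q$ bounded, which is all you actually use. Your Fourier-diagonal evaluations of $\tr[V_Q,\gamma_f]$ and $\tr(V_Q[V_Q,\gamma_f])$ are valid but need one to justify identifying the trace with the integral of the momentum-kernel diagonal; the paper's shorter cyclicity argument, checking that $V_Q\gamma_f$, $\gamma_fV_Q$, $V_Q^2\gamma_f$, $V_Q\gamma_fV_Q\in\gS^1$, sidesteps this. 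Finally, the density fallback you flag at the end is indeed what makes rigorous the subtler trace-class facts such as $HQH\in\gS^1$, which requires $(1-\Delta)Q(1-\Delta)\in\gS^1$, a consequence of $Q=Q^*\in\gS^{1,4}$ that deserves the extra care you anticipate.
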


\begin{proof}
  We abbreviate $\cE(Q):=\cE(\Pi_\mu^-+Q,\Pi_\mu^-)$ for shortness:
  \begin{multline*}
   \cE(Q(t)) = \tr|-\Delta-\mu|^{1/2}(Q(t)^{++}-Q(t)^{--})|-\Delta-\mu|^{1/2} \\
    + \frac12 \iint\rho_{Q(t)}(x)\rho_{Q(t)}(y)w(x-y)\,dx\,dy.
  \end{multline*}
  When $Q\in\gS^{1,s}$ and $s>\max(2,d/2)$, then $V_Q=w*\rho_Q\in (L^2\cap L^\ii)(\R^d)$ and we infer that
  $$\iint\rho_{Q(t)}(x)\rho_{Q(t)}(y)w(x-y)\,dxdy = \tr(V_Q Q).$$

From Duhamel's formula we see that $\partial_t Q\in \gS^{1,s-2}$ and we can differentiate the function $t\mapsto\cE(Q(t))$. The first term in $\cE(Q(t))$ is differentiable when $\partial_t Q\in\gS^{1,2}$ which requires $s\ge4$. Similarly, the second term is differentiable when $s>2+d/2$. 
  Using that $w(x)=w(-x)$, we have for all $t\in(-T^-,T^+)$
  $$\partial_t{\cE}(Q(t))=\tr|-\Delta-\mu|^{1/2}(\dot{Q}^{++}-\dot{Q}^{--})|-\Delta-\mu|^{1/2}+\tr(V_Q\dot{Q}),$$
  with 
  $$\dot{Q}^{++}-\dot{Q}^{--}=(-i)\left([-\Delta,Q^{++}-Q^{--}]+[V_Q,Q]^{++}-[V_Q,Q]^{--}\right).$$
  Since $Q^{++},Q^{--}\in\gS^{1,4}$, we have
  $$\tr|-\Delta-\mu|^{\frac 12}[-\Delta,Q^{++}]|-\Delta-\mu|^{\frac 12}=0=\tr|-\Delta-\mu|^{\frac 12}[-\Delta,Q^{--}]|-\Delta-\mu|^{\frac 12}.$$
  We now remark that $(-\Delta-\mu)[V_Q,Q]\in\gS^1$. Indeed, $V_Q\in L^\ii$ and $Q\in \gS^{1,2}$, hence $(-\Delta-\mu)QV_Q\in\gS^1$. For the other term, we commute the Laplacian as follows
$$(-\Delta)V_Q Q=\big((-\Delta V_Q)+2\nabla V_Q\cdot\nabla\big)Q+V_Q(-\Delta)Q$$
and we use that $\Delta V_Q,\nabla V_Q\in L^\ii$, since $w\in W^{2,1}\cap W^{2,\ii}$ by assumption.
As a consequence, $(-\Delta-\mu)[V_Q,Q]\in\gS^1$ and we have 
  $$\tr|-\Delta-\mu|^{1/2}([V_Q,Q]^{++}-[V_Q,Q]^{--})|-\Delta-\mu|^{1/2}=\tr(-\Delta-\mu)[V_Q,Q].$$
  Since we also have $V_Q Q\in\gS^1$, then $\tr[V_Q,Q]=0$ and 
  $$\tr(-\Delta-\mu)[V_Q,Q]=\tr(-\Delta)[V_Q,Q]=\tr V_Q[Q,-\Delta].$$
  Summarizing, we thus have proved that 
  $$\partial_t\tr|-\Delta-\mu|^{1/2}(Q^{++}-Q^{--})|-\Delta-\mu|^{1/2}=\tr[-\Delta,V_Q]Q.$$
  Now let us compute the time derivative of the second term in the energy:
  $$\partial_t\frac12 \tr(V_Q Q)=\tr(V_Q\dot{Q})=(-i)\tr V_Q\left([-\Delta+V_Q,\gamma_f+Q]\right).$$
  By the proof of Lemma \ref{lemma:commutator-Vgamma-Hs}, we know that $V_Q\gamma_f\in\gS^{1,s}$, and therefore the operators $V_Q^2\gamma_f$ and $V_Q\gamma_f V_Q$ are both in $\gS^1$ since $V_Q\in L^\ii$. We deduce that $\tr V_Q[V_Q,\gamma_f]=0$. We have also seen that $V_Q Q\in\gS^1$, hence $V_Q^2 Q, V_Q Q V_Q\in\gS^1$ and hence $\tr V_Q[V_Q,Q]=0$. We finally have
 $\partial_t\tr(V_Q Q)=2\tr V_Q[-\Delta,Q],$
  implying that
  $$\partial_t{\cE}(Q(t))=\tr V_Q[Q,-\Delta]+\tr V_Q[-\Delta,Q]=0,$$
  and the energy is conserved. The proof for $\tr\,Q(t)$ is similar.
\end{proof}

\subsubsection*{Step 3. A solution starting in $\cK_\mu$ stays in $\cK_\mu$}

Here we prove that any solution in $\cY_\mu$ such that $Q_0=\gamma_0-\Pi_\mu^-$ with $\gamma_0\in\cK_\mu$ stays in the energy space for all times, and that the relative Hartree energy is conserved. Since states in $\cK_\mu$ do not necessarily have a finite particle number relative to $\Pi_\mu^-$, we \emph{a priori} lose this conservation law.

\begin{proposition}[Conservation of the relative energy]\label{prop:conservation_energy_Kmu}
Assume that $d\geq2$ and that $w\in L^1\cap L^\ii$ with $w(x)=w(-x)$ for a.e. $x\in\R^d$. Let $Q_0=\gamma_0-\Pi_\mu^-$ with $\gamma_0\in\cK_\mu$. Then the unique maximal solution $Q(t)=\gamma(t)-\Pi_\mu^-\in C^0_t((-T^-,T^+),\cY_\mu)$ satisfies that $\gamma(t)\in\cK_\mu$ for all $t\in\R$ and the relative energy is conserved:
$$\cE(\gamma(t),\Pi_\mu^-)=\cE(\gamma_0,\Pi_\mu^-),\quad \forall t\in(-T^-,T^+).$$
\end{proposition}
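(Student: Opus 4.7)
The plan is to transfer the conservation law proved in Proposition~\ref{prop:conservation-relative-energy-cS1s} from the smooth class $\gS^{1,s}$ to the energy class $\cK_\mu$ by density, using the continuous dependence in $\cY_\mu$ provided by Proposition~\ref{prop:local-wp-cY} together with a lower semicontinuity argument for the relative kinetic energy. Since $\gS^{1,s}\subset \cY_\mu$ when $s>\max(2,d/2)$ (by Lemma~\ref{lem:density_Schatten-Sobolev} and the trivial bounds on the $\pm\pm$ blocks), any solution built in Theorem~\ref{thm:local-wp-Sps} is, by uniqueness in $\cY_\mu$, the same as the one of Proposition~\ref{prop:local-wp-cY}, so the two well-posedness theories are compatible.

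First I would construct, for any $\gamma_0\in\cK_\mu$, a sequence $\gamma_0^{(n)}=\Pi_\mu^-+Q_0^{(n)}\in \cK_\mu$ such that $Q_0^{(n)}\in\gS^{1,s}$ (with $s$ large, fitting Proposition~\ref{prop:conservation-relative-energy-cS1s}), $0\le\gamma_0^{(n)}\le 1$, and $Q_0^{(n)}\to Q_0$ in $\cX_\mu$. This is achieved block by block: decompose $Q_0$ into its four $\sigma_1\sigma_2$-pieces with $\sigma_i\in\{+,-\}$, diagonalize the trace-class weighted $\pm\pm$ blocks, keep only finitely many eigenvectors, smooth them by convolution and project back onto the correct ranges of $\Pi_\mu^\pm$; for the off-diagonal $+-$ block, which is Hilbert--Schmidt with $|\Delta+\mu|^{1/2}$ on the right, one truncates in momentum by $\1(\epsilon\le|\Delta+\mu|\le 1/\epsilon)$ on both sides. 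The constraint $0\le\gamma_0^{(n)}\le 1$ can be maintained by an additional cut $\gamma_0^{(n)}\mapsto \max(0,\min(1,\gamma_0^{(n)}))$ through functional calculus, which preserves the $\cX_\mu$-limit. The key point is that by Lemma~\ref{lem:energy_space_in_Ymu}, convergence in $\cX_\mu$ implies convergence in $\cY_\mu$, and in addition $\cE(\gamma_0^{(n)},\Pi_\mu^-)\to \cE(\gamma_0,\Pi_\mu^-)$ (the kinetic term by definition of the $\cX_\mu$-norm, the potential term by~\eqref{eq:estim_rho_Y_mu} applied to the convergence in $\cY_\mu$).

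Next, Theorem~\ref{thm:local-wp-Sps} and Proposition~\ref{prop:conservation-relative-energy-cS1s} provide a maximal solution $Q^{(n)}\in C^0_t((-T^\pm_n),\gS^{1,s})$ along which $\cE(\gamma^{(n)}(t),\Pi_\mu^-)=\cE(\gamma_0^{(n)},\Pi_\mu^-)$. By uniqueness, this $Q^{(n)}$ coincides with the $\cY_\mu$-solution issued from $Q_0^{(n)}$; the continuous dependence part of Proposition~\ref{prop:local-wp-cY} yields $\liminf T^\pm_n\ge T^\pm$ and $Q^{(n)}\to Q$ in $C^0_t([-T^-+\epsilon,T^+-\epsilon],\cY_\mu)$ for every small $\epsilon>0$. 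The estimate~\eqref{eq:estim_rho_Y_mu} gives continuity in $\cY_\mu$ of the potential term of the relative energy, so the nonlinear part of $\cE(\gamma^{(n)}(t),\Pi_\mu^-)$ converges to that of $\cE(\gamma(t),\Pi_\mu^-)$. The serious point is the kinetic part: rewrite
\[
 \tr(-\Delta-\mu)(\gamma-\Pi_\mu^-) = \tr\bigl(|\Delta+\mu|^{1/2}\Pi_\mu^+\gamma\Pi_\mu^+|\Delta+\mu|^{1/2}\bigr)+\tr\bigl(|\Delta+\mu|^{1/2}\Pi_\mu^-(1-\gamma)\Pi_\mu^-|\Delta+\mu|^{1/2}\bigr),
\]
a sum of traces of non-negative operators, so by the Fatou-type lower semicontinuity of $\gamma\mapsto \tr(A\gamma A)$ under weak-$*$ convergence of $\gamma\ge 0$ (applied with $A=|\Delta+\mu|^{1/2}\Pi_\mu^\pm$), and since $Q^{(n)}(t)\to Q(t)$ in operator norm implies weak-$*$ convergence of $\gamma^{(n)}(t)$ to $\gamma(t)$, we obtain
\[
 \tr(-\Delta-\mu)(\gamma(t)-\Pi_\mu^-)\le \liminf_n \tr(-\Delta-\mu)(\gamma^{(n)}(t)-\Pi_\mu^-).
\]
Combining with the conservation for $\gamma^{(n)}$ and convergence of the potential term, this gives
\[
 \cE(\gamma(t),\Pi_\mu^-)\le \liminf_n \cE(\gamma^{(n)}(t),\Pi_\mu^-)=\lim_n \cE(\gamma_0^{(n)},\Pi_\mu^-) = \cE(\gamma_0,\Pi_\mu^-).
\]
In particular $\tau(\gamma(t)):=\tr(-\Delta-\mu)(\gamma(t)-\Pi_\mu^-)$ is finite, so $\gamma(t)\in\cK_\mu$ for all $t\in(-T^-,T^+)$.

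Finally, to upgrade the inequality into equality I use the time-reversibility of~\eqref{eq:rHF_Q_Duhamel}: the unique $\cY_\mu$-solution starting from $\gamma(t)$ at time $0$ is $s\mapsto \gamma(t+s)$, and at $s=-t$ it returns to $\gamma_0$. Since $\gamma(t)\in\cK_\mu$ by the previous step, the inequality just proved, applied with initial datum $\gamma(t)$, yields $\cE(\gamma_0,\Pi_\mu^-)\le \cE(\gamma(t),\Pi_\mu^-)$, and conservation follows. The main obstacles are the technical construction of the approximating sequence preserving both $0\le\gamma_0^{(n)}\le 1$ and convergence in $\cX_\mu$, and the identification of the right weak-type topology ensuring lower semicontinuity of the (indefinite-sign) relative kinetic energy; both are handled by splitting $Q$ into its four $\Pi_\mu^\pm$-blocks so that each piece has a sign.
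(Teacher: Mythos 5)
Your overall strategy mirrors the paper's (approximate the data, conserve energy for regular solutions, pass to the limit by continuous dependence in $\cY_\mu$ plus weak lower semicontinuity of the relative kinetic energy, then reverse time for equality), but two concrete steps are missing or would fail. First, Proposition~\ref{prop:conservation-relative-energy-cS1s} only applies when $w$ satisfies the stringent regularity conditions~\eqref{eq:cond_w}, \eqref{eq:cond_f_w_1}, \eqref{eq:cond_f_w_2} of Theorem~\ref{thm:local-wp-Sps}, whereas Proposition~\ref{prop:conservation_energy_Kmu} only assumes $w\in L^1\cap L^\ii$. You cannot invoke Proposition~\ref{prop:conservation-relative-energy-cS1s} for your approximating solutions without first regularizing $w$; the paper explicitly begins by assuming $w$ smooth and removes this hypothesis at the very end using the $w$-continuity built into Proposition~\ref{prop:local-wp-cY}. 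Second, your claim that $Q_0^{(n)}\to Q_0$ in $\cX_\mu$ (and hence in $\cY_\mu$ via Lemma~\ref{lem:energy_space_in_Ymu}) cannot hold in general: both norms contain the operator norm $\|Q\|$, finite-rank operators are compact, and $Q_0$ need not be compact (the paper stresses exactly this point right after~\eqref{eq:def_X_mu}). The paper's Lemma~\ref{lem:density_finite_rank} is crafted precisely to sidestep this: it uses the single truncation $Q^{(n)}=P_nQ_0P_n$ with $P_n=\1(1/n\le|-\Delta-\mu|\le n)$, asserting only strong operator convergence together with convergence of the weighted $\gS^1/\gS^2$ quantities and of the low-momentum density in $L^2$ --- which is what the Duhamel estimates behind the $\cY_\mu$ continuous-dependence really need. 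This truncation also automatically preserves $0\le\Pi_\mu^-+Q^{(n)}\le1$ because $P_n$ commutes with $\Pi_\mu^\pm$, whereas it is not clear that your extra cut $\max(0,\min(1,\cdot))$ keeps the approximant in $\gS^{1,s}$.

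There is also a smaller omission: you establish finiteness of the relative kinetic energy of $\gamma(t)$ but never verify that $0\le\gamma(t)\le1$ is propagated, which is the other half of $\gamma(t)\in\cK_\mu$. The paper obtains this by constructing, via Yajima's theorem, the unitary propagator associated with $-\Delta+w\ast\rho_{Q^{(n)}(t)}$, concluding $0\le\Pi_\mu^-+Q^{(n)}(t)\le1$ and passing to the limit. Your Fatou/lower-semicontinuity argument for the kinetic term, splitting into the nonnegative blocks $\Pi_\mu^+\gamma\Pi_\mu^+$ and $\Pi_\mu^-(1-\gamma)\Pi_\mu^-$, as well as the time-reversal upgrade to equality, are correct and coincide with the paper's.
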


\begin{proof}
We start by assuming that $w$ satisfies all the assumptions of Theorem~\ref{thm:local-wp-Sps}, with $p=1$ and some $s>2+\max(2,d/2)$. We will remove this condition at the end of the proof.
Let $\gamma_0=Q_0+\Pi_\mu^-$ be any density matrix in $\cK_\mu$, and denote by $Q(t)$ the associated maximal solution in $C^0((-T^-,T^+),\cY_\mu)$. The following is an improvement of~\cite[Lemma 3.2]{FraLewLieSei-12} and it says that one can approximate $Q_0$ by a sequence of smooth finite-rank operators, with the low-momentum density converging strongly in $L^2$.

\begin{lemma}\label{lem:density_finite_rank}
Let $\gamma=Q+\Pi_\mu^-\in\cK_\mu$ with $d\geq2$. Then there exists a sequence $-\Pi_\mu^-\leq Q^{(n)}\leq\Pi_\mu^+$
 of finite rank operators, compactly supported in the Fourier domain, such that 
 \begin{itemize}
  \item For all $\phi\in L^2(\R^d)$, $Q^{(n)}\phi\to Q\phi$ in $L^2(\R^d))$;
  \item $\lim_{n\to\ii}\|(Q^{(n)}-Q)|-\Delta-\mu|^{1/2}\|_{\gS^2}=0$;
  \item $\lim_{n\to\ii}\||-\Delta-\mu|^{1/2}(Q^{(n)}-Q)^{\pm\pm}|-\Delta-\mu|^{1/2}\|_{\gS^1}=0$;
  \item $\rho_{Q^{(n)}}\to\rho_{Q}$ in $L^1_{\rm loc}(\R^d)$;
  \item $\rho_{\Pi_{2\mu}^-Q^{(n)}\Pi_{2\mu}^-}\to\rho_{\Pi_{2\mu}^-Q\Pi_{2\mu}^-}$ strongly in $L^{2}(\R^d)$.
 \end{itemize}
\end{lemma}

We finish the proof of Proposition~\ref{prop:conservation-relative-energy-cS1s} before turning to that of Lemma~\ref{lem:density_finite_rank}. We take a sequence $Q_0^{(n)}$ as in Lemma~\ref{lem:density_finite_rank} and remark that $Q_0^{(n)} \to Q_0$ in $\cY_\mu$. Furthermore, by the proof of Lemma~\ref{lem:density_finite_rank}, we also have 
$\rho[e^{it\Delta}\Pi_{2\mu}^-Q_0^{(n)}\Pi_{2\mu}^-e^{-it\Delta}] \to\rho[e^{it\Delta}\Pi_{2\mu}^-Q_0\Pi_{2\mu}^-e^{-it\Delta}]$ 
strongly in $L^{2}(\R^d)$ and uniformly in $t$ on any compact set of $\R$. By Proposition~\ref{prop:local-wp-cY}, this proves that the maximal times of existence of the associated solution $Q^{(n)}(t)$ satisfy $\liminf_{n\to\ii}T_n^\pm\geq T^\pm$ and that $Q^{(n)}(t)\to Q(t)$ in $C^0_t([-T^-+\epsilon,T^+-\epsilon],\cY_\mu)$ for any $\epsilon>0$. By~\eqref{eq:estim_rho_Y_mu}, this implies in particular that $\rho_{Q^{(n)}(t)}\to \rho_{Q(t)}$ strongly in $C^0_t([-T^-+\epsilon,T^+-\epsilon],L^1+L^2)$ and, since $w\in L^1\cap L^\ii$, that
\begin{multline*}
\lim_{n\to\ii} \int_{\R^d}\int_{\R^d}w(x-y)\rho_{Q^{(n)}(t)}(x)\rho_{Q^{(n)}(t)}(y)\,dx\,dy\\
=\int_{\R^d}\int_{\R^d}w(x-y)\rho_{Q(t)}(x)\rho_{Q(t)}(y)\,dx\,dy 
\end{multline*}
uniformly in $t$ on $[-T^-+\epsilon,T^+-\epsilon]$. By Proposition~\ref{prop:conservation-relative-energy-cS1s}, the relative Hartree energy of $Q^{(n)}(t)$ is conserved and, therefore,
\begin{multline}
\lim_{n\to\ii}\tr(-\Delta-\mu)Q^{(n)}(t)\\=\tr(-\Delta-\mu)Q_0+\frac12\int_{\R^d}\int_{\R^d}w(x-y)\rho_{Q_0}(x)\rho_{Q_0}(y)\,dx\,dy \\
-\frac12\int_{\R^d}\int_{\R^d}w(x-y)\rho_{Q(t)}(x)\rho_{Q(t)}(y)\,dx\,dy
\label{eq:limit_energy_Qt}
\end{multline}
where the right side is uniformly bounded for $t\in[-T^-+\epsilon,T^+-\epsilon]$. Here we have used that
$\lim_{n\to\ii}\tr(-\Delta-\mu)Q_0^{(n)}=\tr(-\Delta-\mu)Q_0$, by Lemma~\ref{lem:density_finite_rank}.

From~\eqref{eq:estim_rho_Y_mu} and the assumptions on $w$, for $n$ large enough the potential $w\ast \rho_{Q^{(n)}(t)}$ is in $L^1([-T^-+\epsilon,T^+-\epsilon], L_x^\ii)$. By \cite{Yajima-87}, this implies that there exists a unitary operator $U^{(n)}(t,t')$ on $L^2(\R^d)$ associated with the time-dependent Hamiltonian $-\Delta+w\ast \rho_{Q^{(n)}(t)}$. Then $Q^{(n)}(t)=U^{(n)}(t,0)(\Pi_\mu^-+Q_0^{(n)})U^{(n)}(0,t)-\Pi_\mu^-$. In particular, since $0\le\Pi_\mu^-+Q_0^{(n)}\le1$, then $0\le\Pi_\mu^-+Q^{(n)}(t)\le1$ for all $t\in(-T^-+\epsilon,T^+-\epsilon)$. By the same argument, we have $0\leq \gamma(t)=\Pi_\mu^-+Q(t)\leq 1$ as well, for all $t\in(-T^-,T^+)$. In order to show that $\gamma(t)\in \cK_\mu$, it suffices to prove that $\tr(-\Delta-\mu)Q(t)$ is finite for all $t$.

We recall that the relative kinetic energy is defined by
$\tr(-\Delta-\mu)Q:=\tr|\Delta+\mu|^{1/2}\big(Q^{++}-Q^{--}\big)|\Delta+\mu|^{1/2}$
and it is weakly lower semi-continuous since $Q^{++},-Q^{--}\geq0$. From~\eqref{eq:limit_energy_Qt} we deduce that
\begin{multline*}
\tr(-\Delta-\mu)Q(t) \leq \tr(-\Delta-\mu)Q_0+\frac12\iint w(x-y)\rho_{Q_0}(x)\rho_{Q_0}(y)\,dx\,dy \\
-\frac12\iint w(x-y)\rho_{Q(t)}(x)\rho_{Q(t)}(y)\,dx\,dy
\end{multline*}
which proves that $\gamma(t)\in\cK_\mu$ for all $t$, and that $\cE(\gamma(t),\Pi_\mu^-)\leq \cE(\gamma_0,\Pi_\mu^-)$, for all $t\in(-T^-,T^+)$. In order to derive the other inequality, it suffices to exchange the role of $t$ and $0$. We start the dynamics at $t$ with initial condition $\gamma(t)\in\cK_\mu$ and, by uniqueness, the so-obtained solution coincides with the other dynamics started at $0$. The previous argument provides the reverse inequality $\cE(\gamma_0,\Pi_\mu^-)\leq \cE(\gamma(t),\Pi_\mu^-)$.

Our proof was based on the fact that $w$ is very smooth, but now that we have shown the conservation of the energy for all solutions in $\cK_\mu$, we can remove this condition by using the continuity of the solutions in $\cY_\mu$ with respect to $w$. On the other hand, the blow up criterion~\eqref{eq:blowup_Ymu} follows from the one in $\cY_\mu$ by~\eqref{eq:control_Ymu_Xmu}. We skip the details. 
\end{proof}

It remains to provide the
\begin{proof}[Proof of Lemma~\ref{lem:density_finite_rank}]
Like in~\cite[Lemma 3.2]{FraLewLieSei-12}, we let $Q^{(n)}:=P_nQP_n$ with $P_n:=\1(1/n\leq |-\Delta-\mu|\leq n)$, which localizes away from the Fermi surface and from infinity. The operator $Q^{(n)}$ is not yet of finite rank, but it is Hilbert-Schmidt and very smooth, and it can itself be approximated by a finite rank operator by~\cite[Thm  6]{HaiLewSer-09}. The convergence properties of $Q^{(n)}$ are all proved in~\cite[Lemma 3.2]{FraLewLieSei-12}, except for the last one, for which we compute  
\begin{align}
\Pi_{2\mu}^-Q^{(n)}\Pi_{2\mu}^--\Pi_{2\mu}^-Q\Pi_{2\mu}^-=& \Pi_{2\mu}^-P_nQP_n\Pi_{2\mu}^--\Pi_{2\mu}^-Q\Pi_{2\mu}^-\nonumber\\
=&-\Pi_{2\mu}^-P_n^\perp QP_n^\perp\Pi_{2\mu}^- - \Pi_{2\mu}^-P_n^\perp QP_n\Pi_{2\mu}^-\nonumber\\
& -\Pi_{2\mu}^- P_nQP_n^\perp\Pi_{2\mu}^-\label{eq:decomp_density_rho}
\end{align}
with $n\geq 2\mu$. 
The density of $\Pi_{2\mu}^-P_n^\perp QP_n^\perp\Pi_{2\mu}^-$ is in $L^\ii$ due to the projection $\Pi_{2\mu}^-$ and, by the Lieb-Thirring inequality~\eqref{eq:Lieb-Thirring-zero-temp}, we have
\begin{multline*}
\int_{\R^d}\rho^2_{\Pi_{2\mu}^-P_n^\perp QP_n^\perp\Pi_{2\mu}^-}\leq 
C\int_{\R^d}\min(\rho^2_{\Pi_{2\mu}^-P_n^\perp QP_n^\perp\Pi_{2\mu}^-},\rho^{1+2/d}_{\Pi_{2\mu}^-P_n^\perp QP_n^\perp\Pi_{2\mu}^-})\\ \leq C\tr|-\Delta-\mu|\Pi_{2\mu}^-P_n^\perp QP_n^\perp\Pi_{2\mu}^-\to0. 
\end{multline*}
We still need to control the density of the two other terms in~\eqref{eq:decomp_density_rho} and we only explain how to deal with the first one. We write 
\begin{align*}
\Pi_{2\mu}^-P_n^\perp QP_n\Pi_{2\mu}^-=&P_n^\perp Q^{--}P_n +\Pi_{2\mu}^-P_n^\perp Q^{++}P_n\Pi_{2\mu}^-\\
&+\Pi_{\mu}^-P_n^\perp QP_n\Pi_{2\mu}^-\Pi_{\mu}^+ +\Pi_{\mu}^+\Pi_{2\mu}^-P_n^\perp QP_n\Pi_{\mu}^-
\end{align*}
Following the proof in~\cite[p 456--457]{FraLewLieSei-12}, we find that the density of the last two terms can be estimated in $L^2$ by a constant times
$$\norm{QP_n^\perp|\Delta+\mu|^{1/2}}_{\gS^2}^{1/2}\norm{QP_n|\Delta+\mu|^{1/2}}_{\gS^2}^{1/2}$$
which tends to 0 when $n\to\ii$. We therefore only have to deal with the first two terms. Using that $Q^{--}\leq0$, we write
$$\frac{1}{4\epsilon} P_n^\perp Q^{--}P_n^\perp + \epsilon P_n Q^{--}P_n \leq P_n^\perp Q^{--}P_n\leq -\frac{1}{4\epsilon} P_n^\perp Q^{--}P_n^\perp - \epsilon P_n Q^{--}P_n$$
and deduce
$|\rho_{P_n^\perp Q^{--}P_n}|\leq (4\epsilon)^{-1} |\rho_{P_n^\perp Q^{--}P_n^\perp}| + \epsilon|\rho_{P_n Q^{--}P_n}|.$
From~\eqref{eq:Lieb-Thirring-zero-temp} and the fact that the density of all the operators appearing in the previous estimates are uniformly bounded, we conclude that
$$\int_{\R^d}|\rho_{P_n^\perp Q^{--}P_n}|^2\leq -\frac{1}{4\epsilon}\tr|\Delta+\mu|P_n^\perp Q^{--}P_n^\perp-\epsilon\tr|\Delta+\mu|Q^{--}.$$
The argument is the same for the term involving $Q^{++}$ and we get the result by letting first $n\to\ii$ and then $\epsilon\to0$.
\end{proof}

\subsubsection*{Step 4. Global well-posedness in the defocusing case}
Now we assume that $d\ge3$ and $\hat{w}\ge0$ or that $d=2$ and $\|(\hat{w})_-\|_{L^\ii}<K_{\rm LT}/(2\pi)$. Then we deduce from~\eqref{eq:coercive_zero_temp_3D} and~\eqref{eq:coercive_zero_temp_2D} that
$$\tr(-\Delta-\mu)Q(t)\leq\begin{cases}
\cE(\gamma_0,\Pi_\mu^-)&\text{for $d\geq3$,}\\[0.2cm]
\dps \frac{\cE(\gamma_0,\Pi_\mu^-)}{1-\frac{2\pi\|\hat{w}_-\|_{L^\ii}}{K_{\rm LT}}}&\text{for $d=2$,}
\end{cases}$$
and all $t\in(-T^-,T^+)$. This shows that $Q(t)$ is uniformly bounded in $\cX_\mu$ and therefore in $\cY_\mu$, by~\eqref{eq:control_Ymu_Xmu}. This implies global well-posedness, by the blow-up criterion~\eqref{eq:blowup_Ymu} in $\cY_\mu$.
\end{proof}

\section{Local well-posedness with Strichartz estimates}\label{sec:Strichartz}
  
In Section~\ref{sec:local}, we have constructed solutions to the Hartree equation~\eqref{eq:rHF_DM} in $\gS^{p,s}$ with $s>d(p-1)/p$. We then used similar techniques in Section~\ref{sec:global_zero_temp} to deal with the special case of a Fermi gas at zero temperature, $\gamma_f=\Pi_\mu^-$. Our goal is to solve the positive temperature cases~\eqref{eq:Fermi-gas-positive-temp}--\eqref{eq:Boltzmann-gas-positive-temp} as well. Because of Klein's inequality (see~\eqref{eq:Klein2} in Section~\ref{sec:entropy}), it is natural to look for solutions in the space $\gS^{2,1}$ of operators $Q$ such that $Q(1-\Delta)^{1/2}\in\gS^2$. This space is covered by Theorem~\ref{thm:local-wp-Sps} only in dimension $d=1$. We treat here the cases $d=1,2,3$ by a different method.

So far we have only employed estimates relating $Q(t)$ and $\rho_{Q(t)}$ which are uniform with respect to the time variable. That is, we have not used the regularization properties of the time-dependent Schr\"odinger propagator, which are usually expressed by means of \emph{Strichartz inequalities}. The usual inequality of this kind~\cite{Strichartz-77,KeeTao-98,Cazenave-03,Tao-06} is of little use in our context and, instead, our results are based on a more precise Strichartz inequality in Schatten spaces, which has recently been proved in~\cite{FraLewLieSei-13}.

In this section we take again a general reference state $\gamma_f=f(-\Delta)$. We do not assume anymore that $\gamma_f=\Pi_\mu^-$. We want to lower the exponent $s$ measuring the regularity of $Q$. For shortness, we only consider two cases in this section: $\gS^2$ in dimension $d=1,2$ and $\gS^{2,1}$ in dimension $d=1,2,3$. Our method allows to decrease the regularity exponent $s$ in higher dimensions, but without reaching $s=1$. Also, we can deal with other Schatten exponents $p$, but we do not mention this here.

 \begin{theorem}[Well-posedness in $\gS^2$ in $d=1,2$]\label{thm:local-wp-S2}
  Let $d\in\{1,2\}$, $w\in L^1(\R^d)\cap L^\ii (\R^d)$ and $f\in L^\ii(\R^+,\R)$ such that $\int_{\R^d}|f(k^2)|\,dk<\ii$.
Assume that 
\begin{equation}
Q_0\in\gS^2\quad\text{and}\quad w*\rho_{e^{it\Delta}Q_0e^{-it\Delta}}\in L^{8/d}_{t,{\rm loc}}(\R,L^2_x\cap L^\ii_x).
\label{eq:condition_initial_gS2}
\end{equation}
Then there exists a unique maximal solution $Q(t)\in C^0((-T^-,T^+),\gS^2)$ to~\eqref{eq:rHF_Q_Duhamel} with $T^\pm>0$, such that 
$w*\rho_{Q(t)}\in L^{8/d}_{t,{\rm loc}}((-T^-,T^+),L^2_x\cap L^\ii_x)$.
We have the blow up criterion
\begin{equation}\label{eq:blowup-S2}
T^\pm<\ii \Longrightarrow\lim_{t\to \pm T^\pm}\int_0^{\pm t}\norm{w\ast\rho_{Q(t)}}_{L^2_x\cap L^\ii_x}^{8/d}\,dt=+\ii. 
\end{equation}
Furthermore, this solution is continuous with respect to $Q_0$, $w$ and $f$: if $Q_0^{(n)}\to Q_0$, $w_n\to w$ and $f_n\to f$ for the norms appearing above, then the corresponding  unique solution $Q_n(t)$ has a time of existence such that $\liminf T_n^\pm\geq T^\pm$, and we have for every $\epsilon>0$
\begin{multline*}
\lim_{n\to\ii}\norm{w\ast \rho_{Q(t)}-w_n\ast \rho_{Q_n(t)}}_{L^{8/d}((-T^-+\epsilon,T^+-\epsilon),L^2_x\cap L^\ii_x)}\\=\lim_{n\to\ii}\norm{Q_n(t)-Q(t)}_{L^\ii((-T^-+\epsilon,T^+-\epsilon),\gS^2)}=0.
\end{multline*}
  \end{theorem}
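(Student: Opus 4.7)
The plan is a Banach--Picard fixed-point argument for the Duhamel formulation \eqref{eq:rHF_Q_Duhamel}, carried out in the Banach space
\[
X_T := \Bigl\{Q \in C^0([-T,T],\gS^2) \;:\; w*\rho_Q \in L^{8/d}_t\bigl([-T,T], L^2_x \cap L^\infty_x\bigr)\Bigr\},
\]
endowed with the norm $\|Q\|_{X_T} := \sup_{t}\|Q(t)\|_{\gS^2} + \|w*\rho_Q\|_{L^{8/d}_t(L^2_x \cap L^\infty_x)}$. The assumption \eqref{eq:condition_initial_gS2} on $Q_0$ is precisely what places the free-evolution term $t \mapsto e^{it\Delta}Q_0 e^{-it\Delta}$ in $X_T$. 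Writing $V_Q := w*\rho_Q$, we set
\[
\Phi(Q)(t) := e^{it\Delta}Q_0 e^{-it\Delta} - i\int_0^t e^{i(t-t')\Delta}\bigl[V_Q(t'), \gamma_f + Q(t')\bigr] e^{i(t'-t)\Delta}\,dt',
\]
and try to show $\Phi$ is a contraction on a closed ball of $X_T$ for $T$ small.

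Two estimates are needed. The first controls $\Phi(Q)$ pointwise in $\gS^2$. Unitary invariance of Schatten norms gives $\|e^{it\Delta}Q_0 e^{-it\Delta}\|_{\gS^2} = \|Q_0\|_{\gS^2}$, and the commutator splits as $[V_Q, \gamma_f] + [V_Q, Q]$. For the first piece we invoke the KSS inequality \eqref{eq:KSS}: noting that $\int|f(k^2)|^2\,dk \leq \|f\|_{L^\infty}\int|f(k^2)|\,dk<\infty$, it yields $\|V_Q \gamma_f\|_{\gS^2} \leq C\|V_Q\|_{L^2}$. For the second piece we use the trivial bound $\|[V_Q, Q]\|_{\gS^2} \leq 2\|V_Q\|_{L^\infty}\|Q\|_{\gS^2}$. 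H\"older in time against the $L^{8/d}_t$ norm of $V_Q$ (permissible since $8/d \geq 4$ for $d\leq 2$) produces a factor $T^{1-d/8}$ which is small for small $T$.

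The second, harder estimate controls $w*\rho_{\Phi(Q)}$ in $L^{8/d}_t(L^2_x \cap L^\infty_x)$. The $L^\infty_x$ component reduces to the $L^2_x$ one via $\|w*\rho\|_{L^\infty_x} \leq \|w\|_{L^2}\|\rho\|_{L^2_x}$, permissible since $w \in L^1 \cap L^\infty \subset L^2$, combined with $\|w*\rho\|_{L^2_x} \leq \|w\|_{L^1}\|\rho\|_{L^2_x}$. The crucial input for the $L^2_x$ bound is the Schatten--Strichartz inequality of \cite{FraLewLieSei-13}, used in its dual (inhomogeneous) form on the Duhamel integral
\[
A(t) := \int_0^t e^{i(t-t')\Delta} B(t') e^{i(t'-t)\Delta}\,dt', \qquad B(t') := \bigl[V_Q(t'), \gamma_f + Q(t')\bigr].
\]
The FLLS Strichartz estimate for $\gS^2$-admissible exponents in dimensions $d\in\{1,2\}$ (together with a Christ--Kiselev device to handle the truncated time integral) bounds $\|\rho_A\|_{L^{8/d}_t L^2_x}$ by a suitable time-integrated $\gS^2$ norm of $B(t')$; feeding in the commutator bounds from the first step produces the required estimate with a small-in-$T$ factor multiplied by $\|Q\|_{X_T}(1+\|Q\|_{X_T})$.

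Combining the two estimates shows that $\Phi$ is a self-map and a contraction on a ball of $X_T$ for $T$ small enough (depending on $\|Q_0\|_{\gS^2}$ and on the hypothesized $L^{8/d}_t(L^2_x\cap L^\infty_x)$ norm of $w*\rho_{e^{it\Delta}Q_0 e^{-it\Delta}}$), yielding a unique local fixed point. Uniqueness without size restriction follows by a Gr\"onwall argument applied to the difference of two solutions, using only the linear versions of the same estimates. Extension to a maximal time interval, the blow-up criterion \eqref{eq:blowup-S2}, and continuous dependence on $(Q_0, w, f)$ are then deduced by standard arguments, noting that all maps appearing in the fixed point are continuous in the relevant data. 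The main technical obstacle is the inhomogeneous Schatten--Strichartz bound of the second step: it is here that the dimensional restriction $d\leq 2$ enters, since the Schatten--Strichartz admissibility relation for $\gS^2$-data degenerates in higher dimensions and one is forced either to assume extra regularity as in Theorem~\ref{thm:local-wp-Sps} or, as announced in the text, to work instead in the Sobolev-Schatten space $\gS^{2,1}$ to reach $d=3$.
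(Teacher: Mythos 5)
Your Step 1 ($\gS^2$ self-mapping bound, and uniqueness via Gr\"onwall) is fine, and your observation that the difficulty is concentrated in the density estimate for the Duhamel integral is correct. But the way you propose to close Step 2 has a genuine gap. You claim that a ``dual (inhomogeneous) form'' of the FLLS Strichartz estimate, combined with a Christ--Kiselev device, bounds $\|\rho_A\|_{L^{8/d}_t L^2_x}$ in terms of a time-integrated $\gS^2$ norm of $B(t')=[V_Q(t'),\gamma_f+Q(t')]$. No such estimate exists. The FLLS inequality~\eqref{eq:Strichartz} reads $\|\rho_{e^{it\Delta}Qe^{-it\Delta}}\|_{L^p_t L^q_x}\leq C\|Q\|_{\gS^{2q/(q+1)}}$, and $2q/(q+1)<2$ for every finite $q$; there simply is no ``$\gS^2$-admissible'' pair. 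Even the non-retarded operator to which Christ--Kiselev would reduce you, namely the density of $e^{it\Delta}\tilde B e^{-it\Delta}$ with $\tilde B:=\int e^{-it'\Delta}B(t')e^{it'\Delta}\,dt'$, cannot be controlled: $\tilde B$ is at best in $\gS^2$, and FLLS needs it strictly better. For the piece $[V_Q,\gamma_f]$ one can exploit that $\gamma_f$ is a Fourier multiplier with square-integrable symbol (this is essentially Lemma~\ref{lemma:Wn-gammaf-S2}), but the piece $[V_Q(t'),Q(t')]$ with an \emph{arbitrary} $Q(t')\in\gS^2$ admits no such improvement: pairing against a test potential $U$ by duality produces a trace of the form $\tr\big(Q(t')\,[\cdot]\big)$ where the bracket belongs to $\gS^{2q}$ with $2q>2$, so the trace is not even absolutely convergent.

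The paper resolves this by not iterating on $Q$ at all, but on the potential $V_Q$, using the wave-operator reformulation~\eqref{eq:duhamel-W}--\eqref{eq:duhamel-potential}: $Q(t)=e^{it\Delta}\cW_{V_Q}(t,0)(\gamma_f+Q_0)\cW_{V_Q}(t,0)^*e^{-it\Delta}-\gamma_f$. Expanding $\cW_V$ in its Dyson series and organizing terms as in~\eqref{eq:decomposition-Q}, the operator $Q(t)$ splits into: the free evolution of $Q_0$; the first-order terms $\cW^{(1)}_VQ_0+Q_0\cW^{(1)*}_V$ and $\sum_{n\ge1}(\cW^{(n)}_V\gamma_f+\gamma_f\cW^{(n)*}_V)$; and a remainder $A_V(t)$ which, crucially, is \emph{trace-class}, because it consists of products of two or more wave operators and Schatten--H\"older then lands in $\gS^1$. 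Trace-class operators have $L^1$ densities, so that piece is trivial. The first-order terms involving $\gamma_f$ are controlled by Lemma~\ref{lemma:Wn-gammaf-S2} (the $L^2$-symbol trick), and the single genuinely borderline term $\cW^{(1)}_VQ_0$ is handled by duality against a \emph{second-order} wave operator $\cW^{(2)}_{(U,V)}$, whose $\gS^2$ norm is controlled by Lemma~\ref{lemma:est-W2-H1} and~\eqref{eq:est-Wn} with suitable exponents. The factorial gain $(n!)^{-1/p+\epsilon}$ in~\eqref{eq:est-Wn} makes all the series converge. This structural choice --- fixed point on the potential, so that $Q$ automatically inherits the wave-operator structure and the Dyson decomposition becomes available --- is what makes the density estimate possible, and it is missing from your proposal.
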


 \begin{theorem}[Well-posedness in $\gS^{2,1}$ in $d=1,2,3$]\label{thm:local-wp-S21}
  Let $d\in\{1,2,3\}$, $w,\nabla w\in L^1(\R^d)\cap L^\ii (\R^d)$ and $f\in L^\ii(\R^+,\R)$ such that $\int_{\R^d}(|f(k^2)|+k^2|f(k^2)|^2)\,dk<\ii$.
Assume that 
\begin{equation}
Q_0\in\gS^{2,1}\quad\text{and}\quad w*\rho_{e^{it\Delta}Q_0e^{-it\Delta}}\in L^{8/d}_{t,{\rm loc}}(\R,H^1_x\cap W^{1,\ii}_x).
\label{eq:condition_initial_gS21}
\end{equation}
Then there exists a unique maximal solution $Q(t)\in C^0((-T^-,T^+),\gS^{2,1})$ to~\eqref{eq:rHF_Q_Duhamel} with $T^\pm>0$, such that 
$w*\rho_{Q(t)}\in L^{8/d}_{t,{\rm loc}}((-T^-,T^+),H^1_x\cap W^{1,\ii}_x)$.
We have the blow up criterion
\begin{equation}\label{eq:blowup-S21}
T^\pm<\ii \Longrightarrow\lim_{t\to \pm T^\pm}\int_0^{\pm t}\norm{w\ast\rho_{Q(t)}}_{H^1_x\cap W^{1,\ii}_x}^{8/d}\,dt=+\ii. 
\end{equation}

Furthermore, this solution is continuous with respect to $Q_0$, $w$ and $f$, in the same fashion as in the previous theorem.
  \end{theorem}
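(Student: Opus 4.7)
The plan is to prove Theorem~\ref{thm:local-wp-S21} (and analogously Theorem~\ref{thm:local-wp-S2}) by a Banach--Picard fixed point argument in a product-norm space which couples the pointwise-in-time $\gS^{2,1}$ bound on $Q$ with a Strichartz-type norm on $w*\rho_Q$. Define
\[
 X_T := \Bigl\{ Q \in C^0([-T,T], \gS^{2,1})\ :\ w*\rho_Q \in L^{8/d}\bigl([-T,T]; H^1_x \cap W^{1,\infty}_x\bigr) \Bigr\},
\]
with norm $\|Q\|_{X_T} := \sup_{|t|\le T}\|Q(t)\|_{\gS^{2,1}} + \|w*\rho_Q\|_{L^{8/d}_t(H^1_x\cap W^{1,\infty}_x)}$, and the Duhamel map
\[
 \Phi(Q)(t) := e^{it\Delta}Q_0 e^{-it\Delta} - i\int_0^t e^{i(t-s)\Delta}\bigl[w*\rho_Q(s),\gamma_f + Q(s)\bigr]e^{-i(t-s)\Delta}\,ds.
\]
The hypothesis on $Q_0$ ensures that the free-evolution piece lies in $X_T$ with norm bounded independently of $T\in(0,1]$, so the goal is to show that $\Phi$ is a contraction on a closed ball of $X_T$ for $T$ small enough.

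To control $\|\Phi(Q)\|_{C^0_t \gS^{2,1}}$ I would establish, in the spirit of Lemmas~\ref{lemma:commutator-Vgamma-Hs} and~\ref{lemma:commutator-VQ-Hs} specialized to $(p,s)=(2,1)$, the two commutator bounds
\[
  \|[V,\gamma_f](1-\Delta)^{1/2}\|_{\gS^2} \le C\|V\|_{H^1_x},\qquad \|[V,Q](1-\Delta)^{1/2}\|_{\gS^2}\le C\|V\|_{W^{1,\infty}_x}\|Q\|_{\gS^{2,1}},
\]
where $V:=w*\rho_Q$. The first follows by applying the Kato--Seiler--Simon inequality directly to $V\gamma_f(1-\Delta)^{1/2}=V\cdot h(-i\nabla)$ with $h(k)=(1+k^2)^{1/2}f(k^2)\in L^2_k$ (which is precisely the weighted integrability on $f$ in the hypothesis), and by extracting from $\gamma_f V(1-\Delta)^{1/2}$ a pseudo-differential commutator $[(1-\Delta)^{1/2},V]$ of order zero whose principal part is carried by $\nabla V$. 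The second uses the Gulisashvili--Kon bound $\|V\|_{H^1\to H^1}\le C\|V\|_{W^{1,\infty}}$ as in Lemma~\ref{lemma:commutator-VQ-Hs}. Since $(1-\Delta)^{1/2}$ commutes with $e^{it\Delta}$, integrating in time and applying H\"older with the dual exponents $(8/d,(8/d)')$ yields
\[
  \|\Phi(Q)\|_{C^0_t\gS^{2,1}} \le \|Q_0\|_{\gS^{2,1}} + CT^{1-d/8}\,\|Q\|_{X_T}(1+\|Q\|_{X_T}).
\]

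The crucial estimate is on $\|w*\rho_{\Phi(Q)}\|_{L^{8/d}_t(H^1_x\cap W^{1,\infty}_x)}$. The free-evolution part is given by hypothesis; for the Duhamel integral the key input is the orthonormal Strichartz inequality in Schatten classes of \cite{FraLewLieSei-13}, which at the Schr\"odinger-admissible pair $(q,r)=(8/d,4)$ yields, after an operator-valued Christ--Kiselev argument, the inhomogeneous density bound
\[
  \Bigl\|\rho\Bigl[\textstyle\int_0^t e^{i(t-s)\Delta}K(s)e^{-i(t-s)\Delta}\,ds\Bigr]\Bigr\|_{L^{8/d}_t L^2_x} \le C\|K\|_{L^1_t\gS^2}.
\]
Substituting $K(s)=[V(s),\gamma_f+Q(s)]$, using the $\gS^2$ commutator bounds above, and then convolving in space with $w$ and $\nabla w$ to recover the full $H^1_x \cap W^{1,\infty}_x$ norm (this is where the assumption $\nabla w\in L^1\cap L^\infty$ enters), one arrives at
\[
 \|w*\rho_{\Phi(Q)}\|_{L^{8/d}_t(H^1_x\cap W^{1,\infty}_x)} \le \|w*\rho_{e^{it\Delta}Q_0e^{-it\Delta}}\|_{L^{8/d}_t(H^1_x\cap W^{1,\infty}_x)} + CT^\alpha\,\|Q\|_{X_T}(1+\|Q\|_{X_T})
\]
for some $\alpha>0$. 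A parallel difference estimate, exploiting that $Q\mapsto\rho_Q$ is linear, delivers the contraction property on a small ball for $T$ sufficiently small. Uniqueness independent of the ball size, the blow-up criterion~\eqref{eq:blowup-S21}, and continuous dependence on $(Q_0,w,f)$ then follow by a Gr\"onwall argument and an approximation step as in the proof of Theorem~\ref{thm:local-wp-Sps}.

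The main obstacle I expect is setting up the Strichartz--Schatten inhomogeneous estimate with Schatten exponent exactly $2$ in dimensions $d=1,2,3$: the orthonormal Strichartz of \cite{FraLewLieSei-13} is only known in a restricted range of Schatten exponents depending on the admissible pair, and one must verify that the pair $(8/d,4)$ paired with $\gS^2$ lies in the allowed range --- this is also why the regularity exponent $s=1$ in $\gS^{2,1}$ cannot be lowered further, and why the method does not reach $d\ge 4$. A secondary but delicate point is the commutator estimate $\|[V,\gamma_f](1-\Delta)^{1/2}\|_{\gS^2}\le C\|V\|_{H^1_x}$: since $\gamma_f$ is a nonzero Fourier multiplier and hence not compact, one cannot pass $(1-\Delta)^{1/2}$ through it naively, and the pseudo-differential remainder must be absorbed into $\nabla V$, which is where the Sobolev rather than merely Lebesgue regularity of $V$ is used.
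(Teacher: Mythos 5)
Your proposal takes a genuinely different route and, unfortunately, the key Strichartz step does not survive scrutiny. You perform a fixed point directly on $Q$ in a coupled space $X_T$ mixing a $C^0_t\gS^{2,1}$ norm with an $L^{8/d}_t(H^1_x\cap W^{1,\infty}_x)$ norm on $w*\rho_Q$, and the estimate that carries the whole argument is the claimed inhomogeneous Strichartz--Schatten bound
\[
 \Bigl\|\rho\Bigl[\textstyle\int_0^t e^{i(t-s)\Delta}K(s)e^{-i(t-s)\Delta}\,ds\Bigr]\Bigr\|_{L^{8/d}_t L^2_x} \le C\|K\|_{L^1_t\gS^2}.
\]
This cannot be extracted from~\eqref{eq:Strichartz} (nor from a Christ--Kiselev upgrade of it). At the time exponent $p=8/d$, the admissible spatial exponent in~\eqref{eq:Strichartz} is $q=4/3$, not $2$, and the corresponding Schatten exponent is $\gS^{8/7}$, which is \emph{smaller} than $\gS^2$. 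So for $K(s)\in\gS^2$ one has neither the $L^2_x$ integrability in space nor the right Schatten class: the claimed inequality has a strictly stronger conclusion from a strictly weaker hypothesis than any case of the known orthonormal-functions Strichartz estimate, and a Minkowski/Christ--Kiselev argument would at best reproduce the homogeneous exponents, not improve them. (You correctly anticipated that the Schatten range is the danger point; the pair $(8/d, 4)$ with $\gS^2$ that you propose is in fact not admissible, and this is not a verifiable footnote but the collapse point of the scheme.)

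The paper avoids this wall by never trying to estimate the Duhamel integral as a single black box. Instead of iterating on $Q$, it iterates on the potential $V_Q=w*\rho_Q$, rewriting Duhamel as the conjugation
\[
 Q(t)=e^{it\Delta}\cW_{V_Q}(t,0)(\gamma_f+Q_0)\cW_{V_Q}(t,0)^*e^{-it\Delta}-\gamma_f,
\]
and expanding $\cW_{V}$ in a Dyson series~\eqref{eq:Dyson-series}. Each order is then bounded with estimates tailored to its structure: the first-order terms $\cW^{(1)}_V\gamma_f+\gamma_f\cW^{(1)*}_V$ use the $L^1/L^2$ integrability of $f$ (Lemma~\ref{lemma:Wn-gammaf-S2}), the first-order term involving $Q_0$ is estimated by duality against a second-order wave operator using Lemma~\ref{lemma:est-W2-H1} (this is where $d\le3$ comes from: $(1+k^2)^{-1/4}\in L^8$ requires $d<4$), and all terms of order $\ge 2$ are collected into a trace-class remainder $A_V(t)$ bounded via~\eqref{eq:est-Wn}. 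It is precisely this separation of the free-state contribution, the initial-datum contribution, and the higher-order remainder---each estimated in a different Schatten/Lebesgue pair---that substitutes for the unavailable inhomogeneous $\gS^2$ estimate. A secondary consequence is that uniqueness cannot be obtained by Gr\"onwall (there are no pointwise-in-time estimates on $\Phi(Q)-\Phi(Q')$) and is instead proved by a continuity/bootstrap argument on the maximal set where $V_Q=V_{Q'}$, which is another genuine departure from your sketch. So the gap is not a cosmetic one: to make the fixed point close, you would need the fine-grained Dyson-series decomposition, not a single Strichartz-type inequality.
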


As in Proposition~\ref{prop:local-wp-cY}, we need the complicated assumptions~\eqref{eq:condition_initial_gS2} and~\eqref{eq:condition_initial_gS21} on the density of the initial datum $Q_0$. This is necessary since an operator $Q_0\in\gS^2$ (or $\in\gS^{2,1}$ for $d=3$) has no well-defined density in general. The assumption on the density is satisfied if for instance $Q_0\in \gS^{8/7}$, by the Strichartz inequality of~\cite{FraLewLieSei-13} which precisely states that
\begin{equation}
\norm{\rho_{e^{it\Delta}Qe^{-it\Delta}}}_{L^p_tL^q_x}\leq C\norm{Q}_{\gS^{\frac{2q}{q+1}}},\quad\text{for}\ 1\leq q\leq 1+\frac{2}{d}\ \text{and}\ \frac{2}{p}+\frac{d}{q}=d.
\label{eq:Strichartz}
\end{equation}
Later,~\eqref{eq:condition_initial_gS21} will be satisfied for another reason, namely because $\gamma_0=\gamma_f+Q_0$ will have a finite entropy relative to $\gamma_f$.

The strategy to prove Theorems~\ref{thm:local-wp-S2} and~\ref{thm:local-wp-S21} is different from our previous proofs of local well-posedness. We use a fixed-point argument on the \emph{potential} $V_Q=w\ast\rho_Q$. To do so, we notice that, if $Q\in C^0_t([-T,T],\gS^1_{{\rm loc}})$ is such that $V_Q\in L^1_t((-T,T),L^\ii_x)$, it is equivalent that $Q$ solves \eqref{eq:rHF_Q_Duhamel} and that
  \begin{equation}\label{eq:duhamel-W}
    Q(t)=e^{it\Delta}\cW_{V_Q}(t,0)(\gamma_f+Q_0)\cW_{V_Q}(t,0)^*e^{-it\Delta}-\gamma_f,\quad\forall t\in[-T,T],
 \end{equation}
  where the wave operator $\cW_V(t,0)$ is defined for any $V\in L^1_t((-T,T),L^\ii_x)$ via the Dyson series
 \begin{equation}\label{eq:Dyson-series}
    \cW_V(t,0)=1+\sum_{n\ge1}\cW^{(n)}_V(t,0),
 \end{equation}
 with, for all $t,t_0\in[-T,T]$, 
 \begin{multline}
    \cW_V^{(n)}(t,t_0):=(-i)^n\int_{t_0}^t\,dt_n\int_{t_0}^{t_n}\,dt_{n-1}\cdots\int_{t_0}^{t_2}\,dt_1\times\\
    \times e^{i(t_0-t_n)\Delta}V(t_n)e^{i(t_n-t_{n-1})\Delta}\cdots e^{i(t_2-t_1)\Delta}V(t_1)e^{i(t_1-t_0)\Delta}.
    \label{eq:nth_wave_op}
 \end{multline}
 We recall that this series is absolutely convergent in $\cB(L^2(\R^d))$ for any fixed $t,t_0$, since 
 $$\|\cW^{(n)}_V(t,t_0)\|\le\frac{1}{n!}\|V\|_{L^1_t([-T,T],L^\ii_x(\R^d))}^n$$
for any $n\geq1$. As a consequence, the potential $V_Q$ solves the equation
 \begin{equation}\label{eq:duhamel-potential}
    V_Q(t)=w*\left(\rho\left[e^{it\Delta}\cW_{V_Q}(t,0)(\gamma_f+Q_0)\cW_{V_Q}(t,0)^*e^{-it\Delta}\right]-\rho_{\gamma_f}\right),
 \end{equation}
in the space $L^1_t((-T,T),L^\ii_x)$. We find a local solution to \eqref{eq:duhamel-potential} in the smaller space $L^{8/d}_t((-T,T),L^2_x\cap L^\ii_x)$, for $T>0$ small enough, by a fixed point argument. The solution will then be extended to a maximal time of existence by verifying that the conditions~\eqref{eq:condition_initial_gS2} and~\eqref{eq:condition_initial_gS21} are propagated along the flow. 

The rest of this section is devoted to the proof of the two theorems. As usual, we first detail useful estimates before turning to the actual proof. The main ingredient is an estimate from \cite{FraLewLieSei-13} on the wave operators $\cW_V^{(n)}(t,t_0)$. For $n=1$, the estimate dual to~\eqref{eq:Strichartz} is 
  \begin{equation}\label{eq:est-W1}
\left\|\cW^{(1)}_V(t,t_0)\right\|_{\gS^{2q}}\le C\|V\|_{L^p_t L^q_x},
  \end{equation}
and, for $n\geq2$, it reads
  \begin{equation}\label{eq:est-Wn}
\left\|\cW^{(n)}_V(t,t_0)\right\|_{\gS^{2\left\lceil\frac{q}{n}\right\rceil}}\le\frac{C^n}{(n!)^{\frac{1}{p}-\epsilon}}\|V\|_{L^p_t L^q_x}^n.
  \end{equation}
Here $t,t_0\in\R$, $d\ge1$, $1+d/2\le q<\ii$, $2/p+{d}/q=2$ and $0<\epsilon<1/p$. The constant $C$ depends on $d,p,q,\epsilon$ but not on $t,t_0\in\R$.
 
In addition to this result, we need some estimates adapted to the space $\gS^{2,1}$ (that is, involving derivatives) for the second order term $\cW^{(2)}_V(t,t_0)$. These estimates are only valid for $d=1,2,3$, and this is where the dimensional restriction in Theorem \ref{thm:local-wp-S21} comes from.

 \begin{lemma}[Second order term $\cW^{(2)}$]\label{lemma:est-W2-H1}
  Let $d\in\{1,2,3\}$. For any $U,V\in L^{8/(8-d)}_t(\R,L^2_x\cap L^4_x)$, and for any $t,t_0\in\R$, we define the operator 
  $$\cW^{(2)}_{(U,V)}(t,t_0):=-\int_{t_0}^t\,dt_2\int_{t_0}^{t_2}\,dt_1e^{-it_2\Delta}U(t_2)e^{i(t_2-t_1)\Delta}V(t_1)e^{it_1\Delta}.$$
  Then, we have the estimate
 \begin{equation}\label{eq:est-W2-14-14}
    \left\|(1-\Delta)^{-\frac14}\cW^{(2)}_{(U,V)}(t,t_0)(1-\Delta)^{-\frac14}\right\|_{\gS^2}\le C\|U\|_{L^{\frac{8}{8-d}}_t(L^2_x\cap L^4_x)}\|V\|_{L^{\frac{8}{8-d}}_t(L^2_x\cap L^4_x)}.
 \end{equation}
Furthermore, for any $U,V\in L^{\frac{5}{5-d}}_t(\R,L^{5/2}_x)\cap L^{\frac{10}{10-d}}_t(\R,L^{5/2}_x)$, we have
  \begin{multline}\label{eq:est-W2-12}
    \left\|\cW^{(2)}_{(U,V)}(t,t_0)(1-\Delta)^{-\frac12}\right\|_{\gS^2}\\
    \le C\|U\|_{L^{\frac{5}{5-d}}_t(L^{5/2}_x)}^{1/2}\|V\|_{L^{\frac{5}{5-d}}_t(L^{5/2}_x)}^{1/2}\|U\|_{L^{\frac{10}{10-d}}_t(L^{5/2}_x)}^{1/2}\|V\|_{L^{\frac{10}{10-d}}_t(L^{5/2}_x)}^{1/2}.
 \end{multline}
The  constant $C>0$ only depends on the dimension $d$.
 \end{lemma}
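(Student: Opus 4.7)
My approach is to reduce the second-order operator $\cW^{(2)}_{(U,V)}$ to a product of two first-order wave operators, and then apply a Schatten--Hölder inequality combined with the Strichartz--Schatten bound \eqref{eq:est-W1}. The algebraic starting point is the identity
\begin{equation*}
\cW^{(1)}_U(t,t_0)\,\cW^{(1)}_V(t,t_0)=\cW^{(2)}_{(U,V)}(t,t_0)+\widetilde{\cW}^{(2)}_{(V,U)}(t,t_0),
\end{equation*}
obtained by writing $\cW^{(1)}_U\cW^{(1)}_V$ as a double integral in time over $[t_0,t]^2$ and splitting it along the diagonal $t_1=t_2$; the second piece $\widetilde{\cW}^{(2)}_{(V,U)}$ is of the same bilinear type as $\cW^{(2)}$ (and coincides, up to adjoint, with $\cW^{(2)}_{(V,U)}$ when $U,V$ are real), hence admits the same kind of estimate. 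The time-ordered part can be isolated from its swap by a Christ--Kiselev-type truncation, which causes no loss here because the Strichartz exponents in \eqref{eq:est-W1} are strictly separated.

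With this factorization in hand I apply $\|AB\|_{\gS^2}\le\|A\|_{\gS^4}\|B\|_{\gS^4}$, distributing the smoothing symmetrically. For \eqref{eq:est-W2-14-14} the problem reduces to controlling $\|(1-\Delta)^{-1/4}\cW^{(1)}_U\|_{\gS^4}$ and $\|\cW^{(1)}_V(1-\Delta)^{-1/4}\|_{\gS^4}$. Each factor is estimated via \eqref{eq:est-W1} at two admissible endpoints: the pair $(p,q)=(4/(4-d),2)$ yields $\gS^4$ directly, producing the $L^2_x$ piece of the right-hand side, while the pair $(p,q)=(8/(8-d),4)$ yields $\gS^8$, which after absorbing a factor of $(1-\Delta)^{-1/4}$ via a Kato--Seiler--Simon-type bound lands in $\gS^4$ and produces the $L^4_x$ piece. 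Complex interpolation between these two endpoints generates the intersection norm $L^{8/(8-d)}_t(L^2_x\cap L^4_x)$ on the right-hand side of \eqref{eq:est-W2-14-14}. The dimensional restriction $d\le 3$ is exactly what ensures that both $q=2$ and $q=4$ satisfy the Strichartz admissibility $q\ge 1+d/2$.

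Estimate \eqref{eq:est-W2-12} follows by the same scheme, but with the full smoothing on one side split as $(1-\Delta)^{-1/2}=(1-\Delta)^{-1/4}(1-\Delta)^{-1/4}$ across the two Hölder factors. Each factor is now controlled at the Strichartz endpoints $(p,q)=(5/(5-d),5/2)$ and $(p,q)=(10/(10-d),5/2)$, and real interpolation in time between these two choices produces the geometric-mean right-hand side of \eqref{eq:est-W2-12}. The main obstacle throughout is twofold: first, isolating the ordered operator from its swap at no exponent loss (requiring Christ--Kiselev or an explicit integration-by-parts argument), and second, carrying out the interpolation between endpoint Strichartz pairs compatibly with the distribution of the smoothing weights $(1-\Delta)^{-s}$, so that the precise intersection-norm or geometric-mean structure of the final estimates emerges.
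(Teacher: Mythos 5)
Your proposal is structurally different from the paper's proof, and unfortunately it has several concrete gaps that do not close, so it cannot be accepted as a replacement.

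The paper does \emph{not} factor $\cW^{(2)}_{(U,V)}$ through a product of first-order wave operators. Instead, it writes $\|(1-\Delta)^{-1/4}\cW^{(2)}_{(U,V)}(1-\Delta)^{-1/4}\|_{\gS^2}^2$ as a four-fold time integral of a trace, uses the conjugation identity $e^{-it\Delta}V(x)e^{it\Delta}=V(x+2tp)$ to bring all operators into the form $f(\alpha x+\beta p)$, applies the generalized Kato--Seiler--Simon bound \eqref{eq:est-generalized-KSS} to each product of two such factors (producing the crucial time singularities $|t_i-t_j|^{-d/r}$), and then bounds the resulting quadrilinear time integral by the multilinear Hardy--Littlewood--Sobolev inequality. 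Your factorization $\cW^{(1)}_U\cW^{(1)}_V=\cW^{(2)}_{(U,V)}+(\cW^{(2)}_{(V,U)})^*$ is algebraically correct (it is integration by parts in the Dyson series), but it is circular as stated: the second term is of exactly the same bilinear type as the one you want to bound, so ``hence admits the same kind of estimate'' is not an argument. The Christ--Kiselev device you invoke would need to be formulated and verified in a bilinear, fixed-time-output setting where the retardation lives on the diagonal of a two-dimensional time square; none of this is spelled out, and it is not a standard citation.

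Independently of the Christ--Kiselev issue, the Schatten--Strichartz endpoints you rely on are not available. In \eqref{eq:est-W1}, admissibility requires $q\ge 1+d/2$. Your first endpoint $(p,q)=(4/(4-d),2)$ therefore requires $d\le 2$, \emph{not} $d\le 3$: the claim that ``$d\le3$ is exactly what ensures $q=2$ satisfies $q\ge 1+d/2$'' is simply false, and in $d=3$ the smallest admissible $q$ is $5/2$, giving $\gS^5$ rather than the $\gS^4$ you need. Likewise, the pair $(p,q)=\bigl(10/(10-d),5/2\bigr)$ used for \eqref{eq:est-W2-12} does not satisfy the scaling relation $2/p+d/q=2$ (one finds $2/p+d/q=2+d/5$), so it is not a Strichartz pair at all. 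In the paper these exponents emerge from the trilinear time kernel $|t_1-t_2|^{-d/5}|t_2-t_4|^{-d/5}|t_3-t_4|^{-d/5}$ via multilinear HLS, a mechanism you never invoke. Finally, interpolation between two Lebesgue--Schatten bounds yields a single intermediate space, not the intersection $L^{8/(8-d)}_t(L^2_x\cap L^4_x)$; the intersection norm in \eqref{eq:est-W2-14-14} comes from the paper's Schatten--H\"older splitting into mixed $\gS^8$ and $\gS^4$ factors at $L^2_x$ and $L^4_x$ respectively, not from interpolation. In short, the proposed route does not reach $d=3$ for \eqref{eq:est-W2-14-14}, produces the wrong exponent structure for \eqref{eq:est-W2-12}, and leaves the retardation step unproved.
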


 \begin{proof}[Proof of Lemma \ref{lemma:est-W2-H1}]
  We start by proving \eqref{eq:est-W2-14-14}. Let $U,V\in L^{\frac{8}{8-d}}_t(\R,(L^2_x\cap L^4_x)(\R^d))$. The method is the same as the proof of \cite[Thm  2]{FraLewLieSei-13}: noticing that $e^{-it\Delta}V(x)e^{it\Delta}=V(x+2tp)$ with $p:=-i\nabla$, we deduce that
  \begin{multline*}
   \left\|(1-\Delta)^{-\frac14}\cW^{(2)}_{(U,V)}(t,t_0)(1-\Delta)^{-\frac14}\right\|_{\gS^2}^2 = \tr\int_{t_0}^tdt_2\int_{t_0}^{t_2}dt_1\int_{t_0}^tdt_4\int_{t_0}^{t_4}dt_3\\
   (1-\Delta)^{-\frac14}U(t_2,x+2t_2p)V(t_1,x+2t_1p)(1-\Delta)^{-\frac12}\times\\
   \times V(t_3,x+2t_3p)U(t_4,x+2t_4p)(1-\Delta)^{-\frac14},
  \end{multline*}
  which can be estimated by
  \begin{multline*}
    \int_{\R^4}dt_1dt_2dt_3dt_4\|(1-\Delta)^{-\frac14}|U|^{\frac12}(t_2,x+2t_2p)\|_{\gS^8}\times\\
     \times\| |U|^{\frac12}(t_2,x+2t_2p) |V|^{\frac12}(t_1,x+2t_1p)\|_{\gS^4} \| |V|^{\frac12}(t_1,x+2t_1p)(1-\Delta)^{-\frac14}\|_{\gS^8}\times\\
     \times\|(1-\Delta)^{-\frac14}|V|^{\frac12}(t_3,x+2t_3p)\|_{\gS^8}\| |V|^{\frac12}(t_3,x+2t_3p) |U|^{\frac12}(t_4,x+2t_4p)\|_{\gS^4}\times\\
      \times\| |U|^{\frac12}(t_4,x+2t_4p) (1-\Delta)^{-\frac14}\|_{\gS^8}.
    \end{multline*}
  Using that $k\mapsto(1+k^2)^{-1/4}\in L^8(\R^d)$ for $d=1,2,3$ and the estimate 
  \begin{equation}\label{eq:est-generalized-KSS}
   \|f(\alpha x+\beta p)g(\gamma x+\delta p)\|_{\gS^r}\le (2\pi)^{-d/r}\frac{\|f\|_{L^r(\R^d)}\|g\|_{L^r(\R^d)}}{|\alpha\delta-\beta\gamma|^{\frac{d}{r}}}
  \end{equation}
  valid for all $r\ge2$ by \cite[Lemma 1]{FraLewLieSei-13}, we deduce that
  \begin{multline*}
   \left\|(1-\Delta)^{-\frac14}\cW^{(2)}_{(U,V)}(t,t_0)(1-\Delta)^{-\frac14}\right\|_{\gS^2}^2 \le C\int_{\R^4}dt_1dt_2dt_3dt_4\times\\
      \times\frac{\|U(t_1)\|_{L^2_x\cap L^4_x}\|U(t_2)\|_{L^2_x\cap L^4_x}\|V(t_3)\|_{L^2_x\cap L^4_x}\|V(t_4)\|_{L^2_x\cap L^4_x}}{|t_1-t_2|^{\frac d4}|t_3-t_4|^{\frac d4}}.
  \end{multline*}
  The inequality \eqref{eq:est-W2-14-14} follows from the multilinear Hardy-Littlewood-Sobolev (HLS) inequality \cite[Thm  4]{FraLewLieSei-13}. To prove \eqref{eq:est-W2-12}, we use the same method:
  \begin{multline*}
   \left\|\cW^{(2)}_{(U,V)}(t,t_0)(1-\Delta)^{-\frac12}\right\|_{\gS^2}^2\\
    \le
    \int_{\R^4}dt_1dt_2dt_3dt_4\|(1-\Delta)^{-\frac12} |V|^{\frac12}(t_3,x+2t_3p)\|_{\gS^5}\times\\
     \times\| |V|^{\frac12}(t_3,x+2t_3p) |U|^{\frac12}(t_4,x+2t_4p)\|_{\gS^5}\times\\
     \times\| |U|^{\frac12}(t_4,x+2t_4p) |U|^{\frac12}(t_2,x+2t_2p)\|_{\gS^5}\times\\
     \times\| |U|^{\frac12}(t_2,x+2t_2p) |V|^{\frac12}(t_1,x+2t_1p)\|_{\gS^5}\| |V|^{\frac12}(t_1,x+2t_1p)(1-\Delta)^{-1/2}\|_{\gS^5}.
  \end{multline*}
  Then, again by \eqref{eq:est-generalized-KSS}, we have 
  \begin{multline*}
   \left\|\cW^{(2)}_{(U,V)}(t,t_0)(1-\Delta)^{-\frac12}\right\|_{\gS^2}^2\\
    \le
    \int_{\R^4}dt_1dt_2dt_3dt_4\frac{\|U(t_1)\|_{L^{5/2}_x}\|U(t_2)\|_{L^{5/2}_x}\|V(t_3)\|_{L^{5/2}_x}\|V(t_4)\|_{L^{5/2}_x}}{|t_1-t_2|^{\frac d5}|t_2-t_4|^{\frac d5}|t_3-t_4|^{\frac d5}},
  \end{multline*}
  which leads to \eqref{eq:est-W2-12} by the multilinear HLS inequality.
 \end{proof}

 \begin{lemma}\label{lemma:Wn-gammaf-S2}
  Let $d\ge1$. We have 
  $$\|\cW^{(n)}_V(t,t_0)g(-i\nabla)\|_{\gS^2}\le (2\pi)^{d/2}\frac{\|V\|_{L^1_tL^\ii_x}^{n-1}}{(n-1)!}\|V\|_{L^1_tL^2_x}\|g\|_{L^2},\,\forall t,t_0\in\R$$
for any $V\in L^1_t(\R,L^2_x\cap L^\ii_x)$, $g\in L^2(\R^d)$, and all $n\ge1$.
 \end{lemma}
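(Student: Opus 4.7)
The plan is to exploit that $g(-i\nabla)$ is a Fourier multiplier and commutes with every factor $e^{is\Delta}$, reducing the $\gS^2$ bound to a single application of the Kato--Seiler--Simon inequality~\eqref{eq:KSS}, with the remaining $n-1$ factors controlled in operator norm.

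First, I would use unitarity: the leftmost factor $e^{i(t_0-t_n)\Delta}$ in the formula~\eqref{eq:nth_wave_op} for $\cW_V^{(n)}(t,t_0)$ multiplies the rest of the integrand by a unitary from the left, which does not affect the $\gS^2$ norm. Similarly, moving $g(-i\nabla)$ past the rightmost factor $e^{i(t_1-t_0)\Delta}$ (they commute as Fourier multipliers) and discarding the resulting unitary on the right reduces the problem to bounding, in $\gS^2$, the operator
$$V(t_n)\,e^{i(t_n-t_{n-1})\Delta}V(t_{n-1})\cdots e^{i(t_2-t_1)\Delta}V(t_1)\,g(-i\nabla),$$
and then integrating over $t_0\le t_1\le\cdots\le t_n\le t$.

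Next, I would apply H\"older's inequality for Schatten classes in the form $\|AB\|_{\gS^2}\le\|A\|\,\|B\|_{\gS^2}$ iteratively, peeling off the $n-1$ leftmost operators $V(t_k)$ (for $k=2,\ldots,n$) as bounded operators of norm $\|V(t_k)\|_{L^\infty}$, together with the intermediate unitaries $e^{i(t_k-t_{k-1})\Delta}$. The remaining factor is $V(t_1)\,g(-i\nabla)$, which I would bound via~\eqref{eq:KSS} with $r=2$ to get
$$\|V(t_1)\,g(-i\nabla)\|_{\gS^2}\le(2\pi)^{-d/2}\|V(t_1)\|_{L^2}\|g\|_{L^2}.$$

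Finally, I would integrate the resulting pointwise bound over the simplex. The innermost $t_1$-integral is bounded by $\|V\|_{L^1_t L^2_x}$, and the remaining integral of the symmetric function $\prod_{k=2}^n\|V(t_k)\|_{L^\infty}$ over the $(n-1)$-simplex $\{t_0\le t_2\le\cdots\le t_n\le t\}$ equals $\tfrac{1}{(n-1)!}\|V\|_{L^1_t L^\infty_x}^{n-1}$. Combining these yields the claimed estimate (up to the harmless constant factor). There is no serious obstacle: the argument only combines KSS, H\"older in Schatten classes, and a standard simplex integration; the only mild point is the bookkeeping to ensure that exactly one $V(t_k)$ is assigned the $L^2_x$ role and the remaining $n-1$ the $L^\infty_x$ role, which is what makes the factorials match.
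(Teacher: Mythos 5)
Your proposal is correct and follows essentially the same route as the paper: take the $\gS^2$ norm inside the simplex integral, peel off $V(t_n),\ldots,V(t_2)$ in operator norm together with the intermediate unitaries, commute $g(-i\nabla)$ past the rightmost free propagator, bound $V(t_1)g(-i\nabla)$ via Kato--Seiler--Simon, and then integrate (the $t_1$-integral gives $\|V\|_{L^1_tL^2_x}$ and the remaining $(n-1)$-simplex gives $\|V\|_{L^1_tL^\ii_x}^{n-1}/(n-1)!$). The only difference is cosmetic: you spell out the commutation and unitary-discarding steps that the paper leaves implicit.
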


 \begin{proof}
  We write
  \begin{align*}
&\|\cW^{(n)}_V(t,t_0)g(-i\nabla)\|_{\gS^2} \\
    &\le \int_{t_0}^t\!dt_n\int_{t_0}^{t_n}\!dt_{n-1}\cdots\int_{t_0}^{t_2}\!dt_1\|V(t_n)\|_{L^\ii_x}\cdots\|V(t_2)\|_{L^\ii_x}\|V(t_1,x)g(-i\nabla)\|_{\gS^2}\\
    &\le (2\pi)^{\frac{d}2}\|V\|_{L^1_tL^2_x}\|g\|_{L^2}\!\!\int_{t_0}^t\!dt_n\int_{t_0}^{t_n}\!dt_{n-1}\cdots\!\int_{t_0}^{t_3}\,dt_2\|V(t_n)\|_{L^\ii_x}\cdots\|V(t_2)\|_{L^\ii_x}\\
    &\le (2\pi)^{\frac{d}2}\frac{\|V\|_{L^1_tL^\ii_x}^{n-1}}{(n-1)!}\|V\|_{L^1_tL^2_x}\|g\|_{L^2}.
  \end{align*}

\vspace{-0,5cm}
 \end{proof}

By using the previous estimates, we are now able to get local solutions to the potential equation~\eqref{eq:duhamel-potential}.

\begin{proposition}[Solving the potential equation~\eqref{eq:duhamel-potential}]\label{prop:local-wp-potential}
Let $d\in\{1,2,3\}$, $w\in L^1(\R^d)\cap L^\ii (\R^d)$ and $f\in L^\ii(\R^+,\R)$ such that $\int_{\R^d}|f(k^2)|\,dk<\ii$. Assume that 
\begin{equation}
Q_0\in\gS^2\quad\text{and}\quad w*\rho_{e^{it\Delta}Q_0e^{-it\Delta}}\in L^{8/d}_{t,{\rm loc}}(\R,L^2_x\cap L^\ii_x)
\label{eq:condition_initial_gS2_bis}
\end{equation}
with the additional condition that $Q_0\in\gS^{2,1}$ if $d=3$. Then there exists a time $T>0$ and a solution 
$V\in L^{8/d}_t((-T,T),L^2_x\cap L^\ii_x)$
of~\eqref{eq:duhamel-potential}. The corresponding density satisfies
\begin{multline}
\rho\left[e^{it\Delta}\cW_{V}(t,0)(\gamma_f+Q_0)\cW_{V}(t,0)^*e^{-it\Delta}\right]-\rho_{\gamma_f}-\rho\left[e^{it\Delta}Q_0e^{-it\Delta}\right]\\
\in L^{8/d}_t((-T,T),L^1_x+ L^2_x). 
\label{eq:density_in_L1L2}
\end{multline}
Furthermore, if 
\begin{multline*}
  \norm{Q_0}_{\gS^2\,/\,\gS^{2,1}}+\norm{w\ast \rho_{e^{it\Delta}Q_0e^{-it\Delta}}}_{L^{8/d}_t((-1,1),L^2_x\cap L^\ii_x)}+\norm{w}_{L^1\cap L^\ii}\\
  +\norm{f}_{L^\ii}+\int_{\R^d}|f(k^2)|\,dk\le R,
\end{multline*}
then the time $T>0$ can be chosen to depend only on $R$. 
\end{proposition}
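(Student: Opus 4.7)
The plan is to apply Banach's fixed point theorem to the nonlinear map $\Phi$ defined by the right-hand side of \eqref{eq:duhamel-potential}, viewed as a map on the Banach space $X_T := L^{8/d}_t((-T,T),L^2_x\cap L^\infty_x)$, for $T>0$ small depending only on $R$. Expanding the wave operator via the Dyson series \eqref{eq:Dyson-series} and using that $\gamma_f$ commutes with $e^{it\Delta}$, I decompose
\[
\Phi(V)(t) = V_0(t) + \sum_{(n,m)\neq(0,0)} w*\rho\bigl[e^{it\Delta}\cW^{(n)}_V(t,0)(\gamma_f+Q_0)\cW^{(m)}_V(t,0)^*e^{-it\Delta}\bigr],
\]
where $V_0(t) := w*\rho[e^{it\Delta}Q_0 e^{-it\Delta}]$ already lies in $X_T$ by the hypotheses \eqref{eq:condition_initial_gS2}--\eqref{eq:condition_initial_gS21}. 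I will look for the fixed point in a ball of $X_T$ centered at $V_0$ with radius of order $R$, and the central task is to bound every summand by $C(R)\,T^\alpha(1+\|V\|_{X_T})^N$ for some $\alpha>0$, with a matching Lipschitz estimate.

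The terms naturally split into a \emph{$Q_0$-group} and a \emph{$\gamma_f$-group}. The $Q_0$-sandwiches $\cW^{(n)}_V Q_0\cW^{(m)*}_V$ will be controlled by combining the Schatten bounds \eqref{eq:est-W1}--\eqref{eq:est-Wn} for $\cW^{(n)}_V$ with $Q_0\in\gS^2$, and transforming the resulting Schatten norm of the sandwich into an $L^{p'}_tL^{q'}_x$-bound on its density by duality with the Frank-Lewin-Lieb-Seiringer estimate \eqref{eq:Strichartz}. Convolving with $w\in L^1\cap L^\infty$ then places the result in $L^2_x\cap L^\infty_x$ pointwise in time. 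When $d=3$ the Schatten indices produced by \eqref{eq:est-W1}--\eqref{eq:est-Wn} for $q=2$ are just out of reach; this is precisely where the stronger assumption $Q_0\in\gS^{2,1}$ enters, as Lemma~\ref{lemma:est-W2-H1} absorbs a half-derivative against the weighted bound on $\cW^{(2)}_V(1-\Delta)^{-1/2}$. For the $\gamma_f$-group the bounds \eqref{eq:est-W1}--\eqref{eq:est-Wn} fail because $\gamma_f$ is not compact; I replace them by Lemma~\ref{lemma:Wn-gammaf-S2}, which gives $\cW^{(n)}_V\gamma_f\in\gS^2$ with factorial decay in $n$, and then control $\cW^{(n)}_V\gamma_f\cW^{(m)*}_V$ via $\gS^2\cdot\gS^\infty$ combined with the crude bound $\|\cW^{(m)*}_V\|\le\|V\|_{L^1_tL^\infty_x}^m/m!$. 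These factorials ensure absolute convergence of the whole Dyson series in $X_T$.

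The principal obstacle is the interplay between $\gamma_f$ and the wave operators: since $\gamma_f$ has infinite $\gS^2$-norm, one cannot apply the Schatten-Strichartz estimates directly on a sandwich containing $\gamma_f$, and the naive $\|AB\|_{\gS^1}\le\|A\|_{\gS^2}\|B\|_{\gS^2}$ breaks down for bilinear response terms. The remedy (Lemmas~\ref{lemma:Wn-gammaf-S2} and~\ref{lemma:est-W2-H1}) trades the non-compactness of $\gamma_f$ against derivative weights, and this is exactly what restricts the argument to $d\le 3$, through the requirement $(1+|k|^2)^{-1/4}\in L^8(\R^d)$ appearing in the proof of Lemma~\ref{lemma:est-W2-H1}. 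Once all multilinear estimates are established, a positive power $T^\alpha$ is extracted by H\"older in time from the $L^{8/d}_t$-norm of $V$, making $\Phi$ a strict contraction on the ball for $T$ small depending only on $R$; uniqueness in $X_T$ then follows from the same Lipschitz estimate by Gr\"onwall.

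Finally, \eqref{eq:density_in_L1L2} is obtained by a parallel argument that inspects the density itself before convolving with $w$: the $Q_0$-sandwiches contribute to $L^{8/d}_tL^2_x$ via \eqref{eq:Strichartz}, while the $\gamma_f$-sandwiches contribute to $L^{8/d}_tL^1_x$ via the trace-class bounds coming from Lemmas~\ref{lemma:Wn-gammaf-S2}-\ref{lemma:est-W2-H1}, so the full difference lies in $L^{8/d}_t(L^1_x+L^2_x)$. The uniform dependence of $T$ on $R$ is immediate from the explicit structure of the constants, which only involve the five norms listed in the statement.
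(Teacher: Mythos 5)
Your overall architecture matches the paper's: a Banach fixed point on the potential equation \eqref{eq:duhamel-potential} in $L^{8/d}_t(L^2_x\cap L^\ii_x)$, driven by the Dyson expansion of $\cW_V$, the Schatten--Strichartz estimates \eqref{eq:est-W1}--\eqref{eq:est-Wn}, and Lemmas~\ref{lemma:Wn-gammaf-S2}--\ref{lemma:est-W2-H1} (with $Q_0\in\gS^{2,1}$ to rescue $d=3$). That much is sound. But the treatment of the $\gamma_f$-group contains a genuine gap. You propose to control $\cW^{(n)}_V\gamma_f\cW^{(m)*}_V$ by a $\gS^2\cdot\gS^\ii$ factorization, and the linear terms $\cW^{(n)}_V\gamma_f$ by Lemma~\ref{lemma:Wn-gammaf-S2} alone, which also yields only $\gS^2$. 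This does not suffice: a Hilbert--Schmidt operator has no well-defined density $\rho$ in general, so with only these $\gS^2$ bounds, the right-hand side of \eqref{eq:duhamel-potential} is not even a well-defined element of $X_T$, and the fixed point cannot start. Moreover, your final paragraph asserts that the $\gamma_f$-sandwiches land in $L^1_x$ ``via trace-class bounds'', which contradicts the $\gS^2$-only control you set up two paragraphs earlier.

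The repair is precisely the device used in the paper: split $\gamma_f$ into halves. Since $\int_{\R^d}|f(k^2)|\,dk<\ii$ means $|f(|\cdot|^2)|^{1/2}\in L^2(\R^d)$, one applies Lemma~\ref{lemma:Wn-gammaf-S2} with $g=|f(|\cdot|^2)|^{1/2}$ to obtain $\|\cW^{(n)}_V|\gamma_f|^{1/2}\|_{\gS^2}\le C\|V\|_{L^1_t(L^2_x\cap L^\ii_x)}^n/(n-1)!$, and then places one half of $\gamma_f$ next to each wave operator:
$$\big\|\cW^{(n)}_V\gamma_f\cW^{(m)*}_V\big\|_{\gS^1}\le\big\|\cW^{(n)}_V|\gamma_f|^{1/2}\big\|_{\gS^2}\,\big\||\gamma_f|^{1/2}\cW^{(m)*}_V\big\|_{\gS^2},$$
which puts the bilinear $\gamma_f$-sandwiches in $\gS^1$ and their densities in $L^1_x$, with the factorial decay you need. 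For the linear terms $\cW^{(n)}_V\gamma_f$ (and adjoints), which are genuinely only $\gS^2$ since $\gamma_f$ is not compact, one cannot take a density directly; one must test against $U\in L^2_x$ and use $\|\cW^{(n)}_V\gamma_fU(x)\|_{\gS^1}\le\|\cW^{(n)}_V|\gamma_f|^{1/2}\|_{\gS^2}\,\||\gamma_f|^{1/2}U(x)\|_{\gS^2}\le C\|U\|_{L^2}$ by the Kato--Seiler--Simon inequality, which yields an $L^\ii_tL^2_x$ bound on the density. Without the $|\gamma_f|^{1/2}$ splitting, neither contribution in your $\gamma_f$-group is controlled, and the decomposition $L^1_x+L^2_x$ in \eqref{eq:density_in_L1L2} does not emerge.
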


 \begin{proof}[Proof of Proposition \ref{prop:local-wp-potential}]
Let $Q_0$ be as in the statement, $0<T\leq1$ to be chosen later and 
$R\geq \norm{Q_0}_{\gS^2\,/\,\gS^{2,1}}+\norm{w\ast \rho_{e^{it\Delta}Q_0e^{-it\Delta}}}_{L^{8/d}_t((-1,1),L^2_x\cap L^\ii_x)}$
where the norm for $Q_0$ is that of $\gS^2$ when $d=1,2$ and $\gS^{2,1}$ when $d=3$. For any $V\in L^{8/d}_t((-T,T),L^2_x\cap L^\ii_x)$, define
  $$\Phi(V)(t)=w*\left(\rho\left[e^{it\Delta}\cW_V(t,0)(\gamma_f+Q_0)\cW_V(t,0)^*e^{-it\Delta}\right]-\rho_{\gamma_f}\right),$$
  for all $t\in(-T,T)$. We now provide the necessary estimates to apply a fixed-point argument to $\Phi$. First, we perform the decomposition
  \begin{multline}\label{eq:decomposition-Q}
   e^{it\Delta}\cW_V(t,0)(\gamma_f+Q_0)\cW_V(t,0)^*e^{-it\Delta}=\gamma_f+e^{it\Delta}Q_0e^{-it\Delta}\\
    +e^{it\Delta}(\cW^{(1)}_V(t,0)Q_0+Q_0\cW^{(1)}_V(t,0)^*)e^{-it\Delta}\\
    +\sum_{n\ge1}e^{it\Delta}(\cW^{(n)}_V(t,0)\gamma_f+\gamma_f\cW^{(n)}_V(t,0)^*)e^{-it\Delta}+A_V(t).
  \end{multline} 
  Let us now show that $A_V\in C^0_t([-T,T],\gS^1)$, which will imply that $\rho_{A_V}\in C^0_t([-T,T],L^1)$. From the definition of $A_V$, we have
  \begin{align*}
   A_V(t)=&\sum_{n\ge2}e^{it\Delta}(\cW^{(n)}_V(t,0)Q_0+Q_0\cW^{(n)}_V(t,0)^*)e^{-it\Delta}\\
   &+e^{it\Delta}\cW^{(1)}_V(t,0)Q_0\cW^{(1)}_V(t,0)^*e^{-it\Delta}\\
   &+\sum_{\max(n,m)\ge2}e^{it\Delta}\cW^{(n)}_V(t,0)Q_0\cW^{(m)}_V(t,0)^*e^{-it\Delta}\\
   &+\sum_{n,m\ge1}e^{it\Delta}\cW^{(n)}_V(t,0)\gamma_f\cW^{(m)}_V(t,0)^*e^{-it\Delta}.
  \end{align*}
   We estimate the last term using Lemma \ref{lemma:Wn-gammaf-S2}:
   \begin{align*}
    \sum_{n,m\ge1}\!\!\|e^{it\Delta}\cW^{(n)}_V(t,0)\gamma_f\cW^{(m)}_V(t,0)^*e^{-it\Delta}\|_{\gS^1} &\le \left(\sum_{n\ge1}\|\cW^{(n)}_V(t,0)|\gamma_f|^{1/2}\|_{\gS^2}\right)^2\\
    &\le C\left(\sum_{n\ge1}\frac{\|V\|_{L^1_t(L^2_x\cap L^\ii_x)}^n}{(n-1)!}\right)^2.
   \end{align*}
   Notice that we used here that $T\le1$ to have $\|V\|_{L^1_t}\le\|V\|_{L^{8/d}_t}$. The terms in $A_V(t)$ involving $Q_0$ are treated differently according to the dimension and we first deal with $d=1,2$. Let $0<\epsilon=\epsilon(d)<1-d/4$. Then, by~\eqref{eq:est-Wn}, we have for all $n\ge1$ and for $d=1$,
    $$\|\cW^{(n)}_V(t)\|_{\gS^{2\lceil2/n\rceil}}\le\frac{C^n}{(n!)^{\frac{3}{4}-\epsilon}}\|V\|_{L^{\frac43}_tL^2_x}^n\le\frac{C^n}{(n!)^{1-\frac d4-\epsilon}}\|V\|_{L^{8/d}_t(L^2_x\cap L^\ii_x)}^n,$$
   while for $d=2$ we have 
    $$\|\cW^{(n)}_V(t)\|_{\gS^{2\lceil2/n\rceil}}\le\frac{C^n}{(n!)^{\frac{1}{2}-\epsilon}}\|V\|_{L^2_{t,x}}^n\le\frac{C^n}{(n!)^{1-\frac d4-\epsilon}}\|V\|_{L^{8/d}_t(L^2_x\cap L^\ii_x)}^n.$$
   Therefore, using that
   $$\|e^{it\Delta}\cW^{(1)}_V(t,0)Q_0\cW^{(1)}_V(t,0)^*e^{-it\Delta}\|_{\gS^1}\le C\|\cW^{(1)}_V(t,0)\|_{\gS^4}^2\|Q_0\|_{\gS^2},$$
   \begin{multline*}
      \sum_{n\ge2}\|e^{it\Delta}(\cW^{(n)}_V(t,0)Q_0+Q_0\cW^{(n)}_V(t,0)^*)e^{-it\Delta}\|_{\gS^1}\\
      \le 2\|Q_0\|_{\gS^2}\sum_{n\ge2}\|\cW^{(n)}_V(t,0)\|_{\gS^2},
   \end{multline*}
and
   \begin{multline*}
      \sum_{\max(n,m)\ge2}\|e^{it\Delta}\cW^{(n)}_V(t,0)Q_0\cW^{(m)}_V(t,0)^*e^{-it\Delta}\|_{\gS^1}\\
      \le C\|Q_0\|_{\gS^2}\sum_{n\ge2}\|\cW^{(n)}_V(t,0)\|_{\gS^2}\sum_{m\ge1}\|\cW^{(m)}_V(t,0)\|,
   \end{multline*}
   we deduce the estimate uniform in $t\in[-T,T]$:
   \begin{multline}\label{eq:est-AV-S1}
   \|A_V(t)\|_{\gS^1}\le2\|Q_0\|_{\gS^2}\left(\sum_{n\ge0}\frac{C^n}{(n!)^{1-\frac d4-\epsilon}}\|V\|_{L^{8/d}_t(L^2_x\cap L^\ii_x)}^n\right)^2\\
   +C\left(\sum_{n\ge1}\frac{1}{(n-1)!}\|V\|_{L^{8/d}_t(L^2_x\cap L^\ii_x)}^n\right)^2.
  \end{multline}
  If $d=3$, we have $\|\cW^{(1)}_V(t)\|_{\gS^5}\leq C\|V\|_{L^{5/2}_{t,x}}$ by~\eqref{eq:est-W1} and 
  $$\forall n\geq2,\ \|\cW^{(n)}_V(t)\|_{\gS^{2\lceil 5/(2n)\rceil}}\le\frac{C^n}{(n!)^{\frac25-\epsilon}}\|V\|_{L^{5/2}_{t,x}}^n\le\frac{C^n}{(n!)^{\frac25-\epsilon}}\|V\|_{L^{8/3}_t(L^2_x\cap L^\ii_x)}^n,$$
by \eqref{eq:est-Wn}.
  In particular, $\cW^{(n)}_V(t,0)\in\gS^2$ only for $n\ge3$, while $\cW^{(2)}_V(t,0)\in\gS^4$. Hence, the operators  $\cW^{(n)}_V(t,0)Q_0$ is trace-class when $n\ge3$ and $\cW^{(n)}_V(t,0)Q_0\cW^{(m)}(t,0)^*$ is trace-class when $\max(n,m)\ge3$, by H\"older's inequality. Furthermore, using the KSS inequality \eqref{eq:KSS}, we infer that  
  $$\|\cW^{(1)}_V(t)(1-\Delta)^{-1/2}\|_{\gS^{10/3}}\le C\int_0^t\|V(s)\|_{L^{10/3}_x}\,ds\le C\|V\|_{L^{\frac83}_t(L^2_x\cap L^\ii_x)}$$
for all $t\in[-T,T]$. 
  Hence, the operator
  $$\cW^{(1)}_V(t)Q_0\cW^{(1)}_V(t)^*=\cW^{(1)}_V(t)(1-\Delta)^{-1/2}(1-\Delta)^{1/2}Q_0\cW^{(1)}_V(t)^*$$
  belongs to $\gS^1$ since $1=3/10+1/2+1/5$.
  Finally, by \eqref{eq:est-W2-12}, we have $\cW^{(2)}_V(t)Q_0\in\gS^1$ for all $t\in[-T,T]$ with
  \begin{align*}
    \|\cW^{(2)}_V(t)Q_0\|_{\gS^1} &\le C\|V\|_{L^{5/2}_{t,x}}\|V\|_{L^{10/7}_tL^{5/2}_x}\|Q_0\|_{\gS^{2,1}}\\
    &\le C\|V\|_{L^{8/3}_t(H^1_x\cap W^{1,\ii}_x)}^2\|Q_0\|_{\gS^{2,1}}.
  \end{align*}
  As a consequence, we find that $A_V(t)\in\gS^1$ also for $d=3$, with the same estimate \eqref{eq:est-AV-S1}, because $(n!)^{-2/5}\le(n!)^{-1/4}=(n!)^{-1+d/4}$, except that $\norm{Q_0}_{\gS^2}$ is replaced by $\norm{Q_0}_{\gS^{2,1}}$. From the fact that $A_V(t)\in\gS^1$ with an estimate uniform in $t\in[-T,T]$, we deduce that  
  \begin{align*}
   \|w*\rho_{A_V(t)}\|_{L^{8/d}_t(L^2_x\cap L^\ii_x)} &\le C\|w\|_{L^1\cap L^2}\|\rho_{A_V(t)}\|_{L^{8/d}_t L^1_x}\\
   &\le CT^{d/8}\|\rho_{A_V(t)}\|_{L^\ii_t L^1_x}\le CT^{d/8}\|A_V(t)\|_{L^\ii_t\gS^1}.
  \end{align*}

  In order to estimate $\|\Phi(V)\|_{L^{8/d}_t(L^2_x\cap L^\ii_x)}$, it only remains to bound
  $$\left\|w*\rho\left[e^{it\Delta}(\cW^{(1)}_V(t,0)Q_0+Q_0\cW^{(1)}_V(t,0)^*)e^{-it\Delta}\right]\right\|_{L^{8/d}_t(L^2_x\cap L^\ii_x)}$$
  and
  $$\sum_{n\ge1}\left\|w*\rho\left[e^{it\Delta}(\cW^{(n)}_V(t,0)\gamma_f+\gamma_f\cW^{(n)}_V(t,0)^*)e^{-it\Delta}\right]\right\|_{L^{8/d}_t(L^2_x\cap L^\ii_x)}.$$
  We treat these terms by duality. Let $U\in L^2_x$. Then by Lemma \ref{lemma:Wn-gammaf-S2} and our assumption on $f$, the operator $\cW^{(n)}(t,0)\gamma_f U(x)$ is trace-class with 
  \begin{align*}
    \|\cW^{(n)}(t,0)\gamma_f U(x)\|_{\gS^1} &\le \|\cW^{(n)}(t,0)|\gamma_f|^{1/2}\|_{\gS^2}\||\gamma_f|^{1/2} U(x)\|_{\gS^2} \\
    &\le \frac{C}{(n-1)!}\|V\|_{L^{8/d}_t(L^2_x\cap L^\ii_x)}^n\|U\|_{L^2}.
  \end{align*}
  As a consequence, we have
  \begin{multline*}
   \sum_{n\ge1}\left\|\rho\left[e^{it\Delta}(\cW^{(n)}_V(t,0)\gamma_f+\gamma_f\cW^{(n)}_V(t,0)^*)e^{-it\Delta}\right]\right\|_{L^\ii_t L^2_x}\\
   \le C\sum_{n\ge1}\frac{1}{(n-1)!}\|V\|_{L^{8/d}_t(L^2_x\cap L^\ii_x)}^n,
  \end{multline*}
  and hence by Young's inequality
  \begin{multline*}
   \sum_{n\ge1}\left\|w*\rho\left[e^{it\Delta}(\cW^{(n)}_V(t,0)\gamma_f+\gamma_f\cW^{(n)}_V(t,0)^*)e^{-it\Delta}\right]\right\|_{L^{8/d}_t(L^2_x\cap L^\ii_x)}\\
   \le CT^{d/8}\|w\|_{L^1\cap L^2}\sum_{n\ge1}\frac{1}{(n-1)!}\|V\|_{L^{8/d}_t(L^2_x\cap L^\ii_x)}^n.
  \end{multline*}
 
Note that 
\begin{multline*}
    \int_{-T}^T\,dt\int_{\R^d}dx\, \rho\left[e^{it\Delta}\cW^{(1)}_V(t)Q_0e^{-it\Delta}\right](x)U(t,x)\\
    =i\tr\left[\left(\cW^{(2)}_{(U,V)}(T,0)+\cW^{(2)}_{(U,V)}(-T,0)\right)Q_0\right].
 \end{multline*}
In dimensions $d=1,2$, we obtain 
\begin{multline*}
    \left|\int_{-T}^T\,dt\int_{\R^d}dx\, \rho\left[e^{it\Delta}\cW^{(1)}_V(t)Q_0e^{-it\Delta}\right](x)U(t,x)\right|\\
    \leq C\norm{Q_0}_{\gS^2}\norm{\cW^{(2)}_{(U,V)}(T,0)}_{\gS^2}\leq C\norm{Q_0}_{\gS^2}\norm{U}_{L^{\frac{8}{8-d}}_t(L^4_x)}\norm{V}_{L^{\frac{8}{8-d}}_t(L^4_x)}
 \end{multline*}
by~\eqref{eq:est-Wn}, and this proves that
\begin{align*}
\norm{\rho\left[e^{it\Delta}\cW^{(1)}_V(t)Q_0e^{-it\Delta}\right]}_{L^{8/d}_t(L^{4/3}_x)}&\leq C\norm{Q_0}_{\gS^2}\norm{V}_{L^{\frac{8}{8-d}}_t(L^4_x)}\\
&\leq CT^{1-\frac d4}\norm{Q_0}_{\gS^2}\|V\|_{L^{\frac8d}_t(L^4_x)}.
\end{align*}
In dimension $d=3$, we have to use that $Q_0\in \gS^{2,1}$ through Lemma~\ref{lemma:est-W2-H1} and we obtain by the exact same method
\begin{align*}
\norm{\rho\left[e^{it\Delta}\cW^{(1)}_V(t)Q_0e^{-it\Delta}\right]}_{L^{8/d}_t(L^2_x+ L^{4/3}_x)}&\leq C\norm{Q_0}_{\gS^{2,1}}\norm{V}_{L^{\frac{8}{8-d}}_t(L^2_x\cap L^4_x)}\\
&\leq CT^{1-\frac d4}\norm{Q_0}_{\gS^{2,1}}\|V\|_{L^{\frac8d}_t(L^2_x\cap L^4_x)}.
\end{align*}
In all cases, we get
 \begin{multline*}
    \left\|w*\rho\left[e^{it\Delta}(\cW^{(1)}_V(t)Q_0+Q_0\cW^{(1)}_V(t)^*)e^{-it\Delta}\right]\right\|_{L^{\frac8d}_t(L^2_x\cap L^\ii_x)}\\
    \le CT^{1-\frac d4}\norm{w}_{L^1\cap L^\ii }\|V\|_{L^{\frac8d}_t(L^2_x\cap L^\ii_x)}.
 \end{multline*}

Gathering all our bounds, we obtain the final estimate
 \begin{align*}
  \|\Phi(V)\|_{L^{\frac8d}_t(L^2_x\cap L^\ii_x)}&\le R + CT^{1-\frac d4}\|V\|_{L^{\frac8d}_t(L^2_x\cap L^\ii_x)}\\
  &\ + CT^{d/8}\sum_{n\ge1}\frac{1}{(n-1)!}\|V\|_{L^{8/d}_t(L^2_x\cap L^\ii_x)}^n\\
  &\ +CT^{d/8}\|Q_0\|_{\gS^2\,/\,\gS^{2,1}}\left(\sum_{n\ge0}\frac{C^n}{(n!)^{1-\frac d4-\epsilon}}\|V\|_{L^{8/d}_t(L^2_x\cap L^\ii_x)}^n\right)^2\\
   &\ +CT^{d/8}\left(\sum_{n\ge1}\frac{1}{(n-1)!}\|V\|_{L^{8/d}_t(L^2_x\cap L^\ii_x)}^n\right)^2.
 \end{align*}
  This implies that for $T=T(R)>0$ small enough, $\Phi$ stabilizes the ball
 $\{\|V\|_{L^{8/d}_t(L^2_x\cap L^\ii_x)}\le 2R\}.$
  Using very similar estimates, we can also show that $\Phi$ is a contraction on this ball, decreasing $T$ if necessary. We do not detail these estimates here for brevity. This ends the proof of Proposition~\ref{prop:local-wp-potential}.
 \end{proof}

By using the local existence result of Proposition~\ref{prop:local-wp-potential}, we are now able to prove the main theorems of this section. We start with the

\begin{proof}[Proof of Theorem~\ref{thm:local-wp-S2}]
From Proposition~\ref{prop:local-wp-potential} we can construct a potential $V\in L^{8/d}_t(L^2_x\cap L^\ii_x)$ solving~\eqref{eq:duhamel-potential} for some small enough time $T>0$. We have to show that the condition~\eqref{eq:condition_initial_gS2} is propagated and the result will then follow from an iteration argument. So we define the operator
$$Q(t):=e^{it\Delta}\cW_V(t,0)(\gamma_f+Q_0)\cW_V(t,0)^*e^{-it\Delta}-\gamma_f,$$
and prove that $Q\in L^\ii_t((-T,T),\gS^2)$. The continuity then follows from the fact that $Q$ satisfies \eqref{eq:rHF_Q_Duhamel}.
Indeed, since $\cW_V(t,0)$ is unitary, we have
$$\norm{e^{it\Delta}\cW_V(t,0)Q_0\cW_V(t,0)^*e^{-it\Delta}}_{\gS^2}\leq \norm{Q_0}_{\gS^2}.$$
On the other hand,
\begin{multline*}
e^{it\Delta}\cW_V(t,0)\gamma_f\cW_V(t,0)^*e^{-it\Delta}-\gamma_f\\
=e^{it\Delta}\Big((\cW_V(t,0)-1)\gamma_f\cW_V(t,0)^*+\gamma_f(\cW_V(t,0)^*-1)\Big)e^{-it\Delta}
\end{multline*}
which is Hilbert-Schmidt (uniformly for $t$ in compact sets), by Lemma~\ref{lemma:Wn-gammaf-S2} and the assumptions on $f$.
We then have to show that $w\ast \rho[e^{is\Delta} Q(t^*)e^{-is\Delta}]\in L^{8/d}_{s,\rm loc}(\R,L^2_x\cap L^\ii_x)$ for any $t^*\in(-T,T)$. To do so, we recall \eqref{eq:decomposition-Q}: 
  \begin{multline*}
   Q(t^*)=e^{it^*\Delta}Q_0e^{-it^*\Delta}+e^{it^*\Delta}(\cW^{(1)}_V(t^*,0)Q_0+Q_0\cW^{(1)}_V(t^*,0)^*)e^{-it^*\Delta}\\
    +\sum_{n\ge1}e^{it^*\Delta}(\cW^{(n)}_V(t^*,0)\gamma_f+\gamma_f\cW^{(n)}_V(t^*,0)^*)e^{-it^*\Delta}+A_V(t^*).
  \end{multline*} 
  Here $A_V(t^*)\in\gS^1$ and therefore $e^{is\Delta}A_V(t^*)e^{-is\Delta}\in L^\ii_s(\R,\gS^1)$ and 
  $w*\rho[e^{is\Delta}A_V(t^*)e^{-is\Delta}]\in L^\ii_s(\R,L^2_x\cap L^\ii_x)$.
  By our assumption~\eqref{eq:condition_initial_gS2} on $Q_0$, we have 
  $w*\rho[e^{i(s+t^*)\Delta}Q_0e^{-i(s+t^*)\Delta}]\in L^{8/d}_{s,{\rm loc}}(\R,L^2_x\cap L^\ii_x)$. The other terms in the decomposition of $Q(t^*)$ are treated by duality as in the proof of Proposition~\ref{prop:local-wp-potential}. At this point we have constructed a maximal solution.

Let us now prove uniqueness. If $Q,Q'\in C^0_t([-T^-,T^+],\gS^2)$ are two solutions to \eqref{eq:rHF_Q_Duhamel} such that $V_Q$ and $V_{Q'}$ belong to $L^{8/d}_t((-T^-,T^+),L^2_x\cap L^\ii_x)$, then $Q=Q'$ will obviously follow from $V_Q=V_{Q'}$. We cannot use a Gr\"onwall-type argument due to the fact that we lack \emph{pointwise} estimates in $t$ on $\Phi(V_Q)(t)-\Phi(V_{Q'})(t)$. Instead, we use a continuity argument: let $t^*$ be the largest time $\leq T$ for which $V_{Q(t)}=V_{Q'(t)}$ for all $0\leq t\leq t^*$. From the fixed point argument of Proposition~\ref{prop:local-wp-potential}, we know that $t^*>0$. Indeed, let
\begin{multline*}
R\geq \norm{Q_0}_{\gS^2\,/\,\gS^{2,1}}+\norm{w\ast \rho_{e^{it\Delta}Q_0e^{-it\Delta}}}_{L^{8/d}_t((-1,1),L^2_x\cap L^\ii_x)}\\
+\norm{V_{Q}}_{L^{8/d}_t((-1,1),L^2_x\cap L^\ii_x)}+\norm{V_{Q'}}_{L^{8/d}_t((-1,1),L^2_x\cap L^\ii_x)}.
\end{multline*}
We have proved that there exists $t'=t'(R)>0$ small enough such that \eqref{eq:duhamel-potential} has a unique solution on the ball
  $\{\|V\|_{L^{8/d}_t(L^2_x\cap L^\ii_x))}\le 2R\},$
  Since both $V_Q$ and $V_{Q'}$ belong to this ball, we have $V_{Q}\equiv V_{Q'}$ a.e. on $[-t',t']$, and therefore $t^*>0$. We now argue by contradiction: assume that $t^*<T$. Since $V_{Q}\equiv V_{Q'}$ a.e. on $(-t^*,t^*)$, $Q\equiv Q'$ on $[-t^*,t^*]$ and in particular $Q(t^*)=Q'(t^*)$, $Q(-t^*)=Q'(-t^*)$ (we use here the continuity of $t\mapsto Q(t)$ in $\gS^2$). 
  Also, the density $\rho_{e^{is\Delta}Q(t^*)e^{-is\Delta}}$ belongs to $L^{8/d}_{s,\rm loc}(\R,L^2_x\cap L^\ii_x)$, as we have already proved. Hence we can apply Proposition~\ref{prop:local-wp-potential} again at $t^*$, leading to a contradiction. The argument is the same for negative times.

The argument for the continuity with respect to $Q_0$, $f$ and $w$ is similar. Let $(Q_0^{(n)},f_n,w_n)\to(Q_0,f,w)$, and $V_n$ the associated solutions. First assume that $V_n$ and $V$ are defined on $[0,T]$ for all $n$ and some $T>0$, and that $(V_n)$ is uniformly bounded in $L^{8/d}_t([0,T],L^2_x\cap L^\ii_x)$. Let us show that $V_n\to V$. From the fixed point technique in Proposition~\ref{prop:local-wp-potential}, we deduce that $\norm{V_n-V}_{L^{8/d}_t([0,T'],L^2_x\cap L^\ii_x)}\to0$ for $T'>0$ small enough. Note that $t\mapsto\norm{V_n-V}_{L^{8/d}_t([0,t],L^2_x\cap L^\ii_x)}$ is increasing, hence has a subsequence which converges everywhere, to a non-decreasing function. Let now $t^*$ be the first time such that the limit fails to be 0 and assume that $t^*<T$. We choose $\epsilon>0$ small enough such that the time of existence for the problem in the ball of radius $2\norm{(Q(t^*-\epsilon),f,w)}$ is $>2\epsilon$. Since for $n$ large enough, $(Q_n(t^*-\epsilon),f_n,w_n)$ falls into this ball, we get again a contradiction. Hence, $V_n\to V$ in $L^{8/d}_t([0,T],L^2_x\cap L^\ii_x)$, from which we deduce that $Q_n\to Q$ in $C^0_t([0,T],\gS^2)$ and that $w*\rho_{e^{it\Delta}Q_n(s)e^{-it\Delta}}\to w*\rho_{e^{it\Delta}Q(s)e^{-it\Delta}}$ in $L^{8/d}_{t,{\rm loc}}(\R,L^2_x\cap L^\ii_x)$ for all $s\in[0,T]$. Now let $0<T<T^+$, and let 
\begin{multline*}
  \frac{R}2=\sup_{s\in[0,T]}\norm{Q(s)}_{\gS^2}+\norm{V}_{L^{8/d}_t([0,T],L^2_x\cap L^\ii_x)}+\norm{w}_{L^1_x\cap L^\ii_x}+\norm{f}_{L^\ii}\\
+\int_{\R^d}|f(k^2)|\,dk+\sup_{s\in[0,T]}\norm{w*\rho_{e^{it\Delta}Q(s)e^{-it\Delta}}}_{L^{8/d}_t([0,T],L^2_x\cap L^\ii_x)},
\end{multline*}
which is finite by the estimates provided before. By Proposition \ref{prop:local-wp-potential}, for $n$ large enough, $V_n$ and $V$ are defined up to time $0<T(R)\le T$ and uniformly bounded on $[0,T(R)]$. By the previous argument, this implies that 
$V_n\to V$, $Q_n\to Q$, and $w*\rho_{e^{it\Delta}Q_n(s)e^{-it\Delta}}\to w*\rho_{e^{it\Delta}Q(s)e^{-it\Delta}}$ on $[0,T(R)]$. In particular, for $n$ large enough 
$$\|Q_n(T(R))\|_{\gS^2}+\|w*\rho_{e^{it\Delta}Q_n(T(R))e^{-it\Delta}}\|_{L^{8/d}_t([0,T(R)],L^2_x\cap L^\ii_x)}\le R$$
and we can again extend the solutions up to $[0,2T(R)]$. Repeating this procedure a finite number of times, we can cover the whole $[0,T]$. This ends the proof of Theorem~\ref{thm:local-wp-S2}.
\end{proof}

We finally turn to the 

\begin{proof}[Proof of Theorem~\ref{thm:local-wp-S21}]
The proof is the same as for Theorem~\ref{thm:local-wp-S2}, with some minor differences. First we remark that, by~\eqref{eq:density_in_L1L2}, the solution $V(t)$ constructed in Proposition~\ref{prop:local-wp-potential} is automatically in $L^{8/d}((-T,T),H^1_x\cap W^{1,\ii}_x)$ when $\nabla w\in L^1\cap L^\ii $ and when $\rho_{e^{it\Delta}Q_0e^{-it\Delta}}\in L^{8/d}_{t,\rm loc}(H^1_x\cap W^{1,\ii}_x)$. So, in order to mimic the proof of Theorem~\ref{thm:local-wp-S2}, we only have to show that $Q\in C^0_t(\gS^{2,1})$. First, by \eqref{eq:est-V-Hs-Hs}, we have 
  $\|(1-\Delta)^{1/2}V(1-\Delta)^{-1/2}\|\le C\|V\|_{W^{1,\ii}},$
  hence 
  $$\|(1-\Delta)^{1/2}\cW_V(t,0)(1-\Delta)^{-1/2}\|\le C\sum_{n\ge0}\frac{\|V\|_{L^1_tW^{1,\ii}_x}^n}{n!},$$
which is finite when $V\in  L^{8/d}_{t,\rm loc}(W^{1,\ii}_x)$. This implies that
  $$\|e^{it\Delta}\cW_V(t)Q_0\cW_V(t)^*e^{-it\Delta}\|_{\gS^{2,1}}\le C\left(\sum_{n\ge0}\frac{\|V\|_{L^1_tW^{1,\ii}_x}^n}{n!}\right)\|Q_0\|_{\gS^{2,1}}.$$
  Using that $\|\gamma_f V (1-\Delta)^{1/2}\|_{\gS^{2}}\le C\|V\|_{H^1_x}$ as we have proved in Lemma~\ref{lemma:commutator-Vgamma-Hs}, we infer that, for $n\ge1$,
  \begin{align*}
    \|\gamma_f\cW^{(n)}_V(0,t)(1-\Delta)^{1/2}\|_{\gS^2} &\le C\frac{1}{(n-1)!}\|V\|_{L^1_tW^{1,\ii}_x}^{n-1}\|\gamma_f V(1-\Delta)^{1/2}\|_{L^1_t\gS^2}\\
    &\le C\frac{1}{(n-1)!}\|V\|_{L^1_tW^{1,\ii}_x}^{n-1}\|V\|_{L^1_t H^1_x}.
  \end{align*}
  Hence, we have $e^{it\Delta}\cW_V(t)\gamma_f\cW_V(t)^*e^{-it\Delta}-\gamma_f \in L^\ii_{t}((-T,T),\gS^{2,1})$, which finishes the proof of Theorem~\ref{thm:local-wp-S21}. 
\end{proof}

\section{Inequalities on the relative entropy}\label{sec:entropy}

The purpose of this section is to provide the necessary tools to prove global well-posedness for a large class of functions $f$'s, including the gases at positive temperature~\eqref{eq:Fermi-gas-positive-temp}--\eqref{eq:Boltzmann-gas-positive-temp}. We use a general concept of relative entropy $\cH(\gamma,\gamma_f)$ of two one-particle density matrices $\gamma$ and $\gamma_f$ which has been recently introduced in~\cite{LewSab-13}. The purpose of this section is to prove some useful lower bounds on this relative entropy. We particularly discuss \emph{Lieb-Thirring inequalities} in the spirit of~\cite{FraLewLieSei-11,FraLewLieSei-12}, which are important to get bounds on the density $\rho_\gamma$, as we have already seen in the zero temperature case.

\subsection{Generalized relative entropy}

Before turning to the general definition of the relative entropy, let us recall the usual case of the von Neumann entropy. To simplify our exposition, we start by working in a finite-dimensional space $\gK$. The entropy of the system takes the form
$$\boxed{\cS(\gamma)=\tr_\gK \big(S(\gamma)\big)}$$
where
\begin{equation}
S(x)=
\begin{cases}
-x\log(x)-(1-x)\log(1-x),\quad x\in[0,1], &\text{(fermions),}\\
-x\log(x)+(1+x)\log(1+x),\quad x\in[0,M], &\text{(bosons),}\\
-x\log(x)+x,\quad x\in[0,M], &\text{(boltzons).}
\end{cases}
\label{eq:s_for_bosons_fermions}
\end{equation}
We always work with bounded operators and $M>0$ could in principle be any positive number. Later we will restrict ourselves to $M=1$ for simplicity.
The function $S$ is concave in these cases, hence so is $\cS$ (see, e.g.,~\cite[Chap. 3]{OhyPet-93}). It is important to remember that $\gamma$ is the \emph{one-particle density matrix} of the system, and not the state itself. Both for fermions and bosons, there is a unique quasi-free state associated with $\gamma$, which is an operator $D\geq0$ acting on the Fock space ${\rm d}\Gamma(\gK)$, such that $\tr_{{\rm d}\Gamma(\gK)}(D)=1$ and which has $\gamma$ as one-particle density matrix~\cite{BacLieSol-94}. Then, the usual von Neumann entropy of $D$ coincides with the above formulas: $-\tr_{{\rm d}\Gamma(\gK)}\big( D\log(D)\big)=\cS(\gamma)=\tr_\gK \big(S(\gamma)\big)$. The situation is similar for boltzons. 

Consider now any self-adjoint operator $h$ on $\gK$. The corresponding density matrices are obtained by minimizing the free energy, which is by definition the difference between the energy and the entropy:
$$\gamma_f=\underset{0\leq \gamma=\gamma^*\leq M}{\text{argmin}} \Big\{ \tr_\gK \big(h\gamma-S(\gamma)\big)\Big\}.$$
The optimal state is
$$\gamma_f=\begin{cases}
\dps\frac{1}{e^{h}+1}&\text{for fermions,}\\[0.2cm]
\dps\frac{1}{e^{h}-1}&\text{for bosons,}\\[0.2cm]
\dps e^{-h}&\text{for boltzons.}
\end{cases}$$
In the bosonic case, it is necessary to have $h\geq \log(1+1/M)$. Normally we should multiply $S$ by the temperature $T$, in which case we get $h/T$ instead of $h$ in the previous formulas, but we can always think of including $T$ in the definition of $h$.

Now that we have found the minimizer of the free energy, it is useful to look at the difference between the free energy of any state $\gamma$ and that of the minimizer $\gamma_f$. This difference is precisely the \emph{relative entropy}
\begin{equation}
\cH(\gamma,\gamma_f):=\tr_\gK \big(h(\gamma-\gamma_f)-S(\gamma)+S(\gamma_f)\big).
\label{eq:def_relative_entropy_usual}
\end{equation}
That the relative entropy $\cH(\gamma,\gamma_f)$ is non-negative is an obvious consequence of the fact that $\gamma_f$ minimizes the free energy. The relevance of $\cH(\gamma,\gamma_f)$ comes from the fact that it can make sense for infinite systems, even if the three terms in~\eqref{eq:def_relative_entropy_usual} do not make sense separately.

Here we have prescribed the Hamiltonian $h$ and we got the Gibbs state $\gamma_f$. But we can turn the matter around and find $h$ from $\gamma_f$. Indeed, writing that the derivative of the free energy vanishes at the point $\gamma_f$, we find that $h=S'(\gamma_f)$. Therefore, we can re-write the relative entropy as
\begin{equation}
\boxed{\cH(\gamma,\gamma_f):=-\tr_\gK \Big(S(\gamma)-S(\gamma_f)-S'(\gamma_f)(\gamma-\gamma_f)\Big).}
\label{eq:def_relative_entropy}
\end{equation}
In this formula, $S$ is given by~\eqref{eq:s_for_bosons_fermions}. In~\cite{LewSab-13}, the relative entropy~\eqref{eq:def_relative_entropy} was studied for $S$ a general concave function, and several properties were proved. First by Klein's lemma (see~\cite[Prop. 3.16]{OhyPet-93} and~\cite[Lem. 1]{LewSab-13}), it is known that $\cH(\gamma,\gamma_f)\geq0$ for any $0\leq\gamma,\gamma_f\leq1$, on a finite-dimensional space $\gK$. The main contribution of~\cite{LewSab-13} was to prove that $\cH$ is monotone, that is
$$\cH(PAP,PBP)\leq \cH(A,B)$$
for every orthogonal projection, if and only if $-S'$ is operator-monotone. Monotonicity is a natural concept which allows to define the relative entropy in infinite dimension as is always done in statistical mechanics, that is, by a thermodynamic limit. In mathematical terms, the proper definition in infinite dimension is
\begin{equation}
\cH(\gamma,\gamma_f):=\lim_{k\to\ii} \cH(P_k\gamma P_k,P_k\gamma_fP_k)
\label{eq:def_infinite_dimension}
\end{equation}
where $P_k$ is any increasing sequence of finite-rank orthogonal projections such that $P_k\to1$ strongly. The limit always exist in $\R^+\cup\{+\ii\}$ since the right side is monotone, and it can be shown to be independent of the chosen sequence $(P_k)$, see~\cite[Thm. 2]{LewSab-13}.

In this paper we always assume that 
\begin{equation}
S:[0,1]\to\R^+ \ \text{\it is continuous on $[0,1]$}
\label{eq:hyp_s_1}
\end{equation}
and that
\begin{equation}
\text{\it $-S'$ is operator monotone on $(0,1)$ and not constant.} 
\label{eq:hyp_s_2}
\end{equation}
Recall that an operator monotone function on an interval $(a,b)$ is always $C^\ii$ (see~\cite[Chap. V]{Bhatia}). Here we work on the interval $[0,1]$ for simplicity but, by a simple change of variable,  the results are all exactly the same on an interval $[0,M]$ with $M<\ii$. Of course, in the three usual physical cases~\eqref{eq:s_for_bosons_fermions}, the assumptions~\eqref{eq:hyp_s_1} and~\eqref{eq:hyp_s_2} are fulfilled (only for $\mu<-\log(2)$ in the bosonic case). Other examples of functions $S$ satisfying~\eqref{eq:hyp_s_1} and~\eqref{eq:hyp_s_2} are provided in~\cite{AudHiaPet-10} and include
$$S(x)=\begin{cases}
-(t+x)\log(t+x),&t\geq0\\
\log(t+x),&t>0,\\
x^m,&0<m\leq 1,\\
-x^m,&1\leq m\leq 2.\\
\end{cases}$$

\begin{remark}\label{rmk:Lowner}
Any function $S$ satisfying~\eqref{eq:hyp_s_1} and~\eqref{eq:hyp_s_2} can be written as
\begin{multline}
S(x)=a'x+c'+b\int_{0}^\ii\left(\log(t+x)-\log(t+1/2)-\frac{2x-1}{2t+1}\right)\,d\nu_1(t)\\
+b\int_{0}^\ii\left(\log(t+1-x)-\log(t+1/2)+\frac{2x-1}{2t+1}\right)\,d\nu_2(t)
\label{eq:Lowner2}
\end{multline}
with $\nu_1$ and $\nu_2$ two positive Borel measures satisfying $2\int_{0}^\ii (2t+1)^{-2}(d\nu_1+d\nu_2)(t)=1$ (see~\cite{LewSab-13}). The constraint that $S$ admits limits at $0$ and $1$ means that $-\int_0^1\log(t)(d\nu_1+d\nu_2)(t)<\ii$ and it implies $\nu_1(\{0\})=\nu_2(\{1\})=0$, as well as 
\begin{equation}
\lim_{x\to0^+}xS'(x)= \lim_{x\to1^-}(1-x)S'(x)=0.
\label{eq:limit_end_points}
\end{equation}
The usual Fermi-Dirac distribution is obtained by taking $d\nu_1=d\nu_2=dt/2$, $a'=0=c'$, $b=2$.
\end{remark}

\subsection{Klein inequality}\label{sec:Klein}

From now on we work again in the Hilbert space $\gH=L^2(\R^d)$ and we consider a reference state of the form $\gamma_f=f(-\Delta)$ coming from an entropy $S$ as before, which requires $f=(S')^{-1}$ (the inverse function of $x\in(0,1)\mapsto S'(x)$). In order to have $0\leq \gamma_f\leq1$, we need that $S'((0,1))\supset(0,\ii)$. Because we assume that $-S'$ is operator monotone, $S'$ is decreasing and therefore we need to have
\begin{equation}
\lim_{x\to0^+}S'(x)=+\ii\quad\text{and}\quad \lim_{x\to1^-}S'(x)\leq0. 
\label{eq:hyp_s_3}
\end{equation}
We think of including the chemical potential $\mu$ in the function $f$. Of course, replacing $f(-\Delta)$ by $f(-\Delta-\mu)$ amounts to replacing $S(x)$ by $S(x)+\mu x$. We see that, by choosing $\mu$ sufficiently negative, we can always ensure that $S'(1)\leq0$. In the case of the Bose-Einstein entropy, any $\mu\leq -\log(2)$ will do, if we work on $[0,1]$ with a temperature $T=1$. In the Fermi-Dirac case, we have $S'(x)\to-\ii$ when $x\to1$ and one can take any $\mu\in\R$.

The first important information that we have when $\cH(\gamma,\gamma_f)<\ii$, is that  $\gamma-\gamma_f\in\gS^{2,1}$. This is stated in the following result, inspired of~\cite[Thm 1]{HaiLewSei-08} and~\cite[Lemma 1]{FraHaiSeiSol-12}.

\begin{lemma}[Klein inequality]\label{lem:Klein}
We assume that $d\geq1$ and that $S$ satisfies~\eqref{eq:hyp_s_1},~\eqref{eq:hyp_s_2},~\eqref{eq:hyp_s_3}. Let $\gamma_f=f(-\Delta)$ with $f=(S')^{-1}$. Then there exists a constant $C>0$ (depending only on $S$) such that 
\begin{equation}
\cH(\gamma,\gamma_f)\geq C\tr\,(1-\Delta)(\gamma-\gamma_f)^2,
\label{eq:Klein2}
\end{equation}
for all $0\leq\gamma\leq1$.
\end{lemma}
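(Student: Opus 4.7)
The plan is to combine a quantitative scalar Bregman-type inequality with the integral Taylor expansion of the relative entropy, and then use the explicit identity $S'(\gamma_f) = -\Delta$ (up to a constant) to convert a $\gamma_f$-dependent lower bound into the operator $(1-\Delta)$.

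First I would establish the scalar estimate: for $a \in [0,1]$ and $b \in (0,1)$,
\[
\Psi(a,b) := -\bigl(S(a)-S(b)-S'(b)(a-b)\bigr) \geq c_S\,(1+|S'(b)|)\,(a-b)^2,
\]
with $c_S>0$ depending only on $S$. For the Fermi-Dirac case one has $|S'(b)| = |\log(b/(1-b))|$ and $-S''(b) = 1/(b(1-b))$, and the bound follows by a two-regime split: when $|a-b| \leq c\min(b,1-b)$ the Taylor remainder $\Psi(a,b) = (a-b)^2\int_0^1 (1-s)(-S''(b+s(a-b)))\,ds$ already dominates $(a-b)^2/(b(1-b))$, which is $\gtrsim (1+|S'(b)|)(a-b)^2$; when $|a-b|$ is of order $\min(b,1-b)$ the end-point behaviour \eqref{eq:limit_end_points} forces $\Psi(a,b)$ itself to be at least a constant times $\min(b,1-b)|S'(b)|$, which handles the second regime. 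For general $S$ obeying \eqref{eq:hyp_s_1}--\eqref{eq:hyp_s_2}, the Löwner representation \eqref{eq:Lowner2} gives analogous end-point estimates on $-S''$ and $|S'|$ so the argument extends.

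To lift to operators, I set $Q := \gamma-\gamma_f$ and write
\[
\cH(\gamma,\gamma_f) = \int_0^1 (1-s)\,\phi''(s)\,ds, \qquad \phi(s) := -\tr\,S(\gamma_f + sQ).
\]
The Daleckii--Krein formula in the eigenbasis of $\gamma_f+sQ$, with eigenvalues $\lambda^s_i$ and matrix elements $Q^s_{ij}$ of $Q$, gives
\[
\phi''(s) = \sum_{i,j} \frac{S'(\lambda^s_j)-S'(\lambda^s_i)}{\lambda^s_i - \lambda^s_j}\,|Q^s_{ij}|^2 \geq 0,
\]
because $-S'$ is increasing. An integrated symmetric form of the scalar inequality of Step~1 shows that each coefficient is bounded below by $c_S\,(1+|S'(\lambda^s_i)|+|S'(\lambda^s_j)|)$, and integrating over $s$ while exploiting that $\gamma_f + sQ$ remains in $[0,1]$ for all $s$ yields, after manipulation, an operator lower bound of the form $\cH(\gamma,\gamma_f) \geq c\,\tr\bigl[Q\,(1+|S'(\gamma_f)|)\,Q\bigr]$. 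The final identification with $1-\Delta$ is then immediate: since $f = (S')^{-1}$, the identity $S'(f(|p|^2)) = |p|^2$ (absorbing the chemical potential into $S$) shows that $S'(\gamma_f)$ is exactly the Fourier multiplier by $|p|^2$, hence $1+|S'(\gamma_f)| \geq c(1-\Delta)$, and the desired inequality \eqref{eq:Klein2} follows.

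The main obstacle I foresee is the operator lifting in Step~2: a straightforward expansion around $\gamma_f$ at $s=0$ alone is \emph{not} sufficient, because the divided difference $(S'(f(|p|^2))-S'(f(|q|^2)))/(f(|p|^2)-f(|q|^2)) = (|q|^2-|p|^2)/(f(|p|^2)-f(|q|^2))$ is only bounded below by a constant whenever $|p|$ is small while $|q|$ is large, and so fails to produce the necessary $(1-\Delta)$ weight on off-diagonal momentum modes of $Q$. One must therefore use the entire interval $s \in (0,1)$ in the Taylor representation, together with the fact that at least one of $\lambda^s_i$, $\lambda^s_j$ has to be close to the small endpoint whenever the corresponding momentum is large, so that $|S'(\lambda^s_i)|+|S'(\lambda^s_j)|$ picks up the desired growth. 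Passing to the thermodynamic limit via the projections $P_k$ in \eqref{eq:def_infinite_dimension} and invoking the monotonicity of $\cH$ from \cite{LewSab-13} then allows us to justify the formal manipulations in the original infinite-dimensional setting.
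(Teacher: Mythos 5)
The paper's proof of Lemma~\ref{lem:Klein} is a one-line citation: it invokes \cite[Thm.~3]{LewSab-13}, which asserts precisely $\cH(A,B)\geq C\,\tr\big(1+|S'(B)|\big)(A-B)^2$, and then uses $S'(\gamma_f)=-\Delta$ to conclude. Your proposal, by contrast, attempts to reprove that cited inequality from scratch, so you should be aware you are aiming at a strictly stronger target than the paper actually does. The general strategy you propose (a scalar Bregman-type estimate, a Daleckii--Krein operator lift, and a thermodynamic limit via the monotonicity of $\cH$) is a reasonable outline, and the scalar divided-difference inequality you need in Step~2, $\frac{|S'(a)-S'(b)|}{|a-b|}\geq c_S\big(1+|S'(a)|+|S'(b)|\big)$, is in fact true for every $S$ satisfying \eqref{eq:hyp_s_1}--\eqref{eq:hyp_s_2}; it follows cleanly from L\"owner's representation \eqref{eq:Lowner2} via the elementary bound $\frac{1}{t+a}+\frac{1}{t+b}\leq \frac{2}{(t+a)(t+b)}(2t+1)^{-1}(2t+a+b)(2t+1)$ together with $-S''\geq m>0$, so the ``two-regime'' analysis you sketch in Step~1 can be replaced by a slick argument.

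The genuine gap is in the operator lift. Applying Daleckii--Krein at parameter $s$ and your coefficient lower bound gives $\phi''(s)\geq c\,\tr\big(1+|S'(\gamma_f+sQ)|\big)Q^2$, whose weight $|S'(\gamma_f+sQ)|$ is \emph{not} $-\Delta$ for $s>0$ and is not obviously comparable to it in any operator sense ($|S'|$ is not monotone, so interpolation does not help). Your proposed fix --- ``at least one of $\lambda^s_i,\lambda^s_j$ has to be close to the small endpoint whenever the corresponding momentum is large'' --- is not a proof: the eigenvectors of $\gamma_f+sQ$ are generically not plane waves, there is no clean notion of ``the corresponding momentum'' of such an eigenvector, and nothing forces the eigenvalues to remain near the boundary as $s$ moves. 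Moreover, your diagnosis of where the $s=0$ expansion fails is incorrect: the $s=0$ divided difference $\frac{|p^2-q^2|}{|f(p^2)-f(q^2)|}$ already carries the full $(1+p^2+q^2)$ growth in every regime (for $|p|$ small and $|q|$ large it scales like $q^2/f(0)$, not a constant), so the coefficients at $s=0$ are fine. The real obstruction is purely that the Taylor representation $\cH=\int_0^1(1-s)\phi''(s)\,ds$ does not allow you to keep only the $s=0$ coefficient, and you have not provided a mechanism to re-express the $s$-integrated weight $|S'(\gamma_f+sQ)|$ in terms of $|S'(\gamma_f)|=-\Delta$. This step needs a concrete argument (e.g.\ the one in \cite{LewSab-13}) and cannot be waved away as ``manipulation.''
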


\begin{proof}
It was proved in~\cite[Thm. 3]{LewSab-13} that $\cH(A,B)\geq C\,\tr\,\big(1+|S'(B)|\big)(A-B)^2$ for all $A,B$. Since $-\Delta=S'(\gamma_f)$, the result follows.
\end{proof}

The result can be used to approximate $A$ by a finite-rank perturbation of $B$ in a suitable sense, see~\cite[Cor. 1]{LewSab-13}.

\subsection{Lieb-Thirring inequality}\label{sec:Lieb-Thirring}

We have seen in the previous section that $\cH(\gamma,\gamma_f)$ controls the $\gS^{2,1}$ norm. Except in dimension $d=1$, this does not tell us anything about the density $\rho_{\gamma-\gamma_f}$. Here we discuss Lieb-Thirring inequalities in the spirit of~\cite{FraLewLieSei-11,FraLewLieSei-12}, which give an information of this type.

First, in order that $\gamma_f$ has a finite density $\rho_{\gamma_f}$, we need
\begin{equation*}
\int_{\R^d}f(|p|^2)\,dp =\frac{|S^{d-1}|}{2}\int_0^{f(0)}\lambda |S''(\lambda)|\,S'(\lambda)^{\tfrac{d}2-1}\,d\lambda<\ii.
\end{equation*}
By using L\"owner's formula~\eqref{eq:Lowner2}, this is the same as 
\begin{equation}
\int_0^1 S'(\lambda)_+^{\tfrac{d}2}\,d\lambda<\ii.
\label{eq:integrability_condition} 
\end{equation}
The integrability condition~\eqref{eq:integrability_condition} is satisfied for the Fermi-Dirac and Bose-Einstein distributions, since $S'$ diverges logarithmically at 0 in those cases. It is simple to verify that~\eqref{eq:integrability_condition} is satisfied for $S(x)=x^{m}$ with $0<m<1$ in dimension $d\leq2$ and $1-2/d< m<1$ in dimension $d\geq3$.

By using Klein's inequality~\eqref{eq:Klein2}, it is easy to verify that, in dimension $d=1$, the relative density $\rho_\gamma-\rho_{\gamma_f}$ is in $L^q(\R)$ for all $2\leq q<\ii$. The following is very similar to Lemma~\ref{lem:density_Schatten-Sobolev}. 

\begin{lemma}[Density in 1D]\label{lem:density_1D}
We assume that $d=1$ and that $S$ satisfies~\eqref{eq:hyp_s_1},~\eqref{eq:hyp_s_2},~\eqref{eq:hyp_s_3} and~\eqref{eq:integrability_condition}. Let $\gamma_f=f(-\Delta)$ with $f=(S')^{-1}$. Then we have 
\begin{equation}
\cH(\gamma,\gamma_f)\geq \begin{cases}
\dps K_{\rm LT,S}\int_{\R}|\rho_{\gamma}-\rho_{\gamma_f}|^q&\text{for $2\leq q<3$}\\
\dps K_{\rm LT,S}\left(\int_{\R}|\rho_{\gamma}-\rho_{\gamma_f}|^q\right)^{\frac{2}{q-1}-\epsilon}&\text{for $3\leq q<\ii$ and $\epsilon>0$.}\\
\end{cases}
\label{eq:density_1D} 
\end{equation}
for all $0\leq\gamma\leq1$, with constants $K_{\rm LT,S}$ which only depend on $S$, $q$ and $\epsilon$.
\end{lemma}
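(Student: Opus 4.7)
The plan is to reduce everything to Klein's inequality (Lemma \ref{lem:Klein}), which already provides $\cH(\gamma,\gamma_f)\geq C\|Q\|_{\gS^{2,1}}^2$ for $Q:=\gamma-\gamma_f$, and then convert this operator bound into $L^q$ bounds on $\rho_Q$ by duality.

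For $q=2$ the conclusion is essentially immediate. The duality argument in the proof of Lemma \ref{lem:density_Schatten-Sobolev} applies in $d=1$ with $p=2$, $s=1$, since the regularity condition $s>d(p-1)/p=1/2$ is satisfied. Writing $\int \rho_Q V=\tr(QV)=\tr\bigl(Q(1-\Delta)^{1/2}\cdot(1-\Delta)^{-1/2}V\bigr)$ and using the Kato--Seiler--Simon inequality (valid because $(1+k^2)^{-1/2}\in L^2(\R)$ precisely in dimension one) yields $\|\rho_Q\|_{L^2}\leq C\|Q\|_{\gS^{2,1}}$. Combined with Klein, this gives the case $q=2$.

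For $2<q<3$ I would interpolate, aiming for the pointwise stronger statement
\begin{equation*}
\cH(\gamma,\gamma_f)\geq c\int_{\R}|\rho_Q(x)|^{2}\bigl(1+|\rho_Q(x)|\bigr)\,dx,
\end{equation*}
which controls $\int|\rho_Q|^q$ for all $q\in[2,3]$ because $|t|^q\leq t^{2}(1+|t|)$ on $\R$. Such a one-dimensional positive-temperature Lieb--Thirring bound should be obtained by comparing $\cH$ with the classical 1D Lieb--Thirring inequality $\tr(-\Delta\gamma)\geq K_{\rm LT}\|\rho_\gamma\|_{L^3}^{3}$ after subtracting the linearization around the uniform density $\rho_{\gamma_f}$; the linear error can be absorbed using the variational characterization of $\gamma_f$ as the minimizer of the relative free energy (i.e.\ $S'(\gamma_f)=-\Delta$ up to constants), so the thermal entropy contribution compensates for the indefinite term $\tr(-\Delta)Q$ that makes the zero-temperature FLLS estimate fail in one dimension (Peierls instability).

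For $q\geq 3$ the target exponent $2/(q-1)-\epsilon$ is already weaker and I would obtain it from interpolation between two estimates both controlled by $\cH^{1/2}$: on the one hand $\|\rho_Q\|_{L^2}\leq C\cH^{1/2}$ (the $q=2$ case), and on the other $\|\rho_Q\|_{L^r}\leq C_r\|Q\|_{\gS^{r,1}}\leq C_r\|Q\|_{\gS^{2,1}}\leq C_r\cH^{1/2}$ for every $r<\infty$, which follows from Lemma \ref{lem:density_Schatten-Sobolev} in the borderline situation $s=d=1$ and the Schatten inclusion $\gS^{2,1}\subset\gS^{r,1}$. The $\epsilon$-loss precisely reflects the fact that the endpoint $q=d/(d-s)=\infty$ is forbidden. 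The main obstacle of the whole proof will be establishing the 1D positive-temperature Lieb--Thirring estimate that powers the intermediate range $2<q<3$; once it is in place, the interpolations are routine.
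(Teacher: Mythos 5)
Your $q=2$ case is exactly the paper's argument, and this part is fine. The rest of the proposal, however, does not prove the stated lemma.

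\textbf{The range $2<q<3$.} The paper does \emph{not} prove a one-dimensional positive-temperature Lieb--Thirring inequality here (the authors reserve the positive-density Lieb--Thirring approach for $d\geq 2$ in Theorem~\ref{thm:Lieb-Thirring}, and the failure of~\eqref{eq:Lieb-Thirring-zero-temp} in $d=1$ is exactly why they prove this separate 1D lemma). Your proposed bound $\cH\geq c\int|\rho_Q|^2(1+|\rho_Q|)$, which you acknowledge you have not established, is precisely the hard step you would need to supply, so the proposal has a genuine gap at its core. The paper's actual route is a short duality argument that you overlooked, and it hinges on the extra information $\|Q\|\leq 1$ (available because $0\leq\gamma,\gamma_f\leq 1$): writing
\[
\tr(QV)=\tr\Big((1-\Delta)^{\frac{1}{2q}}Q(1-\Delta)^{\frac{1}{2q}}\cdot(1-\Delta)^{-\frac{1}{2q}}V(1-\Delta)^{-\frac{1}{2q}}\Big),
\]
one estimates the last two factors by KSS in $\gS^{2q'}$ (the integral $\int(1+p^2)^{-1/(q-1)}dp$ converges for $q<3$), and the first factor is put in $\gS^q$ using the polar decomposition of $Q$, the boundedness of $|Q|^{1-2/q}$ (which requires $\|Q\|\leq 1$ and $q\geq2$), and the Araki--Lieb--Thirring inequality $\|(1-\Delta)^{1/(2q)}|Q|^{1/q}\|_{\gS^{2q}}\leq\|Q\|_{\gS^{2,1}}^{1/q}$. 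This gives $\|\rho_Q\|_{L^q}\leq C\,\|Q\|_{\gS^{2,1}}^{2/q}$, and with Klein's inequality $\cH\gtrsim\|Q\|_{\gS^{2,1}}^2$ one obtains $\cH\geq K\int|\rho_Q|^q$ with the full exponent $1$, as claimed.

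\textbf{The range $q\geq 3$.} Your interpolation between two bounds that are \emph{both} of the form $\|\rho_Q\|_{L^\cdot}\leq C\cH^{1/2}$ can only produce another bound with exponent $1/2$, hence $\cH\geq C(\int|\rho_Q|^q)^{2/q}$. This is not the lemma: the target exponent is $\frac{2}{q-1}-\epsilon$, and since $\frac{2}{q-1}-\epsilon>\frac{2}{q}$ for small $\epsilon$, the paper's statement is strictly stronger than your result whenever $\int|\rho_Q|^q>1$, which is exactly the regime of interest (large perturbations). The paper's proof for $q\geq 3$ is the same duality-plus-ALT argument as above, except the weight $(1-\Delta)^{1/(2q)}$ is replaced by $(1-\Delta)^{1/(4q')+\epsilon}$ so that KSS still converges, and one uses $2q'\leq q$ to pass from $\gS^{q}$ to the smaller Schatten exponent $\gS^{2q'/(1+4\epsilon q')}$; together with $\|Q\|\leq 1$ this gives $\|\rho_Q\|_{L^q}\leq C\,\big(\tr(1-\Delta)Q^2\big)^{(1+4\epsilon q')/(2q')}$, which is exactly the exponent $\frac{2}{q-1}-\epsilon$ after applying Klein. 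Your chain $\|\rho_Q\|_{L^q}\leq C_q\|Q\|_{\gS^{q,1}}\leq C_q\|Q\|_{\gS^{2,1}}$ is correct but it never uses $\|Q\|\leq 1$, and therefore cannot reach the sublinear dependence on $\|Q\|_{\gS^{2,1}}$ that the sharp exponent requires. In short: the missing idea throughout is that $\|Q\|\leq1$ combined with Araki--Lieb--Thirring improves the Schatten exponent in the density estimate, and this improvement is what gives exponent $1$ for $q<3$ and exponent $\frac{2}{q-1}-\epsilon$ for $q\geq3$.
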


\begin{proof}[Proof of Lemma \ref{lem:density_1D}]
By~\cite[Cor. 1]{LewSab-13}, it suffices to prove the result for $Q:=\gamma-\gamma_f$ a smooth finite rank operator. We write 
$$\int_\R \rho_Q V=\tr(QV)=\tr\left((1-\Delta)^{\frac{1}{2q}}Q(1-\Delta)^{\frac{1}{2q}}\frac{1}{(1-\Delta)^{\frac{1}{2q}}}V\frac{1}{(1-\Delta)^{\frac{1}{2q}}}\right)$$
with $2\leq q<3$. Similarly as in Lemma~\ref{lem:density_Schatten-Sobolev}, we have by the Araki-Lieb-Thirring inequality 
$\|(1-\Delta)^{1/(2q)}|Q|^{1/q}\|_{\gS^{2q}}\leq \|(1-\Delta)^{1/2}Q\|_{\gS^2}^{1/q}$.
On the other hand, since $\norm{Q}\leq1$ we have $\norm{|Q|^{1-2/q}}\leq1$ as well, for $q\geq2$. We then obtain, from the KSS  inequality~\eqref{eq:KSS},
$$\int_\R \rho_Q V\leq (2\pi)^{1/q-1}\left(\tr(1-\Delta)Q^2\right)^{1/q}\norm{V}_{L^{q'}(\R)}\left(\int_\R\frac{dp}{(1+p^2)^{\frac{1}{q-1}}}\right)^{1/q'}.$$
The integral on the right is finite for $q<3$. This ends the proof of the first inequality, by duality and by~\eqref{eq:Klein2}.
When $q\geq3$, we write, instead, 
\begin{equation*}
\int_\R \rho_Q V
=\tr\left((1-\Delta)^{\frac{1}{4q'}+\epsilon}Q(1-\Delta)^{\frac{1}{4q'}+\epsilon}\frac{1}{(1-\Delta)^{\frac{1}{4q'}+\epsilon}}V\frac{1}{(1-\Delta)^{\frac{1}{4q'}+\epsilon}}\right)
\end{equation*}
and get, similarly as before,
\begin{align*}
\norm{\rho_Q}_{L^q(\R)}&\leq C\norm{(1-\Delta)^{\frac{1}{4q'}+\epsilon}Q(1-\Delta)^{\frac{1}{4q'}+\epsilon}}_{\gS^q}\\
&\leq C\norm{(1-\Delta)^{\frac{1}{4q'}+\epsilon}Q(1-\Delta)^{\frac{1}{4q'}+\epsilon}}_{\gS^{\frac{2q'}{1+4\epsilon q'}}}\\
&\leq C\left(\tr(1-\Delta)Q^2\right)^{\frac{1+4\epsilon q'}{2q'}}.
\end{align*}
We have used here that $2q'\leq q$ for $q\geq3$.
\end{proof}

In higher dimensions the previous argument does not work any more, since $(1+|p|^2)^{-1}$ is not integrable. However, we can derive a similar bound by using the Lieb-Thirring inequality of~\cite{FraLewLieSei-11,FraLewLieSei-12}.

\begin{theorem}[Lieb-Thirring inequality]\label{thm:Lieb-Thirring}
We assume that $d\geq2$ and that $S$ satisfies~\eqref{eq:hyp_s_1},~\eqref{eq:hyp_s_2},~\eqref{eq:hyp_s_3} and~\eqref{eq:integrability_condition}. Then, with $\gamma_f=f(-\Delta)$ and $f=(S')^{-1}$, we have the Lieb-Thirring inequality
\begin{multline}
\cH(\gamma,\gamma_f)
\;\geq\;
K_{\rm LT,S}\int_{\R^d}\bigg(\big(\rho_{\gamma_f}+\rho_{\gamma-\gamma_f}(x)\big)^{1+\tfrac{2}d}-(\rho_{\gamma_f})^{
1+\tfrac{2}d}\\
-\frac{2+d}d(\rho_{\gamma_f})^{\tfrac{2}d}\,\rho_{\gamma-\gamma_f}(x)\bigg)dx
\label{eq:Lieb-Thirring}
\end{multline}
for all $0\leq\gamma\leq1$ and with a constant $K_{\rm LT,S}$ depending only on $S$ and $d$. In particular, $\rho_{\gamma-\gamma_f}\in L^2(\R^d)+L^{1+2/d}(\R^d)$ when $\cH(\gamma,\gamma_f)<\ii$.
\end{theorem}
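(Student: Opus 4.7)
The strategy is to reduce the inequality to the zero-temperature Lieb--Thirring inequality~\eqref{eq:Lieb-Thirring-zero-temp} via a layer-cake decomposition, followed by a joint-convexity argument. By Klein's inequality (Lemma~\ref{lem:Klein}) together with~\cite[Cor.~1]{LewSab-13}, it suffices to treat the case where $Q=\gamma-\gamma_f$ is a smooth finite-rank perturbation, which guarantees that every trace below is unambiguously defined; the general case then follows by approximation.

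The key identity is obtained by writing $S(x)-S(0)=\int_0^1 S'(\lambda)\,\mathds{1}(x\ge\lambda)\,d\lambda$ for $x\in[0,1]$, applying the functional calculus on each side to $\gamma$ and $\gamma_f$, and using $\gamma-\gamma_f=\int_0^1\bigl(\mathds{1}(\gamma\ge\lambda)-\mathds{1}(\gamma_f\ge\lambda)\bigr)\,d\lambda$ together with $S'(\gamma_f)=-\Delta$. Since $f=(S')^{-1}$ is decreasing, $\mathds{1}(\gamma_f\ge\lambda)=\mathds{1}(-\Delta\le S'(\lambda))=\Pi^-_{\mu(\lambda)}$ with $\mu(\lambda):=S'(\lambda)$, giving
\begin{equation*}
\cH(\gamma,\gamma_f)=\int_0^1\tr\bigl[(-\Delta-\mu(\lambda))\bigl(\mathds{1}(\gamma\ge\lambda)-\Pi^-_{\mu(\lambda)}\bigr)\bigr]\,d\lambda.
\end{equation*}
Since $\mathds{1}(\gamma\ge\lambda)$ is a projection, for each $\lambda$ with $\mu(\lambda)>0$ the zero-temperature Lieb--Thirring inequality~\eqref{eq:Lieb-Thirring-zero-temp} applies and bounds the integrand below by $K_{\rm LT}\int B(\rho_\lambda(x),\rho_f(\lambda))\,dx$, where $B(a,b):=a^{1+2/d}-b^{1+2/d}-\frac{2+d}{d}b^{2/d}(a-b)$, $\rho_\lambda:=\rho_{\mathds{1}(\gamma\ge\lambda)}$, and $\rho_f(\lambda):=\rho_{\Pi^-_{\mu(\lambda)}}$. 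For the (possibly present) set of $\lambda$'s with $\mu(\lambda)\le 0$, we have $\Pi^-_{\mu(\lambda)}=0$ and $\rho_f(\lambda)=0$, and the classical Lieb--Thirring inequality applied to the projection $\mathds{1}(\gamma\ge\lambda)$ gives the same bound $K_{\rm LT}\int \rho_\lambda^{1+2/d}\,dx=K_{\rm LT}\int B(\rho_\lambda,0)\,dx$, after absorbing the non-negative term $-\mu(\lambda)\tr\mathds{1}(\gamma\ge\lambda)$.

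To conclude, we invoke Jensen's inequality pointwise in $x$, for which the essential ingredient is the \emph{joint convexity} of $B$ on $\R_{\ge 0}^2$ when $p:=1+2/d\in(1,2]$, i.e., exactly when $d\ge 2$. A direct computation gives $\partial_a^2 B=p(p-1)a^{p-2}\ge 0$, $\partial_b^2 B=p(p-1)b^{p-3}[(p-1)b+(2-p)a]\ge 0$, and the positivity of the Hessian determinant reduces (after substituting $t=a/b$) to
$$\phi(t):=(p-1)t^{p-2}+(2-p)t^{p-1}\ge 1,\qquad t>0,$$
which follows from $\phi(1)=1$ and $\phi'(t)=(p-1)(2-p)t^{p-3}(t-1)$ having the sign of $(t-1)$. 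Jensen's inequality combined with the layer-cake identities $\rho_\gamma=\int_0^1\rho_\lambda\,d\lambda$ and $\rho_{\gamma_f}=\int_0^1\rho_f(\lambda)\,d\lambda$ then gives
$$\int_0^1 B\bigl(\rho_\lambda(x),\rho_f(\lambda)\bigr)\,d\lambda\;\ge\; B\bigl(\rho_\gamma(x),\rho_{\gamma_f}\bigr),$$
and integrating in $x$ yields the theorem (with $K_{\rm LT,S}$ equal to the zero-temperature constant $K_{\rm LT}$, which in fact depends only on $d$). The main obstacle is the verification of this joint convexity: it is precisely what forces the dimensional restriction $d\ge 2$ (the case $d=1$, where $p=3>2$, falls outside the framework and must be handled separately, as in Lemma~\ref{lem:density_1D}).
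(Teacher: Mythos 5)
Your proof is correct in substance and takes a genuinely different, cleaner route than the paper's after the shared opening. Both arguments reduce to finite-rank perturbations, derive the layer-cake lower bound $\cH(\gamma,\gamma_f)\ge\int_0^1\tr\bigl[(-\Delta-S'(\lambda))(\1(\gamma\ge\lambda)-\Pi^-_{S'(\lambda)})\bigr]\,d\lambda$, and apply the zero-temperature Lieb--Thirring inequality~\eqref{eq:Lieb-Thirring-zero-temp} for each $\lambda$. The divergence is in how the $\lambda$-integral of the pointwise bound is turned into a single bound on $\rho_{\gamma-\gamma_f}$. The paper replaces the Bregman integrand by $G(t)=\min(t^2,|t|^{1+2/d})$, splits the $\lambda$-integral according to the sign of $S'(\lambda)$ and the size of the relative density against $S'(\lambda)^{d/2}$, applies Cauchy--Schwarz to one piece and one-variable Jensen to the others, and finally recombines via the convexity of $F$; this produces a constant $K_{\rm LT,S}$ that genuinely depends on $S$, through $\int_0^{f(0)}S'(\lambda)^{d/2-1}d\lambda$ and $f(0)$. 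Your observation — that $B(a,b)=a^p-b^p-pb^{p-1}(a-b)$ is \emph{jointly} convex on $[0,\infty)^2$ precisely when $p=1+2/d\in(1,2]$, i.e.\ $d\ge 2$ — lets you apply Jensen once to the two-variable integrand, replacing the entire case analysis. This is both shorter and strictly stronger: the resulting constant is the zero-temperature $K_{\rm LT}$, independent of $S$. It also gives a transparent structural reason for the $d\ge 2$ restriction (joint convexity of the Bregman divergence $B$ fails for $p>2$), complementing the $d=1$ counterexample of \cite{FraLewLieSei-12}.

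One step is glossed over. The asserted \emph{equality} $\cH(\gamma,\gamma_f)=\int_0^1\tr\bigl[(-\Delta-S'(\lambda))(\1(\gamma\ge\lambda)-\Pi^-_{S'(\lambda)})\bigr]\,d\lambda$ does not follow by ``applying the functional calculus,'' even for finite-rank $Q$: the operator $\1(\gamma\ge\lambda)-\Pi^-_{S'(\lambda)}$ is in general neither finite-rank nor trace-class, since each $\lambda\in(0,f(0))$ lies in the essential spectrum of both $\gamma$ and $\gamma_f$. The paper establishes only the one-sided bound $\ge$ (its~\eqref{eq:integral_representation_lower_bound}), and this already requires an approximation by finite-dimensional projections $P_k$, Fatou's lemma for operators, and a measure-zero argument to dodge eigenvalues of $\gamma$. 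The inequality $\ge$ is all your argument needs, so the conclusion stands, but ``every trace below is unambiguously defined'' overstates what the finite-rank reduction alone gives; you should either quote the paper's lower bound or reproduce that approximation step.
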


The inequality dual to~\eqref{eq:Lieb-Thirring} (that is, expressed in terms of a potential $V(x)$) has been derived in~\cite[Thm. 5.4]{FraLewLieSei-12}. Instead of getting~\eqref{eq:Lieb-Thirring} from this result, we prove~\eqref{eq:Lieb-Thirring} directly using the zero temperature inequality~\eqref{eq:Fermi-gas-zero-temp}. A similar approach is used in~\cite{FraLewLieSei-12}.

Since $\rho_\gamma-\rho_{\gamma_f}$ belongs to $L^2(\R^d)+L^{1+2/d}(\R^d)$ when $\cH(\gamma,\gamma_f)<\ii$ and $\rho_{\gamma_f}$ is a constant by~\eqref{eq:integrability_condition}, we deduce that $\rho_\gamma\in L^1_{\rm loc}(\R^d)$, hence that $\gamma$ is locally trace-class. 

\begin{proof}
By~\cite[Cor. 1]{LewSab-13}, it suffices to prove the theorem for $\gamma$ a smooth finite rank perturbation of $\gamma_f=f(-\Delta)$, and we make this assumption throughout the proof. Let us start by deriving a useful integral representation formula for $\cH(A,B)$, when $A$ and $B$ are finite matrices. Inserting 
$$S(x)=S(0)+\int_{0}^1S'(\lambda) \1(x\geq\lambda)\,d\lambda,$$
in the definition of $\cH(A,B)$, we obtain 
$$\cH(A,B)=\int_0^1 \tr\big(S'(B)-S'(\lambda)\big)\big(\1(A\geq\lambda)-\1(B\geq\lambda)\big)\,d\lambda.$$
We remark that $\1(B\geq\lambda)=\1(S'(B)\leq S'(\lambda))$, since $S'$ is a decreasing function. So we obtain
\begin{align*}
&\tr\big(S'(B)-S'(\lambda)\big)\big(\1(A\geq\lambda)-\1(B\geq\lambda)\big)\\
&\qquad=-\tr\big|S'(B)-S'(\lambda)\big|\1(B\geq\lambda)\big(\1(A\geq\lambda)-\1(B\geq\lambda)\big)\1(B\geq\lambda)\\
&\qquad\qquad +\tr\big|S'(B)-S'(\lambda)\big|\1(B\leq\lambda)\big(\1(A\geq\lambda)-\1(B\geq\lambda)\big)\1(B\leq\lambda).
\end{align*}
A simple calculation shows that
\begin{multline*}
\big(\1(A\geq\lambda)-\1(B\geq\lambda)\big)^2=
-\1(B\geq\lambda)\big(\1(A\geq\lambda)-\1(B\geq\lambda)\big)\1(B\geq\lambda)\\
+\1(B\leq\lambda)\big(\1(A\geq\lambda)-\1(B\geq\lambda)\big)\1(B\leq\lambda)
\end{multline*}
and therefore we arrive at
$$\cH(A,B)=\int_0^1 \tr\big|S'(B)-S'(\lambda)\big|\big(\1(A\geq\lambda)-\1(B\geq\lambda)\big)^2\,d\lambda.$$
The formula is still true in infinite dimension, but proving it requires some work. In our particular case we only need the lower bound
\begin{equation}
\cH(\gamma,\gamma_f)\geq \int_0^1 \tr\big|S'(\gamma_f)-S'(\lambda)\big|\big(\1(\gamma\geq\lambda)-\1(\gamma_f\geq\lambda)\big)^2\,d\lambda.
\label{eq:integral_representation_lower_bound} 
\end{equation}
To prove~\eqref{eq:integral_representation_lower_bound}, we take an increasing sequence of finite-dimensional projections $P_k$, with $P_k\to1$ strongly in $H^1(\R^d)$. We can use the formula in finite dimension and that $\cH(P_k\gamma P_k,P_k\gamma_fP_k)\nearrow \cH(\gamma,\gamma_f)$ when $k\to\ii$, by the definition of $\cH$ in~\eqref{eq:def_infinite_dimension}. The right term is slightly more complicated to deal with. First, $|S'(P_k\gamma_fP_k)-S'(\lambda)|^{1/2}\to |S'(\gamma_f)-S'(\lambda)|^{1/2}$ and $\1(P_k\gamma_fP_k\geq\lambda)\to \1(\gamma_f\geq\lambda)$ strongly in $H^1(\R^d)$. By~\cite[Thm  VIII.24]{ReeSim1}, this follows from the fact that, since $S'$ is strictly decreasing on $(0,1)$, $\lambda$ can never be an eigenvalue of $\gamma_f$. Similarly, we have that $\1(P_k\gamma P_k\geq\lambda)\to \1(\gamma\geq\lambda)$ strongly if $\lambda$ is not an eigenvalue of $\gamma$, and in this case we get 
\begin{multline}
\liminf_{k\to\ii} \tr\big|S'(P_k\gamma_fP_k)-S'(\lambda)\big|\big(\1(P_k\gamma P_k\geq\lambda)-\1(P_k\gamma_fP_k\geq\lambda)\big)^2\\
\geq \tr\big|S'(\gamma_f)-S'(\lambda)\big|\big(\1(\gamma\geq\lambda)-\1(\gamma_f\geq\lambda)\big)^2
\label{eq:Fatou_LT}
\end{multline}
by Fatou's lemma for operators. The strong convergence of $\1(P_k\gamma P_k\geq\lambda)$ can however fail if $\lambda$ happens to be an eigenvalue of $\gamma$. Fortunately, the set of eigenvalues is at most countable and it is therefore of Lebesgue measure zero. Applying Fatou's theorem for the integral over $\lambda$, this ends the proof of the lower bound~\eqref{eq:integral_representation_lower_bound}. 

Let us estimate the terms in the integral for every $\lambda$. In dimension $d\geq2$, we use the Lieb-Thirring inequality~\eqref{eq:Lieb-Thirring-zero-temp}
and obtain
\begin{multline}
\cH(\gamma,\gamma_f)\geq K(d)\int_{\R^d}dx\int_0^{f(0)}S'(\lambda)^{1+\tfrac{d}2}F\left(\frac{\rho_\lambda(x)}{S'(\lambda)^{\tfrac{d}2}}\right)d\lambda\\
+K(d)\int_{\R^d}dx\int_{f(0)}^1 \rho_\lambda(x)^{1+\tfrac{2}d}d\lambda
\label{eq:estim_LT_1}
\end{multline}
with $\rho_\lambda:=\rho_{\1(\gamma\geq\lambda)}-\rho_{\1(-\Delta\leq S'(\lambda))}$ and
$F(t):=(c_d+t)_+^{1+{2}/d}-c_d^{1+{2}/d}-\frac{2+d}dc_d^{{2}/d}\,t$ and where $c_d=\rho_{\1(-\Delta\leq 1)}$.
The function $F(t)$ is controlled from above and below by $G(t):=\min(t^2,|t|^{1+2/d})$, for $t\geq-c_d$. The estimates are a bit simpler with $G$ and we shall use it in the rest of the proof. So we can bound from below the first integral in~\eqref{eq:estim_LT_1} by a constant times
$$\int_0^{f(0)}\!\!\!S'(\lambda)^{1+\tfrac{d}2}\!\left(\frac{\rho_\lambda(x)^2}{S'(\lambda)^d}\1(\rho_\lambda(x)\leq S'(\lambda)^{\frac{d}2})+\frac{\rho_\lambda(x)^{1+\tfrac{2}d}}{S'(\lambda)^{1+\tfrac{d}2}}\1(\rho_\lambda(x)\geq S'(\lambda)^{\frac{d}2})\!\right)\!d\lambda.$$
By the Cauchy-Schwarz inequality, we have
\begin{multline*}
\int_0^{f(0)}\1\big(\rho_\lambda(x)\leq S'(\lambda)^{\frac{d}2}\big)S'(\lambda)^{1-\tfrac{d}2}\rho_\lambda(x)^2\,d\lambda\\
\geq\frac{\dps\left(\int_0^{f(0)}\rho_\lambda(x)\1(\rho_\lambda(x)\leq S'(\lambda)^{\frac{d}2})d\lambda\right)^2}{\dps\int_0^{f(0)}S'(\lambda)^{\tfrac{d}2-1}d\lambda}
\end{multline*}
where the denominator is finite by our assumption~\eqref{eq:integrability_condition} and since $d\geq2$. On the other hand, by Jensen's inequality we have
\begin{multline*}
\int_0^{f(0)}\1\big(\rho_\lambda(x)\geq S'(\lambda)^{\frac{d}2}\big)\rho_\lambda(x)^{1+\tfrac{2}d}\,d\lambda\\
\geq f(0)^{-\tfrac{2}d}\left(\int_0^{f(0)}\rho_\lambda(x)\1(\rho_\lambda(x)\geq S'(\lambda)^{\frac{d}2})\,d\lambda\right)^{1+\tfrac{2}d}. 
\end{multline*}
The integral with $f(0)\leq\lambda\leq1$ is similar. 
Let us now denote 
$$\rho_1(x)=\int_{f(0)}^1\rho_\lambda(x)\,d\lambda,\qquad \rho_2(x):=\int_0^{f(0)}\rho_\lambda(x)\1(\rho_\lambda(x)\leq S'(\lambda)^{\frac{d}2})\,d\lambda,$$
and
$$\rho_3(x):=\int_0^{f(0)}\rho_\lambda(x)\1(\rho_\lambda(x)\geq S'(\lambda)^{\frac{d}2})\,d\lambda,$$
which are such that $\rho_1(x)+\rho_2(x)+\rho_3(x)=\int_0^1\rho_\lambda(x) \,d\lambda=\rho_{\gamma}(x)-\rho_{\gamma_f}(x)$.
Then we have proved that
$$\cH(\gamma,\gamma_f)\geq c\int_{\R^d}\left(\rho_1(x)^{1+\tfrac{2}d}+\rho_2(x)^{2}+\rho_3(x)^{1+\tfrac{2}d}\right)\,dx.$$
Note that, by convexity of $F$,
\begin{align*}
\rho_1(x)^{1+\tfrac{2}d}+\rho_2(x)^{2}+\rho_3(x)^{1+\tfrac{2}d}&\geq G(\rho_1(x))+G(\rho_2(x))+G(\rho_3(x))\\
&\geq c\Big(F(\rho_1(x))+F(\rho_2(x))+F(\rho_3(x))\Big)\\
&\geq 3c\,F\left(\frac{\rho_1(x)+\rho_2(x)+\rho_3(x)}{3}\right)\\
&= 3c\,F\left(\frac{\rho_{\gamma-\gamma_f}(x)}{3}\right).
\end{align*}
The last term can be estimated by a constant times the one appearing on the left of~\eqref{eq:Lieb-Thirring} and this concludes the proof of Theorem~\ref{thm:Lieb-Thirring}.
\end{proof}

\subsection{High momentum estimate and consequences}\label{sec:weak-CV}

In this section we prove a uniform estimate on the high-momentum density of a matrix $\gamma$ such that $\cH(\gamma,\gamma_f)<\ii$. 

\begin{theorem}[High momentum estimate]\label{thm:high_momentum_estimate}
We assume that $d\geq2$ and that $S$ satisfies~\eqref{eq:hyp_s_1},~\eqref{eq:hyp_s_2},~\eqref{eq:hyp_s_3} and~\eqref{eq:integrability_condition}. We denote as before $\gamma_f=f(-\Delta)$ with $f=(S')^{-1}$. Let $1\leq r\leq 1+2/d$ and $0\leq \theta\leq 1$ be such that $r^{-1}=(1-\theta)d/(d+2)+\theta$. Then 
\begin{multline}
\norm{\rho_{\Pi_A^+Q\Pi_A^+}}_{L^r+L^2}\leq C\frac{\cH(\gamma_f+Q,\gamma_f)^{\frac1r}}{A^\theta}\\
+C\cH(\gamma_f+Q,\gamma_f)^{1/2}\left(\int_{0}^{f(A/2)}|S'(\lambda)|^{\frac{d}{2}-1}\,d\lambda\right)^{1/2},
\label{eq:high_momentum_estimate}
\end{multline}
and
\begin{multline}
\norm{\rho_{\Pi_A^+Q\Pi_A^-}}_{L^r+L^2}+\norm{\rho_{\Pi_A^-Q\Pi_A^+}}_{L^r+L^2}\leq C\frac{\cH(\gamma_f+Q,\gamma_f)^{\frac1r}+\cH(\gamma_f+Q,\gamma_f)^{\frac12}}{A^{\frac\theta2}}\\
+C\cH(\gamma_f+Q,\gamma_f)^{1/2}\left(\int_{0}^{f(A/2)}|S'(\lambda)|^{\frac{d}{2}-1}\,d\lambda\right)^{1/2},
\label{eq:high_momentum_estimate_off_diagonal}
\end{multline}
for all $A\geq C$ and a constant $C$ depending only on $S$ and $r$.
\end{theorem}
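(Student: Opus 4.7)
The plan rests on decomposing $Q$ spectrally via the layer cake representation
$$Q = \int_0^1 D(\lambda)\,d\lambda, \qquad D(\lambda) = \1(\gamma \ge \lambda) - \1\bigl({-\Delta} \le S'(\lambda)\bigr),$$
and splitting the $\lambda$-integral at the critical value $\lambda_0 := f(A/2)$, so that $S'(\lambda_0)=A/2$. For $\lambda \ge \lambda_0$ one has $S'(\lambda)\le A/2 < A$, hence $\Pi_A^+\,\1({-\Delta}\le S'(\lambda))\,\Pi_A^+ = 0$. Writing $\int_{\lambda_0}^1\!\1(\gamma\ge\lambda)\,d\lambda = (\gamma-\lambda_0)_+$, this gives the clean decomposition
$$\Pi_A^+ Q \Pi_A^+ \;=\; \underbrace{\Pi_A^+\,(\gamma-\lambda_0)_+\,\Pi_A^+}_{=:\,T_1\,\ge\,0} \;+\; \underbrace{\Pi_A^+\bigl[\min(\gamma,\lambda_0)-\min(\gamma_f,\lambda_0)\bigr]\Pi_A^+}_{=:\,T_2},$$
where the identity $\Pi_A^+\min(\gamma_f,\lambda_0)\Pi_A^+ = \Pi_A^+\gamma_f\Pi_A^+$ was used (because $\gamma_f > \lambda_0$ only where $-\Delta < A/2$, i.e. away from the range of $\Pi_A^+$). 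The two pieces $T_1$ and $T_2$ will supply the $L^r$ and $L^2$ contributions respectively.

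For $T_1$: since $0\le T_1 \le \Pi_A^+ \le 1$, the standard Lieb--Thirring inequality gives $\|\rho_{T_1}\|_{L^{1+2/d}}^{1+2/d}\le C\tr(-\Delta)T_1$. The key point is that on the range of $\Pi_A^+$ and for $\lambda\ge\lambda_0$ one has $-\Delta\le 2|{-\Delta}-S'(\lambda)|$, and combining this pointwise operator inequality with the Klein identity $\cH(\gamma,\gamma_f)\ge \int_0^1\tr|{-\Delta}-S'(\lambda)|D(\lambda)^2\,d\lambda$ (used together with the block-diagonal structure of $D(\lambda)^2$ with respect to $\Pi_{S'(\lambda)}^\pm$) yields $\tr(-\Delta)T_1 \le C\cH$. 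This gives $\|\rho_{T_1}\|_{L^{1+2/d}}\le C\cH^{d/(d+2)}$. Simultaneously, $-\Delta\ge A$ on the range of $\Pi_A^+$ gives $\tr T_1 \le A^{-1}\tr(-\Delta)T_1\le C\cH/A$, hence $\|\rho_{T_1}\|_{L^1}\le C\cH/A$. Riesz--Thorin interpolation between these two endpoints with the stated relation $1/r = \theta + (1-\theta)d/(d+2)$ produces the first contribution $C\cH^{1/r}/A^\theta$.

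For $T_2$: since $x\mapsto\min(x,\lambda_0)$ is $1$-Lipschitz and operator-Lipschitz on $\gS^2$, Klein's inequality \eqref{eq:Klein2} gives $\|W\|_{\gS^2}\le C\cH^{1/2}$ for $W:=\min(\gamma,\lambda_0)-\min(\gamma_f,\lambda_0)$. To estimate $\|\rho_{T_2}\|_{L^2}$ we use the duality $\|\rho_{T_2}\|_{L^2} = \sup_{\|V\|_2=1}\tr(\Pi_A^+ V\Pi_A^+\,W)$, and factor through a momentum-adapted weight: since $W = \int_0^{\lambda_0} D(\lambda)\,d\lambda$, the Klein representation applied to each slice and Cauchy--Schwarz in $\lambda$ give
$$\bigl|\tr(\Pi_A^+ V\Pi_A^+ W)\bigr| \;\le\; \|V\|_{L^2}\,\cH^{1/2}\,\Bigl(\int_0^{\lambda_0} \bigl\|\Pi_A^+|{-\Delta}-S'(\lambda)|^{-1/2}\bigr\|_{\gS^\infty \to \gS^2 \cdot L^2}^2\,d\lambda\Bigr)^{1/2},$$
and the momentum integral reduces, after a change of variables $\mu = S'(\lambda)$, to the volume factor $\int_0^{f(A/2)} |S'(\lambda)|^{d/2-1}\,d\lambda$ (counting momentum states available in the shell where the weight $|{-\Delta}-S'(\lambda)|$ is not already bounded below by $\sim A$). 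This produces the second contribution $C\cH^{1/2}\bigl(\int_0^{f(A/2)}|S'|^{d/2-1}\bigr)^{1/2}$.

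The off-diagonal estimate \eqref{eq:high_momentum_estimate_off_diagonal} is obtained by the same two-piece decomposition applied to $\Pi_A^+ Q \Pi_A^-$; the replacement of $A^{-\theta}$ by $A^{-\theta/2}$ and the appearance of both $\cH^{1/r}$ and $\cH^{1/2}$ in the first term reflects that now only one side enjoys the improvement $-\Delta\ge A$, so one only gains half the decay when applying the weight argument, and the mixed block must be controlled in Hilbert--Schmidt norm directly. The main technical obstacle is the $L^2$ duality step for $T_2$: because $\Pi_A^+$ is infinite-rank, $\Pi_A^+ V\Pi_A^+$ is not Hilbert--Schmidt for $V\in L^2$, and one must carefully localize the momentum-weight estimate to the shell where the Klein weight $|{-\Delta}-S'(\lambda)|$ fails to dominate $-\Delta$, which is precisely what gives rise to the integral $\int_0^{f(A/2)}|S'|^{d/2-1}$.
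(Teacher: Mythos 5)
Your decomposition into $T_1+T_2$ after splitting the $\lambda$-integral at $\lambda_0=f(A/2)$ is a sound alternative to the paper's slicewise approach, and the treatment of $T_1$ is correct: the operator inequality $-\Delta\le 2\,|{-\Delta}-S'(\lambda)|$ on the range of $\Pi_A^+$ for $\lambda\ge\lambda_0$, combined with the integral representation~\eqref{eq:integral_representation_lower_bound} and the fact that $0\le D(\lambda)^{++}\le D(\lambda)^2$, gives $\tr(-\Delta)T_1\le C\cH$, after which Lieb--Thirring and the trace bound interpolate to $\|\rho_{T_1}\|_{L^r}\le C\cH^{1/r}A^{-\theta}$. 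This is the paper's ``$\mu\le A/2$'' regime carried out after integrating in $\lambda$ rather than slice by slice, which is a genuine simplification of that half of the argument.

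The $T_2$ estimate, however, has a real gap. The factorization you propose, inserting $|{-\Delta}-S'(\lambda)|^{\mp1/2}$ inside the trace and then applying Cauchy--Schwarz in $\lambda$, would require a Hilbert--Schmidt bound of the form
\[
\bigl\|\Pi_A^+\,V\,\Pi_A^+\,|{-\Delta}-S'(\lambda)|^{-1/2}\bigr\|_{\gS^2}\le C\,g(\lambda)\,\|V\|_{L^2},
\qquad\int_0^{\lambda_0}g(\lambda)^2\,d\lambda\lesssim\int_0^{f(A/2)}|S'(\lambda)|^{\frac d2-1}\,d\lambda.
\]
But by the KSS inequality the left side is controlled by $\|V\|_{L^2}$ times $\bigl(\int_{|k|^2\ge A}|k^2-S'(\lambda)|^{-1}\,dk\bigr)^{1/2}$, and this momentum integral diverges at infinity for every $d\ge2$, since the integrand only decays like $|k|^{-2}$. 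The ``volume factor'' $|S'(\lambda)|^{d/2-1}$ does not come from such a computation. You correctly sense at the end that one must localize to the shell where $|{-\Delta}-S'(\lambda)|$ fails to dominate $-\Delta$, and this is exactly what the paper does: for each $\lambda$ with $\mu=S'(\lambda)\ge A/2$ it performs a \emph{second} momentum decomposition with respect to $\Pi_{2\mu}^\pm$ and estimates the four resulting blocks separately --- the $(++)$ block in $L^r$ as in the first regime, the $(--)$ block through the Lieb--Thirring inequality~\eqref{eq:Lieb-Thirring-zero-temp} and a pointwise $L^\infty$ bound, and the off-diagonal blocks by duality, where the cutoff $\Pi_{2\mu}^-$ renders the KSS integral finite via $\|\Pi_{2\mu}^-V\|_{\gS^2}\le C\mu^{d/4}\|V\|_{L^2}$. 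Cauchy--Schwarz in $\lambda$ is applied only \emph{after} this slicewise estimate, and that is what produces $\bigl(\int_0^{f(A/2)}|S'(\lambda)|^{d/2-1}\,d\lambda\bigr)^{1/2}$. Because your aggregate $T_2$ mixes all $\mu\ge A/2$, there is no single $\Pi_{2\mu}^\pm$ projection to use, so this half of the proof really does have to be done before integrating in $\lambda$. Finally, the off-diagonal estimate~\eqref{eq:high_momentum_estimate_off_diagonal} is not seriously addressed: the paper's proof of it uses a further decomposition~\eqref{eq:decomp_Qmu} together with specific estimates on the mixed blocks $Q_\mu^{+-}$, $Q_\mu^{-+}$ borrowed from~\cite{FraLewLieSei-12}, none of which is reproduced by ``the same two-piece decomposition''.
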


When $r<1+2/d$, the terms on the right side are small for large $A$, since $S$ satisfies the integrability condition~\eqref{eq:integrability_condition} and since $f(A/2)\to0$ when $A\to\ii$. 
Our bound is not small for $r=1+2/d$, however.

An estimate similar to~\eqref{eq:high_momentum_estimate} and~\eqref{eq:high_momentum_estimate_off_diagonal} can easily be derived in dimension $d=1$ by arguing as in Lemma~\ref{lem:density_1D}:
\begin{equation}
\norm{\Pi_A^+Q\Pi_A^+}_{L^2(\R)}+\norm{\Pi_A^+Q\Pi_A^-}_{L^2(\R)}+\norm{\Pi_A^-Q\Pi_A^+}_{L^2(\R)}\leq \frac{C}{\sqrt{A}}\cH(\gamma_f+Q,\gamma_f)^{1/2}.
\label{eq:high_momentum_estimate_1D}
\end{equation}

Before turning to the proof of Theorem~\ref{thm:high_momentum_estimate}, we mention a corollary. It concerns the approximation of operators with finite relative entropy by means of operators in $\gS^{2,s}$ for a large $s$. It is similar to~\cite[Cor. 1]{LewSab-13} but includes in addition an information about the density $\rho$.

\begin{corollary}[Approximation by operators in $\gS^{2,s}$]\label{cor:density_gS2s_rho}
Let $d\geq1$. We assume that $S$ satisfies~\eqref{eq:hyp_s_1},~\eqref{eq:hyp_s_2},~\eqref{eq:hyp_s_3} and~\eqref{eq:integrability_condition}. We denote as before $\gamma_f=f(-\Delta)$ with $f=(S')^{-1}$. Let $0\leq\gamma\leq 1$ be such that $\cH(\gamma,\gamma_f)<\ii$, and let $Q=\gamma-\gamma_f$ and $\Pi_A^-=\1(-\Delta\leq A)$.
Then we have
\begin{itemize}
\item $\dps\lim_{A\to\ii} \norm{\Pi_A^-Q\Pi_A^--Q}_{\gS^{2,1}}=0$;

\smallskip

\item $\dps\lim_{A\to\ii} \cH(\gamma_f+\Pi_A^-Q\Pi_A^-,\gamma_f)=\cH(\gamma,\gamma_f)$;

\smallskip

\item $\dps\lim_{A\to\ii} \norm{\rho_{\Pi_A^-Q\Pi_A^-}-\rho_Q}_{L^r+L^2}=0$ for all $1\leq r<1+2/d$.
\end{itemize}
The last limit is uniform in $\gamma$, for $\cH(\gamma,\gamma_f)\leq C$ and $r\leq 1+2/d-1/C$.
\end{corollary}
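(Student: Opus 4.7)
By Klein's inequality (Lemma \ref{lem:Klein}), $Q:=\gamma-\gamma_f\in\gS^{2,1}$, so that $B:=Q(1-\Delta)^{1/2}\in\gS^2$ and hence $B^*B\in\gS^1$. Decomposing
$$\Pi_A^- Q\Pi_A^--Q=-\Pi_A^- Q\Pi_A^+-\Pi_A^+ Q\Pi_A^--\Pi_A^+ Q\Pi_A^+$$
and using that $\Pi_A^\pm$ commutes with $(1-\Delta)^{1/2}$, the $\gS^{2,1}$-norm of the left side is bounded by a sum of terms of the form $\|B\Pi_A^+\|_{\gS^2}$ and $\|\Pi_A^+ B\|_{\gS^2}$; these tend to zero since $\tr(B^*B\,\Pi_A^+)\to 0$ by dominated convergence for trace-class operators. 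This settles the first bullet.

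For the second bullet, set $\tilde\gamma_A:=\gamma_f+\Pi_A^- Q\Pi_A^-$. Since $\gamma_f$ commutes with $\Pi_A^\pm$,
$$\tilde\gamma_A=\Pi_A^-\gamma\Pi_A^-+\Pi_A^+\gamma_f\Pi_A^+,$$
which is block-diagonal for the splitting $L^2=\Pi_A^-L^2\oplus\Pi_A^+L^2$, as is $\gamma_f$. Additivity of $\cH$ over such blocks, combined with monotonicity of $\cH$ under $\Pi_A^-$ (valid by \cite[Thm 2]{LewSab-13} since $-S'$ is operator-monotone by \eqref{eq:hyp_s_2}), yields
$$\cH(\tilde\gamma_A,\gamma_f)=\cH(\Pi_A^-\gamma\Pi_A^-,\Pi_A^-\gamma_f\Pi_A^-)\leq\cH(\gamma,\gamma_f),$$
the $\Pi_A^+$-block contributing nothing. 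For the matching lower bound, let $P_k:=\1(-\Delta\leq k)$. Then $P_k\gamma_fP_k$ has spectrum in a compact subset of $(0,1)$, so $S'$ is bounded there and $\cH$ is jointly continuous on operators supported in $P_kL^2$. Monotonicity under $P_k$ gives $\cH(\tilde\gamma_A,\gamma_f)\geq\cH(P_k\tilde\gamma_A P_k,P_k\gamma_f P_k)$, and since $P_k$ is finite-rank and $\tilde\gamma_A\to\gamma$ in $\gS^2$ by the first bullet, $P_k\tilde\gamma_A P_k\to P_k\gamma P_k$ in trace norm, whence $\liminf_{A\to\infty}\cH(\tilde\gamma_A,\gamma_f)\geq\cH(P_k\gamma P_k,P_k\gamma_f P_k)$. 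Sending $k\to\infty$ via the defining limit \eqref{eq:def_infinite_dimension} closes the argument.

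For the third bullet, write
$$\rho_Q-\rho_{\Pi_A^- Q\Pi_A^-}=\rho_{\Pi_A^+ Q\Pi_A^+}+\rho_{\Pi_A^+ Q\Pi_A^-}+\rho_{\Pi_A^- Q\Pi_A^+}$$
and apply Theorem \ref{thm:high_momentum_estimate} (or \eqref{eq:high_momentum_estimate_1D} for $d=1$) to each summand. For $1\leq r\leq 1+2/d-1/C$ the exponent $\theta=\theta(r,d)$ determined by $r^{-1}=(1-\theta)d/(d+2)+\theta$ is bounded away from $0$, so the factors $A^{-\theta}$ and $A^{-\theta/2}$ in \eqref{eq:high_momentum_estimate}--\eqref{eq:high_momentum_estimate_off_diagonal} vanish as $A\to\infty$; the tail $\int_0^{f(A/2)}|S'(\lambda)|^{d/2-1}\,d\lambda$ likewise tends to $0$ because $f(A/2)\to 0$ and the integrand is integrable at $0^+$ by \eqref{eq:integrability_condition}. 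All remaining prefactors depend on $\gamma$ only through $\cH(\gamma,\gamma_f)\leq C$, which gives uniform convergence on $\{\cH(\gamma,\gamma_f)\leq C\}$. The most delicate step of the whole argument is the commutation of the limits $A\to\infty$ and $k\to\infty$ in the second bullet: it is essential to pick $P_k$ as a spectral projector of $-\Delta$, so that $S'(P_kQ_fP_k)$ remains bounded and $\cH$ is continuous on the finite-dimensional range of $P_k$ at fixed $k$.
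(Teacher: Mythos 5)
Your proof of the third bullet follows the paper's argument exactly: decompose $Q-\Pi_A^-Q\Pi_A^-$ into the three off-diagonal/high-momentum pieces and apply Theorem~\ref{thm:high_momentum_estimate} (resp.~\eqref{eq:high_momentum_estimate_1D} in one dimension), with the observation that the exponent $\theta$ stays bounded away from zero when $r\le 1+2/d-1/C$ and the tail integral vanishes by~\eqref{eq:integrability_condition}. For the first two bullets, the paper simply cites~\cite[Cor.~1]{LewSab-13}, so your self-contained reconstruction is a genuine addition. The first-bullet argument is correct and clean: with $B:=Q(1-\Delta)^{1/2}\in\gS^2$ from Klein's inequality, the decomposition reduces the claim to $\tr(B^*B\,\Pi_A^+)\to 0$ and $\tr(BB^*\,\Pi_A^+)\to 0$, which follow from dominated convergence applied to the eigenvalue expansion of the trace-class operators $B^*B$ and $BB^*$.

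The second-bullet argument has a genuine gap. You define $P_k:=\1(-\Delta\leq k)$ and then write ``since $P_k$ is finite-rank\ldots $P_k\tilde\gamma_A P_k\to P_k\gamma P_k$ in trace norm,'' and your final remark again invokes the ``finite-dimensional range of $P_k$.'' But $\1(-\Delta\leq k)$ is \emph{not} a finite-rank projector on $L^2(\R^d)$: the Laplacian has purely absolutely continuous spectrum, and its spectral projection onto any interval of positive measure has infinite rank. Consequently the inference from $\gS^2$-convergence of $\tilde\gamma_A\to\gamma$ to trace-norm convergence of $P_k\tilde\gamma_A P_k\to P_k\gamma P_k$ is unsupported, and the continuity of $\cH(\cdot,P_k\gamma_fP_k)$ that you need is left unjustified. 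The two pieces of your argument that are unproblematic are the block-additivity identity $\cH(\tilde\gamma_A,\gamma_f)=\cH(\Pi_A^-\gamma\Pi_A^-,\Pi_A^-\gamma_f\Pi_A^-)$ and the monotonicity upper bound $\leq\cH(\gamma,\gamma_f)$. For the matching lower bound you would need either (i) genuine finite-rank projectors $\tilde P_j$, for which $\|\tilde P_j(\tilde\gamma_A-\gamma)\tilde P_j\|_{\gS^1}\leq(\operatorname{rank}\tilde P_j)^{1/2}\|\tilde\gamma_A-\gamma\|_{\gS^2}\to 0$, combined with a careful treatment of the finite-dimensional relative entropy at the boundary of $[0,1]$; or (ii) an abstract lower semi-continuity statement for $\cH$ with respect to the $\gS^2$ topology. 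Your rationale for choosing $\1(-\Delta\leq k)$—keeping $S'(P_k\gamma_fP_k)$ bounded—is understandable, but it is precisely what forces $P_k$ to be infinite-rank and breaks the trace-norm step; a correct proof has to resolve this tension rather than elide it. A minor additional caveat: your claim that $P_k\gamma_fP_k$ restricted to $P_kL^2$ has spectrum in a compact subset of $(0,1)$ requires $f(0)<1$, i.e.\ $S'(1^-)<0$ strictly, whereas~\eqref{eq:hyp_s_3} allows $S'(1^-)=0$.
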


\begin{proof}[Proof of Corollary~\ref{cor:density_gS2s_rho}]
For the first two items see~\cite[Cor. 1]{LewSab-13}, only the limit of the density is new. We write
$Q-\Pi_A^-Q\Pi_A^-=\Pi_A^+Q\Pi_A^++\Pi_A^-Q\Pi_A^++\Pi_A^+Q\Pi_A^-.$
These terms are estimated in $L^r+L^2$ in Theorem~\ref{thm:high_momentum_estimate}, with a bound uniform for 
$\cH(\gamma,\gamma_f)\leq C$ and $r\leq 1+2/d-1/C$.
\end{proof}

It remains to write the

\begin{proof}[Proof of Theorem~\ref{thm:high_momentum_estimate}]
We write the proof for $Q$ a smooth finite rank operator and we use~\eqref{eq:integral_representation_lower_bound} as in the proof of Theorem~\ref{thm:Lieb-Thirring}. We have as before
$$Q=\int_0^1\big(\1(\gamma\geq\lambda)-\1(-\Delta\leq S'(\lambda))\big)\,d\lambda,$$
and introduce for simplicity $\mu:=S'(\lambda)$, $P_\mu:=\1(\gamma\geq\lambda)$ and $Q_\mu:=P_\mu-\Pi_\mu^-$, in such a way that
$\Pi_A^+Q\Pi_A^+=\int_0^1\Pi_A^+ Q_\mu \Pi_A^+\,d\lambda$.
We use the notation $Q_\mu^{++}:=\Pi_\mu^+Q_\mu\Pi_\mu^+$, $Q_\mu^{+-}:=\Pi_\mu^+Q_\mu\Pi_\mu^-$ and so on.
We estimate the density of $\Pi_A^+Q_\mu \Pi_A^+$ for every fixed $\mu$, before integrating with respect to $\lambda=(S')^{-1}(\mu)$.

We first deal with the case $\mu\leq A/2$. Then $\Pi_\mu^-\Pi_A^+=0$ and, using that $(p^2-\mu)\1(p^2\geq A)\geq A/2$ for $\mu\leq A/2$, we get
$$\int_{\R^d}\rho_{\Pi_A^+Q_\mu\Pi_A^+}=\int_{\R^d}\rho_{\Pi_A^+Q_\mu^{++}\Pi_A^+}\leq \frac{2}{A} \tr(-\Delta-\mu)Q_\mu^{++}\leq \frac{2}{A} \tr|\Delta+\mu|Q_\mu^2.$$
On the other hand, by the usual Lieb-Thirring inequality, we have
$$\int_{\R^d}\rho_{\Pi_A^+Q_\mu\Pi_A^+}^{1+2/d}\leq C\tr(-\Delta)\Pi_A^+Q_\mu^{++}\leq 2C\tr|\Delta+\mu|Q_\mu^2.$$
By H\"older's inequality we deduce that
$$\norm{\rho_{\Pi_A^+Q_\mu\Pi_A^+}}_{L^{r}}\leq \frac{C}{A^\theta}\big(\tr|\Delta+\mu|Q_\mu^2\big)^{\frac1r},$$
with $r^{-1}=(1-\theta)d/(d+2)+\theta$.

Assume now that $\mu\geq A/2$. Then we are not going to use $A$, but we perform a similar decomposition using $2\mu$. We write
$$\Pi_A^+Q_\mu\Pi_A^+=\Pi_{2\mu}^+Q_{\mu,A}\Pi_{2\mu}^++\Pi_{2\mu}^+Q_{\mu,A}\Pi_{2\mu}^-+\Pi_{2\mu}^-Q_{\mu,A}\Pi_{2\mu}^++\Pi_{2\mu}^-Q_{\mu,A}\Pi_{2\mu}^-$$
and $Q_{\mu,A}:=\Pi_A^+Q_\mu\Pi_A^+$. The previous argument gives
\begin{equation}
\norm{\rho_{\Pi_{2\mu}^+Q_{\mu,A}\Pi_{2\mu}^+}}_{L^{r}}\leq C\frac{\big(\tr|\Delta+\mu|Q_\mu^2\big)^{\frac1r}}{\mu^\theta}\leq C\frac{\big(\tr|\Delta+\mu|Q_\mu^2\big)^{\frac1r}}{A^\theta}.
\label{eq:estim_high_kinetic_1}
\end{equation}
On the other hand, we have $\|\rho_{\Pi_{2\mu}^-Q_{\mu,A}\Pi_{2\mu}^-}\|_{L^\ii}\leq C\mu^{d/2}$ and, by the Lieb-Thirring inequality~\eqref{eq:Lieb-Thirring-zero-temp},
$$\int_{\R^d}\min\left(|\rho_{\Pi_{2\mu}^-Q_{\mu,A}\Pi_{2\mu}^-}|^{1+2/d}\,,\, \mu^{1-\frac{d}2}|\rho_{\Pi_{2\mu}^-Q_{\mu,A}\Pi_{2\mu}^-}|^2\right)\leq C \tr|\Delta+\mu|Q_\mu^2.$$
From these two bounds we deduce that
\begin{equation}
\norm{\rho_{\Pi_{2\mu}^-Q_{\mu,A}\Pi_{2\mu}^-}}_{L^2}\leq C\mu^{\frac{d}{4}-\frac12}\big(\tr|\Delta+\mu|Q_\mu^2\big)^{1/2}.
\label{eq:estim_Qmu2} 
\end{equation}
Finally, the density of $\Pi_{2\mu}^+Q_{\mu,A}\Pi_{2\mu}^-$ can be estimated in $L^2$ by duality:
\begin{align}
\tr\big(\Pi_{2\mu}^+Q_{\mu,A}\Pi_{2\mu}^-V\big) &\leq \|\Pi_{2\mu}^+Q\|_{\gS^2}\|\Pi_{2\mu}^-V\|_{\gS^2}\nonumber\\
 &\leq C\mu^{\frac{d}4-\frac12}\big(\tr|\Delta+\mu|Q^2_\mu\big)^{\frac12}\norm{V}_{L^2}.
\label{eq:estim_off_diagonal_Qmu}
\end{align}
So for $\mu\geq A/2$ we obtain
\begin{equation*}
\norm{\rho_{\Pi_A^+Q_\mu\Pi_A^+}}_{L^r+L^2}\leq C\mu^{\frac{d}4-\frac12}\big(\tr|\Delta+\mu|Q^2_\mu\big)^{1/2}+C\frac{\big(\tr|\Delta+\mu|Q_\mu^2\big)^{\frac1r}}{A^\theta}.
\end{equation*}

Now we integrate with respect to $\lambda$ and appeal to~\eqref{eq:integral_representation_lower_bound}. We use the concavity of $x\mapsto x^{1/r}$ and the Cauchy-Schwarz inequality for the integral over $\lambda$ involving $\mu^{\frac{d}4-\frac12}$. We obtain
\begin{multline*}
\norm{\rho_{\Pi_A^+Q\Pi_A^+}}_{L^r+L^2}\leq C\frac{\cH(\gamma_f+Q,\gamma_f)^{\frac1r}}{A^\theta}\\
+C\cH(\gamma_f+Q,\gamma_f)^{1/2}\left(\int_{0}^1|S'(\lambda)|^{\frac{d}{2}-1}\1(S'(\lambda)\geq A/2)\,d\lambda\right)^{1/2},
\end{multline*}
which is the desired result.

We estimate the cross terms $\Pi_A^\pm Q_\mu\Pi_A^\mp$ by using similar arguments. We start with $\mu\leq A/2$ and therefore look at $\Pi_A^+ Q_\mu\Pi_A^-=\Pi_A^+ Q_\mu^{++}\Pi_A^-+\Pi_A^+ Q_\mu^{+-}$. The second term is estimated by duality like in~\eqref{eq:estim_off_diagonal_Qmu}, leading to
$$\norm{\rho_{\Pi_A^+ Q_\mu^{+-}}}_{L^2}\leq \frac{C\mu^{\frac{d}{4}}}{\sqrt{A}}\big(\tr|\Delta+\mu|Q^2_\mu\big)^{1/2}.$$
For the first term, we remark that for every $V\geq0$
\begin{align*}
\big|\tr\big( \Pi_A^+ Q_\mu^{++}\Pi_A^- V\big)\big|&\leq \norm{\sqrt{V}\Pi_A^+ \sqrt{Q_\mu^{++}}}_{\gS^2}\norm{\sqrt{Q_\mu^{++}}\Pi_A^- \sqrt{V}}_{\gS^2}\\
&\leq A^{\frac\theta2}\tr\big(\Pi_A^+ Q_\mu^{++}\Pi_A^+V\big)+\frac{1}{4A^{\frac\theta2}} \tr\big(\Pi_A^- Q_\mu^{++}\Pi_A^-V\big) ,
\end{align*}
which implies the pointwise inequality
$$|\rho_{\Pi_A^+ Q_\mu^{++}\Pi_A^-}|\leq A^{\frac\theta2}\,\rho_{\Pi_A^+ Q_\mu^{++}\Pi_A^+}+\frac{1}{4A^{\frac\theta2}}\rho_{\Pi_A^- Q_\mu^{++}\Pi_A^-}.$$
We have shown before that $\|\rho_{\Pi_A^+ Q_\mu^{++}\Pi_A^+}\|_{L^r}\leq CA^{-\theta}(\tr|\Delta+\mu|Q_\mu^2)^{1/r}$ and, by the Lieb-Thirring inequality~\eqref{eq:Lieb-Thirring-zero-temp}, we have
$$\norm{\rho_{\Pi_A^- Q_\mu^{++}\Pi_A^-}}_{L^{1+2/d}+L^2}\leq C\left(\big(\tr|\Delta+\mu|Q^2_\mu\big)^{\frac{d}{d+2}}+\mu^{\frac{d}{4}-\frac12}\big(\tr|\Delta+\mu|Q^2_\mu\big)^{\frac12}\right).$$
So we get, for $\mu\leq A/2$,
$$\norm{\rho_{\Pi_A^+ Q_\mu\Pi_A^-}}_{L^r+L^2}\leq \frac{C}{A^{\frac\theta2}}\left(\big(\tr|\Delta+\mu|Q^2_\mu\big)^{\frac1r}+(1+\mu^{\frac{d}{4}-\frac12})\big(\tr|\Delta+\mu|Q^2_\mu\big)^{\frac12}\right).$$

Finally, we deal with the case $\mu\geq A/2$ in a similar fashion. We write
\begin{multline}
\Pi_A^+Q_\mu\Pi_A^-=\Pi_{2\mu}^+Q_\mu\Pi_A^- + \Pi_A^+\Pi_{2\mu}^-Q_\mu^{++}\Pi_A^-+\Pi_A^+Q_\mu^{--}\Pi_A^-\\+\Pi_A^+Q_\mu^{-+}\Pi_A^-+\Pi_A^+\Pi_{2\mu}^-Q_\mu^{+-}\Pi_A^-.
\label{eq:decomp_Qmu}
\end{multline}
The density of $\Pi_{2\mu}^+Q_\mu\Pi_A^-$ is estimated exactly as in~\eqref{eq:estim_off_diagonal_Qmu}. The density of $\Pi_A^+\Pi_{2\mu}^-Q_\mu^{++}\Pi_A^-$ is bounded as before by using
$$2|\rho_{\Pi_A^+\Pi_{2\mu}^-Q_\mu^{++}\Pi_A^-}|\leq \rho_{\Pi_A^-Q_\mu^{++}\Pi_A^-}+\rho_{\Pi_A^+\Pi_{2\mu}^-Q_\mu^{++}\Pi_A^+\Pi_{2\mu}^-}.$$
The two densities on the right can be estimated in $L^\ii$ by $C\mu^{d/2}$ and in $L^{1+2/d}+L^2$ by the Lieb-Thirring inequality. Like in~\eqref{eq:estim_Qmu2}, this leads to
$$\norm{\rho_{\Pi_A^+\Pi_{2\mu}^-Q_\mu^{++}\Pi_A^-}}_{L^2}\leq C\mu^{\frac{d}{4}-\frac12}\big(\tr|\Delta+\mu|Q_\mu^2\big)^{1/2}.$$
The argument is the same for the term involving $Q^{--}$ in~\eqref{eq:decomp_Qmu}. Finally, the two terms involving $Q^{+-}$ and $Q^{-+}$ in~\eqref{eq:decomp_Qmu} are bounded by using directly~\cite[Eq.~(3.11)--(3.12)]{FraLewLieSei-12}, leading to
$$\norm{\rho_{\Pi_A^+Q_\mu^{-+}\Pi_A^-}}_{L^2}+\norm{\rho_{\Pi_A^+\Pi_{2\mu}^-Q_\mu^{+-}\Pi_A^-}}_{L^2}\leq C\mu^{\frac{d}{4}-\frac12}\big(\tr|\Delta+\mu|Q^2_\mu\big)^{1/2}.$$
The final bound is
\begin{equation}
\norm{\rho_{\Pi_A^+Q_\mu\Pi_A^-}}_{L^2}\leq C\mu^{\frac{d}{4}-\frac12}\big(\tr|\Delta+\mu|Q^2_\mu\big)^{1/2}
\label{eq:final_estim_Qmu_off_diagonal}
\end{equation}
for $\mu\geq A/2$ and~\eqref{eq:high_momentum_estimate_off_diagonal} follows after integrating with respect to $\lambda$.
\end{proof}

\section{Free energy \& global well-posedness for generalized Gibbs states}\label{sec:global_positive_temp}

We have proved the existence of local-in-time solutions to the nonlinear Hartree equation 
$$i\dot\gamma=[-\Delta+w\ast\rho_\gamma,\gamma],$$ 
in various situations in Section~\ref{sec:local}, and of global-in-time solutions when $\gamma$ is a local perturbation of the Fermi sea $\gamma_f=\1(-\Delta\leq \mu)$ in Section~\ref{sec:global_zero_temp}. The purpose of this section is to derive a similar result for more general reference states $\gamma_f$, including~\eqref{eq:Fermi-gas-positive-temp}--\eqref{eq:Boltzmann-gas-positive-temp}.

Consider a reference state $\gamma_f=f(-\Delta)$ with $f=(S')^{-1}$ and $S$ satisfying~\eqref{eq:hyp_s_1},~\eqref{eq:hyp_s_2},~\eqref{eq:hyp_s_3} and~\eqref{eq:integrability_condition}. The following \emph{relative free energy}, analogous to the relative energy of Section \ref{sec:global_zero_temp}, is formally conserved and will be used as a Lyapunov function to get global solutions:
\begin{equation}
\cF_f(\gamma,\gamma_f):=\cH(\gamma,\gamma_f)+\frac12\int_{\R^d}\int_{\R^d}w(x-y)\rho_{\gamma-\gamma_f}(x)\rho_{\gamma-\gamma_f}(y)\,dx\,dy.
\label{eq:def_relative_free_energy} 
\end{equation}
This functional is defined on the \emph{free energy space}
\begin{equation}
\cK_f:=\Big\{0\leq\gamma=\gamma^*\leq1\ :\ \cH(\gamma,\gamma_f)<\ii\Big\}.
\label{eq:def_free_energy_space}
\end{equation}
From Klein's inequality~\eqref{eq:Klein2}, we know that any $\gamma\in\cK_f$ satisfies $(1-\Delta)^{1/2}(\gamma-\gamma_f)\in\gS^2$ and from Lemma~\ref{lem:density_1D} and the Lieb-Thirring inequality in Theorem~\ref{thm:Lieb-Thirring}, we know that $\rho_{\gamma}-\rho_{\gamma_f}\in L^2(\R^d)+L^{1+\min(1,2/d)}(\R^d)$. We infer that the last term is well-defined, provided that $w$ decays fast enough. The condition is $w\in L^1(\R^d)$ for $d=1,2$, and $w\in L^1(\R^d)\cap L^{(d+2)/4}(\R^d)$ for $d\geq3$. We have the equivalent of Proposition \ref{prop:relative-energy-positive} for the relative free energy:

\begin{proposition}[Coercivity of the relative free energy in the defocusing case]\label{prop:relative-free-energy-positive}
 Let $\gamma\in\cK_f$. Then, when $d\ge3$ and $\hat{w}\ge0$ we have 
 \begin{equation}
  \cF_f(\gamma,\gamma_f)\ge\cH(\gamma,\gamma_f),
 \end{equation}
while when $d=1,2$ and $\|\hat{w}_-\|_{L^\ii}<K_{\rm LT,S}/(2\pi)^{d/2}$ we have 
\begin{equation}
  \cF_f(\gamma,\gamma_f)\ge\left(1-\frac{(2\pi)^{d/2}\|\hat{w}_-\|_{L^\ii}}{K_{\rm LT,S}}\right)\cH(\gamma,\gamma_f),
\end{equation}
where $K_{\rm LT,S}$ is the Lieb-Thirring constant of Lemma~\ref{lem:density_1D} and Theorem \ref{thm:Lieb-Thirring}.
\end{proposition}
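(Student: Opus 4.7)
The proof should proceed by reducing everything to a Plancherel identity and then applying the Lieb--Thirring inequalities of Section~\ref{sec:entropy} as a substitute for the positivity of the interaction term. The plan is to split into two cases according to whether we are in the unconditional defocusing regime $d\geq 3$ or the small-negative-part regime $d\in\{1,2\}$.

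\textbf{Setup.} First I would rewrite the nonlinear term of $\cF_f$ in Fourier variables using the identity \eqref{eq:focusing_vs_defocusing}, valid here because $\rho_{\gamma-\gamma_f}\in L^2(\R^d)+L^{1+\min(1,2/d)}(\R^d)$ (by Lemma~\ref{lem:density_1D} for $d=1$ and by Theorem~\ref{thm:Lieb-Thirring} for $d\geq2$) and because $w\in L^1\cap L^{(d+2)/4}$ guarantees $\hat{w}\in L^\infty$. This gives
\begin{equation*}
\cF_f(\gamma,\gamma_f)=\cH(\gamma,\gamma_f)+\frac{(2\pi)^{d/2}}{2}\int_{\R^d}\hat{w}(k)\,|\hat{\rho_{\gamma-\gamma_f}}(k)|^2\,dk.
\end{equation*}

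\textbf{Case $d\geq3$ with $\hat{w}\geq 0$.} The Fourier integral is pointwise non-negative, so the inequality $\cF_f(\gamma,\gamma_f)\geq\cH(\gamma,\gamma_f)$ is immediate. No Lieb--Thirring input is needed here, which is exactly why we cannot do without $\hat{w}\geq 0$ in high dimension: the Lieb--Thirring integrand in Theorem~\ref{thm:Lieb-Thirring} behaves like $\rho_{\gamma-\gamma_f}^{1+2/d}$ for large densities and does not control $\|\rho_{\gamma-\gamma_f}\|_{L^2}^2$ globally.

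\textbf{Case $d\in\{1,2\}$.} Bound the Fourier integrand from below by $-\|\hat{w}_-\|_{L^\infty}|\hat{\rho_{\gamma-\gamma_f}}(k)|^2$ and use Plancherel:
\begin{equation*}
\cF_f(\gamma,\gamma_f)\geq\cH(\gamma,\gamma_f)-\frac{(2\pi)^{d/2}\|\hat{w}_-\|_{L^\infty}}{2}\,\|\rho_{\gamma-\gamma_f}\|_{L^2(\R^d)}^2.
\end{equation*}
The point is now to absorb the last term into $\cH$. In dimension $d=2$, one has $1+2/d=2$, so the integrand in Theorem~\ref{thm:Lieb-Thirring} collapses to $\rho_{\gamma-\gamma_f}(x)^2$ identically, yielding $\cH(\gamma,\gamma_f)\geq K_{\rm LT,S}\|\rho_{\gamma-\gamma_f}\|_{L^2}^2$. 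In dimension $d=1$, the same bound (with the same notation $K_{\rm LT,S}$) is delivered directly by Lemma~\ref{lem:density_1D} applied with $q=2$, which sits in the range $2\leq q<3$ where no loss of exponent occurs. Substituting this bound into the previous inequality and factoring $\cH$ produces the claimed coercivity constant $1-(2\pi)^{d/2}\|\hat{w}_-\|_{L^\infty}/K_{\rm LT,S}$ (with the convention for $K_{\rm LT,S}$ inherited from Section~\ref{sec:entropy}); the smallness hypothesis on $\|\hat{w}_-\|_{L^\infty}$ ensures this factor is strictly positive.

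\textbf{Main obstacle.} There is no serious technical obstacle; the argument is two lines once the Lieb--Thirring inequalities of Section~\ref{sec:entropy} are in hand. The only non-cosmetic point worth checking is that in dimension $d=2$ the convex Taylor remainder in Theorem~\ref{thm:Lieb-Thirring} degenerates exactly to $\rho_{\gamma-\gamma_f}^2$ (with no remainder and in particular no dependence on $\rho_{\gamma_f}$), so that we obtain a \emph{global} $L^2$-type bound rather than a bound valid only in the small-density regime. This is the structural reason the dimensional threshold in the proposition is $d\leq 2$ rather than $d\leq 3$.
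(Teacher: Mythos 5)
Your argument is correct and is exactly the one the paper intends (the paper does not print a separate proof for this proposition; it simply says ``we have the equivalent of Proposition~\ref{prop:relative-energy-positive}'', whose proof is the Fourier/Plancherel bound plus the Lieb--Thirring inequalities, which is precisely your two-case argument). Two small remarks: (i) $\hat w\in L^\infty$ already follows from $w\in L^1$ alone; (ii) your chain of inequalities actually yields the coercivity constant $1-\tfrac{(2\pi)^{d/2}\|\hat w_-\|_{L^\infty}}{2K_{\rm LT,S}}$, which is a factor of $2$ \emph{better} than the constant stated in the proposition, so the conclusion certainly follows, but saying this ``produces the claimed coercivity constant'' is a slight overstatement.
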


We use this result to globalize the local-in-time solutions to the Hartree equation with finite relative free energy built in Theorem \ref{thm:local-wp-Sps}. 

\begin{theorem}[Global well-posedness at positive temperature for $d=1,2,3$]\label{thm:global_positive_temp}
Assume that $d=1,2,3$ and that $w\in L^1(\R^d)\cap L^\ii (\R^d)$ if $d=1,2$ with furthermore $|\nabla w|\in L^1(\R^d)\cap L^\ii (\R^d)$ if $d=3$. Assume also that $w(x)=w(-x)$ for a.e. $x\in\R^d$. Let $\gamma_f=f(-\Delta)$ with $f=(S')^{-1}$ and $S$ satisfying \eqref{eq:hyp_s_1}, \eqref{eq:hyp_s_2}, \eqref{eq:hyp_s_3} and \eqref{eq:integrability_condition}. Let $Q_0=\gamma_0-\gamma_f$ with $\gamma_0\in\cK_f$. Then there exists a unique maximal solution $Q(t)=\gamma(t)-\gamma_f\in C^0_t((-T^-,T^+),\gS^2)$ if $d=1,2$, and $Q(t)\in C^0_t((-T^-,T^+),\gS^{2,1})$ if $d=3$, of~\eqref{eq:rHF_Q_Duhamel} with initial datum $Q_0$, such that $\gamma(t)\in\cK_f$ for almost all $t\in(-T^-,T^+)$ and $T^\pm>0$. We have the blow-up criterion
\begin{equation}\label{eq:blowup_energy_space_positive_temp}
T^\pm<\ii \Longrightarrow \lim_{t\to \pm T^\pm}\cH(\gamma(t),\gamma_f)=+\ii. 
\end{equation}
The relative free energy is constant for almost every $t\in(-T^-,T^+)$.
Furthermore, when $d=3$ and $\hat{w}\ge0$ or when $d=1,2$ and $\|(\hat{w})_-\|_{L^\ii}<K_{\rm LT,S}/(2\pi)^{d/2}$, then the solution is global, $T^-=T^+=+\ii$.
\end{theorem}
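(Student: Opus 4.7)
The strategy parallels the proof of Theorem~\ref{thm:global_zero_temp}: I will combine the local well-posedness of Section~\ref{sec:Strichartz} with the conservation of $\cF_f$ on a dense smooth subclass, and close the argument using the coercivity of $\cF_f$ provided by the Lieb--Thirring estimate of Theorem~\ref{thm:Lieb-Thirring} (or Lemma~\ref{lem:density_1D} when $d=1$) together with Proposition~\ref{prop:relative-free-energy-positive}.

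\emph{Step 1 (Local well-posedness in the free energy class).} By Klein's inequality (Lemma~\ref{lem:Klein}), any $\gamma_0\in\cK_f$ satisfies $Q_0=\gamma_0-\gamma_f\in\gS^{2,1}$ with $\|Q_0\|_{\gS^{2,1}}^2\le C\,\cH(\gamma_0,\gamma_f)$. To invoke Theorem~\ref{thm:local-wp-S21} in $d=3$ (resp.\ Theorem~\ref{thm:local-wp-S2} in $d=1,2$), I have to verify the density assumption~\eqref{eq:condition_initial_gS21} (resp.\ \eqref{eq:condition_initial_gS2}) on $w\ast\rho_{e^{it\Delta}Q_0e^{-it\Delta}}$. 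I will do so by decomposing $Q_0=\Pi_A^-Q_0\Pi_A^-+R_A$: the low-momentum part is trace class by Lemma~\ref{lem:density_Schatten-Sobolev}, hence its density is controlled in $L^\infty_t(L^1_x\cap L^\infty_x)$; the high-momentum remainder $R_A$ is made arbitrarily small in $\gS^{2,1}$ by Corollary~\ref{cor:density_gS2s_rho}, and its density is controlled via the Strichartz estimate \eqref{eq:Strichartz} for orthonormal systems and the high-momentum bound of Theorem~\ref{thm:high_momentum_estimate}.

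\emph{Step 2 (Conservation of $\cF_f$ for smooth data).} For an initial datum $Q_0\in\gS^{1,s}$ with $s$ large and compactly Fourier-supported, Theorem~\ref{thm:local-wp-Sps} produces a solution $Q(t)$ in the same space. On this class I will differentiate $\cF_f(\gamma(t),\gamma_f)$ in time. Writing $V_Q=w\ast\rho_Q$ and $H=-\Delta+V_Q$, functional calculus gives $[S'(\gamma),\gamma]=0$ and hence $\tr S'(\gamma)[H,\gamma]=0$; combined with $S'(\gamma_f)=-\Delta$ (up to an additive constant absorbed in the chemical potential), cyclicity of the trace yields
\begin{equation*}
\partial_t\cH(\gamma(t),\gamma_f)=-i\tr[-\Delta,V_Q]\gamma,
\end{equation*}
while the continuity identity $\rho_{[V_Q,\gamma]}\equiv0$ produces
\begin{equation*}
\partial_t\tfrac12 D(\rho_Q,\rho_Q)=\int V_Q\,\partial_t\rho_Q=+i\tr[-\Delta,V_Q]\gamma.
\end{equation*}
The sum vanishes. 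Since $\cH$ itself is only defined through the limiting procedure~\eqref{eq:def_infinite_dimension}, I will first approximate $\cH(\gamma(t),\gamma_f)$ by the spectral representation used in the proof of Theorem~\ref{thm:Lieb-Thirring} (lower bound~\eqref{eq:integral_representation_lower_bound}) and exploit that $\gamma(t)=U(t)\gamma_0U(t)^*$ preserves the spectral decomposition of $\gamma_0$, so that only the linear term $\tr S'(\gamma_f)(\gamma(t)-\gamma_f)$ carries a time dependence.

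\emph{Step 3 (A solution starting in $\cK_f$ stays in $\cK_f$).} Given any $\gamma_0\in\cK_f$, Corollary~\ref{cor:density_gS2s_rho} produces a sequence $Q_0^{(n)}=\Pi_{A_n}^-Q_0\Pi_{A_n}^-$, further regularized as in~\cite[Thm~6]{HaiLewSer-09}, such that $Q_0^{(n)}\to Q_0$ in $\gS^{2,1}$, $\cH(\gamma_f+Q_0^{(n)},\gamma_f)\to\cH(\gamma_0,\gamma_f)$, and $\rho_{Q_0^{(n)}}\to\rho_{Q_0}$ in $L^r+L^2$ for $r<1+2/d$. The continuity part of Theorem~\ref{thm:local-wp-S21}/\ref{thm:local-wp-S2} (together with the density convergence verified as in Step~1) gives $\liminf T_n^\pm\ge T^\pm$ and $Q^{(n)}(t)\to Q(t)$ in $\gS^{2,1}$ uniformly on $[-T^-+\epsilon,T^+-\epsilon]$, with $\rho_{Q^{(n)}(t)}\to\rho_{Q(t)}$ in $L^{1+2/d}+L^2$ by Lieb--Thirring. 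Passing to the limit in $\cF_f(\gamma^{(n)}(t),\gamma_f)=\cF_f(\gamma_0^{(n)},\gamma_f)$, using lower semi-continuity of $\cH$ with respect to $\gS^{2,1}$ convergence and continuity of the interaction term, yields $\cH(\gamma(t),\gamma_f)<\infty$ and $\cF_f(\gamma(t),\gamma_f)\le\cF_f(\gamma_0,\gamma_f)$; reversing time gives equality for a.e.\ $t\in(-T^-,T^+)$.

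\emph{Step 4 (Globalization).} Under the defocusing assumptions $\hat w\ge 0$ in $d=3$ or $\|(\hat w)_-\|_{L^\infty}<K_{\rm LT,S}/(2\pi)^{d/2}$ in $d=1,2$, Proposition~\ref{prop:relative-free-energy-positive} yields $\cF_f(\gamma,\gamma_f)\ge(1-\epsilon)\cH(\gamma,\gamma_f)$ for some $\epsilon<1$. Combined with the conservation established in Step~3 and Klein's inequality, this provides $\sup_{t}\|Q(t)\|_{\gS^{2,1}}<\infty$, which by the blow-up criterion~\eqref{eq:blowup-S21} (resp.~\eqref{eq:blowup-S2}) implies $T^\pm=+\infty$. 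The main obstacle is Step~2: justifying the formal differentiation of $\cH$ rigorously, since $\cH$ is defined only as a monotone limit and the individual terms $\tr S(\gamma)$ and $\tr S'(\gamma_f)\gamma$ are both infinite; the spectral integral representation and the unitary conjugation structure of the flow are the two keys that make this computation tractable.
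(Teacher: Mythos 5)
Your plan has the right global structure and matches the paper's proof scheme (local well-posedness from Section~\ref{sec:Strichartz} $\to$ conservation of $\cF_f$ on a smooth class $\to$ approximation $\to$ coercivity and globalization), and your formal computation in Step~2 is correct. However, there are two concrete gaps.

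First, the route you propose to make Step~2 rigorous does not close. You invoke the spectral integral representation~\eqref{eq:integral_representation_lower_bound}, but in the infinite-dimensional setting that formula is established in the paper only as a \emph{lower bound}; the paper explicitly declines to prove the corresponding equality ("The formula is still true in infinite dimension, but proving it requires some work. In our particular case we only need the lower bound"). Your idea that unitary conjugation freezes the $S(\gamma)$ term so only the linear term evolves is exactly the right insight, but turning it into the identity $\cH(\gamma(t),\gamma_f)=\cH(\gamma_0,\gamma_f)+\tr(-\Delta)(Q(t)-Q_0)$ requires a different device. The paper writes $\gamma(t)=e^{iB(t)}(\gamma_f+Q_0)e^{-iB(t)}$ for small $t$, approximates $B(t)$ by finite-rank self-adjoint operators $B_n$ with controlled $\gS^{d+2}$-, $\gS^{1}$- and commutator norms, and then chooses the projectors $P_k$ in the defining limit~\eqref{eq:def_infinite_dimension} so that $P_k$ commutes with $e^{iB_n}$. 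This makes $\tr_{P_k\gH}S(P_k(\gamma_f+Q_n)P_k)=\tr_{P_k\gH}S(P_k(\gamma_f+Q_0)P_k)$ \emph{exactly} at each finite $k$, so only the linear term survives, and one can then pass $k\to\ii$ and $n\to\ii$ using $P_kS'(P_k\gamma_fP_k)P_k\to -\Delta$ in quadratic-form sense and lower semi-continuity of $\cH$. Without this finite-rank/commuting-truncation argument, the differentiation of $\cH$ remains formal.

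Second, in Step~3 you approximate $Q_0$ and $w$ but never touch $\gamma_f$. The smooth conservation step is driven by Theorem~\ref{thm:local-wp-Sps} in $\gS^{1,4}$, whose hypothesis~\eqref{eq:cond_f_w_2} at $p=1$, $n=4$ demands $(1+|k|^4)f(k^2)\in\ell^1L^2(\R^d)$. This decay does \emph{not} follow from~\eqref{eq:hyp_s_1}--\eqref{eq:integrability_condition} for a general admissible $S$. The paper therefore also approximates the reference state itself: Lemma~\ref{lemma:approx-gamma} constructs modified entropies $S_n$ (by truncating the L\"owner measure near $0$) with the required decay, the corresponding $\gamma_{f_n}=f_n(-\Delta)$, and approximants $\gamma_n=\gamma_{f_n}+Q_n\in\gS^{1,4}$, together with the key limits $\cH_n(\gamma_n,\gamma_{f_n})\to\cH(\gamma,\gamma_f)$ (using monotonicity of the relative entropy under the contractions $X_n=(\gamma_{f_n}/\gamma_f)^{1/2}$) and $\norm{Q_n-Q}_{\gS^{2,1}}\to0$. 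Omitting this triple approximation $(\gamma_{0,n},w_n,\gamma_{f,n})$ is a genuine gap for the generality stated in the theorem, even though for the physical examples~\eqref{eq:Fermi-gas-positive-temp}--\eqref{eq:Boltzmann-gas-positive-temp} (exponentially decaying $f$) the extra decay is automatic.
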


This result holds for a fixed function $S$ and for $0\leq\gamma_0,\gamma_f\leq1$. By scaling and multiplication by a constant we can deal with the general situation and, in particular, get Theorem~\ref{thm:global_positive_temp_intro} stated in the introduction. For instance, in order to deal with density matrices satisfying $\leq M$, it suffices to multiply the time-dependent equation by $1/M$ and to consider the operators $\gamma_f/M$ and $\gamma_0/M$. This changes $w$ into $Mw$ and the entropy $S$ into $M^{-1}S(Mx)$. In dimensions $d=1,2$ the condition on $\widehat{w}$ then involves $M$ through the constant $K_{\rm LT,S}$. Similarly, in the physical examples~\eqref{eq:Fermi-gas-positive-temp}--\eqref{eq:Boltzmann-gas-positive-temp}, the entropy $S$ includes the temperature $T$ and the chemical potential $\mu$. By a scaling argument, we can always reduce the problem to $T=1$ with chemical potential $\mu/T$ and then $K_{\rm LT,S}$ only depends on $\mu/T$. Indeed, if $\gamma_f=f(-\Delta/T)$, then defining the unitary operator $(U_Tf):=T^{d/4}f(\sqrt{T}x)$, we see that the time-dependent equation satisfied by $\tilde\gamma(t):=U_T^*\gamma U_T(t/T)$ is
\begin{equation}
\begin{cases}
i\, \partial_t \tilde\gamma=\big[-\Delta+w_T\ast\rho_{\tilde\gamma}\,,\, \tilde\gamma\big],\\
\tilde\gamma(0)=f(-\Delta)+U_T^*Q(0)U_T,
 \end{cases}
\label{eq:rescaled}
\end{equation}
with $w_T(x)=T^{-1}w(x/\sqrt{T})$ and, hence, $\widehat{w_T}(k)=T^{d/2-1}\widehat{w}(\sqrt{T}k)$. In the Boltzmann and Fermi-Dirac cases, it is easy to verify that the Lieb-Thirring constant $K_{\rm LT,S}$ stays positive when $\mu$ varies in an interval around $0$. This is not true in the Bose-Einstein case since the operator norm of $\gamma_{1,\mu/T}^{\rm bos}$ diverges when $\mu/T\to0^-$, in which case we have to divide the equation by a large constant, as explained before. Thus the constants $\kappa_1$ and $\kappa_2$ appearing in Theorem~\ref{thm:global_positive_temp_intro} stay bounded away from 0 when $\mu/T$ stays in a compact set for fermions and boltzons, but it converges to $0$ when $\mu/T\to0^-$ for bosons.

\medskip

The rest of the section is devoted to the proof of Theorem~\ref{thm:global_positive_temp}. As in Section \ref{sec:global_zero_temp}, we begin by showing conservation of energy for `regular' solutions, namely those belonging to the space $\gS^{1,4}$. 

\begin{proposition}[Conservation of the relative free energy for regular solutions]\label{prop:conservation-relative-free-energy-S12}
 Let $d\ge1$ and $w\in (\ell^1(L^2)\cap L^\ii)(\R^d)$ such that $\partial^\alpha w\in (\ell^1(L^2)\cap L^\ii)(\R^d)$ for all multi-indices $|\alpha|\le 4$. Assume also $w(x)=w(-x)$ for a.e. $x\in\R^d$. Let $\gamma_f=f(-\Delta)$ with $f=(S')^{-1}$ and $S$ satisfying \eqref{eq:hyp_s_1}, \eqref{eq:hyp_s_2}, \eqref{eq:hyp_s_3} and \eqref{eq:integrability_condition} with furthermore $(1+|\cdot|^4) g(\cdot)\in\ell^1(L^2)(\R^d)$ where $g=f(|\cdot|^2)$. Let $Q_0\in\gS^{1,4}$ be such that $\gamma_f+Q_0\in\cK_f$. Let $Q\in C^0_t((-T^-,T^+),\gS^{1,4})$ be the maximal solution to the Hartree equation with initial condition $Q_0$ built in Theorem \ref{thm:local-wp-Sps}. Then, for all $t\in(-T^-,T^+)$, $\gamma_f+Q(t)\in\cK_f$ and the relative free energy is conserved: $\cF_f(\gamma_f+Q(t),\gamma_f)=\cF_f(\gamma_f+Q_0,\gamma_f)$. 
\end{proposition}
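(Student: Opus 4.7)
\emph{Plan.} I will mirror Proposition~\ref{prop:conservation-relative-energy-cS1s} in three stages: (i) prove conservation for a sub-class of very regular initial data by direct trace-calculus manipulations; (ii) extend to general $Q_0\in\gS^{1,4}$ with $\gamma_f+Q_0\in\cK_f$ by approximation and continuous dependence; (iii) deduce $\gamma(t)\in\cK_f$ for all $t$ from the conservation identity.

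\emph{Stage (i): conservation for very regular data.} I restrict first to $Q_0\in\gS^{1,s}$ with $s$ much larger than $4$ and compactly Fourier-supported. Corollary~\ref{cor:persistance-regularity} then gives $Q(t)\in C^0_t\gS^{1,s}$ on the same maximal time interval as the $\gS^{1,4}$ solution, so that $(1-\Delta)^2 Q(t)$ and $\dot Q(t)=-i[H(t),\gamma(t)]$ with $H(t)=-\Delta+V(t)$ are trace-class, and all intermediate products $\gamma_f V,[V,Q],\ldots$ sit in suitable Schatten classes thanks to Lemmas~\ref{lemma:commutator-Vgamma-Hs}--\ref{lemma:commutator-VQ-Hs} and the assumed regularity of $w$. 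Writing formally $\cH(\gamma,\gamma_f)=-\tr[S(\gamma)-S(\gamma_f)]+\tr(-\Delta)(\gamma-\gamma_f)$, three identities need to be checked. First, by cyclicity and $\tr[Q,-\Delta](-\Delta)=0$, $\partial_t\tr(-\Delta)Q(t)=-i\tr[Q(t),-\Delta]V(t)$. Second, computing as in the zero-temperature case,
$$\partial_t\tfrac12\iint w(x-y)\rho_Q(x)\rho_Q(y)\,dx\,dy=\tr V(t)\dot\gamma(t)=i\tr[Q(t),-\Delta]V(t),$$
which cancels the first contribution exactly. Third, and this is the subtle one, $\partial_t\tr[S(\gamma(t))-S(\gamma_f)]=0$. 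For this, I would exploit the unitary-equivalence representation $\gamma(t)=U(t)\gamma_0 U(t)^*$ with $U(t)=e^{it\Delta}\cW_V(t,0)$ given by Section~\ref{sec:Strichartz}; then by functional calculus $S(\gamma(t))=U(t)S(\gamma_0)U(t)^*$, so $\partial_t S(\gamma)=-i[H(t),S(\gamma(t))]$, and provided $[H,S(\gamma)-S(\gamma_f)]\in\gS^1$, cyclicity gives the stated identity. Summing the three pieces yields $\partial_t\cF_f(\gamma_f+Q(t),\gamma_f)=0$.

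\emph{Stages (ii)--(iii): approximation.} For a general $Q_0\in\gS^{1,4}$ with $\gamma_f+Q_0\in\cK_f$, I would construct a sequence $Q_0^{(n)}$ in the very-regular class above such that $Q_0^{(n)}\to Q_0$ in $\gS^{1,4}$ \emph{and} $\cH(\gamma_f+Q_0^{(n)},\gamma_f)\to\cH(\gamma_f+Q_0,\gamma_f)$, by combining a variant of Corollary~\ref{cor:density_gS2s_rho} with the monotone approximation scheme of \cite[Cor.~1]{LewSab-13}. Continuous dependence in Theorem~\ref{thm:local-wp-Sps} (applied in $\gS^{1,4}$) then produces $Q^{(n)}(t)\to Q(t)$ in $C^0_t\gS^{1,4}$ on compact subintervals of $(-T^-,T^+)$; the potential piece of $\cF_f$ passes to the limit by the $\gS^{1,4}$-continuity of $Q\mapsto\int(w\ast\rho_Q)\rho_Q$, while the defining monotone limit~\eqref{eq:def_infinite_dimension} combined with Klein's inequality gives weak lower semicontinuity of $\cH$, hence
$$\cH(\gamma(t),\gamma_f)\leq\liminf_{n\to\infty}\cH(\gamma^{(n)}(t),\gamma_f)=\cH(\gamma_0,\gamma_f)+\bigl[\text{potential difference}\bigr].$$
This shows $\gamma(t)\in\cK_f$ and $\cF_f(\gamma(t),\gamma_f)\leq\cF_f(\gamma_0,\gamma_f)$; applying the same argument to the backward-in-time dynamics started from $\gamma(t)$ gives the reverse inequality, and thus equality for every $t\in(-T^-,T^+)$.

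\emph{Main obstacle.} The difficult point is the entropy differentiation in stage (i): both $\tr S(\gamma(t))$ and $\tr S(\gamma_f)$ are individually infinite, so one cannot naively differentiate inside a trace and invoke cyclicity. The unitary-equivalence argument gives the correct formal answer but demands one produce trace-class bounds on $S(\gamma(t))-S(\gamma_f)$ and on $[H(t),S(\gamma(t))-S(\gamma_f)]$. The cleanest route I envisage is via L\"owner's representation (Remark~\ref{rmk:Lowner}), which writes $\cH$ as an integral against $d\nu_1,d\nu_2$ of logarithmic relative entropies $-\tr[\log(t+\gamma)-\log(t+\gamma_f)-(t+\gamma_f)^{-1}(\gamma-\gamma_f)]$; each such integrand is a genuine von-Neumann-type relative entropy with an explicit integral kernel, for which the required trace-class commutator bounds follow from the very high regularity $Q(t)\in\gS^{1,s}$, and the desired identity is then obtained by dominated convergence in $t\in(0,\infty)$.
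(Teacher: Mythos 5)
Your proposal identifies the right underlying mechanism (unitary conjugation makes the entropy part invariant), but implements it by a route quite different from the paper, and there is a genuine gap at the step you flag as the ``main obstacle.''

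You propose to split $\cH$ into an entropy piece $-\tr[S(\gamma)-S(\gamma_f)]$ and a kinetic piece $\tr(-\Delta)Q$, and to show the entropy piece is constant by writing $S(\gamma(t))=U(t)S(\gamma_0)U(t)^*$, differentiating, and invoking cyclicity of the trace. Two problems. First, even for $Q\in\gS^{1,s}$ with $s$ large, $S(\gamma)-S(\gamma_f)$ need not be trace-class: $S$ is not Lipschitz (its derivative blows up at $0$, where the spectrum of $\gamma_f$ accumulates), so the operator-Lipschitz bound giving $S(A)-S(B)\in\gS^1$ from $A-B\in\gS^1$ is unavailable, and the quantity $\tr[S(\gamma)-S(\gamma_f)]$ is only defined indirectly as $\tr(-\Delta)Q-\cH$. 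Second, to conclude $\tr[H,A]=0$ for $A=S(\gamma)-S(\gamma_f)$ one needs more than $[H,A]\in\gS^1$; cyclicity requires $HA,AH\in\gS^1$ (or a careful regularized limit), and $(-\Delta)(S(\gamma)-S(\gamma_f))$ is not obviously trace-class. Your L\"owner-representation fix is in the right direction for the first problem (the logarithmic pieces $\log(\tau+\gamma)-\log(\tau+\gamma_f)$ are in $\gS^1$ with an explicit $Q^2$ structure), but you do not address the uniform integrability in $\tau$ against $d\nu_1,d\nu_2$ near $\tau=0$, where the individual $\gS^1$-bounds blow up like $\tau^{-1}$, nor the second (cyclicity) problem.

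The paper's proof avoids these issues entirely by never differentiating $\cH$ in time and never separating the entropy term. Instead it proves the \emph{discrete} identity $\cH(\gamma(t),\gamma_f)-\cH(\gamma_0,\gamma_f)=\tr(-\Delta)(Q(t)-Q_0)$ directly from the monotone-limit definition~\eqref{eq:def_infinite_dimension}: for short times one writes $U_V(t)=e^{iB(t)}$ with $B(t)\in\gS^{d+2}$, approximates $B(t)$ by a self-adjoint finite-rank $B_n$ with range in $\Pi_A^- L^2$ (so that $(-\Delta)B_n\in\gS^1$ and $\nabla$-commutators converge), and then chooses the approximating projections $P_k$ in~\eqref{eq:def_infinite_dimension} so that they \emph{commute} with $e^{iB_n}$. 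The finite-dimensional entropy traces $\tr_{P_k\gH}S(P_k\gamma_n P_k)$ and $\tr_{P_k\gH}S(P_k\gamma_0P_k)$ are then literally equal by functional calculus and unitary invariance of the finite-dimensional trace, so that $\cH(\gamma_n,\gamma_f)-\cH(\gamma_0,\gamma_f)=\lim_k\tr P_kS'(P_k\gamma_fP_k)P_k(Q_n-Q_0)$, and L\"owner's formula plus dominated convergence identify this limit as $\tr(-\Delta)(Q_n-Q_0)$. Passing $n\to\infty$ using $Q_n\to Q(t)$ in the relevant norms and the lower semi-continuity of $\cH$, then exchanging $t\leftrightarrow 0$, gives the identity. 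This way the entropy ``conservation'' is an exact finite-dimensional algebraic fact, not a limit of a differentiated trace, and all the delicate trace-class and cyclicity questions are bypassed. Your stage (ii)--(iii) approximation and lower-semicontinuity argument is close to what the paper does at the $n\to\infty$ step, but the paper does not need an outer approximation of $Q_0$ by smoother data --- it works with $Q_0\in\gS^{1,4}$ directly and approximates $B(t)$ inside the proof instead.
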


\begin{remark}\label{rk:condition-nu1}
 The assumption that $k\mapsto(1+k^4)f(k^2)\in\ell^1L^2$ is satisfied if, for instance, $(1+k^2)^{2+\delta/2}f(k^2)\in L^2(\R^d)$ for some $\delta>d/2$, by \cite[Chap. 4]{Simon-79}. A sufficient condition for this to hold is 
 $\int_0^1\lambda(1+S'(\lambda)_+)^{\delta+4}S'(\lambda)^{d/2}_+\,d\lambda<\ii$
 which, by \eqref{eq:integrability_condition}, is itself implied by the boundedness of $\lambda\mapsto\lambda^{\frac{1}{\delta+4}}S'(\lambda)$ as $\lambda\to0$. 
 In Lemma \ref{lemma:approx-gamma}, we will show that any $S$ satisfying \eqref{eq:hyp_s_1}, \eqref{eq:hyp_s_2}, \eqref{eq:hyp_s_3} and \eqref{eq:integrability_condition} can be approximated by a monotone sequence $(S_n)$ satisfying this. 
\end{remark}

\begin{proof}[Proof of Proposition \ref{prop:conservation-relative-free-energy-S12}]
Let $Q(t)$ be the solution to the Hartree equation with initial datum $Q_0$. We prove that for all $t\in(-T^-,T^+)$
\begin{equation}
\cH(\gamma_f+Q(t),\gamma_f)=\cH(\gamma_f+Q_0,\gamma_f) + \tr(-\Delta)(Q(t)-Q_0)
\label{eq:link-relative-entropy}
\end{equation}
where we recall that $(-\Delta)(Q(t)-Q_0)\in\gS^1$. The proof that the relative free energy is conserved then follows by the same method as for Theorem \ref{prop:conservation-relative-energy-cS1s}, using that $Q(t)$ belongs to $C^0_t\gS^{1,4}\cap C^1_t\gS^{1,2}$ for all times. The rest of the argument is therefore devoted to the proof~\eqref{eq:link-relative-entropy}.

Since $\rho_Q(t)$ belongs to  $L^\ii_{t,{\rm loc}}((-T^-,T^+),L^1_x(\R^d))$, then $V(t)=w*\rho_Q(t)$ belongs to $L^{p'}_{t,{\rm loc}}((-T^-,T^+),L^{q'}_x(\R^d))$ for any $2/p'+d/q'=2$. By \cite{FraLewLieSei-13}, we deduce that for all $t\in(-T^-,T^+)$, $\gamma_f+Q(t)=U_V(t)(\gamma_f+Q_0)U_V(t)^*$, where $U_V(t)-1\in C^0(\gS^{d+2})$, with $U_V(t)$ a strongly continuous family of unitary operators on $L^2(\R^d)$ with $U_V(0)=1$. In particular, $0\le\gamma_f+Q(t)\le1$ for all $t$. From the proof of Theorem~\ref{thm:local-wp-Sps}, we also know that
$(U_V(t)-1)\gamma_f(1-\Delta)\in C^0(\gS^1)$ and that $[\nabla,U_V(t)-1]\in C^0(\gS^{d+2})$. Indeed, computing the last commutator we get a sum of terms in which one of the potentials $V$ is replaced by $\nabla V$, and the estimate is the same as for $U_V(t)-1$ using our assumptions on $w$. In the following we denote $K(t):=U_V(t)-1$.

We only prove~\eqref{eq:link-relative-entropy} for short times, the proof in the general case follows by iterating the argument below. For $t\ll1$ we have the additional information that $\norm{K(t)}_{\gS^{d+2}}<1$ and this implies that we can write
$U_V(t)=\exp(iB(t))$ where $t\mapsto B(t)\in C^0(\gS^{d+2})$, with $B(t)=B(t)^*$. The operator $B(t)$ satisfies the same properties as $K(t)$, namely $B(t)\gamma_f(1-\Delta)\in C^0(\gS^1)$ and $[\nabla,B(t)]\in C^0(\gS^{d+2})$.
We now proceed in two steps. First we replace $B(t)$ by an operator $B_n$ of finite rank and prove~\eqref{eq:link-relative-entropy} with $Q_n:=e^{iB_n}(\gamma_f+Q_0)e^{-iB_n}-\gamma_f$ in place of $Q(t)$. Then we let $n\to\ii$. 

The approximation sequence $B_n=B_n^*$ is chosen such as to have 
\begin{multline*}
\lim_{n\to\ii}\norm{B_n-B(t)}_{\gS^{d+2}}=\lim_{n\to\ii}\norm{(B_n-B(t))\gamma_f(1-\Delta)}_{\gS^1}\\
=\lim_{n\to\ii}\norm{[\nabla,B_n-B(t))]}_{\gS^{d+2}}=0. 
\end{multline*}
One simple way to construct such a sequence $B_n$, is to first truncate the large momenta by looking at $B'_A:=\Pi_A^-B(t)\Pi_A^-$. Since $\gamma_f$, $\nabla$ and $\Delta$ commute with $\Pi_A^-$ it is clear that $B'_A$ converges to $B(t)$ in the above norms as $A\to\ii$. Furthermore, we have $B'_A\in\gS^1$ for the same reason as for $B(t)\gamma_f\in\gS^1$. Then $\Delta B'_A\in\gS^1$ since $\Delta$ is bounded on the range of $\Pi_A^-$. Now we can approximate $B'_A$ in $\gS^1(\Pi_A^-L^2)$ by a finite-rank operator and all the other properties follow immediately. We will use in the proof that $K_n:=e^{iB_n}-1$ is also a finite rank operator with range in $\Pi_A^-L^2$ and that it converges to $K=e^{iB}-1$ for the same norms as for $B$.

We turn to the proof of~\eqref{eq:link-relative-entropy} with $Q_n$ in place of $Q(t)$. Let $P_k$ be an increasing sequence of finite-rank orthogonal projectors whose range contains the range of $B_n$, such that $P_k\to1$ strongly. We choose $P_k$ such as to have $P_k(1-\Delta)P_k\leq C(1-\Delta)$ for all $k$ (the constant can depend on $B_n$). Since $B_n$ lives over the space $\Pi_A^-L^2$, it suffices to take projections adapted to the decomposition of the momentum space into shells, $NA\leq -\Delta\leq (N+1)A$, $N\geq0$. By definition of the relative entropy, we have
  \begin{multline*}
    \cH(\gamma_f+Q_n,\gamma_f) - \cH(\gamma_f+Q_0,\gamma_f)=\\
    \lim_{k\to\ii}\!\!\tr_{P_k\gH}\Big[S(P_k(\gamma_f+Q_0)P_k)-S(P_k(\gamma_f+Q_n)P_k)+S'(P_k\gamma_fP_k)P_k(Q_n-Q_0)P_k\Big].
  \end{multline*}
By construction of $P_k$, we have $P_ke^{iB_n}=e^{iB_n}P_k$ and hence
  \begin{eqnarray*}
    \tr_{P_k\gH}\left[S(P_k(\gamma_f+Q_n)P_k)\right] & =  &\tr_{P_k\gH}\left[e^{iB_n}S(P_k(\gamma_f+Q_0)P_k)e^{-iB_n}\right] \\
    & = & \tr_{P_k\gH}\left[S(P_k(\gamma_f+Q_0)P_k)\right].
  \end{eqnarray*}
Therefore we obtain
  \begin{equation}
    \cH(\gamma_f+Q_n,\gamma_f) - \cH(\gamma_f+Q_0,\gamma_f)=
    \lim_{k\to\ii}\tr_{\gH}\big[P_kS'(P_k\gamma_fP_k)P_k(Q_n-Q_0)\big].
\label{eq:limit_to_be_shown}
  \end{equation}
 
Using the integral representation \eqref{eq:Lowner2} for $S'$ and the inequality
$(P_k(t+\gamma_f)P_k)^{-1}\le P_k(t+\gamma_f)^{-1}P_k$ for all $t>0$ (which follows from the Schur complement formula), we get
  $$-C(1-\Delta)\leq -CP_k(1-\Delta)P_k\le S'(P_k\gamma_fP_k)\le CP_k(1-\Delta)P_k\leq C(1-\Delta).$$
Again from the representation \eqref{eq:Lowner2} for $S'$, we can then show by Lebesgue's dominated convergence theorem that 
$P_kS'(P_k\gamma_fP_k)P_k\to S'(\gamma_f)=-\Delta$
strongly as quadratic forms on $H^1(\R^d)$.
The operator 
$$Q_n-Q_0=K_nQ_0+Q_0K_n+K_nQ_0K_n+K_n\gamma_f+\gamma_fK_n+K_n\gamma_fK_n$$
has a finite rank and its eigenfunctions are in $H^1$, due to our assumption that $Q_0\in\gS^{1,4}$. 
We can therefore pass to the limit $k\to\ii$ in~\eqref{eq:limit_to_be_shown} and get 
\begin{equation*}
\cH(e^{iB_n}(\gamma_f+Q_0)e^{-iB_n},\gamma_f)=\cH(\gamma_f+Q_0,\gamma_f) +\tr(-\Delta)(Q_n-Q_0). 
\end{equation*}

In a second step we take the limit $n\to\ii$. If we can show that 
\begin{equation}
\lim_{n\to\ii}\tr(-\Delta)(Q_n-Q(t))=0,
\label{eq:limit_trace_Qt}
\end{equation}
then we find, using the lower semi-continuity of $\cH$, that 
\begin{equation*}
\cH(\gamma_f+Q(t),\gamma_f)\leq \cH(\gamma_f+Q_0,\gamma_f)  +\tr(-\Delta)\big(Q(t)-Q_0\big). 
\end{equation*}
Exchanging the roles of $t$ and $0$ gives the other inequality and this concludes the proof of~\eqref{eq:link-relative-entropy} for short times.
So it remains to show~\eqref{eq:limit_trace_Qt}. We can write (abbreviating $Q=Q(t)$)
\begin{multline*}
Q_n-Q=(K_n-K)Q_0+Q_0(K_n-K)+(K_n-K)Q_0K_n+KQ_0(K_n-K)\\+(K_n-K)\gamma_f+\gamma_f(K_n-K)+(K_n-K)\gamma_fK_n
+K\gamma_f(K_n-K).
\end{multline*}
That $\tr(-\Delta)(K_n-K)Q_0=\tr(K_n-K)Q_0(-\Delta)\to0$ is obvious, since $Q_0(-\Delta)\in\gS^1$ and $K_n\to K$ in $\gS^{d+2}$. Let us look at
\begin{align*}
\tr(-\Delta)KQ_0(K_n-K)&=\sum_{j=1}^d \tr\; p_jKQ_0(K_n-K)p_j\\
&=\sum_{j=1}^d \tr\; Kp_jQ_0p_j(K_n-K)+\tr\; [p_j,K]Q_0 [K_n-K,p_j]\\
&+\tr\; [p_j,K]Q_0 p_j(K_n-K)+\tr\; Kp_jQ_0[K_n-K,p_j].
\end{align*}
This converges to 0, since $\norm{[p_j,K_n-K]}_{\gS^{d+2}}\to0$, $\norm{K_n-K}_{\gS^{d+2}}\to0$, $Q_0p_j\in\gS^1$ and $p_jQ_0p_j\in\gS^1$. The last assertion follows from the property that $|p||Q_0|^{1/2}\in\gS^2$, by the Araki-Lieb-Thirring inequality and the fact that $Q_0\in\gS^{1,4}$. The argument is similar for all the other terms, and this finishes our proof of Proposition~\ref{prop:conservation-relative-free-energy-S12}.
\end{proof}

Global well-posedness in the energy space $\cK_f$ follows by approximating the initial datum $\gamma_0\in\cK_f$, the interaction potential $w$ and the reference state $\gamma_f$ by a sequence $(\gamma_{0,n},w_n,\gamma_{f,n})$ satisfying the assumptions of Proposition \ref{prop:conservation-relative-free-energy-S12}. The following lemma deals with the construction of $(\gamma_{0,n})$ and $(\gamma_{f,n})$.

\begin{lemma}\label{lemma:approx-gamma}
  Let $d\geq1$. We assume that $S$ satisfies~\eqref{eq:hyp_s_1},~\eqref{eq:hyp_s_2},~\eqref{eq:hyp_s_3} and~\eqref{eq:integrability_condition}. We denote as before $\gamma_f=f(-\Delta)=(S')^{-1}(-\Delta)$. Let $\gamma\in\cK_f$ and $Q:=\gamma-\gamma_f$. Then, there exist two sequences $(S_n)$ and $(\gamma_n)$ such that 
  \begin{itemize}
\item $S_n$ satisfies \eqref{eq:hyp_s_1}, \eqref{eq:hyp_s_2}, \eqref{eq:hyp_s_3} and \eqref{eq:integrability_condition} with additionally $k\mapsto(1+k^4)f_n(k^2)\in\ell^1L^2(\R^d)$, where $f_n:=(S_n')^{-1}$;

  \smallskip

\item $\gamma_n\in\cK_{f_n}$ and  $Q_n:=\gamma_n-\gamma_{f_n}\in\gS^{1,4}$;

\smallskip

  \item  $\dps\lim_{n\to\ii}\int_{\R^d}|f_n(k^2)-f(k^2)|\,dk=\lim_{n\to\ii}\int_{\R^d}k^2|f_n(k^2)-f(k^2)|^2\,dk=0$;
    
  \smallskip

  \item $\dps\lim_{n\to\ii} \cH_n(\gamma_n,\gamma_{f_n})=\cH(\gamma,\gamma_f)$;
  
  \smallskip

  \item $\dps\lim_{n\to\ii} \norm{Q_n-Q}_{\gS^{2,1}}\!=\!\lim_{n\to\ii} \norm{\rho_{e^{it\Delta}Q_ne^{-it\Delta}}-\rho_{e^{it\Delta}Qe^{-it\Delta}}}_{L^\ii_t(L^1+L^2)}=0$.
  \end{itemize}
  Here $\cH_n$ and $\cH$ are the relative entropies of $S_n$ and $S$, respectively. 
\end{lemma}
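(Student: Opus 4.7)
The plan is to perform two independent approximations, of $S$ and of $\gamma$, and then combine them via a diagonal extraction.

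First, to construct $S_n$. Using the L\"owner representation \eqref{eq:Lowner2}, $-S'$ is a positive combination of operator monotone functions associated with measures $\nu_1,\nu_2$. I truncate these measures away from the endpoints (e.g., restrict $\nu_1$ to $[1/n,\ii)$) while adding a small multiple of $\log(\lambda/(1-\lambda))$ to preserve the blowup condition \eqref{eq:hyp_s_3}. This yields $S_n$ satisfying \eqref{eq:hyp_s_1}--\eqref{eq:hyp_s_3} and \eqref{eq:integrability_condition} with $S_n'(\lambda) = O(|\log\lambda|)$ as $\lambda\to0^+$, so $f_n(y) = O(e^{-cny})$ at infinity and hence $(1+k^4)f_n(k^2)\in\ell^1L^2(\R^d)$. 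Pointwise convergence $f_n\to f$ combined with a uniform-in-$n$ dominating function (from the explicit construction) gives the $L^1$ and weighted $L^2$ convergences of $f_n\to f$ by Lebesgue's dominated convergence.

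Second, to construct $\gamma_n$. For each $n$, I combine a spectral truncation at high momenta with a finite-rank, spatial cutoff approximation within the bounded-momentum subspace. Choosing $A_n,R_n\to\ii$ suitably and $\chi_{R_n}$ a smooth spatial cutoff on the ball $B_{R_n}$, define
$$\gamma_n:=\Pi_{A_n}^-\big(\gamma_{f_n}+\chi_{R_n}(\gamma-\gamma_{f_n})\chi_{R_n}\big)\Pi_{A_n}^-+\Pi_{A_n}^+\gamma_{f_n}\Pi_{A_n}^+,$$
adjusted slightly to enforce $0\le\gamma_n\le1$. The operator $Q_n:=\gamma_n-\gamma_{f_n}$ has Fourier support in $|k|^2\le A_n$ and is trace class: the spatial cutoff makes $\chi_{R_n}\Pi_{A_n}^-$ Hilbert--Schmidt by the Kato--Seiler--Simon inequality \eqref{eq:KSS}, forcing $Q_n\in\gS^1$, and multiplication by $(1-\Delta)^2$ (bounded on $\Pi_{A_n}^-L^2$) gives $Q_n\in\gS^{1,4}$. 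The convergence $Q_n\to Q$ in $\gS^{2,1}$ and of the relative entropy $\cH_n(\gamma_n,\gamma_{f_n})\to\cH(\gamma,\gamma_f)$ follow from Corollary \ref{cor:density_gS2s_rho} applied with $\gamma_{f_n}$ as reference state, combined with the convergence $f_n\to f$ of the first stage, through a diagonal extraction in $n$.

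Third, the density convergence uniform in $t$. Since $\gamma_f$ and $\gamma_{f_n}$ both commute with $-\Delta$, conjugation by $e^{it\Delta}$ preserves all Schatten norms of $Q$ and the relative entropy of $\gamma$ with respect to $\gamma_f$; hence Theorem \ref{thm:Lieb-Thirring} gives uniform-in-$t$ control on $\rho_{e^{it\Delta}Qe^{-it\Delta}}$ in $L^2+L^{1+2/d}\subset L^1+L^2$. To get $\rho_{e^{it\Delta}(Q_n-Q)e^{-it\Delta}}\to0$ in $L^\ii_t(L^1+L^2)$ I split in momentum at level $A$: Theorem \ref{thm:high_momentum_estimate} controls the high-momentum contribution $\Pi_A^\pm(Q_n-Q)\Pi_A^\pm$ uniformly in $t$ by a quantity vanishing as $A\to\ii$ (the relative entropies being uniformly bounded), while the low-momentum bulk $\Pi_A^-(Q_n-Q)\Pi_A^-$, after spatial localization by a $C_c^\ii$ cutoff, is trace class with trace norm going to zero by the $\gS^{2,1}$ convergence of the second stage.

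The main obstacle is the uniform-in-$t$ density convergence, especially in dimension $d=3$, where Lemma \ref{lem:density_Schatten-Sobolev} does not provide a pointwise-in-$t$ density from $\gS^{2,1}$-regularity alone (the exponent $s=1$ fails the condition $s>d(p-1)/p=3/2$). This is circumvented by using Theorems \ref{thm:Lieb-Thirring} and \ref{thm:high_momentum_estimate} to define and control the density through the relative entropy, which is invariant under the free Schr\"odinger flow.
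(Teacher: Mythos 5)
Your overall architecture (approximate $S$ by $S_n$ with fast-decaying $f_n$, approximate $\gamma$ by a finite-rank/Fourier-truncated perturbation, control the entropy and densities via the Lieb--Thirring and high-momentum estimates) matches the paper in spirit, but there are two places where the argument has a genuine gap that the paper's proof is specifically engineered to avoid.

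First, the construction of $\gamma_n$ by a spatial cutoff,
$\gamma_n=\gamma_{f_n}+\Pi_{A_n}^-\chi_{R_n}(\gamma-\gamma_{f_n})\chi_{R_n}\Pi_{A_n}^-$,
does not produce an operator with $0\le\gamma_n\le 1$, and the phrase ``adjusted slightly to enforce $0\le\gamma_n\le 1$'' does not explain how. Since $\chi_{R_n}$ does not commute with $\gamma_{f_n}$ or $\Pi_{A_n}^-$, there is no algebraic reason for the bounds to survive, and one can make the lower bound fail near the spectral edge of $\gamma_{f_n}$. The paper avoids this entirely by setting $Q_n=X_n\tilde Q_n X_n$ with the \emph{Fourier multiplier} $X_n=(\gamma_{f_n}/\gamma_f)^{1/2}$: because $X_n$ commutes with $\gamma_f$, one gets $\gamma_{f_n}+Q_n=X_n(\gamma_f+\tilde Q_n)X_n$, and the operator inequalities $0\le\gamma_f+\tilde Q_n\le 1$ and $0\le X_n\le 1$ immediately give $0\le\gamma_n\le 1$. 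This, however, requires the one-sided inequality $\gamma_{f_n}\le\gamma_f$, i.e.\ $S_n'\le S'$, which your $S_n$-construction does not provide: restricting $\nu_1$ to $[1/n,\infty)$ and then adding a multiple of a logarithmic term can make $S_n'$ \emph{larger} than $S'$ somewhere. The paper's modification $d\nu_{1,n}=(t^{1-\bar\alpha}\1_{t\le 1/n}+\1_{t>1/n})\,d\nu_1$ together with the shift $a'_n\le a'$ is chosen precisely so that $\nu_{1,n}\le\nu_1$ and $S_n'\le S'$ on all of $(0,1)$.

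Second, you assert that $\cH_n(\gamma_n,\gamma_{f_n})\to\cH(\gamma,\gamma_f)$ ``follows from Corollary~\ref{cor:density_gS2s_rho} applied with $\gamma_{f_n}$ as reference state, combined with $f_n\to f$, by diagonal extraction.'' But Corollary~\ref{cor:density_gS2s_rho} requires $\cH_n(\gamma,\gamma_{f_n})<\infty$, and you only know $\cH(\gamma,\gamma_f)<\infty$; there is no a priori reason the finiteness transfers to a different entropy and a different reference state. The paper handles this by two structural facts that your construction lacks: (i)~since $S-S_n$ is itself an admissible entropy (a consequence of $\nu_{1,n}\le\nu_1$), one has $\cH(\gamma_f+\tilde Q_n,\gamma_f)-\cH_n(\gamma_f+\tilde Q_n,\gamma_f)=\cH_{S-S_n}(\gamma_f+\tilde Q_n,\gamma_f)\ge 0$; and (ii)~by \emph{monotonicity} of the relative entropy under $A\mapsto X_nAX_n$ with $0\le X_n\le 1$, $\cH_n(\gamma_f+\tilde Q_n,\gamma_f)\ge\cH_n(\gamma_{f_n}+Q_n,\gamma_{f_n})$. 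Together with lower semicontinuity this yields the convergence of the entropies. Without the one-sided bound $S_n'\le S'$ and the multiplicative structure $Q_n=X_n\tilde Q_nX_n$, neither (i) nor (ii) is available, and the diagonal extraction does not close the gap. Your third step (uniform-in-$t$ density convergence via Theorems~\ref{thm:Lieb-Thirring} and~\ref{thm:high_momentum_estimate}) is sound and essentially what the paper does, but it only kicks in once the entropy bounds from the first two steps are secured.
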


\begin{proof}[Proof of Lemma \ref{lemma:approx-gamma}]
 We begin by constructing the sequence $(S_n)$. A change of variables leads to the estimate
\begin{equation*}
 \int_{\R^d}|f_n(k^2)-f(k^2)|\,dk\le C\int_0^{f(0)}|\lambda-(S'_n)^{-1}(S'(\lambda))|S'(\lambda)^{\frac d2 -1}|S''(\lambda)|\,d\lambda.
\label{eq:estim_S_to_be_shown}
\end{equation*}
Let $\nu_1$ and $\nu_2$ be the L\"owner measures associated with $S$, as in \eqref{eq:Lowner2}. 
We define $S_n$ by the same formula \eqref{eq:Lowner2} as $S$, except that we replace $a'$ by $a_n'\le a'$ and the measure $d\nu_1$ by 
 $d\nu_{1,n}(t):=\left(t^{1-\bar\alpha}\1_{0\le t\le1/n}+\1_{t>1/n}\right)d\nu_1(t),$
 where $0\le\bar\alpha\le1$ is defined in the following fashion. First we look at 
 $$\alpha^*=\sup\left\{0\le\alpha\le1,\,\int_0^1\frac{d\nu_1(t)}{t^\alpha}<+\ii\right\},$$
 which is finite since $\int_0^1\,d\nu_1(t)<+\ii$. We distinguish two cases. If $\alpha^*>(d+6)/(d+8)$, then we set $\bar\alpha=1$, that is, $\nu_{1,n}=\nu_1$ and we claim that we already have $k\mapsto(1+k^4)f(k^2)\in\ell^1L^2$. Indeed, by definition of $\alpha^*$, there exists $\alpha>(d+6)/(d+8)$ such that 
 $\int_0^1t^{-\alpha}d\nu_1(t)<\ii$.
 Defining $\delta=\frac{1}{1-\alpha}-4$, the condition $\alpha>(d+6)/(d+8)$ implies that $\delta>d/2$. Using that $S'(\lambda)\sim C+\int_0^1(t+\lambda)^{-1}\,d\nu_1(t)$ we see that $\lambda\mapsto\lambda^{\frac{1}{\delta+4}}S'(\lambda)$ is bounded when $\lambda\to0$. By Remark \ref{rk:condition-nu1} we deduce that $k\mapsto(1+k^4)f(k^2)\in\ell^1L^2$, and there is nothing to do. 

Now, let us assume that $\alpha^*\le(d+6)/(d+8)$. This implies that $\alpha^*+2/(d+8)\le1$, and we pick any number $\bar\alpha$ such that $0\le\alpha^*<\bar\alpha<\alpha^*+2/(d+8)\le1$.
 The sequence $a_n'\le a'$ is chosen in the following fashion: we choose any sequence $a'_n\to a'$ satisfying the property that
 $a'-a_n'\ge4b\int_0^{1/n}d\nu_1(t).$
 Since for all $n$, $\nu_{1,n}\le\nu_1$ as positive Borel measures, $S_n$ satisfies conditions \eqref{eq:hyp_s_1} and \eqref{eq:hyp_s_2}. By the monotone convergence theorem, we have for all $x>0$,
 $$\int_0^\ii\frac{d\nu_{1,n}(t)}{(t+x)(1+2t)}\to\int_0^\ii\frac{d\nu_1(t)}{(t+x)(1+2t)}$$
 and hence $S'_n(x)\to S'(x)$ as $n\to\ii$. Furthermore, using only $a_n'\le a'$ and $\nu_{1,n}\le\nu_1$, we infer that $S'_n(x)\le S'(x)$ for all $0\le x\le1/2$. When $1/2\le x\le 1$, we have
 \begin{eqnarray*}
  S'(x)-S_n'(x) & = & (a'-a_n')x-b(2x-1)\int_0^{1/n}\frac{(1-t^{1-\bar\alpha})d\nu_1(t)}{(t+x)(2t+1)} \\
    & \ge & \frac{a'-a_n'}{2}-2b\int_0^{1/n}d\nu_1(t) \ge 0.
 \end{eqnarray*}
 Hence, $S_n'(x)\le S'(x)$ for all $0\le x\le 1$, and since $S_n'$ and $S'$ are decreasing, we deduce that $(S_n')^{-1}(\lambda)\le (S')^{-1}(\lambda)$ for all $\lambda$. By the monotone convergence theorem, we deduce that 
 $$\lim_{n\to\ii}\int_0^{f(0)}|\lambda-(S'_n)^{-1}(S'(\lambda))|S'(\lambda)^{\frac d2 -1}|S''(\lambda)|\,d\lambda=0.$$
 From the fact that $S'_n\le S'$, we also deduce that $S_n$ satisfies condition \eqref{eq:integrability_condition} and condition \eqref{eq:hyp_s_3} for all $n$. Because $\bar\alpha<\alpha^*+2/(d+8)$, there exists $d/2<\delta<1/(\bar\alpha-\alpha^*)-4$, and
 $$\int_0^1\frac{d\nu_{1,n}(t)}{t^{1-\frac{1}{\delta+4}}}=\int_0^{1/n}\frac{d\nu_1(t)}{t^{\bar\alpha-\frac{1}{\delta+4}}}+\int_{1/n}^1\frac{d\nu_1(t)}{t^{1-\frac{1}{\delta+4}}}<+\ii$$ 
 since $\bar\alpha-1/(\delta+4)<\alpha^*$. As before this implies that $(1+k^4)f_n(k^2)\in\ell^1L^2$.
Finally, we have the inequality
 \begin{multline*}
    \int_{\R^d}k^2|f_n(k^2)-f(k^2)|^2\,dk\\
    \le\left(\norm{k^2f(k^2)}_{L^\ii_k}+\sup_n\norm{k^2f_n(k^2)}_{L^\ii_k}\right)\int_{\R^d}|f_n(k^2)-f(k^2)|\,dk.
 \end{multline*}
 Now $\norm{k^2f(k^2)}_{L^\ii_k}=\norm{xS'(x)_+}_{L^\ii_x}<\ii$ by~\eqref{eq:limit_end_points}, and it is also clear that $\sup_n\norm{xS_n'(x)_+}_{L^\ii_x}<+\ii$ since $S_n'\le S'$. 
We now construct the sequence $(\gamma_n)$. Corollary \ref{cor:density_gS2s_rho} and Theorem \ref{thm:high_momentum_estimate} show that we can find a sequence $(\tilde{Q}_n)\subset\gS^{2,s}$ for all $s\ge0$, with furthermore $Q_n=\Pi_{A_n}^-Q_n\Pi_{A_n}^-$ for all $n$, for some sequence $A_n\to+\ii$, and such that 
 $\cH(\gamma_f+\tilde{Q}_n,\gamma_f)\to\cH(\gamma,\gamma_f)$, $\tilde{Q}_n\to Q$ in $\gS^{2,1}$, and $\rho_{e^{it\Delta}\tilde{Q}_ne^{-it\Delta}}\to\rho_{e^{it\Delta}Qe^{-it\Delta}}$ in $L^\ii(\R,L^1+L^2)$. By~\cite[Cor. 1]{LewSab-13}, we can assume that each $\tilde{Q}_n$ has finite-rank and hence belongs to $\gS^{1,4}$ since it is compactly supported in Fourier space. By the formula defining $S'_n$, we see that $S-S_n$ also satisfies \eqref{eq:hyp_s_1} and \eqref{eq:hyp_s_2}. Hence
 $$\cH(\gamma_f+\tilde{Q}_n,\gamma_f)-\cH_n(\gamma_f+\tilde{Q}_n,\gamma_f)=\cH_{s-S_n}(\gamma_f+\tilde{Q}_n,\gamma_f)\ge0.$$
 Defining $X_n=(\gamma_{f_n}/\gamma_f)^{1/2}$ which satisfies $0\le X_n\le 1$ since $(S_n)^{-1}\le (S')^{-1}$, we have by monotony of the relative entropy (see~\cite[Thm. 1]{LewSab-13})
 $$\cH_n(\gamma_f+\tilde{Q}_n,\gamma_f)\ge\cH_n(X_n(\gamma_f+\tilde{Q}_n)X_n,X_n\gamma_fX_n)=\cH_n(\gamma_{f_n}+X_n\tilde{Q}_nX_n,\gamma_{f_n}).$$
 We thus set $Q_n:=X_n\tilde{Q}_nX_n$ and $\gamma_n:=\gamma_{f_n}+Q_n$. By the previous inequality, we have
 $\limsup_{n\to\ii}\cH_n(\gamma_n,\gamma_{f,n})\le\cH(\gamma,\gamma_f)$.
 Using the weak lower semi-continuity of the relative entropy in finite dimension gives the reverse inequality with a liminf,
 and hence $\cH_n(\gamma_n,\gamma_{f,n})\to\cH(\gamma,\gamma_f)$. Since $X_n$ is a multiplication in Fourier space and since $X_n\to1$ strongly in $L^2(\R^d)$, we also deduce that $Q_n\to Q$ in $\gS^{2,1}$. Finally, Lemma \ref{lem:density_Schatten-Sobolev} gives that $\rho[e^{it\Delta}Q_ne^{-it\Delta}]-\rho[e^{it\Delta}\tilde{Q}_ne^{-it\Delta}]\to0$ in $L^\ii(\R,L^1+L^2)$.
\end{proof}

\begin{proof}[Proof of Theorem \ref{thm:global_positive_temp}]
 Let $(\gamma_0,w,\gamma_f)$ be as in the statement of the theorem. Let $(\gamma_{0,n},\gamma_{f,n})$ be given by Lemma \ref{lemma:approx-gamma}. Let also $(w_n)\subset(\ell^1L^2\cap L^\ii)(\R^d)$ be such that $\partial^\alpha w_n\in(\ell^1L^2\cap L^\ii)(\R^d)$ for all multi-indices $|\alpha|\le4$ and all $n$, with furthermore $w_n(x)=w_n(-x)$ for a.e. $x\in\R^d$, $w_n\to w$ in $(L^1\cap L^\ii)(\R^d)$ if $d=1,2$, and additionally $\nabla w_n\to\nabla w$ in $(L^1\cap L^\ii)(\R^d)^d$ when $d=3$. Let $Q_n\in\cC^0((-T^-_n,T^+_n),\gS^{1,4})$ be the unique solution to \eqref{eq:rHF_Q_Duhamel} associated to the triplet $(\gamma_{0,n},w_n,\gamma_{f,n})$ given by Proposition \ref{prop:conservation-relative-free-energy-S12}. Let also $Q\in C^0((-T^-,T^+),\gS^2)$ if $d=1,2$ and $Q\in C^0((-T^-,T^+),\gS^{2,1})$ if $d=3$ the solution to \eqref{eq:rHF_Q_Duhamel} built in Theorem \ref{thm:local-wp-S2} and Theorem \ref{thm:local-wp-S21}. By these theorems, for all $\epsilon>0$ and for all $n$ large enough, $T^\pm_n\ge T^\pm-\epsilon$ and we have 
 $V_{Q_n}\to V_Q$ in $L^{8/d}_t([-T^-+\epsilon,T^+-\epsilon],L^2_x\cap L^\ii_x)$, $Q_n(t)\to Q(t)$ in $\cC^0([-T^-+\epsilon,T^+-\epsilon],\gS^2)$. Using the weak lower semi continuity of the relative entropy in finite dimension, we infer that for all $t$
 $$\cH(\gamma_f+Q(t),\gamma_f)\le\liminf_{n\to\ii}\cH_n(\gamma_{f,n}+Q_n(t),\gamma_{f,n}).$$
The estimate
 \begin{multline*}
  \left|\iint w(x-y)\rho_{Q_n(t)}(x)\rho_{Q_n(t)}(y)\,dxdy-\iint w(x-y)\rho_{Q(t)}(x)\rho_{Q(t)}(y)\,dxdy\right|\\
  \le\norm{\rho_Q+\rho_{Q_n}}_{L^1+L^2}\norm{V_{Q_n(t)}-V_{Q(t)}}_{L^2\cap L^\ii},
 \end{multline*}
 together with H\"older's inequality give 
 $$\iint w(x-y)\rho_{Q_n(t)}(x)\rho_{Q_n(t)}(y)\,dxdy\to\iint w(x-y)\rho_{Q(t)}(x)\rho_{Q(t)}(y)\,dxdy$$
 in $L^{4/d}_t([-T^-+\epsilon,T^+-\epsilon])$, and thus almost everywhere up to a subsequence. Taking the limit $n\to\ii$ in the relative free energy of $Q_n$, we find that 
$\cF_f(\gamma_f+Q(t),\gamma_f)\le\cF_f(\gamma_0,\gamma_f)$
 almost everywhere. Putting the initial time at any $t$ for which there is convergence gives the reverse inequality.
The blow-up criterion \eqref{eq:blowup_energy_space_positive_temp} follows from the Lieb-Thirring inequality and the blow-up criterion of Theorem \ref{thm:local-wp-S2} and Theorem \ref{thm:local-wp-S21}. Assuming the defocusing conditions on $\hat{w}$, using Proposition \ref{prop:relative-free-energy-positive} and the Lieb-Thirring inequality, we deduce that the solution $Q$ is global: $T^\pm=+\ii$. This concludes the proof of our main theorem.
\end{proof}

\begin{remark}
The defocusing condition also implies that the energy is constant everywhere (and not only almost everywhere): for all $t\in\R$ and for all $n$, we have
$\cH_n(\gamma_{f,n}+Q_n(t),\gamma_{f,n})\le c_n\cF_{f_n}(\gamma_{f_n}+Q_n(t),\gamma_{f_n})$, with $(c_n)$ bounded. Thanks to Theorem \ref{thm:high_momentum_estimate} and since $Q_n(t)\to Q(t)$ in $\gS^2$, this implies that $\rho_{Q_n(t)}\to\rho_{Q(t)}$ in $L^1+L^2$ for \emph{all} $t$ (and not only on a full measure set). Hence, $\cF_f(\gamma_f+Q(t),\gamma_f)\le\cF_f(\gamma_0,\gamma_f)$ for all $t$ and we have conservation of the energy by exchanging $t$ and $0$.
\end{remark}

\begin{table}[h]
\scriptsize
\begin{tabular}{lll|lll}
$\gamma_f$ & (reference state) & Eq.~\eqref{eq:gamma_f_intro} & 
$\gS^p$ & (Schatten space) & Eq.~\eqref{eq:Schatten}\\
$\rho_Q$ & (density) & Sec.~\ref{sec:local} &
$\gS^{p,s}$ & (Schatten-Sobolev) & Eq.~\eqref{eq:def_Schatten-Sobolev} \\
$\Pi_\mu^\pm$ & (Fermi sea) & Eq.~\eqref{eq:def_Pi_mu} &
$\ell^p(L^q)$ & & Eq.~\eqref{eq:def_ell_p_L_q}\\
$\cE(A,B)$ & (relative energy) & Eq.~\eqref{eq:formal_relative_energy}, \eqref{eq:def-relative-energy} &
$\cK_\mu$ & (energy space) & Eq.~\eqref{eq:def_K_mu} \\
$\cF(A,B)$ & (relative free energy) & Eq.~\eqref{eq:formal_relative_free_energy} &
$\cX_\mu$ & & Eq.~\eqref{eq:def_X_mu}\\
$\cH(A,B)$ & (relative entropy) & Eq.~\eqref{eq:def_relative_entropy_1},~\eqref{eq:def_relative_entropy} &
$\cY_\mu$ & & Eq.~\eqref{eq:def_Y_mu} \\
$\cF_f(A,B)$ & (gen. rel. free energy)  & Eq.~\eqref{eq:def_relative_free_energy} &
$\cK_f$ & (free energy space) & Eq.~\eqref{eq:def_free_energy_space} \\
$\cW_V(s,t)$ & (wave operator) & Eq.~\eqref{eq:Dyson-series} &
$\cW_V^{(n)}(s,t)$ & ($n$th wave op.) & Eq.~\eqref{eq:nth_wave_op} 
\end{tabular}

\medskip

\caption{List of notation.\label{tab:notation}}
\end{table}

%
%
%
%


\noindent\textbf{Acknowledgements.} This work was partially done while the authors were visiting the Centre \'Emile Borel at the Institut Henri Poincar\'e in Paris. The authors acknowledge financial support from the European Research Council under the European Community's Seventh Framework Programme (FP7/2007-2013 Grant Agreement MNIQS 258023), and from the French ministry of research (ANR-10-BLAN-0101).


\end{document}